\DeclareMathAlphabet{\mathpzc}{OT1}{pzc}{m}{it}
\def\inbar{\,\vrule height1.5ex width.4pt depth0pt}
\def\IR{\relax{\rm I\kern-.18em R}}
\def\IC{\relax\hbox{$\inbar\kern-.3em{\rm C}$}}
\theoremstyle{remark}
\newtheorem{Remark}{Remark}[chapter]
\def\ps@headings{%
	\let\@oddfoot\@empty
	\let\@evenfoot\@empty
	\def\@oddhead{\hfil\thepage\hfil}%
	\def\@evenhead{\hfil\thepage\hfil}%
}
\begin{document}
	
	\begin{titlepage}
		\centering
		\vspace*{3cm}
		
		{\fontsize{18pt}{22pt}\selectfont\bfseries Massless Representations in Conformal Space and Their de Sitter Restrictions\par}
		\vspace{0.7cm}
		
		{\fontsize{14pt}{18pt}\selectfont\itshape The Split-Octonionic Fabric of Inflationary Field Dynamics\par}
		\vspace{2cm}
		
		
			{\large Jean-Pierre Gazeau}\\
			Universit\'{e} Paris Cit\'{e}, CNRS, Astroparticule et Cosmologie, F-75013 Paris, France and\\
			Faculty of Mathematics, University of Bia{\l}ystok, 15-245 Bia{\l}ystok, Poland\\
			\texttt{gazeau@apc.in2p3.fr}
			\vspace{0.7cm}
			
			{\large Hamed Pejhan}\\
			Institute of Mathematics and Informatics, Bulgarian Academy of Sciences, Acad. G. Bonchev Str. Bl. 8, 1113 Sofia, Bulgaria\\
			\texttt{pejhan@math.bas.bg}
			\vspace{0.7cm}
			
			{\large Ivan Todorov}\\
			INRNE, Bulgarian Academy of Sciences, Tsarigradsko Chaussee 72, BG-1784 Sofia, Bulgaria

		\vfill
		
		\emph{Draft monograph / preprint. Submitted to arXiv and under consideration by Cambridge University Press.}
		
		\vspace*{1cm}	
		{\large \today}
		\vspace*{1cm}
	\end{titlepage}
	
	\frontmatter
	
\begin{dedication}
This monograph is dedicated to the memory of \textbf{Professor Ivan Todorov} (1933-2025), whose profound contributions to theoretical and mathematical physics and invaluable guidance were instrumental in shaping this research. We deeply regret his passing on February 14, 2025, during the preparation of this work. Grounded in his insights and notes, this study reflects his enduring impact on the field. His legacy will continue to inspire generations of physicists and mathematicians.
\end{dedication}
	\preface

At the dawn of our Universe, during the inflationary epoch, spacetime underwent a brief but dramatic burst of quasi-exponential expansion. In this extreme regime, all quantum fields were effectively massless, and their dynamics were dictated by the intertwined structures of de Sitter (dS) and conformal symmetries. These symmetries constrained the primordial quantum fluctuations, which ultimately seeded the large-scale structure of the cosmos we observe today. Understanding the mathematical fabric of these massless (conformal) fields is therefore central not only to fundamental physics but also to modern cosmology, shedding light on the quantum origin of cosmic structures and the near-scale invariance of primordial perturbations.

\begin{quote} 
    \textit{``...as time goes on, it becomes increasingly evident that the rules which the mathematician finds interesting are the same as those which Nature has chosen.''}
    \hfill \\\textbf{— Paul Dirac (1939)}
\end{quote}

This profound insight from Paul Dirac perfectly encapsulates the spirit of this volume, which explores the elegant mathematical structures underlying massless elementary systems — ``free'' fields or particles — in dS spacetime. dS spacetime, the maximally symmetric solution of Einstein's equations with a positive cosmological constant $\Lambda$, is a manifold of constant positive curvature that models an exponentially expanding Universe and closely approximates the large-scale geometry of our present cosmos.

This volume is part of a broader research program dedicated to developing a mathematically rigorous, representation-theoretic framework for elementary systems in dS spacetime. In line with Wigner's fundamental principle — that quantum elementary systems correspond to unitary irreducible representations (UIRs) of the underlying spacetime symmetry group — quantum elementary systems in dS spacetime are identified with (projective) UIRs of the dS relativity group $\mathrm{SO}_0(4,1)$ or its universal covering group $\mathrm{Sp}(2,2) \cong \mathrm{Spin}(4,1)$. Wigner originally formulated this classification within the flat Minkowski setting, where the rest mass $m$ and spin $s$ of a system serve as the fundamental invariants labeling the corresponding UIR of the Poincar\'{e} group $\mathrm{ISL}(2,\mathbb{C})$ — the relativity group of flat Minkowski spacetime. Introducing a constant curvature into spacetime is the sole mechanism by which Poincar\'{e} symmetry is deformed, giving rise to the dS and anti-dS symmetry groups; in this curved context, the notion of mass is replaced by a curvature-dependent energy parameter, while spin remains a discrete invariant.

Within this robust representation-theoretic framework, and building on concepts developed in the earlier volume, \textit{The de Sitter (dS) group and its representations (2nd ed., Springer, Cham, Switzerland, 2024)}, this work delves more deeply into the mathematical structures underlying dS massless elementary systems. Fully self-contained, it places particular emphasis on the interplay between massless conformal representations and their restrictions to dS spacetime, with Clifford algebras serving as a unifying algebraic framework throughout.

Beyond Wigner's foundational framework, it has long been recognized that the massless nature of elementary systems is intimately tied to deeper symmetry principles — most notably, conformal symmetry. At the heart of this symmetry lies the conformal group $\mathrm{U}(2,2)$ — the $\mathrm{U}(1)$-extended form of $\mathrm{SU}(2,2)\cong\mathrm{SO}(4,2)$ — whose unitary representations encompass a wide variety of elementary systems. Among these, the massless representations, commonly called ladder representations, form a distinguished subclass, ideally suited for describing physically relevant massless systems. First studied by Mack and Todorov in the late 1960s, these representations exhibit the striking property of remaining irreducible when restricted to the quantum-mechanical Poincar\'{e} subgroup $\mathrm{ISL}(2,\mathbb{C})$. In this limit, they yield the well-known Poincar\'{e} massless representations, characterized by vanishing rest mass and quantized helicity. These representations form the natural framework for describing lightlike — or massless — elementary systems in Minkowski spacetime and serve as a conceptual springboard for investigating their counterparts in more general geometric settings.

This volume undertakes a detailed study of the restriction of massless (ladder) conformal representations to dS spacetime. Specifically, it considers the embedding of $\mathrm{SO}_0(4,1)$ into the conformal group $\mathrm{U}(2,2)$ via its universal covering $\mathrm{Sp}(2,2)$. Within this framework, we analyze the restriction of conformal ladder representations to dS symmetry, emphasizing the persistence of their irreducibility and its implications for the structure of massless fields in curved geometries.

A defining feature of this work, setting it apart from earlier studies, is the central role assigned to the conformal Clifford algebra $\mathfrak{cl}(4,2)$, rather than to the group $\mathrm{U}(2,2)$ itself. This choice reflects a deliberate shift from a purely group-theoretic viewpoint to an algebraic framework in which spinors, symmetry generators, and group actions are treated within a single, unified structure. Working at the level of the Clifford algebra makes it possible to keep both conformal and dS symmetries manifest while retaining direct control over their spinorial realizations.

Within this framework, the spinorial carrier space is fixed from the outset by identifying it with a real $8$-dimensional alternative composition algebra of split-octonions. The realization of $8$-component Majorana spinors then follows canonically from this split-octonionic structure, with the Clifford generators represented as left-regular multiplication operators by a distinguished set of imaginary split-octonion units. As a consequence, the Majorana gamma matrices are real by construction, and the reality of the spinorial carrier space is fixed intrinsically, rather than imposed \emph{a posteriori} through a representation-dependent choice of basis. This intrinsic Majorana structure plays a central conceptual role; it enforces the minimality of fermionic degrees of freedom at the most fundamental level and aligns the spinorial construction with Wigner's notion of elementary systems as minimal carriers of the real symmetry of spacetime.

Although the split-octonionic product is non-associative, its alternativity provides sufficient coherence to allow a consistent realization of the Clifford relations once one passes to left-multiplication operators. The resulting construction is therefore not an algebra homomorphism in the strict sense; nevertheless, when the split-octonion units are represented as linear endomorphisms of the spinor space, operator composition becomes strictly associative, and the defining anti-commutation relations of $\mathfrak{cl}(4,2)$ are satisfied exactly at the operator level. In this way, algebraic, spinorial, and geometric structures are fused into a single framework that exploits the richness of the underlying composition algebra while retaining full associativity precisely where it is required.

Within this Clifford-theoretic setting, the conformal Lie algebra $\mathfrak{su}(2,2)\cong\mathfrak{so}(4,2)$ emerges naturally from commutators of bivectors in the even subalgebra $\mathfrak{cl}^{\mathrm{even}}(4,2)\cong\mathfrak{cl}(4,1)$, while the Clifford pseudoscalar generates the central $\mathfrak{u}(1)$, yielding the full conformal algebra $\mathfrak{u}(2,2)\cong\mathfrak{su}(2,2)\oplus\mathfrak{u}(1)$. Exponentiation of these generators produces elements of the conformal group $\mathrm{U}(2,2)$ entirely within the Clifford algebra. The same formalism accommodates the dS Lie algebra $\mathfrak{so}(4,1)$ and its universal covering group $\mathrm{Sp}(2,2)$, so that exponentiation, inversion, group multiplication, and spinorial action are all realized internally and explicitly within the Clifford framework.

Moreover, this realization naturally provides a unified treatment of spinorial and tensorial representations, allowing invariant bilinear forms, ladder operators, Casimir elements, and spectra to be constructed explicitly for massless fields of arbitrary helicity. By fixing the real (Majorana) structure at the outset and relegating other realizations — such as chiral bases — to unitary reorganizations of the same real spinor space, the framework consistently enforces a principle that runs throughout the book; physical and representation-theoretic minimality should be imposed prior to, and independently of, representation-dependent conventions.

In this sense, situating spinors within the algebraic architecture of split-octonions does more than provide an efficient computational tool. It exposes deep and previously hidden correspondences between conformal and dS symmetries and alternative — indeed exceptional — algebraic structures, and suggests that the organization of massless elementary systems in curved spacetime is governed by a more rigid and intrinsically algebraic logic than is apparent in conventional formulations.

The methods developed in this monograph are accordingly designed to be both rigorous and accessible. Graduate students are guided step by step through the algebraic, geometric, and representation-theoretic tools, while advanced researchers are provided with concrete techniques and fresh perspectives for explicit computations and theoretical explorations. By integrating spinorial, tensorial, and group-theoretic structures within a single algebraic setting, the monograph bridges multiple domains — from higher-spin physics and conformal geometry to quantum field theory (QFT) in curved spacetimes, as well as to string theory, holography, and cosmological modeling.

In this way, the work serves simultaneously as a pedagogical guide, a technical reference, and a conceptual resource. It clarifies how symmetry, curvature, and algebraic structure converge in the theory of massless fields, and it highlights the conceptual role of the split-octonion realization as a unifying structure that connects spinorial geometry, algebraic formalism, and the representation theory of massless conformal systems, pointing toward new directions for both mathematical and physical research.


\medskip
\noindent\textbf{Guide for the Reader}
\medskip

The monograph is organized so as to move from general orientation to explicit algebraic construction and then to field-theoretic realization. A working familiarity with Lie groups and algebras, basic representation theory, and the standard framework of QFT is assumed, although the necessary structures are introduced progressively and with an emphasis on conceptual coherence.

Chap. \ref{Chapter. Introduction} serves as an overview of the main themes, conventions, and scope of the work. It may be read independently as a general introduction to the conformal and dS setting, as well as to the representation-theoretic viewpoint adopted throughout, in which elementary systems are understood, in the sense of Wigner, as irreducible realizations of the underlying spacetime symmetry.

Chaps. \ref{Chapter 2} and \ref{Chapter 3} contain the main algebraic and representation-theoretic development. Chap. \ref{Chapter 2} introduces the split-octonionic and Clifford-algebraic machinery and develops the corresponding Majorana and chiral realizations of $\mathfrak{cl}(4,2)$. Chap. \ref{Chapter 3} builds on this framework to construct and analyze the massless (ladder) representations of $\mathfrak{u}(2,2)$, together with their invariant bilinear forms, spectral structure, and related vertex-algebraic aspects. These chapters form the technical core of the monograph.

Chap. \ref{Chapter 4} turns to the explicit construction of conformal massless fields and their restriction to dS spacetime, with particular emphasis on the scalar case. Readers primarily interested in the field-theoretic realization may wish to read this chapter in parallel with, or following, the main constructions of Chap. \ref{Chapter 3}.

Accordingly, readers whose main interest is conceptual or field-theoretic may begin with Chap. \ref{Chapter. Introduction} and proceed selectively to Chap. \ref{Chapter 4}, referring back to Chaps. \ref{Chapter 2} and \ref{Chapter 3} as needed. Readers seeking a complete algebraic and representation-theoretic development are encouraged to follow the text in sequence.

Throughout the monograph, conventions and notation are introduced at the points where they are needed, and are used consistently thereafter.

For ease of reference, a glossary is provided at the end of the monograph, while the acronyms are listed separately in the front matter.


\vspace{\baselineskip}
\begin{flushright}\noindent
Paris, France and Bia{\l}ystok, Poland \hfill {\it Jean-Pierre Gazeau}\\
Sofia, Bulgaria\hfill {\it Hamed Pejhan}\\
Sofia, Bulgaria\hfill {\it Ivan Todorov}
\end{flushright}
	\extrachap{Acknowledgements}

The authors would like to express their sincere gratitude to \textbf{Mrs. Boriana Todorova}, spouse and legal representative of the late Professor Ivan Todorov, for her understanding, support, and cooperation throughout the preparation of this monograph. Her gracious assistance in facilitating the posthumous inclusion of Professor Todorov as a co-author has been invaluable, and we deeply appreciate her encouragement and kindness during this difficult period.\\

\textbf{Hamed Pejhan} is supported by:
\begin{enumerate}
    \item{The National Science Fund, Ministry of Education and Science of Bulgaria, under contract KP-06-N92/2.}

    \item{The Bulgarian Ministry of Education and Science, Scientific Programme ``Enhancing the Research Capacity in Mathematical Sciences (PIKOM)'', No. DO1-67/05.05.2022.}
\end{enumerate}
	\tableofcontents
	
\extrachap{Acronyms}

\begin{description}
    \item[\textbf{CFT}]{Conformal Field Theory}
    \item[\textbf{dS}]{de Sitter}
    \item[\textbf{EdS}]{Euclidean de Sitter}
    \item[\textbf{IR}]{Irreducible Representation}
    \item[\textbf{QFT}]{Quantum Field Theory}
    \item[\textbf{UIR}]{Unitary Irreducible Representation}
    \item[\textbf{UPEIR}]{Unitary Positive Energy Irreducible Representation}
\end{description}
	
	\mainmatter
	\renewcommand*\vec{\mathaccent"017E\relax}

\setcounter{equation}{0} \chapter{An Overview of Conformal Massless Elementary Systems and Wigner's Vision in de Sitter (dS) Geometry}\label{Chapter. Introduction}

\begin{abstract}
    {This chapter sets the stage for the monograph by outlining its central motivations, thematic scope, and methodological framework. It places the study of conformal massless elementary systems — ``free'' fields or particles — within both a historical and conceptual context, tracing the influence of Ivan Todorov's pioneering work and the enduring significance of Wigner's principles for the classification of elementary systems. By emphasizing the interplay between representation theory, conformal invariance, and the geometry of de Sitter (dS) spacetime, the chapter provides a conceptual roadmap for analyzing massless elementary systems in cosmologically significant settings. It also establishes key conventions and notational standards, ensuring consistency throughout the text. Serving as both an introduction and a conceptual guide, this chapter prepares readers for the deeper algebraic, geometric, and physical developments in the subsequent chapters.}
\end{abstract}

\section{Conformal Symmetry in Historical Perspective and the Legacy of Ivan Todorov}

Conformal symmetry — the invariance of physical systems under transformations that preserve angles while not necessarily preserving distances — has occupied a foundational role in the development of theoretical physics. Its conceptual origins trace back to the early twentieth century, when mathematicians and physicists began the systematic study of transformations preserving local geometric structures, thereby establishing the algebraic and geometric groundwork for modern symmetry principles in both quantum field theory (QFT) and relativity. The recognition that geometry can enforce structural constraints on physical laws elevated symmetry to the status of a unifying paradigm across diverse domains of physics.

In the 1930s, Cartan's pioneering work \cite{Cartan} on differential geometry provided a rigorous framework for analyzing symmetries on curved manifolds, introducing concepts such as torsion, curvature, and moving frames. While Cartan did not explicitly formulate conformal symmetry in physics, his formalism later supplied essential geometric and algebraic machinery for its systematic study. Concurrently, Weyl's 1918 proposal \cite{Herman Weyl} of scale invariance, further developed during the 1920s and 1930s, highlighted the profound interplay between geometry and physical law, foreshadowing the systematic consideration of local angle-preserving and scale transformations in theoretical physics.

By the 1950s and early 1960s, the mathematical study of unitary representations of Lie groups, particularly the Poincar\'{e} group, had been advanced through the foundational work of Wigner, who classified elementary particles as unitary irreducible representations (UIRs) of spacetime symmetries \cite{Wigner1939}, and by Newton and Wigner's analysis of relativistic localization \cite{Newton/Wigner}. Further progress was made by V. Bargmann in his classification of Lorentz group representations \cite{Bargmann1947}, and by Gel'fand and Naimark in their systematic treatment of representations of classical groups \cite{GelfandNaimark}. Building on these foundations, mathematicians such as Segal \cite{Segal1951} and Mackey \cite{Mackey1950} developed the theory of induced representations and harmonic analysis on groups, while Harish-Chandra \cite{HarishChandra1952} established the general representation theory of semisimple Lie groups. These developments clarified the classification of massive and Poincar\'{e} massless representations — including helicity states and the more exotic continuous-spin cases — and provided representation-theoretic tools that were later extended to larger symmetry groups, notably the conformal group, as in the rigorous treatment by Gel'fand, Minlos, and Shapiro \cite{GelfandMinlosShapiro1963} and later in the field-theoretic constructions of Mack and Salam \cite{Mack1969}.

Building on algebraic and representation-theoretic foundations, Fronsdal's seminal 1978 work \cite{Fronsdal1978} provided a systematic formulation of free massless higher-spin fields in a Poincar\'{e}-invariant framework. His construction introduced the now-standard gauge-invariant Fronsdal equations, giving consistent field equations for totally symmetric particles of arbitrary spin. Although originally formulated in flat spacetime, this framework laid the foundation for later developments that extended higher-spin fields to (anti-)dS backgrounds and to conformal higher-spin theories. These advances were part of a broader trend in the 1970s and 1980s in which conformal symmetry emerged as a unifying principle across diverse areas of theoretical physics, including conformal field theory (CFT), string theory, the study of critical phenomena in statistical mechanics, and higher-spin gauge theories.

As this conceptual and algebraic framework took shape, the late 1960s marked a pivotal phase in the study of conformal symmetry, featuring one of the first systematic investigations of massless representations and the introduction of ladder representations of the conformal group \cite{MT}. Spearheaded by the landmark collaboration between Gerhard Mack and Ivan Todorov, this program demonstrated how Poincar\'{e} massless representations could be consistently extended to the conformal group. The resulting algebraic framework provided a mathematically rigorous characterization of massless elementary systems and established a bridge between conventional flat-space QFT and its conformally extended formulations. This formalism subsequently became foundational for the precise treatment of massless quantum fields in both Minkowski and curved spacetimes, underpinning later developments in CFT, higher-spin theories, and the study of critical phenomena in statistical mechanics.

It was within this evolving historical landscape that Ivan Todorov (1933-2025) emerged as a leading figure whose contributions bridged the realms of rigorous mathematics and theoretical physics. Born in Sofia, Todorov earned his MSc in Physics from Sofia University in 1956. He then pursued his PhD at the Joint Institute for Nuclear Research (JINR) in Dubna, completing his thesis in 1960 under the supervision of the renowned theoretical physicist Nikolai N. Bogoliubov \cite{IT 1933-2025, TodorovMuseum}. His early research at JINR and at the Institute for Advanced Study in Princeton laid the foundation for his seminal contributions. Todorov advanced the axiomatic and algebraic foundations of QFT and developed a mathematically precise understanding of CFTs, making the ladder representation program and its applications in conformal symmetry a central part of his lasting legacy.

\begin{figure}[H]
    \centering
    \includegraphics[width=0.9\textwidth]{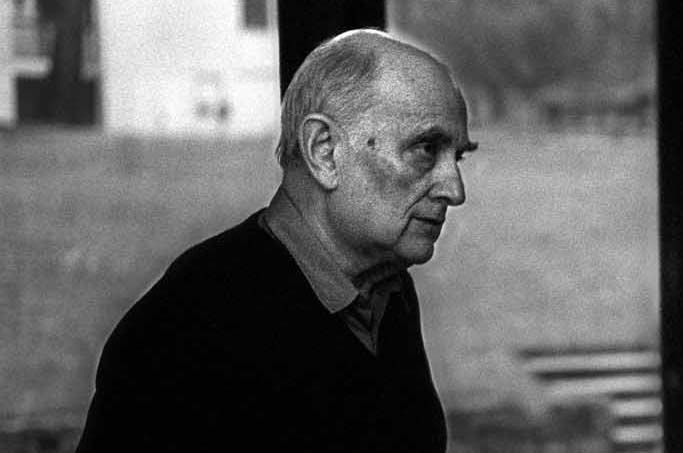} 
    \caption{\textbf{Professor Ivan Todorov (1933-2025)}, whose pioneering contributions to CFT, ladder representations, and massless quantum systems profoundly shaped mathematical physics.}
    \label{fig:todorov}
\end{figure}

Beyond his seminal collaboration with Nikolai N. Bogoliubov and Gerhard Mack, Todorov worked with distinguished scientists such as Valentine Bargmann, Louis Michel, Claude Itzykson, and Victor Kac, thereby enriching the theoretical understanding of symmetries and advancing the dialogue between algebraic structures and physical intuition \cite{TodorovIHES}. Throughout his career, he consistently emphasized the foundational role of group-theoretical methods, forging deep connections between Lie algebra representations, conformal transformations, and the structural analysis of QFT. His extensive body of work includes influential monographs, most notably \emph{General principles of quantum field theory} \cite{Bogoliubov}, \emph{Introduction to axiomatic quantum field theory} \cite{Bogoliubov2}, and \textit{Harmonic analysis: on the $n$-dimensional Lorentz group and its application to conformal quantum field theory} \cite{Dobrev}, co-authored with N.N. Bogoliubov and collaborators, which remain indispensable references for researchers. Equally significant were Todorov's integrity, his dedication to mentorship, and his sustained efforts to foster international collaboration. He played a leading role in organizing conferences and advanced schools that brought together Eastern and Western physicists during politically challenging times, thereby strengthening the global community of mathematical physics \cite{TodorovCERN}.

This monograph is dedicated to the memory of Professor Ivan Todorov, whose profound contributions and guidance were instrumental in shaping this work. Drawing extensively on his insights, notes, and foundational studies, it reflects his enduring impact on the field. Todorov's legacy continues to inspire generations of physicists and mathematicians, exemplifying the integration of rigorous mathematics with deep physical understanding that he championed throughout his career.

\section{Wigner's (Massless) Elementary Systems: Continuity under Curvature-Induced Deformations}

In modern theories of elementary systems — whether field-theoretic or phenomenological — their formulation, and in particular their interpretation, fundamentally rely on the concepts of energy, momentum, mass, and spin, which emerge from invariance under the Poincar\'{e} group, the relativity group of flat Minkowski spacetime. It is, however, widely acknowledged that such theories cannot ultimately rest solely on this symmetry. A fully consistent framework must respect the broader principle of general covariance, reflecting Einstein's vision of spacetime as a Riemannian manifold.

Beyond the familiar flat Minkowski setting, the absence of non-trivial motion groups in generic curved backgrounds makes the direct extension of these fundamental notions highly non-trivial, if not impossible. In this context, the common suggestion to merely generalize fundamental equations — such as the Klein-Gordon or Dirac equations — to fully covariant forms misses the deeper issue; it addresses only the formal structure, not the conceptual foundation. Indeed, modern theories of elementary systems are not primarily concerned with differential equations per se, but with the deeper symmetry principles that underlie them.

This is precisely where Wigner's principle — particularly, its remarkable continuity under curvature-induced deformations — proves essential, providing a robust and conceptually coherent framework for extending the classification and understanding of elementary systems from flat Minkowski spacetime to curved dS and anti-dS spacetimes.

\begin{figure}[t]
\centering
\begin{tikzpicture}[
    every node/.style={align=center},
    >=latex,
    box/.style={
        draw,
        thick,
        rounded corners=6pt,
        text width=3.35cm,
        minimum height=1.10cm,
        inner sep=4pt
    },
    central/.style={
        draw,
        very thick,
        rounded corners=6pt,
        fill=gray!14,
        text width=6.6cm,
        minimum height=1.30cm,
        inner sep=5pt
    },
    smallbranch/.style={
        draw,
        thick,
        rounded corners=5pt,
        fill=gray!5,
        text width=2.2cm,
        minimum height=0.85cm,
        inner sep=3pt
    },
    branchbox/.style={
        draw,
        thick,
        rounded corners=6pt,
        fill=gray!6,
        text width=4.9cm,
        minimum height=0.98cm,
        inner sep=4pt
    }
]

\node[box] (rep) at (0,4.8)
{\textbf{Representation Theory}\\
{\small unitary irreducible representations}};

\node[box] (phys) at (-3.4,0.0)
{\textbf{Elementary Systems}\\
{\small particles and fields}};

\node[box] (sym) at (3.4,0.0)
{\textbf{Spacetime Symmetry}\\
{\small Poincar\'{e}, de Sitter, and anti-de Sitter}};

\node[central] (wig) at (0,2.4)
{\textbf{Wigner's Paradigm}\\
{\small symmetry $\rightarrow$ unitary irreducible representations $\rightarrow$ elementary systems}\\[0.2em]
{\small \emph{Refinement (massless case):} conformal symmetry}};

\node[smallbranch] (dS) at (1.9,-1.9)
{\textbf{de Sitter}\\
{\small geometry}};

\node[smallbranch] (conf) at (4.9,-1.9)
{\textbf{Conformal}\\
{\small symmetry}};

\node[branchbox] (infl) at (3.45,-3.7)
{\textbf{Inflationary Cosmology}\\
{\small de Sitter geometry + conformal symmetry}};

\draw[thick] (rep) -- (wig);
\draw[thick] (phys) -- (wig);
\draw[thick] (sym) -- (wig);

\draw[thick] (sym.south) -- (3.4,-0.7);

\draw[thick,->] (3.4,-0.7) to[out=-110,in=90] (dS.north);
\draw[thick,->] (3.4,-0.7) to[out=-70,in=90] (conf.north);

\draw[thick,->] (dS.south) to[out=-90,in=155] (infl.north west);
\draw[thick,->] (conf.south) to[out=-90,in=25] (infl.north east);

\end{tikzpicture}
\caption{Wigner's paradigm, conformal refinement, and cosmological relevance. Representation theory, spacetime symmetry, and elementary systems are unified through Wigner's principle that elementary systems are realized as unitary irreducible representations (UIRs) of the spacetime symmetry group. For massless systems, conformal symmetry provides a distinguished refinement of this framework. In the cosmological setting, inflationary models naturally draw on both de Sitter (dS) geometry and conformal symmetry.}
\end{figure}

\subsection{Wigner's Elementary Systems}

In the field-theoretic formulation of elementary systems, the foundational works of Eugene Wigner \cite{Wigner1939, Newton/Wigner} established that the principles of special relativity impose symmetry requirements on the laws of nature — most notably, invariance under the Poincar\'{e} group in flat Minkowski spacetime.

Within the flat Minkowski framework, quantum elementary systems are identified with (projective) UIRs of the Poincar\'{e} group (or one of its coverings) \cite{Wigner1939, Newton/Wigner}. Each such representation is uniquely characterized by two invariant quantities: the rest mass $m$ and the spin $s$ of the system. These invariants not only encode the system's fundamental physical properties but also reflect the symmetry structure of the underlying Minkowski spacetime.

Remarkably, introducing a constant curvature into spacetime yields the unique natural and continuous deformation of the Poincar\'{e} symmetry group. This deformation gives rise to the dS and anti-dS groups of motion. The corresponding dS and anti-dS spacetimes \cite{deSitter, deSitter'} — maximally symmetric solutions of Einstein's equations with, respectively, positive and negative cosmological constant $\Lambda$ — thereby acquire a privileged status as the only curved backgrounds in which Wigner's symmetry-based definition of elementary systems admits a meaningful generalization. In these settings, elementary systems are once again described by UIRs of the relevant spacetime symmetry groups. As in the flat case, these representations are classified by two invariants — the spin and a characteristic mass or energy scale — thus preserving the core structural features of Wigner's original framework.

\begin{figure}[t]
\centering
\begin{tikzpicture}[
    every node/.style={align=center},
    ambient/.style={
        draw,
        very thick,
        rounded corners,
        minimum width=10.8cm,
        minimum height=9.7cm
    },
    group/.style={
        draw,
        thick,
        circle,
        minimum size=2.35cm,
        fill=gray!8
    },
    core/.style={
        draw,
        very thick,
        circle,
        minimum size=2.2cm,
        fill=gray!18
    },
    >=latex
]

\node[ambient] (confbox) at (0,1.0) {};

\node at (0,5.55) {\textbf{Conformal group}};
\node at (0,5.05) {$\mathrm{SU}(2,2)$};
\node at (0,4.63) {\scriptsize $\cong \mathrm{SO}(4,2)$};

\node[group] (ds) at (-3.55,1.55)
{\textbf{de Sitter}\\
$\mathrm{Sp}(2,2)$\\
{\scriptsize $\cong \mathrm{SO}(4,1)$}};

\node[group] (poin) at (0,1.55)
{\textbf{Poincar\'{e}}\\
$\mathrm{ISL}(2,\mathbb{C})$\\
{\scriptsize $\mathbb{R}^{1,3}\rtimes \mathrm{SL}(2,\mathbb{C})$}};

\node[group] (ads) at (3.55,1.55)
{\textbf{anti-de Sitter}\\
$\mathrm{Sp}(4,\mathbb{R})$\\
{\scriptsize $\cong \mathrm{SO}(3,2)$}};

\node[core] (lor) at (0,-2.0)
{\textbf{Lorentz subgroup}\\
$\mathrm{SL}(2,\mathbb{C})$\\
{\scriptsize $\cong \mathrm{SO}(3,1)$}};

\draw[very thick] (0,3.45) -- (0,3.00);
\draw[very thick] (0,0.30) -- (lor.north);

\draw[->, thick] (-1.1,3.66) to[bend right=16] (ds.north);
\draw[->, thick] (0,3.00) -- (poin.north);
\draw[->, thick] (1.1,3.66) to[bend left=16] (ads.north);

\node at (0,3.75) {subgroups};

\draw[->, thick] (ds.south east) to[bend right=18] (lor.north west);
\draw[thick] (poin.south) -- (lor.north);
\draw[->, thick] (ads.south west) to[bend left=18] (lor.north east);

\node at (0,-1.18) {\scriptsize shared subgroup};

\draw[->, thick, dashed]
(ds.east) to[bend left=14]
node[above] {\scriptsize contraction}
node[below] {\scriptsize $\Lambda\to 0$}
(poin.west);

\draw[->, thick, dashed]
(ads.west) to[bend right=14]
node[above] {\scriptsize contraction}
node[below] {\scriptsize $\Lambda\to 0$}
(poin.east);

\end{tikzpicture}
\caption{Schematic illustration of the relationships among the Poincar\'{e}, de Sitter (dS), and anti-de Sitter (AdS) groups as subgroups of the conformal group. Both dS and AdS contract to the Poincar\'{e} group in the flat (zero-curvature) limit $\Lambda\to 0$. Conversely, for non-zero curvature, the dS and AdS groups may be viewed as curved deformations of the Poincar\'{e} group. The conformal group provides a unifying framework encompassing all three.}
\end{figure}

\subsection{Wigner's Massless Elementary Systems}

Beyond Wigner's foundational framework, it has long been established that the massless nature of elementary systems is intimately connected to deeper symmetry principles — most notably, conformal symmetry \cite{Fronsdal1978, MT, D36, Mack1969, Mack}. This symmetry is not merely an auxiliary structure but constitutes an intrinsic aspect of massless field dynamics, manifesting both at the level of representation theory and in the behavior of field equations under spacetime transformations.

In the flat Minkowski setting, massless elementary systems correspond to UIRs of the Poincar\'{e} group with vanishing mass and discrete helicity — commonly referred to as Poincar\'{e} massless representations. A result of fundamental significance, due to Mack and Todorov \cite{MT, Mack}, establishes that each such representation admits a unique extension to a UIR of the conformal group $\mathrm{SU}(2,2)\cong\mathrm{SO}(4,2)$, or more generally its $\mathrm{U}(1)$-extended form $\mathrm{U}(2,2)$; these conformal UIRs are commonly called ladder representations. This framework furnishes a canonical and rigorous link between masslessness and conformal symmetry, so that any quantum system realizing a Poincar\'{e} massless representation necessarily carries a well-defined conformal action.

A parallel and equally deep structure is present in constant-curvature spacetimes. In dS and anti-dS geometries, the natural analogues of Poincar\'{e} massless representations are those UIRs of the respective symmetry groups whose unique conformal extensions are massless (ladder) representations. In representation-theoretic terms, these conformal extensions match the conformal extensions obtained from the massless Poincar\'{e} side, meaning that:
\begin{eqnarray*}
    &\mbox{The massless $\Longleftrightarrow$ conformal correspondence}&\\
    &\mbox{survives passage from flat spacetime to dS and anti-dS backgrounds,}&\\
    &\mbox{the only difference being that in dS the energy sign is undefined,}&\\
    &\mbox{while in anti-dS helicity is ambiguous.}&
\end{eqnarray*}

\subsection{Wigner's Theory in de Sitter (dS) Spacetime: From Quantum Structure to Cosmological Dynamics}

This work focuses on massless (ladder) representations in conformal space, and in particular on their restriction to $3+1$-dimensional dS spacetime, whose symmetry structure governs both the earliest \cite{Linde} and the asymptotic \cite{Riess, Perlmutter} stages of cosmic evolution. During the inflationary epoch ($\approx 10^{-36}$ to $10^{-32}$ seconds after the Big Bang), when the Universe underwent rapid quasi-exponential expansion, the Hubble scale vastly exceeded all intrinsic mass scales, rendering all elementary systems effectively massless. In this regime, dS symmetry together with conformal symmetry provides the natural framework for their consistent description. Inflationary (massless) elementary systems are realized through ladder representations of the conformal group $\mathrm{U}(2,2)$, which, upon restriction to the dS subgroup $\mathrm{Sp}(2,2)$, rigorously encode the quantum degrees of freedom of these fields. This representation-theoretic structure underlies the near-scale invariance of primordial fluctuations that seeded cosmic structure and, more generally, furnishes a canonical, symmetry-based classification of inflationary quantum states via UIRs.

Moreover, at late times, observations of distant Type Ia supernovae \cite{Riess, Perlmutter} reveal that the Universe is accelerating, a phenomenon attributed to ``dark energy''. A small, positive cosmological constant $\Lambda$ naturally models this acceleration, making dS geometry both the vacuum state and the asymptotic attractor of the Universe.

Accordingly, dS spacetime remains a central arena for QFT in curved backgrounds, offering deep insights at the interface of quantum physics, gravity, and cosmology. Across both early- and late-time cosmology, massless systems governed by conformal symmetry form a unifying thread. Ladder representations of $\mathrm{U}(2,2)$ provide a coherent and mathematically precise description of massless field dynamics, encompassing both the inflationary quantum fluctuations that seeded cosmic structure and the behavior of fields in the late-time accelerated Universe. This representation-theoretic framework thus directly links quantum symmetries to observable cosmological phenomena, highlighting their continuous and fundamental role throughout cosmic evolution.

\subsection{Conceptual Framework and Scope}\label{Sec. Research Objectives}

The present monograph is part of an ongoing body of research devoted to symmetry-based and representation-theoretic formulations of elementary systems in (anti-)dS spacetime, as developed and reviewed in the companion monograph \cite{Gazeau2022} and references therein. Within this general line of investigation, the present work concentrates on massless elementary systems in dS spacetime and on their intimate relation to conformal symmetry. Although the embedding:
\begin{align}
    \mathrm{SO}_{0}(4,1) \;\hookrightarrow\; \mathrm{U}(2,2) \quad(\mbox{via the covering}\; \mathrm{Sp}(2,2))
\end{align}
is well known at the group-theoretic level, its representation-theoretic and physical consequences — particularly for massless (ladder) representations — call for a focused and algebraically explicit treatment. The aim here is to provide such a treatment, emphasizing both the continuity of conformal structures under curvature-induced deformations and their concrete realization on spinorial carrier spaces.

As already noted, the guiding principle underlying this approach is Wigner's conception of an elementary system as a (``minimal'') irreducible realization of spacetime symmetry; physical degrees of freedom should be fixed as far as possible by symmetry alone, without the introduction of redundant or auxiliary structures. Massless systems are indeed paradigmatic in this respect, as their representations are maximally constrained, admit unique conformal extensions, and exhibit a striking rigidity under deformation from flat spacetime to (anti-)dS geometries. The algebraic framework adopted here is designed to reflect this principle of minimality at every level of the construction, beginning with — and in particular at — the level of the spinorial carrier space.

A central objective is therefore to move beyond purely abstract group-theoretic descriptions and to work instead within a concrete algebraic framework in which spinors, symmetry generators, and group actions are realized explicitly and uniformly. Rather than remaining within a group-representation approach based solely on $\mathrm{U}(2,2)$, as in earlier treatments \cite{MT}, this study adopts the conformal Clifford algebra $\mathfrak{cl}(4,2)$ as its primary language. Far from a change of notation, the Clifford framework furnishes intrinsic spinorial and geometric structures that sharpen both conceptual insight and explicit computation, and in which conformal and dS symmetries are embedded in a unified and canonical way.

In particular, the conformal algebra $\mathfrak{su}(2,2)\cong\mathfrak{so}(4,2)$ arises from commutators of bivectors in the even subalgebra $\mathfrak{cl}^{\mathrm{even}}(4,2)\cong\mathfrak{cl}(4,1)$, while the Clifford pseudoscalar generates the central $\mathfrak{u}(1)$, yielding the full conformal algebra $\mathfrak{u}(2,2)\cong\mathfrak{su}(2,2)\oplus\mathfrak{u}(1)$. Exponentiation of these generators produces invertible elements of $\mathrm{U}(2,2)$. Similarly, the dS algebra $\mathfrak{so}(4,1)$ appears as a distinguished subalgebra whose exponentials generate elements of $\mathrm{Sp}(2,2)$. Since the exponential power series converges within the finite-dimensional Clifford algebra — and, for bivectors, reduces to closed trigonometric or hyperbolic forms — group multiplication, inversion, and action on spinors take place entirely within the Clifford framework.

Within this setting, the choice of spinorial realization becomes conceptually significant. In Wigner's formulation, an elementary system is defined by a (``minimal'') UIR of the spacetime symmetry group, and any redundancy in the underlying spinorial description obscures this minimality. From this perspective, a real (Majorana) realization of the spinor space is preferred whenever it is available, since it encodes the physical degrees of freedom without introducing an \emph{a priori} complex structure not enforced by the real symmetry of spacetime.

In conventional matrix-based approaches, Majorana spinors arise only after selecting a particular representation of the Clifford algebra in which all gamma matrices are real. The Majorana condition is therefore basis-dependent and representation-specific. While mathematically consistent, this procedure leaves the reality structure of the spinor space tied to auxiliary conventions rather than fixed intrinsically by the algebraic framework itself.

The approach adopted in the present work, however, fixes the Majorana structure at the outset. By constructing $\mathfrak{cl}(4,2)$ explicitly on a real spinor space, reality is built into the framework prior to any choice of basis. Other commonly used realizations, such as the chiral basis, then arise as unitary reorganizations of the same real spinor space rather than as defining structures. This ordering reflects the physical requirement that minimality and reality be imposed before representation-dependent conveniences.

To implement this program concretely, the monograph develops an explicit real realization of $\mathfrak{cl}(4,2)$ on $8$-component Majorana spinors, grounded in the alternative composition algebra of split-octonions. The spinor space is identified with the split-octonion algebra itself, and the Clifford generators are realized canonically as left-regular multiplication operators by a distinguished set of imaginary split-octonion units. In this construction, the gamma matrices emerge directly from the intrinsic multiplication structure of the composition algebra and are real by design.

\begin{figure}[t]
\centering
\resizebox{\textwidth}{!}{%
\begin{tikzpicture}[
    >=latex,
    every node/.style={align=center},
    mainbox/.style={
        draw,
        rounded corners,
        thick,
        minimum height=1.15cm,
        text width=7.6cm,
        inner sep=5pt
    },
    centralbox/.style={
        draw,
        rounded corners,
        very thick,
        minimum height=1.35cm,
        text width=8.8cm,
        inner sep=6pt
    },
    sidebox/.style={
        draw,
        rounded corners,
        thick,
        minimum height=1.05cm,
        text width=4.1cm,
        inner sep=4pt
    },
    arrow/.style={->, thick}
]

\node[mainbox] (maj) at (-1.8,7.0)
{\textbf{Wigner minimality + canonical Majorana realization}\\
{\small split-octonions fix a minimal real $8$-component spinor space\\
(reality imposed prior to any representation choice)}};

\node[mainbox] (cliff) at (-1.8,4.0)
{\textbf{Conformal Clifford algebra}\\
$\mathfrak{cl}(4,2)$\\
{\small canonically realized on the spinor space}};

\node[centralbox] (even) at (-1.8,1.0)
{\textbf{Even Clifford algebra as the central algebraic arena}\\
$\mathfrak{cl}^{\mathrm{even}}(4,2)\cong \mathfrak{cl}(4,1)$\\[0.25em]
{\small bivectors $\Rightarrow \mathfrak{su}(2,2)$,\quad
pseudoscalar $\Rightarrow \mathfrak{u}(1)$}};

\node[mainbox] (calg) at (-1.8,-2.0)
{\textbf{Conformal algebra}\\
$\mathfrak{u}(2,2)\cong \mathfrak{su}(2,2)\oplus \mathfrak{u}(1)$};

\node[mainbox] (rep) at (-1.8,-5.0)
{\textbf{Massless (ladder) representations (any helicity)}\\
{\small spectra, ladder operators, invariant forms}};

\node[mainbox] (field) at (-1.8,-8.0)
{\textbf{Field-theoretic realization}\\
{\small conformal fields built on the same spinorial structure}};

\node[sidebox] (cgrp) at (6.5,-2.0)
{\textbf{Conformal group}\\
$\mathrm{U}(2,2)$\\
{\small inside the same Clifford setting}};

\draw[arrow] (maj.south) 
-- node[right,font=\small]{left multiplication} 
(cliff.north);

\draw[arrow] (cliff.south) 
-- node[right,font=\small]{even subalgebra} 
(even.north);

\draw[arrow] (even.south) 
-- node[right,font=\small]{commutators + center} 
(calg.north);

\draw[arrow] (calg.east) 
-- node[above,font=\small]{exponentiation} 
(cgrp.west);

\draw[arrow] (calg.south) 
-- node[right,font=\small]{UIR classification} 
(rep.north);

\draw[arrow] (rep.south) 
-- node[right,font=\small]{field construction} 
(field.north);

\end{tikzpicture}%
}
\caption{Conceptual structure of the conformal construction. Guided by Wigner's principle of minimality, the framework begins with a canonical real (Majorana) spinor realization provided by split-octonions. Left multiplication yields a concrete realization of the conformal Clifford algebra $\mathfrak{cl}(4,2)$, whose even subalgebra generates the conformal Lie algebra $\mathfrak{u}(2,2)$ via bivector commutators and the Clifford pseudoscalar. Exponentiation produces the conformal group $\mathrm{U}(2,2)$ within the same Clifford framework. This unified setting supports the classification of massless (ladder) representations of arbitrary helicity and their field-theoretic realization.}
\label{fig:conformal-roadmap}
\end{figure}

The essential algebraic ingredient underlying this realization is the alternativity of the split-octonion algebra. Although the split-octonionic product is non-associative, alternativity provides sufficient coherence in the multiplication rules to support a consistent implementation of the Clifford relations when one passes to left-multiplication operators. Accordingly, the resulting embedding is not an algebra homomorphism in the strict sense — indeed, in general $L_{AB}\neq L_A L_B$, reflecting the intrinsic non-associativity of the octonionic product — but this non-associativity is controlled and does not obstruct the construction of the Clifford generators.

Crucially, once the split-octonion units are realized as linear endomorphisms of the spinor space, their composition becomes strictly associative. The defining anti-commutation relations of $\mathfrak{cl}(4,2)$ are therefore enforced at the operator level, where associativity is restored, rather than at the level of the underlying composition algebra. In this way, the framework exploits the algebraic richness of split-octonions at the origin of the construction while retaining full associativity where it is required for a faithful realization of the Clifford algebra. This controlled interplay between non-associativity at the algebraic level and strict associativity at the level of operators constitutes a defining structural feature of the approach developed in this monograph.

This unified Clifford framework provides an explicit and fully representation-theoretic platform for computing invariant bilinear forms, ladder operators, spectra, and other structural data, while keeping conformal and dS symmetries manifest and tightly intertwined. Moreover, $\mathfrak{cl}(4,2)$ naturally accommodates both spinorial and tensorial representations, allowing the same algebraic machinery to describe fields of arbitrary helicity. With this explicit spinorial structure in place, the representation theory of massless elementary systems can be addressed constructively. The principal objective is the classification of massless ladder (positive-energy) representations of $\mathrm{U}(2,2)$ using the Majorana-split-octonionic realization of $\mathfrak{cl}(4,2)$, with particular attention to ladder operators, spectra, invariant bilinear forms, lowest-weight data, and unitary equivalences between different spinorial bases.

A further objective is to analyze the restriction of these conformal ladder representations to the dS group $\mathrm{Sp}(2,2)\cong\mathrm{SO}_{0}(4,1)$. This restriction provides a precise representation-theoretic realization of Wigner's notion of massless elementary systems in curved spacetime and allows one to track explicitly how invariant quantities and classification data behave under curvature-induced deformation.

Finally, the framework developed here is designed to serve as a foundation for the construction of quantum fields in curved spacetime. By keeping conformal and dS symmetries manifest at the level of spinorial carriers and representations, the Clifford-based approach provides a natural starting point for the field-theoretic realizations developed in subsequent chapters, including the explicit construction of conformally invariant fields and their restriction to dS backgrounds.

\begin{figure}[t]
\centering
\begin{tikzpicture}[every node/.style={align=center}]

\draw[black!80, thick, rounded corners=10pt] (-5.2,-5.5) rectangle (5.2,5.0);

\node[anchor=west] at (-5.0,4.45) {$\mathfrak{cl}(4,2)$};

\draw[black!85, thick]
  (0,4.70)
  .. controls ( 2.55,4.62) and ( 4.65,2.95) .. (4.75,0.18)
  .. controls ( 4.82,-3.05) and ( 3.25,-4.70) .. (0,-4.88)
  .. controls (-3.25,-4.70) and (-4.82,-3.05) .. (-4.75,0.18)
  .. controls (-4.65,2.95) and (-2.55,4.62) .. cycle;

\draw[black!85, thick]
  (0,3.28)
  .. controls ( 2.08,3.23) and ( 3.78,2.02) .. (3.85,-0.06)
  .. controls ( 3.92,-2.48) and ( 2.68,-3.75) .. (0,-3.92)
  .. controls (-2.68,-3.75) and (-3.92,-2.48) .. (-3.85,-0.06)
  .. controls (-3.78,2.02) and (-2.08,3.23) .. cycle;

\draw[black!85, thick]
  (0,1.88)
  .. controls ( 1.55,1.84) and ( 2.82,0.96) .. (2.88,-0.36)
  .. controls ( 2.95,-2.02) and ( 2.05,-3.00) .. (0,-3.15)
  .. controls (-2.05,-3.00) and (-2.95,-2.02) .. (-2.88,-0.36)
  .. controls (-2.82,0.96) and (-1.55,1.84) .. cycle;

\draw[black!85, thick]
  (0,0.35)
  .. controls ( 0.92,0.32) and ( 1.72,-0.22) .. (1.80,-0.98)
  .. controls ( 1.85,-2.00) and ( 1.38,-2.78) .. (0,-2.92)
  .. controls (-1.38,-2.78) and (-1.85,-2.00) .. (-1.80,-0.98)
  .. controls (-1.72,-0.22) and (-0.92,0.32) .. cycle;

\node at (0,4.12)
{$\mathfrak{cl}^{\mathrm{even}}(4,2)\cong \mathfrak{cl}(4,1)$};

\node at (0,2.40)
{$\mathfrak{u}(2,2)=\mathfrak{su}(2,2)\oplus \mathfrak{u}(1)$};

\node at (0,0.78)
{$\mathfrak{su}(2,2)\cong \mathfrak{so}(4,2)$};

\node at (0,-0.86)
{$\mathfrak{sp}(2,2)\cong \mathfrak{so}(4,1)$};

\end{tikzpicture}

\caption{Nested hierarchy of algebraic structures illustrating the relationships among Clifford and Lie algebras relevant to the conformal and spin groups in signature $(4,2)$. The full Clifford algebra $\mathfrak{cl}(4,2)$ contains its even subalgebra $\mathfrak{cl}^{\mathrm{even}}(4,2) \cong \mathfrak{cl}(4,1)$, which in turn includes the Lie algebra $\mathfrak{u}(2,2) \cong \mathfrak{su}(2,2) \oplus \mathfrak{u}(1)$. The special unitary algebra $\mathfrak{su}(2,2)$ is isomorphic to the conformal algebra $\mathfrak{so}(4,2)$ and contains the de Sitter (dS) subalgebra $\mathfrak{sp}(2,2) \cong \mathfrak{so}(4,1)$.}
\end{figure}

\subsection{Layout, Main Notational Conventions, and Comprehensive References}

\subsubsection{Layout}

To achieve our goal, the remainder of this monograph is organized as follows: 
\begin{enumerate}
    \item{The remainder of this chapter offers a concise overview of dS relativity, including the dS group and its manifold, the corresponding Lie algebra, and the quantum framework formulated in terms of dS UIRs.}

    \item{Chap. \ref{Chapter 2} constructs an explicit Majorana representation of the conformal Clifford algebra $\mathfrak{cl}(4,2)$ using split-octonions, providing a geometrically transparent framework for the internal symmetries of the conformal Lie algebra $\mathfrak{u}(2,2)$:
    \begin{align*}
        \mathfrak{u}(2,2) \cong \mathfrak{su}(2,2) \, \big(\cong \mathfrak{so}(4,2)\big) \oplus \mathfrak{u}(1) \;\subset\; \mathfrak{cl}(4,2)\,.
    \end{align*}
    It introduces a unitary transformation to the chiral basis, in which the Cartan elements of the maximal compact subalgebra of $\mathfrak{su}(2,2)$ are diagonal simultaneously with the $\mathfrak{u}(1)$ generator defined by the $\mathfrak{cl}(4,2)$ volume form, and systematically derives inner-product relations among matrices. These results provide a coherent, geometrically grounded framework for the study of massless conformal systems in flat Minkowski spacetime as well as in curved dS and anti-dS spacetimes.}

    \item{Chap. \ref{Chapter 3} develops a rigorous framework for positive-energy ladder representations of the conformal Lie algebra $\mathfrak{u}(2,2)$, realized as quantized models of massless fields with arbitrary helicity in $4$-dimensional Minkowski, dS, and anti-dS spacetimes. The construction is based on an invariant bilinear form on the Majorana-spinor space of $\mathfrak{cl}(4,2)$ and an explicit $4\times 4$ spinorial matrix realization arising from the isomorphism $\mathfrak{cl}^{\mathrm{even}}(4,2)\cong\mathfrak{cl}(4,1)$, providing a concrete geometric and computational foundation for conformal symmetry, ladder operators, and spectral analysis. The resulting lowest-weight representations are classified by conformal energy and helicity, remain irreducible upon restriction to the dS subalgebra $\mathfrak{sp}(2,2)\cong\mathfrak{so}(4,1)$, and admit an embedding into $4$-dimensional conformal vertex algebras that exposes their analytic structure, Casimir invariants, and a unified treatment of massless conformal systems in both flat and curved settings.}

    \item{Chap. \ref{Chapter 4} constructs low-helicity conformal massless fields, focusing on the zero-helicity case in dS spacetime. Massless scalar fields are formulated via ladder-type vertex operators with Bose operators and zero-modes, ensuring correct lowest-energy states and canonical two-point function, with analytic properties established in a precompact tube domain. The construction restricts fields from $6$-dimensional conformal space to $4$-dimensional dS spacetime using homogeneous light-cone functions, capturing gauge freedom and relating Euclidean and Minkowski-type coordinates via Weyl rescaling. The conformally invariant $6$-dimensional two-point function then projects onto dS spacetime, reproducing the standard massless scalar two-point function.}
\end{enumerate}

\subsubsection{Main Notational Conventions} \label{Sect. Convention}

Our main notational conventions are:
\begin{enumerate}
    \item{Throughout this monograph, we adopt natural units by setting $c = \hbar = 1$, where $c$ denotes the speed of light and $\hbar$ the Planck constant.}
    \item{We use the symbol `$\mathbbm{1}$' to denote the identity element within the relevant algebraic framework. In composition algebras, it represents the distinguished unit element that preserves the norm under multiplication; for complex numbers, quaternions, and (split) octonions, `$\mathbbm{1}$' indicates the multiplicative identity. In matrix representations, it corresponds to the identity matrix of the appropriate dimension, with indices added when necessary for clarity. In Clifford algebras, `$\mathbbm{1}$' serves as the unit element, ensuring closure under multiplication. The precise interpretation of `$\mathbbm{1}$' depends on the specific context in which it appears.}

    \item{We use the `$\overline{\phantom{a}}$' symbol to denote the complex conjugate, while `$^\ast$' denotes quaternionic or octonionic conjugation and, depending on context — particularly in matrix representations — also the Hermitian conjugate; for a matrix $\mathfrak{M}$, one has:
    \begin{align*}
        \mathfrak{M}^\ast = \big(\,\overline{\mathfrak{M}}\,\big)^\top \,,
    \end{align*}
    where `$^\top$' denotes transposition.}

    \item{We adopt the following conventions for the split-octonion units and their associated indices:
    \begin{enumerate}
        \item{The set of split-octonion units is denoted by:
        \begin{align*}
            A \;\text{and/or}\; B = \mathbbm{1}\,,\; \boldsymbol{k}\,,\; \boldsymbol{\imath}\,,\; {\boldsymbol{\jmath}}\,,\; \boldsymbol{\ell}\boldsymbol{\imath}\,,\; {\boldsymbol{\jmath}}\boldsymbol{\ell}\,,\;  \boldsymbol{\ell}\boldsymbol{k}\,,\; \boldsymbol{\ell}\,.
        \end{align*}}

        \item{The subset: 
        \begin{align}
            \nu \;\text{and/or}\; \mu = \boldsymbol{\imath}\,,\;{\boldsymbol{\jmath}}\,,\; \boldsymbol{\ell}\boldsymbol{\imath}\,,\; {\boldsymbol{\jmath}}\boldsymbol{\ell}\,,\; \boldsymbol{\ell}\boldsymbol{k}\,,\; \boldsymbol{\ell}\,.
        \end{align}}
    \end{enumerate}
    It must be emphasized, however, that the symbols $A, B$ (and hence the indices $\nu,\mu$) have a dual interpretation depending on context. They may represent actual split-octonion units in the algebra, or serve merely as labels or indices for the corresponding operators or matrix representations ($L_A,\, S_{AB},\, E_{AB},\, D_{AB},\, \mathbbm{1}_{AB},\, m^{}_\mu,\, m^{}_{\mu\nu},\, \Gamma^{}_\mu,\, \Gamma^{}_{\mu\nu},\, \gamma^{}_\mu,\, \gamma^{}_{\mu\nu},\, \dots$), with no octonionic meaning. In the latter case, whenever it aids readability without compromising clarity, we adopt a streamlined notation by representing the indices $\nu$ and/or $\mu$, respectively, as:
    \begin{align*}
        \nu \;\text{and/or}\; \mu =&\; 0\,(:=\boldsymbol{\imath})\,,\quad 5 \,(:={\boldsymbol{\jmath}})\,,\nonumber\\[0.2cm]
        &\;1\,(:=\boldsymbol{\ell}\boldsymbol{\imath})\,,\quad 2\,(:={\boldsymbol{\jmath}}\boldsymbol{\ell})\,,\quad 3\,(:=\boldsymbol{\ell}\boldsymbol{k})\,,\quad 4\,(:=\boldsymbol{\ell})\,.
    \end{align*}
    We denote any subset of indices $\nu$ and/or $\mu$, typically $1,2,3,4$, by $i,j$. Nevertheless, each symbol should be interpreted according to its context. This dual interpretation is discussed in Sect. \ref{Sect. 8uniii}, Eqs. \eqref{8uni}-\eqref{8uni'*}.}
        
    \item{In the context of the chiral representation, we denote:
    \begin{align*}
        a^{}_\pm \;\text{and/or}\; b^{}_\pm = {{\widehat{\mathbbm{1}\boldsymbol{k}}_\pm\,,\; \widehat{05}_\pm\,,\; \widehat{12}_\pm\,,\; \widehat{34}}}_\pm\,.
    \end{align*}
    For clarity and conciseness, we shall, where appropriate, employ a simplified numeric labeling in Chaps. \ref{Chapter 3} and \ref{Chapter 4} as:
    \begin{align*}
        a := a^{}_+ = 1\, \big(:={{\widehat{\mathbbm{1}\boldsymbol{k}}_+\big)\,,\quad 2\, \big(:= \widehat{05}_+\big)\,,\quad 3\, \big(:=\widehat{12}_+\big)\,,\quad 4\, \big(:=\widehat{34}}}_+\big)\,.
    \end{align*}
    Note that $a,b$ have a fundamentally different meaning from $i,j$.}

    \item{We denote indices $0,1,2,3,4$ by Greek letters $\alpha,\beta,\dots$, and indices $0,1,2,3$ by dotted letters $\dot{\mu}, \dot{\nu},\dots\;$.}
    
    \item{The framework of this monograph rests on the mostly positive metric signature, which will serve as our convention from the outset.}

    \item{Throughout this monograph, the symbol `$\cong$' denotes an isomorphism (or homomorphism) between algebras or groups. By a slight abuse of notation, the same symbol is also used to indicate a homeomorphism or continuous homomorphism between groups or topological spaces. Whenever the precise nature of the correspondence is relevant, it will be explicitly stated in the text.}

    \item{In this monograph, various operators are represented as block-diagonal matrices assembled from finite-dimensional matrix blocks (typically $2\times2$ or $4\times4$). To describe such objects unambiguously, we use the following notation:
    \begin{enumerate}
        \item{\textbf{\textit{Direct-sum notation}:} For any ordered list of matrices $(M_{1}, M_{2}, \dots, M_{n})$, we write:
        \begin{align}
            \bigoplus\big(M_{1} + M_{2} + \dots + M_{n}\big) := M_{1} \oplus M_{2} \oplus \dots \oplus M_{n}\,,
        \end{align}
        for the block-diagonal matrix whose diagonal blocks, in the order listed, are $M_{1}, M_{2}, \dots, M_{n}$. For example:
        \begin{align}
            \bigoplus\big(M_{1} + M_{2} + M_{3} + M_{4}\big) 
            = 
            \begin{pmatrix}
                M_{1} & 0 & 0 & 0 \\
                0 & M_{2} & 0 & 0 \\
                0 & 0 & M_{3} & 0 \\
                0 & 0 & 0 & M_{4}
            \end{pmatrix}\,.
        \end{align}
        A minus sign in the list indicates literal multiplication of the corresponding block by $-1$; for instance, $\bigoplus\big(M_{1} -M_{2} + M_{3} + M_{4}\big)$ places $-M_{2}$ in the second block. This convention allows expressions such as:
        \begin{align}
            \bigoplus\big( S_{\mathbbm{1} (\boldsymbol{\jmath}\boldsymbol{\ell})} + S_{\boldsymbol{\jmath}\,\boldsymbol{\ell}} -S_{\boldsymbol{\imath} (\boldsymbol{\ell}\boldsymbol{k})} + S_{\boldsymbol{k} (\boldsymbol{\ell}\boldsymbol{\imath})} \big)
        \end{align}
        to be written compactly and without ambiguity.}

        \item{\textbf{\textit{Commutativity up to isomorphism}:} For finite-dimensional matrices $M$ and $N$, the direct sum is not strictly commutative as a matrix identity. The two block arrangements differ only by a permutation of components, so that:
        \begin{align}
            \bigoplus \big( M+N \big) := M \oplus N\;\;\cong\;\; N \oplus M =: \bigoplus\big( N+M \big) \,,
        \end{align}
        where ``$\cong$'' denotes isomorphism via conjugation by an appropriate permutation matrix.

        In the present context, however, the blocks are arranged such that their order does not affect the resulting operator. That is, for the matrices considered here, the direct sum is literally commutative:
        \begin{align}\label{Oplus M+N=N+M}
             \bigoplus \big( M+N \big) := M \oplus N\;\;=\;\; N \oplus M =: \bigoplus\big( N+M \big) \,.
        \end{align}
        and blocks may be reordered freely without changing the outcome.}
    \end{enumerate}}
\end{enumerate}

\subsubsection{Comprehensive List of References}

We conclude this section with a comprehensive list of references pertinent to the discussion above:
\begin{enumerate}
    \item{For extensions of Wigner's seminal idea to Galilean systems, see In\"{o}n\"{u} \cite{Wigner1952}, L\'{e}vy-Leblond \cite{Levy-Leblond}, and Voisin \cite{Voisin}. Subsequent extensions to dS and anti-dS systems were developed by G\"{u}rsey \cite{Gursey1963} and Fronsdal \cite{Fronsdal 1, Fronsdal 2}, respectively. More recent studies in this context include \cite{Gazeau2022, Aldrovandi, Gazeau2023}.}

    \item{For the classification of all ten possible relativities connected via three types of contraction, based on symmetry principles and physically motivated assumptions — namely, isotropy of space, parity and time-reversal as group automorphisms, and non-compactness of boosts — see Ref. \cite{Bacry} by Bacry and L\'{e}vy-Leblond. In Ref. \cite{Levy}, emphasizing abstract group-theoretic methods, these results were rigorously established through inverse contraction, i.e., deformation.}

    \item{For the notion of ``masslessness'' and its interplay with conformal symmetry, see, in addition to the references cited above, Refs. \cite{BB, Flato, Binegar1983, DobrevPetkova, Flato', Branson1987, GazeauMurenzi, Francesco, EEMM, Gazeaus1, Jadczyk123, dSgravity2, dSgravity1, Bamba1, Massless2'', Massless2, Massless2', PMG, PMGLetsios, Gezim2023, GaPe2025-Dirac, dSTachyon, Gizem2025, Saha2025, Huguet2026}, which provide further details and, in particular, elucidate the structure in dS and anti-dS spacetimes.}

    \item{For the pivotal role of dS spacetime in the formulation of consistent QFTs in curved backgrounds, as well as its fundamental significance for quantum gravity and cosmology, providing a natural laboratory for exploring horizon effects, vacuum structure, and semiclassical phenomena, see Refs. \cite{001, 002, 003, 004, 005, 006, 008, 009, 0010, 0012', 0011, 0012, 0013, 0014, 0015, 0016, 0017}.}

    \item{For a deeper understanding of composition algebras — including the (split-)octonion algebra — encompassing their structural properties, classification, and broad range of applications in mathematics and theoretical physics, particularly in connection with division algebras, spinor constructions, and symmetry structures, the reader is referred to Refs. \cite{R08, Dubois2016, Todorov2017, Todorov2018, S2004, B, B20, T23}.}

    \item{A rigorous treatment of free quantum fields and their vacuum two-point functions is developed in the work of Segal; see in particular Ref. \cite{Segal123}. Analytic and harmonic perspectives related to tube-domain realizations and the kernel structures underlying two-point correlators can be found in the monograph of Hua \cite{Hua123}. Connections between representation theory and analytic function theory on symmetric spaces, which provide a broader geometric framework for such correlation functions, are explored in the seminal work of Piatetski-Shapiro \cite{Shapiro123}.}
\end{enumerate}

\section{Brief on de Sitter (dS) Relativity: Symmetry Group, Lie Algebra, and Unitary Irreducible Representations (UIRs)}\label{Sect. Brief on dS}

For future reference and to ensure internal consistency, this section provides a concise overview of dS relativity, covering the dS group and its manifold, the corresponding Lie algebra, and their quantum counterparts — the dS UIRs that characterize elementary systems in dS spacetime. For a more comprehensive and detailed treatment, readers are directed to the companion monograph \cite{Gazeau2022} and the references therein.

\subsection{Symmetry Group and Lie Algebra}

As noted earlier, dS spacetime is the unique maximally symmetric solution to Einstein's equations with a positive cosmological constant $\Lambda$. Its topology is given by $\mathbb{R}^1 \times \mathbb{S}^3$, reflecting that the manifold admits a global foliation by Cauchy hypersurfaces diffeomorphic to $\mathbb{S}^3$. No canonical choice of temporal direction exists; the $\mathbb{R}^1$ factor merely parametrizes the leaves of this foliation. Geometrically, it is a globally hyperbolic Lorentzian manifold with constant positive curvature, uniquely characterized by a curvature radius $R$, which sets the fundamental length scale of the spacetime and governs both its geometric and group-theoretic properties.

The dS manifold admits a convenient representation as a hyperboloid embedded in a $4+1$-dimensional Minkowski spacetime $\mathbb{R}^{4,1}$, and is defined by:
\begin{align}\label{dS-M_R}
    \text{dS} := \Big\{x = (x^\alpha) \in \mathbb{R}^{4,1} \;;\; (x)^2 := x \cdot x = \eta^{}_{\alpha\beta} x^\alpha x^\beta = R^2 \Big\}\,,
\end{align}
where $\alpha, \beta = 0, 1, 2, 3, 4$, the coordinates $x^\alpha$ are Cartesian, and the ambient Minkowski metric $\eta^{}_{\alpha\beta}$ has the signature $(-,+,+,+,+)$.

\begin{Remark}
    {\textbf{(adopted metric signature and its rationale).} In this monograph, as already mentioned, we adopt a mostly positive metric signature, departing from the convention used in the previous monograph \cite{Gazeau2022}. Far from being a mere change of notation, this choice is motivated by a deeper conceptual reason; it allows for a consistent embedding of all subsequent group and algebraic constructions into the alternative split-octonion algebra, a point that will be clarified naturally at the appropriate places throughout the exposition.}
\end{Remark} 

\begin{Remark}
    {\textbf{(cosmological interpretation of the curvature).} From a cosmological perspective, the curvature radius $R$ is typically associated with the inverse of the Hubble parameter ${\texttt{H}}$, modulo a factor accounting for physical dimensions:
    \begin{align}
        R = \sqrt{\frac{3}{\Lambda}} = \frac{1}{\texttt{H}}\,.
    \end{align}
    The Hubble parameter ${\texttt{H}}$ quantifies the exponential expansion rate of the spatial sections in dS spacetime.}
\end{Remark}

The dS relativity group $\mathrm{SO}_0(4,1)$ — or its universal covering group $\mathrm{Sp}(2,2)$ — is a ten-parameter group of linear transformations acting on the ambient $4+1$-dimensional Minkowski spacetime $\mathbb{R}^{4,1}$. These transformations preserve the quadratic form $(x)^2 = \eta^{}_{\alpha\beta} x^\alpha x^\beta$, maintain unit determinant, and preserve the orientation of the ``time'' coordinate $x^0$. A standard realization of the corresponding Lie algebra $\mathfrak{so}(4,1)\cong\mathfrak{sp}(2,2)$ is given by the linear span of ten Killing vector fields:
\begin{align}\label{Killing}
    K_{\alpha\beta} = x_\alpha \partial_\beta - x_\beta \partial_\alpha \,,
\end{align}
where $\partial_\alpha:= {\partial}/{\partial x^\alpha}$.

\subsection{Unitary Irreducible Representations (UIRs)}

At the representation level, the ten dS Killing vectors \eqref{Killing} correspond to (essentially) self-adjoint operators $J_{\alpha\beta}$ acting on a Hilbert space of (spinor-)tensor-valued functions on the dS manifold \eqref{dS-M_R}, square-integrable with respect to a dS-invariant inner product of Klein-Gordon type (or a suitable alternative):
\begin{align}\label{LLLLLLL}
    K_{\alpha\beta} \quad\longmapsto\quad J_{\alpha\beta} = M_{\alpha\beta} + S_{\alpha\beta}\,,
\end{align}
where the orbital part is defined by $M_{\alpha\beta} = - \mathrm{i} (x_\alpha\partial_\beta - x_\beta\partial_\alpha)$, while the spinorial part $S_{\alpha\beta}$ acts on the indices of the (spinor-)tensor-valued functions according to a specific permutation scheme \cite{Gazeau2022}.

In this context, there are two Casimir operators:
\begin{align}\label{Casimir 2}
    \mbox{quadratic}\;;\quad {\mathcal{C}}^{\mathfrak{sp}(2,2)}_2 = - \frac{1}{2} J^{\alpha\beta} J_{\alpha\beta}\,,
\end{align}
\begin{align}
    \mbox{quartic}\;;\quad {\mathcal{C}}^{\mathfrak{sp}(2,2)}_4 = - W^\alpha W_\alpha\,,
\end{align}
where the $W_\alpha$, as the dS counterpart of the Pauli-Lubanski operator, is given by $W_\alpha = - \frac{1}{8} {\varepsilon}_{\tiny{\alpha\beta\gamma\rho\delta}} J^{\beta\gamma} J^{\rho\delta}$, while ${\varepsilon}_{\tiny{\alpha\beta\gamma\rho\delta}}$ is the five-dimensional totally anti-symmetric Levi-Civita symbol. These Casimir operators commute with all generator representatives $J_{\alpha\beta}$, and hence, act like constants on all states in a certain dS UIR. Therefore, the eigenvalues assumed by the Casimir operators can be utilized to classify UIRs of the dS group. In the Dixmier notation \cite{Dixmier}, these eigenvalues are expressed in terms of two dS-invariant parameters, $p\in \mathbb{Z}_{\geq 0}/2$ and $q\in\mathbb{C}$, as follows:
\begin{align} \label{Casimir rank 2}
    \big\langle {\mathcal{C}}^{\mathfrak{sp}(2,2)}_2 \big\rangle = -p(p+1) - (q+1)(q-2) \,, 
\end{align}
\begin{align} \label{Casimir rank 4}
    \big\langle {\mathcal{C}}^{\mathfrak{sp}(2,2)}_4 \big\rangle = -p(p+1)q(q-1) \,.
\end{align}
Thus, the pair of dS-invariant parameters $(p,q)$ provides a complete characterization of the UIRs of the dS group. The classification naturally decomposes according to the admissible domains of these parameters, leading to three distinguished series of dS UIRs — principal, complementary, and discrete — as reviewed below (see Refs. \cite{Dixmier, Takahashi, Gazeau2022}).

\begin{Remark}
    {\textbf{(invariance of the Casimir eigenvalues).} The Casimir eigenvalues remain invariant under the transformation $q \longmapsto 1-q$. In other words, the representations labeled by the pairs $(p, q)$ and $(p, 1-q)$ possess identical Casimir spectra. By definition, such representations are said to be Weyl equivalent.}
\end{Remark}

\subsubsection{Principal Series Representations}

Principal series representations ${U}^{\mbox{\small{ps}}}_{s,\nu}$ are specified by $(p=s,q=\frac{1}{2}+\mathrm{i}\nu)$, where the label $p=s$ is to be understood as the spin. One must distinguish two separate families:
\begin{enumerate}
    \item{Integer-spin representations, with $s=0,1,2,\dots$ and $\nu\in \mathbb{R}$.}
    \item{Half-integer-spin representations, with $s=\frac{1}{2},\frac{3}{2},\frac{5}{2},\dots$ and $\nu\in \mathbb{R}-\{0\}$.}
\end{enumerate}

The principal UIRs ${U}^{\mbox{\small{ps}}}_{s,\nu}$, in the Poincar\'{e} contraction limit ($R\to\infty$, or equivalently $\Lambda\to 0$), reduce to the positive- and negative-energy Wigner massive UIRs of the Poincar\'{e} group, respectively denoted by ${\mathscr{P}}^\gtrless_{s,m}$ and characterized by spin $s$ and mass $m > 0$ \cite{Mickelsson}; symbolically:
\begin{align}\label{massive contraction}
    {U}^{\mbox{\small{ps}}}_{s,\nu} \;\; \underset{R \rightarrow \infty}{\longrightarrow} \;\; {\mathscr{P}}^>_{s,m} \; \oplus \; {\mathscr{P}}^<_{s,m} \,.
\end{align}
In this sense, the dS principal UIRs are identified as dS ``massive'' representations.

\begin{Remark}
    {\textbf{(on the contraction of the dS principal series to the Poincar\'{e} UIRs).} The apparent splitting of a dS principal (massive) UIR ${U}^{\mbox{\small{ps}}}_{s,\nu}$ into a direct sum of two Poincar\'{e} massive UIRs ${\mathscr{P}}^\gtrless_{s,m}$ (corresponding to positive- and negative-energy sectors) can be circumvented either by choosing suitable global dS modes — namely, the dS plane waves defined in the relevant analyticity tube domains \cite{Garidi} (see also \cite{Gazeau2022}) — or by carrying out the contraction within the framework of a causality-preserving dS semigroup \cite{Mizony}. On this basis, one obtains:
    \begin{align}\label{massive contraction'}
        {U}^{\mbox{\small{ps}}}_{s,\nu} \;\; \underset{R \rightarrow \infty}{\longrightarrow} \;\; {\mathscr{P}}^>_{s,m}\,.
    \end{align}}
\end{Remark}

\subsubsection{Complementary Series Representations}

Complementary series representations ${U}^{\mbox{\small{cs}}}_{s,\nu}$ are specified by $(p=s,q=\frac{1}{2}+\nu)$, where, once again, the parameter $p=s$ is interpreted as the spin. Two distinct subclasses arise:
\begin{enumerate}
    \item{Scalar representations, with $s=0$ and $\nu\in \mathbb{R}$ satisfying $0<|\nu|<\frac{3}{2}$.}
    \item{Spinorial representations, with $s=1,2,3,\dots$ and $\nu\in \mathbb{R}$ satisfying $0<|\nu|<\frac{1}{2}$.}
\end{enumerate}

From the standpoint of a Minkowskian observer, the only complementary series UIR of genuine physical significance is the scalar representation ${U}^{\mbox{\small{cs}}}_{s=0, \nu=\frac{1}{2}}$, i.e., $(p=0,q=1)$. This representation admits a unique extension (denoted by `$\hookrightarrow$') to the conformal group massless UIRs ${\mathscr{C}}^{\gtrless}_{\pm1,0,0}$. This extension is equivalent to the conformal extension of the massless scalar UIRs of the Poincar\'{e} group ${\mathscr{P}}^{\gtrless}_{0,0}$, corresponding respectively to positive and negative energies \cite{BB, Mack}:
\begin{align}\label{rtyt}
    \left. \begin{array}{ccccccc}
    & & {\mathscr{C}}^{>}_{1,0,0} & & {\mathscr{C}}^{>}_{1,0,0} & \hookleftarrow & {\mathscr{P}}^{>}_{0,0} \\[0.2cm]
    {U}^{\mbox{\small{cs}}}_{0,\frac{1}{2}} & \hookrightarrow & \oplus & \underset{R\rightarrow \infty}{\longrightarrow} & \oplus & &\oplus \\[0.2cm]
    & & {\mathscr{C}}^{<}_{-1,0,0} & & {\mathscr{C}}^{<}_{-1,0,0} & \hookleftarrow & {\mathscr{P}}^{<}_{0,0} \,.
    \end{array} \right.
\end{align}
In this sense, the scalar representation ${U}^{\mbox{\small{cs}}}_{0, \frac{1}{2}}$ is identified as ``massless''.

\begin{Remark}
    {\textbf{(strict conformal invariance and the relevant UIRs).} As already mentioned, conformal invariance, in the strict group-theoretical sense, involves massless (ladder) series representations (and their lowest limits) of (the universal covering of) the conformal group, or its double covering $\mathrm{SO}(4,2)$, or its fourfold covering $\mathrm{SU}(2,2)$. The relevant conformal UIRs are denoted by ${\mathscr{C}}^{\gtrless}_{\mathscr{E}_\circ,\texttt{j}_L, \texttt{j}_R}$, where $\mathscr{E}_\circ$ represents the lowest positive or highest negative conformal energy, and $(\texttt{j}_L,\texttt{j}_R) \in \mathbb{Z}_{\geq 0}/2 \times \mathbb{Z}_{\geq 0}/2$ label the $\mathrm{SU}(2) \times \mathrm{SU}(2)$ representations. This will be discussed in detail in Chap. \ref{Chapter 3}; in particular, see the arguments given in Remark \ref{Remark parity start}.}
\end{Remark}

\subsubsection{Discrete Series Representations}

Discrete series representations are denoted by $\Pi_{p,q}^\pm$, with each representation specified by the parameter pair $(p,q)$. In the symmetric case $\Pi^{\pm}_{p=s, q=s}$, with $s = \frac{1}{2}, 1, \frac{3}{2},\dots$, the parameter $p = s$ labels the absolute value of the helicity while the superscript `$\pm$' indicates its sign, corresponding to the two helicity states `$\pm s$'; for later reference, we denote the helicity by $\lambda:= \pm s = \pm\frac{1}{2}, \pm 1, \pm\frac{3}{2},\dots\;$. [For an in-depth examination of the concept of helicity, see Chap. \ref{Chapter 3}; in particular, consult the arguments presented in Remark \ref{Remark parity start} and Appendix \ref{appendix: dS UIRs}.] Within the discrete series, two subclasses can be identified:
\begin{enumerate}
    \item{Nonsquare-integrable scalar cases, with $p=1,2,\dots$ and $q=0$.}
    \item{Spinorial cases, with $p = \frac{1}{2},1,\frac{3}{2},\dots$ and $q = p,p-1,\dots,1$ or $\frac{1}{2}$ ($q>0$); the subclass with $q = \frac{1}{2}$ is not square integrable.}
\end{enumerate}

From the Minkowskian viewpoint, the physically relevant discrete series UIRs are precisely the symmetric cases $\Pi^{\pm}_{p=s,\, q=s}$, with $s = \frac{1}{2},1,\frac{3}{2},\dots$, that lie at the lower limit of the discrete series. These representations admit a unique extension to massless (ladder) UIRs of the conformal group; this conformal extension coincides, in a precise representation-theoretic sense, with the conformal uplift of the massless Poincar\'{e} UIRs of helicity $\lambda = \pm s$ \cite{BB, Mack}:
\begin{align}\label{rtyt2}
    \left. \begin{array}{ccccccc}
    & & {\cal{C}}^{>}_{s+1,s,0} & & {\cal{C}}^{>}_{s+1,s,0} & \hookleftarrow & {\mathscr{P}}^{>}_{s,0} \\[0.2cm]
    \Pi^+_{s,s} & \hookrightarrow & \oplus & \underset{R\rightarrow \infty}{\longrightarrow} & \oplus & & \oplus \\[0.2cm]
    & & {\cal{C}}^{<}_{-s-1,s,0} & & {\cal{C}}^{<}_{-s-1,s,0} & \hookleftarrow & {\mathscr{P}}^{<}_{s,0}\,,
    \end{array} \right.
\end{align}
\begin{align}\label{rtyt1}
    \left. \begin{array}{ccccccc}
    & & {\cal{C}}^{>}_{s+1,0,s} & & {\cal{C}}^{>}_{s+1,0,s} & \hookleftarrow & {\mathscr{P}}^{>}_{-s,0} \\[0.2cm]
    \Pi^-_{s,s} & \hookrightarrow & \oplus & \underset{R\rightarrow \infty}{\longrightarrow} & \oplus & & \oplus \\[0.2cm]
    & & {\cal{C}}^{<}_{-s-1,0,s} & & {\cal{C}}^{<}_{-s-1,0,s} & \hookleftarrow & {\mathscr{P}}^{<}_{-s,0} \,.
    \end{array} \right.
\end{align}
Accordingly, the UIRs $\Pi^{\pm}_{p=s,\, q=s}$, with $s = \frac{1}{2},1,\frac{3}{2},\dots$, are conventionally referred to as ``massless'' representations with helicity $\lambda = \pm s$.

\begin{Remark}{
    \textbf{(Casimir eigenvalues for the dS massless UIRs).} For later reference — and by including the scalar (zero-helicity; $\lambda=0$) massless case \eqref{rtyt} — the eigenvalues of the quadratic Casimir of the dS Lie algebra $\mathfrak{sp}(2,2)$ associated with the dS massless UIRs, with helicity $\lambda = \pm s= 0, \pm\tfrac{1}{2}, \pm 1, \pm\tfrac{3}{2},\dots$, are:
    \begin{align} \label{Casimir rank 2 massless}
        \big\langle {\mathcal{C}}^{\mathfrak{sp}(2,2)}_2 \big\rangle_{p=s=q} = -2\big( s^2 - 1 \big) = -2\big( \lambda^2 - 1 \big)\,. 
    \end{align}
}\end{Remark}
	\renewcommand*\vec{\mathaccent"017E\relax}

\setcounter{equation}{0} 

\chapter{Majorana Representation of the Conformal Clifford Algebra $\mathfrak{cl}(4,2)$ and its Chiral Counterpart}\label{Chapter 2}

\begin{abstract}
    {This chapter presents an explicit Majorana representation of the conformal Clifford algebra $\mathfrak{cl}(4,2)$ on $8$-component real spinors, drawing on the rich algebraic structure of the split-octonions — an $8$-dimensional composition algebra constructed via the Cayley-Dickson process. The discussion begins with a clear and systematic overview of the number systems, which naturally leads to the octonions; complex numbers and quaternions. Particular emphasis is placed on the Moufang identities, which capture alternativity, a relaxed form of associativity that remains fully compatible with spinor calculus. In the split-octonion setting, alternativity coexists with non-associativity, creating a subtle but well-controlled framework in which spinorial multiplications retain internal coherence. By exploiting left multiplication by a suitably chosen set of $6$ of the $7$ imaginary split-octonion units, the chapter develops a real-matrix representation of $\mathfrak{cl}(4,2)$ that is both geometrically transparent and ideally suited to the Majorana formalism. This construction provides a unified algebraic framework for the internal symmetries of the conformal Lie algebra $\mathfrak{u}(2,2)$ and naturally realizes the embedding chain:
    \begin{align*}
        \mathfrak{u}(2,2) \cong \mathfrak{su}(2,2) \, \big(\cong \mathfrak{so}(4,2)\big) \oplus \mathfrak{u}(1) \;\subset\; \mathfrak{cl}(4,2)\,,
    \end{align*}
    where the $\mathfrak{cl}(4,2)$ pseudoscalar (volume form) generates the $\mathfrak{u}(1)$ center of $\mathfrak{u}(2,2)$.}

    {Building on this foundation, the chapter investigates the complex structure inherent in $\mathfrak{cl}(4,2)$ and its action on spinor space, using the simpler $\mathfrak{cl}(2)$ case as an instructive prototype. We emphasize that the Majorana (real) basis and the chiral basis are merely two alternative bases of the same real Clifford algebra; no complexification is required. Exploiting this insight, a unitary transformation is constructed that maps the real-matrix basis of $\mathfrak{cl}(4,2)$ to the chiral basis, in which the Cartan elements of the maximal compact subalgebra of $\mathfrak{su}(2,2)$, together with the $\mathfrak{u}(1)$ generator defined by the $\mathfrak{cl}(4,2)$ volume form, assume a diagonal form. These results furnish a coherent, geometrically grounded framework for the study of massless conformal systems, applicable in flat Minkowski spacetime as well as in curved de Sitter (dS) and anti-dS spacetimes.}
\end{abstract}

\section{Composition Algebras, Cayley-Dickson Construction, and Moufang Identities}\label{Sect. oct}

A composition algebra $\mathfrak{A}$, defined over the real numbers $\mathbb{R}$, is a finite-dimensional real vector space endowed with the following structures:
\begin{enumerate}
    \item{\textbf{\emph{Norm form}:} A non-degenerate symmetric bilinear form $\langle\cdot , \cdot \rangle \;;\; \mathfrak{A} \times \mathfrak{A} \;\longmapsto\; \mathbb{R}$, which satisfies the following properties:
    \begin{enumerate}
        \item{Symmetry, meaning that $\langle x , y\rangle = \langle y , x \rangle$, for all $x, y \in \mathfrak{A}$.}
        \item{Non-degeneracy, implying that the map $x \longmapsto \langle x , \cdot \rangle$ is injective. [In other words, if $\langle \cdot, \cdot \rangle$ is a non-degenerate bilinear form on $\mathfrak{A}$ and $\langle x, y \rangle = \langle x^\prime, y \rangle$, for all $y \in \mathfrak{A}$, then $x = x^\prime$. To see this, let $z = x-x^\prime$. Then, for all $y \in \mathfrak{A}$, we have $\langle z, y \rangle = \langle x, y \rangle - \langle x^\prime, y \rangle = 0$. By the non-degeneracy of $\langle \cdot, \cdot \rangle$, we obtain $z = 0$, and hence, $x = x^\prime$.]}
    \end{enumerate}
    This bilinear form induces a quadratic form:
    \begin{align}
    	N(x) := \langle x, x \rangle \,,
    \end{align}
    referred to as the ``norm'' on $\mathfrak{A}$.}
    
    \item{\textbf{\emph{Bilinear product}:} A bilinear product $\diamond \;;\; \mathfrak{A} \times \mathfrak{A} \;\longmapsto\; \mathfrak{A}$, denoted by $x \diamond y =: xy$, satisfying the norm composition property:
    \begin{align}\label{normdecom}
        N(xy) = N(x)N(y)\,,
    \end{align} 
    for all $x, y \in \mathfrak{A}$. [This multiplicative property guarantees that every non-zero element in $\mathfrak{A}$ has a well-defined inverse, making composition algebras particularly valuable in physics and geometry for modeling norm-preserving transformations, such as rotations and symmetry operations. Since rotations are inherently reversible, the existence of inverses ensures that applying a rotation and then its inverse restores an object to its original position. Moreover, this property ensures that multiplication in the algebra preserves norms; for any two elements $x, y \in \mathfrak{A}$, their product $xy$ satisfies $N(xy) = N(x)N(y)$, meaning multiplication does not alter the individual norms but only affects orientation.]}

    \item{\textbf{\emph{Unit element}:} A distinguished element $\mathbbm{1} \in \mathfrak{A}$ (the unit), satisfying $ \mathbbm{1}x = x\mathbbm{1} = x$, for all $x \in \mathfrak{A} $. From the above, it is clear that $N(\mathbbm{1}) = 1$. Subsequently, for each $\varrho\in\mathbb{R}$ (i.e., the members of the field), we have $N(\varrho\mathbbm{1}) = \varrho^2$.}
\end{enumerate}

Considering the above, one can readily verify the following identity:
\begin{align}\label{muhum}
    \langle x , y \rangle = \frac{1}{2} \big( N(x+y) - N(x) - N(y) \big) \,, 
\end{align}
for all $x,y \in \mathfrak{A}$, which expresses the bilinear form $\langle \cdot, \cdot \rangle$ in terms of the quadratic norm $N$. As a direct consequence of this identity, together with the norm composition property \eqref{normdecom}, it follows that:
\begin{align}\label{muhumvv}
    \langle z x , z y \rangle = \langle x z , y z \rangle = N(z)\langle x , y \rangle\,,
\end{align}
again, for all $x,y,z \in \mathfrak{A}$. The identity \eqref{muhumvv} then immediately yields:
\begin{align}\label{multi}
    \langle x , y \rangle \langle z , w \rangle = \frac{1}{2} \big( \langle xz , yw \rangle + \langle xw , yz \rangle \big)\,,
\end{align}
for all $x,y,z,w \in \mathfrak{A}$. To establish the latter identity, using Eq. \eqref{muhumvv}, it suffices to expand the right-hand side:
\begin{align}
    \langle xz , yw \rangle + \langle xw , yz \rangle &=  \langle x(z+w) , y(z+w) \rangle - \langle xz , yz \rangle - \langle xw , yw \rangle \nonumber\\[0.2cm]
    &= N(z+w) \langle x , y \rangle - N(z) \langle x , y \rangle - N(w) \langle x , y \rangle \nonumber\\[0.2cm]
    &= \langle x , y \rangle \big( N(z+w) - N(z) - N(w) \big) \nonumber\\[0.2cm]
    &= 2\langle x , y \rangle \langle z , w \rangle \,.
\end{align}

On the other hand, substituting $y$ by $\mathbbm{1}$ and $z$ by $x$ into Eq. \eqref{multi}, and applying \eqref{muhumvv}, we obtain:
\begin{align}
    \langle x , \mathbbm{1} \rangle \langle x , w \rangle = \frac{1}{2} \big( \langle x^2 , w \rangle + N(x) \langle w , \mathbbm{1} \rangle \big)\,,
\end{align}
or equivalently:
\begin{align}
    \left\langle x^2 + N(x)\mathbbm{1} - 2\langle x , \mathbbm{1} \rangle\, x \;,\; w \right\rangle = 0\,.
\end{align}
Then, the non-degeneracy of the bilinear form $\langle \cdot , \cdot\rangle$ ensures that every element $x$ in the composition algebra $\mathfrak{A}$ satisfies the following second-order equation:
\begin{align}\label{secondOrder}
    x^2 - \mathrm{Tr}(x)\, x + N(x)\mathbbm{1} = 0\,,
\end{align}
where $\mathrm{Tr}(x)$ denotes the trace in the composition algebra $\mathfrak{A}$, defined by:
\begin{align} \label{trace}
    \mathrm{Tr}(x) = 2\langle x , \mathbbm{1} \rangle \; \big( \in\mathbb{R}\big)\,.
\end{align}
By employing Eq. \eqref{multi} alongside the symmetric property of the bilinear form $\langle \cdot, \cdot \rangle$, it follows that the trace satisfies the symmetry property:
\begin{align}\label{xy=yx}
    \mathrm{Tr}(xy) &= 2 \langle xy, \mathbbm{1} \rangle \nonumber\\[0.2cm]
    &= 4\langle x, \mathbbm{1} \rangle \langle y, \mathbbm{1} \rangle - 2\langle x, y \rangle \nonumber\\[0.2cm]
    &= 4\langle y, \mathbbm{1} \rangle \langle x, \mathbbm{1} \rangle - 2\langle y, x \rangle \nonumber\\[0.2cm]
    &= 2 \langle yx, \mathbbm{1} \rangle = \mathrm{Tr}(yx) \,,
\end{align}
for all $x,y \in \mathfrak{A}$. 

Within this framework, the Cayley-Dickson conjugate of an element $x \in \mathfrak{A}$ is defined by:
\begin{align}\label{CD-conjugate}
    x^\ast := \mathrm{Tr}(x)\mathbbm{1} - x\,.
\end{align}
Utilizing this identity, one readily verifies that:
\begin{align}\label{x+y=x+y}
    \mathrm{Tr}(x+y) = \mathrm{Tr}(x) + \mathrm{Tr}(y)\,,
\end{align}
for all $x,y \in \mathfrak{A}$.

By substituting $\mathrm{Tr}(x)$ from Eq. \eqref{CD-conjugate} into \eqref{secondOrder}, we derive the following explicit form for the norm on $\mathfrak{A}$:
\begin{align}\label{010203}
    N(x) = x^\ast x\,.
\end{align}
Using the materials mentioned above — specifically Eqs. \eqref{muhum} and \eqref{xy=yx}-\eqref{010203}, one readily verifies that the Cayley-Dickson conjugation satisfies the following key properties:
\begin{align}
    \text{Additivity ;}\quad (x+y)^\ast = x^\ast + y^\ast\,, 
\end{align}
\begin{align}
    \text{Involution ;}\quad (x^\ast)^\ast = x\,, 
\end{align}
\begin{align}
    \text{Norm preservation ;}\quad N(x) = x^\ast x = xx^\ast = N(x^\ast)\,, 
\end{align}
\begin{align}
    \label{Multiplication reversal} \text{Multiplication reversal ;}\quad (xy)^\ast = y^\ast x^\ast\,.
\end{align}
While the first three identities follow almost immediately from the relations mentioned above, establishing the latter requires further clarification:
\begin{align}
    (xy)^\ast &= \mathrm{Tr}(xy) \mathbbm{1} - xy \nonumber\\[0.2cm]
    &= 4\langle y, \mathbbm{1} \rangle \langle x, \mathbbm{1} \rangle \mathbbm{1} - 2\langle y, x \rangle \mathbbm{1} - xy \nonumber\\[0.2cm]
    &= \mathrm{Tr}(y) \mathrm{Tr}(x)\mathbbm{1} - \big( N(y+x) - N(y) - N(x) \big)\mathbbm{1} - xy \nonumber\\[0.2cm]
    &= \mathrm{Tr}(y) \mathrm{Tr}(x)\mathbbm{1} - \big( (y+x)(y+x)^\ast - yy^\ast - xx^\ast \big)\mathbbm{1} - xy \nonumber\\[0.2cm]
    &= \mathrm{Tr}(y) \mathrm{Tr}(x)\mathbbm{1} - x \big(y^\ast + y\big) - yx^\ast \nonumber\\[0.2cm]
    &= \mathrm{Tr}(y) \mathrm{Tr}(x)\mathbbm{1} - x \big(\mathrm{Tr}(y)\mathbbm{1}\big) - y\big(\mathrm{Tr}(x)\mathbbm{1} - x\big) \nonumber\\[0.2cm]
    &= \big( \mathrm{Tr}(y)\mathbbm{1} - y \big) \big( \mathrm{Tr}(x)\mathbbm{1} - x \big) = y^\ast x^\ast \,.
\end{align}

A fundamental result, known as the Hurwitz theorem, constrains the possible dimensions of composition algebras over $\mathbb{R}$. These algebras can only exist in dimensions $1, 2, 4, $ and $8$. The corresponding cases are:
\begin{enumerate}
    \item{The real numbers $\mathbb{R}$, which are both commutative and associative.}
    \item{The complex numbers $\mathbb{C}$, which are also commutative and associative.}
    \item{The quaternions $\mathbb{H}$, which are non-commutative but retain associativity.}
    \item{The octonions $\mathbb{O}$, which are neither commutative nor associative.}
\end{enumerate}
As already mentioned, these algebras serve as essential tools in mathematics and physics, particularly in areas requiring norm-preserving transformations, such as symmetry operations and rotations. 

To ensure this manuscript is self-contained, the following subsections will delve deeper into the structure and properties of the complex numbers, quaternions, and octonions, providing a comprehensive overview of their defining characteristics.

\subsection{Complex Numbers}\label{Sect. complex numbers}

The complex numbers $\mathbb{C} \ni z = \mathbbm{1} z_{}^{0} + \mathrm{i} z_{}^{1} =: z_{}^{0} + \mathrm{i} z_{}^{1}$ are defined in terms of their real components $z_{}^{0}, z_{}^{1} \in \mathbb{R}$. The real and imaginary units are denoted by $\mathbbm{1} = \mathbbm{1}1 + \mathrm{i} 0 =: 1 $ and $ \mathrm{i}$, respectively, with the property that $\mathrm{i}^2 = -1$. This structure forms a (commutative and associative) composition algebra over the real numbers when equipped with:  
\begin{enumerate}
    \item{\textbf{\emph{Real bilinear form}:}
    \begin{align}
        \langle z , w \rangle = \left\langle z_{}^{0} + \mathrm{i} z_{}^{1} , w_{}^{0} + \mathrm{i} w_{}^{1} \right\rangle = z_{}^{0} w_{}^{0} + z_{}^{1} w_{}^{1}\,,
    \end{align}
    for any $z,w \in \mathbb{C}$ ($ z_{}^{0}, z_{}^{1}, w_{}^{0}, w_{}^{1} \in \mathbb{R} $). The above bilinear form induces the quadratic norm $ N(z) = (z^{0})^2 + (z^{1})^2 $. Moreover, considering Eqs. \eqref{trace} and \eqref{CD-conjugate}, the above bilinear form yields:
    \begin{align}
        \overline{z} = \overline{ z_{}^{0} + \mathrm{i} z_{}^{1} } = z_{}^{0} - \mathrm{i} z_{}^{1}\,,
    \end{align}
    where the `$\overline{\phantom{a}}$' symbol denotes the complex conjugate.}  
    \item{\textbf{\emph{Multiplication operation}:}
    \begin{align}
        z w &= \big(z_{}^{0} + \mathrm{i} z_{}^{1}\big) \big(w_{}^{0} + \mathrm{i} w_{}^{1}\big) = \big(z_{}^{0} w_{}^{0} - z_{}^{1} w_{}^{1}\big) + \mathrm{i} \big(z_{}^{0} w_{}^{1} + z_{}^{1} w_{}^{0}\big)\,,
    \end{align}
    again, for any $z,w \in \mathbb{C}$ ($ z_{}^{0}, z_{}^{1}, w_{}^{0}, w_{}^{1} \in \mathbb{R} $). Employing this identity, it is easy to show that $N(z) = \overline{z}\, z = z\,\overline{z} = N(\overline{z})$.}
\end{enumerate}

The product of two (arbitrary) complex numbers $z$ and $w$ satisfies the norm composition property:
\begin{align}
    N(zw) = N(z) N(w)\,.
\end{align}
To explicitly verify this, observe that:
\begin{align}
    N(zw) 
    &= \big(z_{}^{0} w_{}^{0} - z_{}^{1} w_{}^{1}\big)^2 + \big(z_{}^{0} w_{}^{1} + z_{}^{1} w_{}^{0}\big)^2 \nonumber\\[0.2cm]
    &= \big((z_{}^{0})^2 + (z_{}^{1})^2\big) \big((w_{}^{0})^2 + (w_{}^{1})^2\big) = N(z) N(w)\,.
\end{align}  
This multiplicative property establishes that the complex numbers form a composition algebra, where the multiplication is both commutative and associative. 

\subsection{Quaternions: Extending the Complex Numbers}\label{Sect. quat}

Building on the properties of complex numbers, quaternions are a $4$-dimensional extension of the number system. Introduced by William Rowan Hamilton in 1843, quaternions are defined as numbers of the form:
\begin{align}\label{quat def}
    \mathbb{H} = \Big\{ q &= \mathbbm{1} q^{0} + {{\boldsymbol{\imath}}} q^{1} + {{\boldsymbol{\jmath}}} q^{2} + {\boldsymbol{k}} q^{3} \nonumber\\[0.1cm]
    &=: q^{0} + {{\boldsymbol{\imath}}} q^{1} + {{\boldsymbol{\jmath}}} q^{2} + {\boldsymbol{k}} q^{3} \;\; ; \;\; q^{0},q^{1},q^{2},q^{3} \in \mathbb{R} \Big\}\,,
\end{align}
where $\mathbbm{1} = \mathbbm{1}1 + {\boldsymbol{\imath}} 0 + {{\boldsymbol{\jmath}}} 0 + {\boldsymbol{k}} 0 =: 1$ is the real unit element, and ${\boldsymbol{\imath}}$, ${\boldsymbol{\jmath}}, \boldsymbol{k}$ are the quaternion imaginary units. These units obey the following defining multiplication rules:
\begin{align}
    {{\boldsymbol{\imath}}} {{\boldsymbol{\jmath}}} = - {{\boldsymbol{\jmath}}} {{\boldsymbol{\imath}}} = {\boldsymbol{k}}\,, \quad {{\boldsymbol{\jmath}}} {\boldsymbol{k}} = - {\boldsymbol{k}} {{\boldsymbol{\jmath}}} = {{\boldsymbol{\imath}}}\,, \quad {\boldsymbol{k}} {{\boldsymbol{\imath}}} = - {{\boldsymbol{\imath}}} {\boldsymbol{k}} = {{\boldsymbol{\jmath}}}\,,
\end{align}
\begin{align}
    \mathbbm{1} {{\boldsymbol{\imath}}} = {{\boldsymbol{\imath}}} \mathbbm{1} = {{\boldsymbol{\imath}}} \,, \quad \mathbbm{1} {{\boldsymbol{\jmath}}} = {{\boldsymbol{\jmath}}} \mathbbm{1} = {{\boldsymbol{\jmath}}} \,, \quad \mathbbm{1} {\boldsymbol{k}} = {\boldsymbol{k}} \mathbbm{1} = {\boldsymbol{k}}\,,
\end{align}
\begin{align}
    ({{\boldsymbol{\imath}}})^2 = ({{\boldsymbol{\jmath}}})^2 = ({\boldsymbol{k}})^2 = - \mathbbm{1} =: -1 \,.
\end{align}
Then, the multiplication of two quaternions $q$ and $p$ reads as:
\begin{align} \label{quat times}
    q p =&\, \left( \mathbbm{1} q^{0} + {{\boldsymbol{\imath}}} q^{1} + {{\boldsymbol{\jmath}}} q^{2} + {\boldsymbol{k}} q^{3} \right) \left( \mathbbm{1} p^{0} + {{\boldsymbol{\imath}}} p^{1} + {{\boldsymbol{\jmath}}} p^{2} + {\boldsymbol{k}} p^{3} \right) \nonumber\\[0.1cm]
    =&\, \mathbbm{1} \left( q^{0} p^{0} - q^{1} p^{1} - q^{2} p^{2} - q^{3} p^{3} \right) \nonumber\\[0.1cm]
    &+ {{\boldsymbol{\imath}}} \left( q^{0} p^{1} + q^{1} p^{0} + q^{2} p^{3} - q^{3} p^{2} \right) \nonumber\\[0.1cm]
    &+ {{\boldsymbol{\jmath}}} \left( q^{0} p^{2} + q^{2} p^{0} + q^{3} p^{1} - q^{1} p^{3} \right) \nonumber\\[0.1cm]
    &+ {\boldsymbol{k}} \left( q^{0} p^{3} + q^{3} p^{0} + q^{1} p^{2} - q^{2} p^{1} \right) \,,
\end{align}
where $q^{0}, \ldots, q^{3}, p^{0}, \ldots, p^{3} \in \mathbb{R}$. The above explanation makes it evident that quaternion multiplication, while associative, is not commutative.

Under the conjugation, the imaginary units of the quaternion behave as:
\begin{align}
    ({\boldsymbol{\imath}})^{\ast} = -{\boldsymbol{\imath}}\,, \quad ({\boldsymbol{\jmath}})^{\ast} = -{\boldsymbol{\jmath}}\,, \quad &(\boldsymbol{k})^{\ast} = -\boldsymbol{k}\,, \nonumber\\[0.2cm] 
    \mbox{and, of course,} \quad (\mathbbm{1})^{\ast} = &\mathbbm{1} =: 1\,.
\end{align}
Then, the squared norm of a quaternion $q$ is given by:
\begin{align}
    N(q) = q^\ast q = qq^{\ast} = \big(q^{0}\big)^2 + \big(q^{1}\big)^2 + \big(q^{2}\big)^2 + \big(q^{3}\big)^2\,,
\end{align}
and, subsequently, the inverse of a non-zero quaternion $q$ by: 
\begin{align}
    q^{-1} = \frac{q^{\ast}}{N(q)}\,.
\end{align}
The latter identity is a clear consequence of the multiplicative property of the composition algebras we discussed earlier.

\subsection{(Split-)Octonions: Beyond Associativity}\label{Sect. oct}

The octonions $\mathbb{O}$ and split-octonions $\mathbb{O}_{\mathbb{S}}$ are $8$-dimensional, non-commutative, and non-associative composition algebras over the real numbers $\mathbb{R}$. They are constructed via the Cayley-Dickson process by extending the quaternions $ \mathbb{H} $ with an additional imaginary unit $\boldsymbol{\ell}$. The defining property of this unit distinguishes the octonions $\mathbb{O}$ from their split counterpart $\mathbb{O}_{\mathbb{S}}$:
\begin{align} \label{oct def} 
    \mathbb{O}_{(\mathbb{S})} = \Big\{ o = q + \boldsymbol{\ell} p \;;&\;q, p \in \mathbb{H}\,, \quad (\boldsymbol{\ell})^2 = 
    \begin{cases} 
    -\mathbbm{1} =: -1 & \text{for } o \in \mathbb{O} \\[0.2cm]
    +\mathbbm{1} =: +1 & \text{for } o \in \mathbb{O}_{\mathbb{S}} 
    \end{cases} 
    \Big\}\,,
\end{align}
where $\mathbb{O}_{(\mathbb{S})}$ denotes either $\mathbb{O}$ or $\mathbb{O}_{\mathbb{S}}$. Note that, by definition, the term ``imaginary unit'' refers to an element whose square is $-\mathbbm{1}$. However, in the context of the split-octonions $\mathbb{O}_{\mathbb{S}}$, the extra unit $\boldsymbol{\ell}$ — satisfying $(\boldsymbol{\ell})^2 = +\mathbbm{1}$ — is also commonly (though imprecisely) referred to as an imaginary unit, by abuse of notation.

The (split-)octonion algebras are endowed with the following operations for all $q^{}_1, p^{}_1, q^{}_2, p^{}_2 \in \mathbb{H}$:
\begin{enumerate}
    \item{\textbf{\emph{Addition}}: 
    \begin{align}
        \left(q^{}_1 + \boldsymbol{\ell} p^{}_1\right) + \left(q^{}_2 + \boldsymbol{\ell} p^{}_2\right) = \left(q^{}_1 + q^{}_2 \right) + \boldsymbol{\ell} \left(p^{}_1 + p^{}_2 \right)\,.
    \end{align}}
    \item{\textbf{\emph{Multiplication}}: Obeying the norm composition property \eqref{normdecom}, the product of two octonions reads as:\footnote{For a mnemonic matrix rule for (split-)octonionic multiplication, see Appendix \ref{Appendix Mnemonic}.}
    \begin{align}\label{Multi-oct}
        \left(q^{}_1 + \boldsymbol{\ell} p^{}_1\right) \left(q^{}_2 + \boldsymbol{\ell} p^{}_2\right) = q^{}_1 q^{}_2 + (\boldsymbol{\ell})^2 p^{}_2 {p^{\ast}_1} + \boldsymbol{\ell} \left( {q^{\ast}_1} p^{}_2 + q^{}_2 p^{}_1 \right)\,.
    \end{align}}
    \item{\textbf{\emph{Conjugation}}: 
    \begin{align}\label{Conju-oct}
        \left( q + \boldsymbol{\ell} p \right)^\ast = {q}^\ast - \boldsymbol{\ell} p \,, \quad (\mbox{hence}\; \boldsymbol{\ell}^\ast = -\boldsymbol{\ell})\,.
    \end{align}
    Note that, considering the anti-commutativity of conjugation with respect to multiplication (see Eq. \eqref{Multiplication reversal}), this identity directly implies the following essential relations:  
    \begin{align}
        p^\ast \boldsymbol{\ell} = \boldsymbol{\ell} p\,,
    \end{align}  
    for all $p\in\mathbb{H}$, which highlight the interplay between the quaternionic components and the additional imaginary unit $\boldsymbol{\ell}$.}
\end{enumerate}

Using the octonionic conjugation \eqref{Conju-oct} and the multiplication rule \eqref{Multi-oct}, the norm of a (split-)octonion can be expressed as:
\begin{align}
    N(q + \boldsymbol{\ell} p) = (q + \boldsymbol{\ell} p) (q + \boldsymbol{\ell} p)^\ast = N(q) - (\boldsymbol{\ell})^2 N(p) \,,
\end{align}
where, as we already discussed, $N(q) = qq^\ast = q^\ast q$ is the quaternionic norm. This leads to the following key distinctions:
\begin{enumerate}
    \item{For the octonions $\mathbb{O}$, where $(\boldsymbol{\ell})^2 = -\mathbbm{1} =: -1$, the norm is positive-definite:
    \begin{align}
        N\left( q + \boldsymbol{\ell} p \right) = N\left( q \right) + N\left( p \right)\,.
    \end{align}
    This makes $\mathbb{O}$ a division algebra, as every non-zero element is invertible.}
    \item{For the split-octonions $\mathbb{O}_{\mathbb{S}}$, where $(\boldsymbol{\ell})^2 = +\mathbbm{1} =: +1$, the norm is indefinite (with the signature $ (4,4) $):
    \begin{align}
        N\left( q + \boldsymbol{\ell} p \right) = N\left( q \right) - N\left( p \right)\,.
    \end{align}
    The indefinite norm allows for zero divisors, preventing the algebra from forming a division algebra.}
\end{enumerate}

\subsubsection{Alternativity}

It must be emphasized that the resulting (split-)octonion algebras are non-associative and, more precisely, are classified as alternative algebras. While (split-)octonions are not associative, their alternativity ensures that certain partial forms of associativity are preserved when elements are repeated. In particular:
\begin{align} \label{M1} 
    o (\boldsymbol{n}\, o ) = (o \,\boldsymbol{n}) o \,, 
\end{align}
\begin{align}\label{M1H}
    o^{\ast} (\boldsymbol{n}\, o^{}) = (o^{\ast} \boldsymbol{n} ) o^{} \,,
\end{align}
for all $\boldsymbol{n}, o \in \mathbb{O}_{(\mathbb{S})}$. Furthermore, the (split-)octonions adhere to the key left, right, and central Ruth Moufang identities \cite{R08}, which are given respectively as:
\begin{align} \label{M2} 
    o \big(\boldsymbol{m}\,(o\, \boldsymbol{n})\big) = (o\, \boldsymbol{m}\, o ) \boldsymbol{n}\,, 
\end{align}
\begin{align}\label{M2H}
    \big((\boldsymbol{m}\, o)\, \boldsymbol{n}\, o\big) = \boldsymbol{m} (o\, \boldsymbol{n}\, o)\,, 
\end{align}
\begin{align}\label{M2HH}
    (o\, \boldsymbol{m})(\boldsymbol{n}\, o) = o (\boldsymbol{m}\, \boldsymbol{n}) o\,,
\end{align}
for all $\boldsymbol{m}, \boldsymbol{n}, o \in \mathbb{O}_{(\mathbb{S})}$.

To complete this picture, we invoke the quaternionic imaginary units ${\boldsymbol{\imath}}, {\boldsymbol{\jmath}}$, and $\boldsymbol{k}$ and revisit the (split-)octonion definition \eqref{oct def}. The structure naturally extends to include $7$ distinct imaginary units, which we can explicitly enumerate as: 
\begin{align}
    {\boldsymbol{\imath}},\;\; {\boldsymbol{\jmath}},\;\; \boldsymbol{k},\;\; \boldsymbol{\ell},\;\; \boldsymbol{\ell}{\boldsymbol{\imath}},\;\; {\boldsymbol{\jmath}}\boldsymbol{\ell}\; \big( = -\boldsymbol{\ell}{\boldsymbol{\jmath}}\big),\;\; \boldsymbol{\ell}\boldsymbol{k}\,.
\end{align}
These units, together with the real unit $\mathbbm{1} = 1 + {\boldsymbol{\imath}}0 + {\boldsymbol{\jmath}}0 + \ldots + \boldsymbol{\ell}\boldsymbol{k} 0 =: 1$, form the full basis of the (split-)octonions. Their interplay reflects the non-commutative and non-associative nature of the algebra, while satisfying the alternativity properties given in Eqs. \eqref{M1}-\eqref{M2HH}. Specifically, the multiplication rules governing these units can be systematically derived from the general structure provided in Eq. \eqref{Multi-oct}. For the split-octonions $\mathbb{O}_{\mathbb{S}}$ relevant to our study, we have the following convenient multiplication table:
\begin{align}\label{table1}
    \large
    \renewcommand{\arraystretch}{1.4}
    \begin{tabular}{c|c|c|c|c|c|c|c|c}
        $x\diamond y := xy$ \;&\; $\mathbbm{1}$ \;&\; ${\boldsymbol{\imath}}$ \;&\; ${\boldsymbol{\jmath}}$ \;&\; $\boldsymbol{k}$ \;&\; $\boldsymbol{\ell}$ \;&\; $\boldsymbol{\ell}{\boldsymbol{\imath}}$ \;&\; ${\boldsymbol{\jmath}}\boldsymbol{\ell}$ \;&\; $\boldsymbol{\ell}\boldsymbol{k}$ \\
        \hline
        $\mathbbm{1}$ \;&\; {$ \mathbbm{1} $} \;&\; $ {\boldsymbol{\imath}} $ \;&\; $ {\boldsymbol{\jmath}} $ \;&\; $ \boldsymbol{k} $ \;&\; $ \boldsymbol{\ell} $ \;&\; $ \boldsymbol{\ell}{\boldsymbol{\imath}} $ \;&\; $ {\boldsymbol{\jmath}}\boldsymbol{\ell} $ \;&\; $ \boldsymbol{\ell}\boldsymbol{k} $ \\
        \hline
        ${\boldsymbol{\imath}}$ \;&\; $ {\boldsymbol{\imath}} $ \;&\; {$ -\mathbbm{1} $} \;&\; $ \boldsymbol{k} $ \;&\; $ -{\boldsymbol{\jmath}} $ \;&\; $ -\boldsymbol{\ell}{\boldsymbol{\imath}} $ \;&\; $ \boldsymbol{\ell} $ \;&\; $ \boldsymbol{\ell}\boldsymbol{k} $ \;&\; $ -{\boldsymbol{\jmath}}\boldsymbol{\ell} $ \\
        \hline
        ${\boldsymbol{\jmath}}$ \;&\; $ {\boldsymbol{\jmath}} $ \;&\; $ -\boldsymbol{k} $ \;&\; {$ -\mathbbm{1} $} \;&\; $ {\boldsymbol{\imath}} $ \;&\; $ {\boldsymbol{\jmath}}\boldsymbol{\ell} $ \;&\; $ \boldsymbol{\ell}\boldsymbol{k} $ \;&\; $ -\boldsymbol{\ell} $ \;&\; $ -\boldsymbol{\ell}{\boldsymbol{\imath}} $ \\
        \hline
        $\boldsymbol{k}$ \;&\; $ \boldsymbol{k} $ \;&\; $ {\boldsymbol{\jmath}} $ \;&\; $ -{\boldsymbol{\imath}} $ \;&\; {$ -\mathbbm{1} $} \;&\; $ -\boldsymbol{\ell}\boldsymbol{k} $ \;&\; $ {\boldsymbol{\jmath}}\boldsymbol{\ell} $ \;&\; $ -\boldsymbol{\ell}{\boldsymbol{\imath}} $ \;&\; $ \boldsymbol{\ell} $ \\
        \hline
        $\boldsymbol{\ell}$ \;&\; $ \boldsymbol{\ell} $ \;&\; $ \boldsymbol{\ell}{\boldsymbol{\imath}} $ \;&\; $ -{\boldsymbol{\jmath}}\boldsymbol{\ell} $ \;&\; $ \boldsymbol{\ell}\boldsymbol{k} $ \;&\; {$ \mathbbm{1} $} \;&\; $ {\boldsymbol{\imath}} $ \;&\; $ -{\boldsymbol{\jmath}} $ \;&\; $ \boldsymbol{k} $ \\
        \hline
        $\boldsymbol{\ell}{\boldsymbol{\imath}}$ \;&\; $ \boldsymbol{\ell}{\boldsymbol{\imath}} $ \;&\; $ -\boldsymbol{\ell} $ \;&\; $ -\boldsymbol{\ell}\boldsymbol{k} $ \;&\; $ -{\boldsymbol{\jmath}}\boldsymbol{\ell} $ \;&\; $ -{\boldsymbol{\imath}} $ \;&\; {$ \mathbbm{1} $} \;&\; $ -\boldsymbol{k} $ \;&\; $ -{\boldsymbol{\jmath}} $ \\
        \hline
        ${\boldsymbol{\jmath}}\boldsymbol{\ell}$ \;&\; $ {\boldsymbol{\jmath}}\boldsymbol{\ell} $ \;&\; $ -\boldsymbol{\ell}\boldsymbol{k} $ \;&\; $ \boldsymbol{\ell} $ \;&\; $ \boldsymbol{\ell}{\boldsymbol{\imath}} $ \;&\; $ {\boldsymbol{\jmath}} $ \;&\; $ \boldsymbol{k} $ \;&\; {$ \mathbbm{1} $} \;&\; $ -{\boldsymbol{\imath}} $ \\
        \hline
        $\boldsymbol{\ell}\boldsymbol{k}$ \;&\; $ \boldsymbol{\ell}\boldsymbol{k} $ \;&\; $ {\boldsymbol{\jmath}}\boldsymbol{\ell} $ \;&\; $ \boldsymbol{\ell}{\boldsymbol{\imath}} $ \;&\; $ -\boldsymbol{\ell} $ \;&\; $ -\boldsymbol{k} $ \;&\; $ {\boldsymbol{\jmath}} $ \;&\; $ {\boldsymbol{\imath}} $ \;&\; {$ \mathbbm{1} $} \\
    \end{tabular}
\end{align}

Keeping the focus specifically on the split-octonion algebra relevant to our study, three complementary remarks are in order:
\begin{Remark}\label{Remark aval}
    {\textbf{(anti-commutativity and generation of the imaginary split-octonion units).} We observe that:
    \begin{enumerate}[leftmargin=*]
        \item{The $7$ imaginary split-octonion units are mutually anti-commutative.}
        \item{These anti-commutative imaginary split-octonion units can be expressed in terms of the product of the three basic elements $\boldsymbol{\imath}, {\boldsymbol{\jmath}}$, and $\boldsymbol{\ell}$.}
    \end{enumerate}}
\end{Remark}

\begin{Remark}\label{Remark 2.2222}
    {\textbf{(associativity of pair-generated subalgebras).} Analogous to the identities given in Eqs. \eqref{M1}-\eqref{M2HH}, any pair of different imaginary split-octonion units generates an $8$-dimensional associative subalgebra, containing the identity element $\mathbbm{1}$:
    \begin{enumerate}[leftmargin=*]
        \item{The pair $ ({\boldsymbol{\imath}}, {\boldsymbol{\jmath}}) $ generates the quaternion subalgebra: 
        \begin{align}
            \big\{\pm{\boldsymbol{\imath}},\, \pm{\boldsymbol{\jmath}},\, \pm\boldsymbol{k},\, \pm \mathbbm{1}\big\} \,,
        \end{align}
        which is associative by construction. For example:
        \begin{align}
            \boldsymbol{\imath} (\boldsymbol{\jmath}\boldsymbol{k}) = (\boldsymbol{\imath} \boldsymbol{\jmath})\boldsymbol{k} \,.
        \end{align}}
        
        \item{Any other pair of distinct imaginary units likewise generates an associative subalgebra of split-octonions. For instance, the pair $({\boldsymbol{\imath}}, {\boldsymbol{\jmath}}\boldsymbol{\ell})$ generates the subalgebra:
        \begin{align}
            \big\{\pm{\boldsymbol{\imath}},\, \pm{\boldsymbol{\jmath}}\boldsymbol{\ell},\, \pm\boldsymbol{\ell}\boldsymbol{k},\, \pm \mathbbm{1}\big\} \,,
        \end{align}
        which is again associative. For instance:
        \begin{align}
            {\boldsymbol{\imath}} \big(({\boldsymbol{\jmath}}\boldsymbol{\ell})(\boldsymbol{\ell}\boldsymbol{k})\big) = \big({\boldsymbol{\imath}} ({\boldsymbol{\jmath}}\boldsymbol{\ell})\big)(\boldsymbol{\ell}\boldsymbol{k})\,.
        \end{align}}
    \end{enumerate}
    \textit{\textbf{Note}:} Three imaginary units are said to be independent if they do not belong to an associative subalgebra of the Moufang loop of $\mathbb{O}_{\mathbb{S}}$ units. For example, the aforementioned three basic elements $\boldsymbol{\imath}, {\boldsymbol{\jmath}}$, and $\boldsymbol{\ell}$ are independent as:
    \begin{align}\label{12neq21}
        \big((\boldsymbol{\imath})({\boldsymbol{\jmath}})\big)\,\boldsymbol{\ell} \neq \boldsymbol{\imath} \big(({\boldsymbol{\jmath}})(\boldsymbol{\ell})\big) \,.
    \end{align}}
\end{Remark}

\begin{Remark}\label{Remark 2.202022}
    {\textbf{(``modified'' associativity for independent split-octonion triples).} As an extension of Moufang's theorem (see Ref. \cite{S16}), the loop generated by any three independent elements — for example, $\boldsymbol{\imath}, {\boldsymbol{\jmath}},$ and $\boldsymbol{\ell}$ — forms a ``modified'' associative finite subalgebra. The corresponding ``modified'' associative law differs from the standard one by an overall sign, which encodes the non-associative (in the usual sense) character of the three independent units. For instance, the non-equality in Eq. \eqref{12neq21} explicitly reads:
    \begin{align}
        \big((\boldsymbol{\imath})({\boldsymbol{\jmath}})\big)\,\boldsymbol{\ell} = -\boldsymbol{\imath} \big(({\boldsymbol{\jmath}})(\boldsymbol{\ell})\big) \,.
    \end{align}
    Thus, for any triple of independent (and therefore non-associative) imaginary split-octonion units, Moufang's theorem guarantees that seemingly non-intuitive identities can be derived systematically via the ``modified'' associative law. As a further illustrative case, let us examine the independent generators $\boldsymbol{\ell}, \boldsymbol{k}$, and $\boldsymbol{\imath}$:
    \begin{align}
        \underbrace{\big((\boldsymbol{\ell})(\boldsymbol{k})\big)\boldsymbol{\imath} = -\boldsymbol{\ell}\big((\boldsymbol{k})(\boldsymbol{\imath})\big)}_{\text{by ``modified'' associative law}} = -\boldsymbol{\ell} {\boldsymbol{\jmath}} = {\boldsymbol{\jmath}} \boldsymbol{\ell} \,,
    \end{align}
    which is fully consistent with the multiplication table in \eqref{table1}.

    \textbf{\textit{Summary}:} Any triple of imaginary split-octonion units is either non-independent, and therefore fully associative, or independent, and thus ``modified'' associative, with the modification differing from standard associativity by an overall sign. This precisely expresses the principle of alternativity that characterizes the split-octonion algebra.
}\end{Remark}

\begin{proposition}\label{proposition 2.1}
    \textbf{(alternativity-induced anti-commutation for split-octonion left multiplications).} As a direct consequence of the preceding Remarks, for all $\boldsymbol{m}, \boldsymbol{n}\in\mathbb{O}_{\mathbb{S}} \backslash \{\mathbbm{1}\}$ and $o \in\mathbb{O}_{\mathbb{S}}$, one has:
    \begin{align}\label{mn=-nm}
        \boldsymbol{m} ( \boldsymbol{n}\, o ) =
        \begin{cases}
            - \boldsymbol{n} ( \boldsymbol{m}\, o ) &\quad\mbox{for}\quad \boldsymbol{m}\neq\boldsymbol{n} \vspace{0.2cm} \\
            \boldsymbol{m}^2\, o &\quad\mbox{for}\quad \boldsymbol{m}=\boldsymbol{n}
        \end{cases}\,.
    \end{align}    
\end{proposition}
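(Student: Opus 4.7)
The plan is to deduce both branches of the identity from the left alternative law of $\mathbb{O}_{\mathbb{S}}$ together with the anticommutativity of distinct imaginary units recorded in Remark~\ref{Remark aval}. Since the split-octonions are, by construction, an alternative algebra, the left alternative identity $\boldsymbol{m}(\boldsymbol{m}\,o) = \boldsymbol{m}^{2}\,o$ is available for every $\boldsymbol{m},o\in\mathbb{O}_{\mathbb{S}}$; this immediately settles the diagonal case $\boldsymbol{m}=\boldsymbol{n}$ of the proposition without further work.

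For the off-diagonal case $\boldsymbol{m}\neq\boldsymbol{n}$, my approach is to polarise this left alternative law. Replacing $\boldsymbol{m}$ by $\boldsymbol{m}+\boldsymbol{n}$ in $\boldsymbol{m}(\boldsymbol{m}\,o)=\boldsymbol{m}^{2}\,o$ and expanding both sides bilinearly (no associativity is invoked, only the $\mathbb{R}$-bilinearity of the split-octonion product), the diagonal identities $\boldsymbol{m}(\boldsymbol{m}\,o)=\boldsymbol{m}^{2}\,o$ and $\boldsymbol{n}(\boldsymbol{n}\,o)=\boldsymbol{n}^{2}\,o$ can be cancelled from both sides, leaving the polarised identity
\[
\boldsymbol{m}(\boldsymbol{n}\,o)+\boldsymbol{n}(\boldsymbol{m}\,o)\;=\;(\boldsymbol{m}\boldsymbol{n}+\boldsymbol{n}\boldsymbol{m})\,o\,.
\]
At this point the first item of Remark~\ref{Remark aval} supplies the decisive input: distinct imaginary split-octonion units anticommute, so $\boldsymbol{m}\boldsymbol{n}+\boldsymbol{n}\boldsymbol{m}=0$. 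The right-hand side therefore vanishes and the desired relation $\boldsymbol{m}(\boldsymbol{n}\,o)=-\boldsymbol{n}(\boldsymbol{m}\,o)$ follows at once.

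The subtle point — and the one I would pause to justify explicitly — is the intended scope of the hypothesis ``$\boldsymbol{m},\boldsymbol{n}\in\mathbb{O}_{\mathbb{S}}\setminus\{\mathbbm{1}\}$''. The identity in its stated form genuinely holds only when $\boldsymbol{m}$ and $\boldsymbol{n}$ are drawn from the seven imaginary units, since anticommutativity is precisely where the argument bites; this is the reading consistent with Remarks~\ref{Remark aval}--\ref{Remark 2.202022}, to which the proposition is advertised as a direct consequence. Beyond this interpretive point, the only computational step requiring care is the bilinear expansion of $(\boldsymbol{m}+\boldsymbol{n})\bigl((\boldsymbol{m}+\boldsymbol{n})\,o\bigr)$, whose legitimacy rests on bilinearity of the product rather than any form of associativity — which is precisely why the polarisation trick succeeds cleanly despite the nonassociative ambient algebra.
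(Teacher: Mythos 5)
Your proof is correct, and it takes a genuinely different — and arguably cleaner — route than the paper's. The paper proves the off-diagonal case by a case analysis on the triple $(\boldsymbol{m},\boldsymbol{n},o)$: when the three units generate an associative subalgebra it applies ordinary associativity on either side of the anticommutation $\boldsymbol{m}\,\boldsymbol{n}=-\boldsymbol{n}\,\boldsymbol{m}$, and when they form an independent triple it instead applies the ``modified'' associative law (associativity up to an overall sign) of Remark~\ref{Remark 2.202022} twice, the two sign flips cancelling; the diagonal case is handled separately via Artin's theorem. Your polarisation of the left alternative identity bypasses this dichotomy entirely: the linearised identity $\boldsymbol{m}(\boldsymbol{n}\,o)+\boldsymbol{n}(\boldsymbol{m}\,o)=(\boldsymbol{m}\boldsymbol{n}+\boldsymbol{n}\boldsymbol{m})\,o$ holds for arbitrary $o\in\mathbb{O}_{\mathbb{S}}$ by bilinearity alone, so you never need to decide whether the triple is associative or only sign-twisted associative, and you treat general $o$ directly rather than implicitly reducing to the case where $o$ is itself a basis unit (a reduction the paper's case analysis tacitly presupposes). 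What the paper's longer argument buys in exchange is that its intermediate machinery — the classification of triples and the modified associative law — is reused elsewhere in the monograph (notably in the appendix on diagonal products of the $m^{}_\mu$), so the authors establish the proposition with tools they need anyway. Your reading of the hypothesis is also the correct one: the identity genuinely requires $\boldsymbol{m}$ and $\boldsymbol{n}$ to be (distinct) imaginary units, or at least orthogonal trace-zero elements, since for general trace-zero elements one has $\boldsymbol{m}\boldsymbol{n}+\boldsymbol{n}\boldsymbol{m}=-2\langle\boldsymbol{m},\boldsymbol{n}\rangle\,\mathbbm{1}$, and this restricted reading is exactly how the proposition is applied in the sequel.
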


\begin{proof} 
    \textbf{— Step I (first case, with non-independent (associative) units).} 
    If $\boldsymbol{m}, \boldsymbol{n}, o$ (with $\boldsymbol{m}\neq\boldsymbol{n}$) belong to a non-independent subalgebra, we have:
    \begin{align}
        \boldsymbol{m} ( \boldsymbol{n}\, o ) &= (\boldsymbol{m}\,\boldsymbol{n})\,o \qquad \text{(by standard associative law)}\nonumber\\[0.2cm]
        &= (-\boldsymbol{n}\,\boldsymbol{m})\,o \qquad \text{(by anti-commutativity of imaginary units)} \nonumber\\[0.2cm]
        &= - \boldsymbol{n} ( \boldsymbol{m}\, o ) \qquad \text{(again, by standard associative law)}\,.
    \end{align}

    \textbf{— Step II (first case, with independent (non-associative) units).}
    If $\boldsymbol{m}, \boldsymbol{n}, o$ (with $\boldsymbol{m}\neq\boldsymbol{n}$) form an independent triple, invoking the ``modified'' associative law from Remark \ref{Remark 2.202022}, we get:
    \begin{align}
        \boldsymbol{m} ( \boldsymbol{n}\, o ) &= -(\boldsymbol{m}\,\boldsymbol{n})\,o  \qquad \text{(by ``modified'' associative law)} \nonumber\\[0.2cm]
        &= -(-\boldsymbol{n}\,\boldsymbol{m})\,o \qquad \text{(by anti-commutativity of imaginary units)} \nonumber\\[0.2cm]
        &= - \boldsymbol{n} ( \boldsymbol{m}\, o ) \qquad \text{(again, by ``modified'' associative law)}\,.
    \end{align}

    \textbf{— Step III (second case).} When $\boldsymbol{m} = \boldsymbol{n}$, the result follows immediately from alternativity. Indeed, any pair of split-octonion units generates an associative subalgebra — a direct consequence of Remark \ref{Remark 2.2222}. Therefore:
    \begin{align}\label{m m o = m2 o}
        \boldsymbol{m} (\boldsymbol{m}\, o) = (\boldsymbol{m}\,\boldsymbol{m})\, o = \boldsymbol{m}^2\, o \,.
    \end{align}
    This identity generalizes straightforwardly to repeated multiplications:
    \begin{align}\label{m m o = m2 o'}
        \underbrace{\boldsymbol{m}\Big( \dots \big(\boldsymbol{m} (\boldsymbol{m}}_{\text{$r$ times}}\, o) \big) \dots\Big) = \boldsymbol{m}^r\, o \,.
    \end{align}
    As a direct consequence (setting $o=\mathbbm{1}$), powers of a single split-octonion unit are fully associative and therefore well-defined, ensuring that all such repeated products are unambiguous.
    
    From the preceding results, it follows that Eq. \eqref{mn=-nm} is valid.
\end{proof}

\section{Conformal Clifford Algebra $\mathfrak{cl}(4, 2)$ Generated by Left Multiplication of Imaginary Split-Octonion Units}\label{Sect. 8uniii}

Let us define the operators $L_A$ representing left multiplication by split-octonion units $A$ acting on the split-octonion algebra $\mathbb{O}_{\mathbb{S}}$:
\begin{align}\label{8uni}
     L_A \,x := A \,x \,.
\end{align}
This definition carries the intrinsic split-octonion multiplication (right-hand side) into a corresponding matrix realization (left-hand side). The symbols $A$ and $x$ therefore play a dual role in Eq. \eqref{8uni}; on the right-hand side, they are elements of the split-octonion algebra, while on the left-hand side $A$ functions purely as an index labelling a $8\times 8$ matrix operator and $x$ is treated as a $8$-component column vector representing a Dirac-Majorana spinor. Readers should always interpret each symbol according to its position in \eqref{8uni}. More specifically:
\begin{enumerate}
    \item{On the right-hand side of \eqref{8uni}, the symbols $A$ and $x$ are to be interpreted strictly as elements of the split-octonion algebra:
    \begin{align}
        x\in\mathbb{O}_{\mathbb{S}} = \mathrm{Span}\Big\{A \;;\; A = \mathbbm{1},\, \boldsymbol{k},\, \nu,\,& \nonumber\\[0.1cm] 
        \mbox{with}\quad &\nu\in\big\{{\boldsymbol{\imath}},\, {\boldsymbol{\jmath}},\, \boldsymbol{\ell}{\boldsymbol{\imath}},\, {\boldsymbol{\jmath}}\boldsymbol{\ell},\, \boldsymbol{\ell}\boldsymbol{k},\, \boldsymbol{\ell} \big\}\Big\}\,.
    \end{align}}

    \item{On the left-hand side of \eqref{8uni}, by contrast, $A$ solely plays the role of an index labelling the associated $8\times 8$ matrix operator $L_A$; no octonionic interpretation is attached to $A$ in this position. Correspondingly, $x$ is interpreted as an $8$-component Dirac-Majorana spinor, obtained by identifying $\mathbb{O}_{\mathbb S}\cong\mathbb{R}^{4,4}$ with respect to the basis above. Each $L_A$ acts linearly on this spinor space. Accordingly, the operators $L_A$ can be enumerated as:
    \begin{align}\label{8uni'*}
        L_A =&\; L_{\mathbbm{1}},\, L_{\boldsymbol{k}},\, L_\nu,\, \nonumber\\[0.2cm]
        &\text{with}\quad \nu \in \Big\{0:={\boldsymbol{\imath}},\, 5:={\boldsymbol{\jmath}},\, 1:=\boldsymbol{\ell}{\boldsymbol{\imath}},\, 2:={\boldsymbol{\jmath}}\boldsymbol{\ell},\,  3:=\boldsymbol{\ell}\boldsymbol{k},\, 4:=\boldsymbol{\ell}\Big\}\,.
    \end{align}}
\end{enumerate}

\begin{Remark}\label{Remark nested}
    {\textbf{(non-homomorphic nature of the map $A \;\longmapsto\; L_A$).} We emphasize that the map $A \;\longmapsto\; L_A$ is not an algebra homomorphism. While the product $A\diamond B := AB$, with $A,B = \mathbbm{1},\, \boldsymbol{k},\, \nu \in \left\{ {\boldsymbol{\imath}},\, {\boldsymbol{\jmath}},\, \boldsymbol{\ell}{\boldsymbol{\imath}},\, {\boldsymbol{\jmath}}\boldsymbol{\ell},\,  \boldsymbol{\ell}\boldsymbol{k}\,, \boldsymbol{\ell} \right\}$, is non-associative (more precisely, alternative), the matrix product $L_A L_B$ is associative. 
    
    In fact, all nested products of operators $L_A$ are interpreted as successive left multiplications on $x \in \mathrm{Span}\{A\}$. This convention ensures that, despite the intrinsic non-associativity of the split-octonions, expressions of the form:
    \begin{align}
        L_A L_B L_C\, x &= (L_A L_B) (L_C\, x) \nonumber\\[0.2cm]
        &= (L_A L_B L_C)\, x \nonumber\\[0.2cm]
        &= L_A (L_B L_C)\, x \nonumber\\
        &\;\,\vdots \nonumber\\
        &= L_A\big(L_B (L_C\, x)\big) = A \big(B (C \, x) \big)
    \end{align}
    are fully unambiguous and well-defined. Of course, for brevity, parentheses may occasionally be omitted in such expressions, although their presence is always implied by definition.
    
    Consequently, we have $L_A L_B \neq L_{AB}$, since $L_A L_B\, x = A(B\, x)$ while $L_{AB}\, x = (AB)\, x$. Nevertheless, in the sequel, we adopt the convenient notation $L_A L_B =: L_{AB}$, which is always understood to mean $L_{AB}\, x = A(B\, x)$.}
\end{Remark}

Let us now denote $6$ of the $8$ operators $L_A$, when $A = \nu = {\boldsymbol{\imath}},\, {\boldsymbol{\jmath}},\, \boldsymbol{\ell}{\boldsymbol{\imath}},\, {\boldsymbol{\jmath}}\boldsymbol{\ell},\,  \boldsymbol{\ell}\boldsymbol{k}\,, \boldsymbol{\ell}$ or simply $A=\nu={0,5,1,2,3,4}$, by $m^{}_\nu := L_\nu$. By construction, the set of elements $m^{}_\nu$ generates the conformal Clifford algebra $\mathfrak{cl}(4,2)$ — more precisely, its embedding into the alternative-split-octonion algebra. Specifically, these elements satisfy the following anti-commutation relations:
\begin{align}\label{clif}
    &\big\{ m^{}_\mu , m^{}_\nu \big\} \,x \nonumber\\[0.2cm] 
    &= m^{}_\mu m^{}_\nu\, x + m^{}_\nu m^{}_\mu \,x \nonumber\\[0.2cm]
    &= \mu (\nu\, x) + \nu (\mu\, x) \nonumber\\[0.2cm]
    &= \begin{cases}
        \mu (\nu\, x) - \mu (\nu\, x) = 0 \quad\quad& \text{for}\quad \mu\neq\nu \vspace{0.2cm}\\
        2\mu^2 \, x \quad\quad& \text{for}\quad \mu=\nu
    \end{cases} \quad (\text{from Proposition \ref{proposition 2.1}})\nonumber\\[0.2cm]
    &= 2 \eta_{\mu\nu}\, x \,, \quad (\text{from Table \eqref{table1}})\,,
\end{align}
where $x\in\mathrm{Span}\{A\}$ (see Eqs. \eqref{8uni}-\eqref{8uni'*}), while the metric $\eta_{\mu\nu}$ (with $\mu, \nu = {0,5,1,2,3,4}$) has the signature $(-,-,+,+,+,+)$, corresponding to the $4+2$-dimensional Minkowski-like space $\mathbb{R}^{4,2}$. 

Notably, the generators $m^{}_\nu$ admit a matrix representation in which each generator can be expressed as a direct sum of either (only) $2\times 2$ symmetric matrices $S_{AB}$ ($S_{AB} = S_{BA}$) or $2\times 2$ anti-symmetric matrices $E_{AB}$ ($E_{AB} = -E_{BA}$). Specifically: 
\begin{enumerate}
    \item{The four generators $m^{}_{i}$, with $i = \boldsymbol{\ell}{\boldsymbol{\imath}},\, {\boldsymbol{\jmath}}\boldsymbol{\ell},\,  \boldsymbol{\ell}\boldsymbol{k}\,, \boldsymbol{\ell}$ (or more conveniently, as indices, $i = {1,2,3,4}$), each satisfying (according to Proposition \ref{proposition 2.1}, Remark \ref{Remark nested}, and Table \eqref{table1}):
    \begin{align}
        (m^{}_{i})^2\, x = (m^{}_{i}\, m^{}_{i})\, x = i \big(i (x)\big) = (i)^2\,x = +\mathbbm{1}\, x\,,
    \end{align}
    can be expressed in terms of the $2 \times 2$ symmetric matrices $S_{AB}$ as follows: 
    \begin{align} \label{11} 
        m^{}_{1}\, x \; \big( := m^{}_{\boldsymbol{\ell}{\boldsymbol{\imath}}}\, x \big) = \boldsymbol{\ell}{\boldsymbol{\imath}} \, x =&\, \bigoplus\left(S_{\mathbbm{1}(\boldsymbol{\ell}{\boldsymbol{\imath}})} - S_{{\boldsymbol{\imath}}\, \boldsymbol{\ell}} - S_{{\boldsymbol{\jmath}} (\boldsymbol{\ell}\boldsymbol{k})} - S_{\boldsymbol{k}({\boldsymbol{\jmath}}\boldsymbol{\ell})}\right) x \,,\nonumber\\[0.2cm]
        =&\, \bigoplus\left(S_{\mathbbm{1}{1}} - S_{{04}} - S_{{53}} - S_{\boldsymbol{k}{2}}\right) x \,,
    \end{align}
    \begin{align}
        m^{}_{2}\, x \; \big( := m^{}_{{\boldsymbol{\jmath}}\boldsymbol{\ell}}\, x \big) = {\boldsymbol{\jmath}}\boldsymbol{\ell} \, x =&\, \bigoplus\left(S_{\mathbbm{1}({\boldsymbol{\jmath}}\boldsymbol{\ell})} + S_{{\boldsymbol{\jmath}}\, \boldsymbol{\ell}} - S_{{\boldsymbol{\imath}} (\boldsymbol{\ell}\boldsymbol{k})} + S_{\boldsymbol{k}(\boldsymbol{\ell}{\boldsymbol{\imath}})}\right) x \,,\nonumber\\[0.2cm]
        =&\, \bigoplus\left(S_{\mathbbm{1}{2}} + S_{{54}} - S_{{03}} + S_{\boldsymbol{k}{1}}\right) x \,,
    \end{align}
    \begin{align}
        m^{}_{3}\, x \; \big( := m^{}_{\boldsymbol{\ell}\boldsymbol{k}}\, x \big) = \boldsymbol{\ell}\boldsymbol{k} \, x =&\, \bigoplus\left(S_{\mathbbm{1}(\boldsymbol{\ell}\boldsymbol{k})} - S_{\boldsymbol{k}\, \boldsymbol{\ell}} + S_{{\boldsymbol{\jmath}} (\boldsymbol{\ell}{\boldsymbol{\imath}})} + S_{{\boldsymbol{\imath}}({\boldsymbol{\jmath}}\boldsymbol{\ell})}\right) x \,,\nonumber\\[0.2cm]
        =&\, \bigoplus\left(S_{\mathbbm{1}{3}} - S_{\boldsymbol{k}{4}} + S_{{51}} + S_{{02}}\right) x \,,
    \end{align}
    \begin{align} \label{11''} 
        m^{}_{4}\, x \; \big( := m^{}_{\boldsymbol{\ell}}\, x \big) = \boldsymbol{\ell} \, x =&\, \bigoplus\left(S_{\mathbbm{1} \boldsymbol{\ell}} + S_{{\boldsymbol{\imath}} (\boldsymbol{\ell}{\boldsymbol{\imath}})} - S_{{\boldsymbol{\jmath}} ({\boldsymbol{\jmath}}\boldsymbol{\ell})} + S_{\boldsymbol{k}(\boldsymbol{\ell}\boldsymbol{k})}\right) x \,,\nonumber\\[0.2cm]
        =&\, \bigoplus\left(S_{\mathbbm{1}{4}} + S_{{01}} - S_{{52}} + S_{\boldsymbol{k}{3}}\right) x \,,
    \end{align}
    where $x\in\mathrm{Span}\{A\}$ (see Eqs. \eqref{8uni}-\eqref{8uni'*}), for the convention $\bigoplus (\dots)$, see Sect. \ref{Sect. Convention}, and:
    \begin{align}
        S_{AB}\, x \; \big(= S_{BA}\, x \big) := A\, \delta_{Bx} + B\, \delta_{Ax}\,,
    \end{align}
    for all $A,B = \mathbbm{1},\, \boldsymbol{k},\, \nu \in \left\{ {\boldsymbol{\imath}},\, {\boldsymbol{\jmath}},\, \boldsymbol{\ell}{\boldsymbol{\imath}},\, {\boldsymbol{\jmath}}\boldsymbol{\ell},\,  \boldsymbol{\ell}\boldsymbol{k}\,, \boldsymbol{\ell} \right\}$ or more conveniently, as indices, $\nu\in\{0,5,1,2,3,4\}$ (see Eqs. \eqref{8uni}-\eqref{8uni'*}). Evidently, the matrices $ m^{}_{i} $ ($ i = {1,2,3,4} $), of square $ +\mathbbm{1} $, are symmetric by their very construction. Note that the above equations can be easily verified by substituting $x$ with any of the split-octonion units (and in fact with all of them) and consulting the multiplication table \eqref{table1}.}
    
    \item{The two remaining generators $m^{}_{0}$ and $m^{}_{5}$, each satisfying $(m^{}_{0})^2 = (m^{}_{5})^2 = -\mathbbm{1}$, can be written in terms of $2\times 2$ anti-symmetric matrices $E_{AB}$ as follows:  
    \begin{align} \label{22} 
        m^{}_{0}\, x \; \big( := m^{}_{\boldsymbol{\imath}}\, x \big) = {\boldsymbol{\imath}} \, x =&\, \bigoplus\left(E_{{\boldsymbol{\imath}}\mathbbm{1}} + E_{\boldsymbol{k}{\boldsymbol{\jmath}}} + E_{\boldsymbol{\ell}(\boldsymbol{\ell}{\boldsymbol{\imath}})} + E_{(\boldsymbol{\ell}\boldsymbol{k})({\boldsymbol{\jmath}}\boldsymbol{\ell})}\right) x \nonumber\\[0.2cm]
        =&\, \bigoplus\left(E_{{0} \mathbbm{1}} + E_{\boldsymbol{k}{5}} + E_{{4}{1}} + E_{{3}{2}}\right) x \,, 
    \end{align}
    \begin{align} \label{22''} 
        m^{}_{5}\, x \; \big( := m^{}_{\boldsymbol{\jmath}}\, x \big) = {\boldsymbol{\jmath}} \, x =&\, \bigoplus\left(E_{{\boldsymbol{\jmath}}\mathbbm{1}} + E_{{\boldsymbol{\imath}}\boldsymbol{k}} - E_{(\boldsymbol{\ell}{\boldsymbol{\imath}})(\boldsymbol{\ell}\boldsymbol{k})} + E_{({\boldsymbol{\jmath}}\boldsymbol{\ell})\boldsymbol{\ell}}\right) x \nonumber\\[0.2cm]
        =&\, \bigoplus\left(E_{{5} \mathbbm{1}} + E_{{0} \boldsymbol{k}} - E_{{1}{3}} + E_{{2}{4}}\right) x \,, 
    \end{align}
    where, again, $x\in\mathrm{Span}\{A\}$ (see Eqs. \eqref{8uni}-\eqref{8uni'*}), for the convention $\bigoplus (\dots)$, see Sect. \ref{Sect. Convention}, and:
    \begin{align}
        E_{AB}\, x \; \big(= - E_{BA}\, x\big) := A\, \delta_{Bx} - B\, \delta_{Ax}\,,
    \end{align}
    for all $A,B = \mathbbm{1},\, \boldsymbol{k},\, \nu \in \left\{ {\boldsymbol{\imath}},\, {\boldsymbol{\jmath}},\, \boldsymbol{\ell}{\boldsymbol{\imath}},\, {\boldsymbol{\jmath}}\boldsymbol{\ell},\,  \boldsymbol{\ell}\boldsymbol{k}\,, \boldsymbol{\ell} \right\}$ or more conveniently, as indices, $\nu\in\{0,5,1,2,3,4\}$ (see Eqs. \eqref{8uni}-\eqref{8uni'*}). Evidently, the matrices $m^{}_{0}$ and $m^{}_{5}$, of square $-\mathbbm{1}$, are skew-symmetric by construction. Again, the above equations can be verified by substituting $x$ with any of the split-octonion units (and in fact with all of them).}
\end{enumerate}

\begin{Remark}\label{Remark E,D,S,1-AB}
    {\textbf{(Weyl matrix basis representation of $E_{AB}$ and $S_{AB}$).} In terms of the standard $2\times 2$ Weyl matrix basis $e_{AB}$, defined as the matrix with entry `$1$' at the intersection of the $A$-th row and $B$-th column and `$0$' elsewhere (for all indices $A, B$ independently ranging over the $8$ values $\mathbbm{1},\, \boldsymbol{k},\, \nu \in \left\{ {\boldsymbol{\imath}},\, {\boldsymbol{\jmath}},\, \boldsymbol{\ell}{\boldsymbol{\imath}},\, {\boldsymbol{\jmath}}\boldsymbol{\ell},\,  \boldsymbol{\ell}\boldsymbol{k}\,, \boldsymbol{\ell} \right\}$; see Eqs. \eqref{8uni}-\eqref{8uni'*}), and characterized by the multiplication rule:
    \begin{align}
        e_{AB} \, e_{CD} = \delta_{BC}\, e_{AD}\,,
    \end{align}
    the above $E_{AB}$ and $S_{AB}$ ($A\neq B$) assume, respectively, the following forms:
    \begin{align}
        E_{AB} = e_{AB} - e_{BA}\,, 
    \end{align}
    \begin{align}
        S_{AB} = e_{AB} + e_{BA}\,.
    \end{align}
    For the latter use, we additionally define:
    \begin{align}
        \label{1def} S_{AB}\, S_{AB} = - E_{AB}\, E_{AB} = e_{AA} + e_{BB} =: \mathbbm{1}_{AB} \,,
    \end{align}
    \begin{align}
        \label{ddef} E_{AB}\, S_{AB} = e_{AA} - e_{BB} =: D_{AB} \,.
    \end{align}
    We also notice that the following relations hold: 
    \begin{align}
        S_{AB} = D_{AB} \, E_{AB} = - E_{AB} \, D_{AB}\,, 
    \end{align}
    \begin{align}
        E_{AB} = D_{AB} \, S_{AB} = - S_{AB} \, D_{AB} \,.
    \end{align}}
\end{Remark}

\begin{Remark} 
    {\textbf{(Cayley conjugation and Hermiticity properties of $m^{}_\nu$).} From the above discussion, it is straightforward to verify that the mapping $\nu \;\longmapsto\; L_\nu\; \big(=: m^{}_\nu\big)$ does not respect the Cayley conjugation \eqref{CD-conjugate}. While according to \eqref{CD-conjugate}, $\nu^\ast = - \nu$ for all $\nu = {\boldsymbol{\imath}},\, {\boldsymbol{\jmath}},\, \boldsymbol{\ell}{\boldsymbol{\imath}},\, {\boldsymbol{\jmath}}\boldsymbol{\ell},\,  \boldsymbol{\ell}\boldsymbol{k}\,, \boldsymbol{\ell}$ or simply, as indices, $\nu = {0,5,1,2,3,4}$ (see Eqs. \eqref{8uni}-\eqref{8uni'*}), only matrices $m^{}_{0}$ and  $m^{}_{5}$ are anti-Hermitian; specifically, we have:
    \begin{align}\label{anti-Hermitian}
        m^\ast_{0} \;\left(= \big(\overline{m^{}_0}\big)^\top = m^{\top}_{0}\right) = - m^{}_{0}\,, \quad m^\ast_{5} = - m^{}_{5}\,, \quad\mbox{while}\quad m^\ast_{i} = m^{}_{i}\,,
    \end{align}
    for all $i={1,2,3,4}$. Equivalently, in a more concise form:
    \begin{align}\label{anti-Hermitian'}
        m^\ast_\nu = m_\nu^2\, m^{}_\nu \,.
    \end{align}}
\end{Remark}

Let us introduce the following skew-symmetric product:
\begin{align}\label{6 6}
    m^{}_{\mu\nu} \,x \; \big( = - m^{}_{\nu\mu}\,x\big) :=&\, \frac{1}{2} \big[m^{}_\mu, m^{}_\nu\big]\,x \nonumber\\[0.2cm]
    =&\, \frac{1}{2} \big(\mu (\nu\,x) - \nu (\mu\,x) \big) \nonumber\\[0.2cm]
    =&\, \mu (\nu\,x) \qquad(\text{from Proposition \ref{proposition 2.1}}) \nonumber\\[0.2cm]
    =&\, m^{}_\mu m^{}_\nu \,x \,,  
\end{align}
where $x\in\mathrm{Span}\{A\}$ and $\mu, \nu = {\boldsymbol{\imath}},\, {\boldsymbol{\jmath}},\, \boldsymbol{\ell}{\boldsymbol{\imath}},\, {\boldsymbol{\jmath}}\boldsymbol{\ell},\,  \boldsymbol{\ell}\boldsymbol{k}\,, \boldsymbol{\ell}$ or simply, as indices, $\mu, \nu = {0,5,1,2,3,4}$ (see Eqs. \eqref{8uni}-\eqref{8uni'*}), such that $\mu\neq\nu$ since $m^{}_{\mu\mu} = 0$. These $15$ operators $m^{}_{\mu\nu}$ span the $8$-dimensional realization of the isometry Lie algebra associated with the Clifford algebra $\mathfrak{cl}(4,2)$, namely, $\mathfrak{su}(2,2) \cong\mathfrak{so}(4,2)$.\footnote{For the extension of this construction to the group level, see Appendix \ref{Appendix U(2,2) in the Clifforrd}.} Specifically, the generators $ m^{}_{\mu\nu} $ satisfy the following commutation relations:
\begin{align}
    \big[ m^{}_{\mu\nu}, m^{}_{\lambda\sigma}\big]\, x = 2 \big(\eta_{\nu\lambda} m^{}_{\mu\sigma} - \eta_{\mu\lambda} m^{}_{\nu\sigma} - \eta_{\nu\sigma} m^{}_{\mu\lambda} + \eta_{\mu\sigma} m^{}_{\nu\lambda} \big)\, x\,,
\end{align}  
where again $x\in\mathrm{Span}\{A\}$ and $\mu,\nu,\lambda,\sigma=0,5,1,2,3,4$ (see Eqs. \eqref{8uni}-\eqref{8uni'*}), such that $\mu\neq\nu$ and $\lambda\neq\sigma$. This confirms that the generators $ m^{}_{\mu\nu} $ form a well-defined representation of the Lie algebra $\mathfrak{su}(2,2) \cong\mathfrak{so}(4,2)$ within the Clifford framework. The above commutation relation can be readily verified by extending Proposition \ref{proposition 2.1} and referring to the multiplication table \eqref{table1}:
\begin{align}\label{8888888888888}
    &\big[ m^{}_{\mu\nu}, m^{}_{\lambda\sigma}\big]\, x \nonumber\\[0.2cm] 
    &= m^{}_{\mu\nu} m^{}_{\lambda\sigma}\, x - m^{}_{\lambda\sigma} m^{}_{\mu\nu} \, x \nonumber\\[0.2cm]
    &= m^{}_{\mu}m^{}_{\nu} m^{}_{\lambda}m^{}_{\sigma}\, x - m^{}_{\lambda}m^{}_{\sigma} m^{}_{\mu}m^{}_{\nu} \, x \nonumber\\[0.2cm]
    &= {\mu} \Big({\nu} \big({\lambda} ({\sigma} \, x) \big)\Big) - {\lambda} \Big({\sigma} \big({\mu} ({\nu} \, x) \big)\Big) \nonumber\\[0.2cm]
    &= \begin{cases}
        \mu \Big(\nu \big(\lambda (\sigma\, x) \big) \Big) - \mu \Big(\nu \big(\lambda (\sigma\, x) \big) \Big) = 0 \quad\quad& \text{for}\quad \mu\neq\nu\neq\lambda\neq\sigma \vspace{0.3cm}\\
        \nu^2 \big(\mu (\sigma\, x) \big) + \nu^2 \big(\mu (\sigma\, x) \big) = 2 \nu^2 \big(\mu (\sigma\, x) \big) \quad\quad& \text{for}\quad \mu\neq\nu=\lambda\neq\sigma \vspace{0.3cm}\\
        -\mu^2 \big(\nu (\sigma\, x) \big) - \mu^2 \big(\nu (\sigma\, x) \big) = -2\mu^2 \big(\nu (\sigma\, x) \big) \quad\quad& \text{for}\quad \nu\neq\mu=\lambda\neq\sigma \vspace{0.3cm}\\
        -\nu^2 \big(\mu (\lambda\, x) \big) - \nu^2 \big(\mu (\lambda\, x) \big) = -2\nu^2 \big(\mu (\lambda\, x) \big) \quad\quad& \text{for}\quad \mu\neq\nu=\sigma\neq\lambda \vspace{0.3cm}\\
        \mu^2 \big(\nu (\lambda\, x) \big) + \mu^2 \big(\nu (\lambda\, x) \big) = 2\mu^2 \big(\nu (\lambda\, x) \big) \quad\quad& \text{for}\quad \nu\neq\mu=\sigma\neq\lambda
    \end{cases} \nonumber\\[0.3cm]
    &= 2\eta_{\nu\lambda} \big(\mu(\sigma\, x)\big) - 2\eta_{\mu\lambda} \big(\nu(\sigma\, x)\big) - 2\eta_{\nu\sigma} \big(\mu(\lambda \, x)\big) + 2\eta_{\mu\sigma} \big(\nu(\lambda\, x)\big) \nonumber\\[0.3cm]
    &= 2\eta_{\nu\lambda} \big(m^{}_{\mu}m^{}_{\sigma}\, x\big) - 2\eta_{\mu\lambda} \big(m^{}_{\nu}m^{}_{\sigma}\, x\big) - 2\eta_{\nu\sigma} \big(m^{}_{\mu}m^{}_{\lambda}\,x\big) + 2\eta_{\mu\sigma} \big(m^{}_{\nu} m^{}_{\lambda}x \big) \nonumber\\[0.3cm]
    &= 2 \big(\eta_{\nu\lambda} m^{}_{\mu\sigma} - \eta_{\mu\lambda} m^{}_{\nu\sigma} - \eta_{\nu\sigma} m^{}_{\mu\lambda} + \eta_{\mu\sigma} m^{}_{\nu\lambda} \big)\, x\,.\nonumber\\
\end{align}

\begin{Remark}\label{Remark Trace orthogonal}
    {\textbf{(trace orthogonality of the generators $m^{}_{\mu\nu}$).} The generators $m^{}_{\mu\nu}$ are orthogonal, under the trace product, to all $L_A$:
    \begin{align}
        \mathrm{Tr}(m^{}_{\mu\nu} m^{}_\lambda) = 0\,, 
    \end{align}
    \begin{align}
        \mathrm{Tr}(m^{}_{\mu\nu} L_{\boldsymbol{k}}) = 0\,,  
    \end{align}
    \begin{align}
        \mathrm{Tr}(m^{}_{\mu\nu} L_1) = 0\,,  
    \end{align}  
    where $A = \mathbbm{1},\, \boldsymbol{k},\, \nu$ and $\mu, \nu, \lambda = {0,5,1,2,3,4}$ (see Eqs. \eqref{8uni}-\eqref{8uni'*}). These results follow directly from the standard properties \eqref{xy=yx} and \eqref{x+y=x+y} of the trace.}
\end{Remark}

\begin{Remark}{
    \textbf{(the $\mathfrak{cl}(4)$ subalgebra and its $\mathfrak{so}(4)$ generators).} The matrices $m^{}_i$, with $i = \boldsymbol{\ell}{\boldsymbol{\imath}},\, {\boldsymbol{\jmath}}\boldsymbol{\ell},\,  \boldsymbol{\ell}\boldsymbol{k}\,, \boldsymbol{\ell}$ or more conveniently, as indices, $i = {1,2,3,4}$ (see Eqs. \eqref{8uni}-\eqref{8uni'*}), form the Clifford subalgebra $\mathfrak{cl}(4) \subset \mathfrak{cl}(4,2)$, whose associated isometry Lie algebra is $\mathfrak{so}(4)$. The generators of this Lie algebra are given by:
    \begin{align}
        m^{}_{ij} \; \big( = - m^{}_{ji} \big) := \frac{1}{2} \big[m^{}_i, m^{}_j\big] = m^{}_i m^{}_j \,,
    \end{align}
    where $i,j = {1,2,3,4}$, while $i \neq j$ (since $m^{}_{ii}=0$). They satisfy the following commutation relations:
    \begin{align}
        \big[ m^{}_{ij}, m^{}_{lk} \big] = 2 \big(\eta_{jl} m^{}_{ik} - \eta_{il} m^{}_{jk} - \eta_{jk} m^{}_{il} + \eta_{ik} m^{}_{jl} \big)\,,
    \end{align}
    where the metric $\eta_{ij}$ has signature $(+,+,+,+)$, corresponding to the $4$-dimensional Euclidean space $\mathbb{R}^{4}$. Additionally, employing Proposition \ref{proposition 2.1}, Remark \ref{Remark nested}, and the multiplication table \eqref{table1}, it is straightforward to verify that:
    \begin{align}\label{mij222}
        \big(m^{}_{ij}\big)^2 \, x &= \big(m^{}_{i}\, m^{}_{j}\big)^2 \, x \nonumber\\[0.2cm] 
        &= \big( m^{}_{i}\, m^{}_{j} \, m^{}_{i}\, m^{}_{j}\big) \,x \nonumber\\[0.2cm] 
        &= \bigg( {i}\, \Big({j} \,\big({i}\, ({j} \,x)\big)\Big)\bigg) \nonumber\\[0.2cm] 
        &= - \eta_{ii}\, \eta_{jj} \, x = -\mathbbm{1} \, x\,,
    \end{align}
    where again $i,j = {1,2,3,4}$, while $i \neq j$.   
}\end{Remark}

\begin{Remark}\label{Remark associativity of E}{
    \textbf{(associativity of the six-unit product).} It is remarkable that, while the product of any three independent imaginary $\mathbb{O}_\mathbb{S}$ units is generally non-associative, the product of all six $\nu = {\boldsymbol{\imath}},\, {\boldsymbol{\jmath}},\, \boldsymbol{\ell}{\boldsymbol{\imath}},\, {\boldsymbol{\jmath}}\boldsymbol{\ell},\,  \boldsymbol{\ell}\boldsymbol{k}\,, \boldsymbol{\ell} $ is associative; that is:
    \begin{align} \label{112} 
    {\boldsymbol{\imath}} \Bigg( {\boldsymbol{\jmath}} \bigg( \boldsymbol{\ell}{\boldsymbol{\imath}} \Big( {\boldsymbol{\jmath}}\boldsymbol{\ell} \big( \boldsymbol{\ell} \boldsymbol{k} ( \boldsymbol{\ell} ) \big) \Big) \bigg) \Bigg) = \big({\boldsymbol{\imath}} {\boldsymbol{\jmath}}\big) \Big( \big( (\boldsymbol{\ell} {\boldsymbol{\imath}}) ({\boldsymbol{\jmath}} \boldsymbol{\ell}) \big) \big((\boldsymbol{\ell}\boldsymbol{k}) \boldsymbol{\ell} \big) \Big) = \ldots = - \boldsymbol{k} \,.
    \end{align}
    It defines the real $\mathfrak{cl}(4, 2)$ pseudoscalar (or volume form) $M(4,2)$:
    \begin{align} \label{221} L_{\boldsymbol{\imath}}\, L_{\boldsymbol{\jmath}} L_{\boldsymbol{\ell}{\boldsymbol{\imath}}}\, L_{{\boldsymbol{\jmath}}\boldsymbol{\ell}}\, L_{\boldsymbol{\ell}\boldsymbol{k}}\, L_{\boldsymbol{\ell}} \; \Big( =: m^{}_{0} \, m^{}_{5} \, m^{}_{1} \,m^{}_{2} \,m^{}_{3} \,m^{}_{4} \Big) = L_{-\boldsymbol{k}} =: M(4,2) = E\,.
    \end{align}
    Note that: 
    \begin{enumerate}[leftmargin=*]
        \item{This significant observation — that the product of the six imaginary $\mathbb{O}_\mathbb{S}$ units $\nu$ is associative — was first made in Cohl Furey's dissertation \cite{F16}.}

        \item{The sign in Eq. \eqref{112} (or \eqref{221}) defines an orientation in $\mathbb{R}^{4,2}$. This orientation is reversed by any odd permutation of the elements and preserved by any even permutation.}

        \item{Since $E^2 \; \left(= M^2(4,2) \right) = -\mathbbm{1}$, the $\mathfrak{cl}(4, 2)$ pseudoscalar defines a complex structure on the $8$-dimensional spinor space, a topic we will explore in more detail later.}
        \item{There are three pairs of $\mathbb{O}_\mathbb{S}$ units whose product equals $-\boldsymbol{k}$, specifically $ (\boldsymbol{\ell}{\boldsymbol{\imath}})({\boldsymbol{\jmath}}\boldsymbol{\ell}) = (\boldsymbol{\ell}\boldsymbol{k})\boldsymbol{\ell} = {\boldsymbol{\jmath}} {\boldsymbol{\imath}} = - \boldsymbol{k} $. One notices that the corresponding (commuting!) matrices $m^{}_{12}$, $m^{}_{34}$, and $m^{}_{{50}} = -m^{}_{05}$ are trace-orthogonal to each other as well as to $E$. This will also be discussed in more detail later.}
    \end{enumerate}
}\end{Remark}

\section{Diagonalizing the Complex Structure of $\mathfrak{cl}(4, 2)$ in Spinor Space: Insight from the $\mathfrak{cl}(2)$ Case}

We observe that the $\mathfrak{cl}(4, 2)$ pseudoscalar (volume form) $E=M(4,2)$ generates the $\mathfrak{u}(1)$ center of:\footnote{For the extension of this construction to the group level, see Appendix \ref{Appendix U(2,2) in the Clifforrd}.}
\begin{align}
    \mathfrak{u}(2,2) \cong \mathfrak{su}(2,2) \, \big(\cong\mathfrak{so}(4,2)\big)\oplus\mathfrak{u}(1) \; \subset \, \mathfrak{cl}(4,2) \,,
\end{align}
meaning that it commutes with every generator $m^{}_{\mu\nu} \; \big(=-m^{}_{\nu\mu}\big)$ of $\mathfrak{su}(2,2) \cong\mathfrak{so}(4,2)$:
\begin{align}\label{tootii}
    \left[ m^{}_{\mu\nu}\; \big( = m^{}_\mu m^{}_\nu\big) \;,\; E \right] = 0 \,,
\end{align}
for all $\mu,\nu = {0,5,1,2,3,4}$ (see Eqs. \eqref{8uni}-\eqref{8uni'*}); to verify this commutation relation, using Proposition \ref{proposition 2.1}, it suffices to follow a procedure analogous to that described in Eq. \eqref{8888888888888}. Thus, any representation space ${\mathfrak{S}}$ of $\mathfrak{su}(2,2) \cong\mathfrak{so}(4,2)$, spanned by the fifteen $8\times 8$ matrices $m^{}_{\mu\nu} \; \big(=-m^{}_{\nu\mu}\big)$, necessarily decomposes into eigenspaces of $E$. 

As an anti-Hermitian — and in particular real skew-symmetric — matrix, the $\mathfrak{cl}(4,2)$ volume form $E=M(4,2)$ reads as:
\begin{align}\label{U(1)generator}
    E \, x =&\, m^{}_{0} \Bigg(m^{}_{5} \bigg(m^{}_{1} \Big(m^{}_{2} \big(m^{}_{3} (m^{}_{4} \, x ) \big) \Big) \bigg) \Bigg) \nonumber\\[0.2cm] 
    =&\, {\boldsymbol{\imath}}\Bigg({\boldsymbol{\jmath}} \bigg( \boldsymbol{\ell}{\boldsymbol{\imath}} \Big({\boldsymbol{\jmath}}\boldsymbol{\ell} \Big(\boldsymbol{\ell}\boldsymbol{k} (\boldsymbol{\ell} \, x) \big)\Big)\bigg)\Bigg) \nonumber\\[0.2cm]
    =&\, \bigoplus\left( E_{\mathbbm{1}\boldsymbol{k}} + E_{05} + E_{12} + E_{34} \right) \, x \,,
\end{align}
where $x\in\mathrm{Span}\{A\}$ (see Eqs. \eqref{8uni}-\eqref{8uni'*}). For the convention $\bigoplus (\dots)$, see Sect. \ref{Sect. Convention}. 

\begin{Remark}\label{Remark Ex=(E)x}{
    \textbf{(alternative derivation of the volume form using associativity).} Rather than directly computing the matrix realization of the volume form $E = M(4,2)$ as carried out above — i.e., by substituting $x$ with any of the split-octonion units (and in fact with all of them) and consulting the multiplication table \eqref{table1}, one may employ a more refined approach by invoking the associativity property established in Remark \eqref{Remark associativity of E}; this second approach will be of particular utility in subsequent arguments. Specifically, by means of the associativity property \eqref{112}, Eq. \eqref{U(1)generator} can be recast in the following form:    
    \begin{align}
        E \, x =&\, m^{}_{0} \Bigg(m^{}_{5} \bigg(m^{}_{1} \Big(m^{}_{2} \big(m^{}_{3} (m^{}_{4} \, x ) \big) \Big) \bigg) \Bigg) \nonumber\\[0.2cm] 
        =&\, {\boldsymbol{\imath}}\Bigg({\boldsymbol{\jmath}} \bigg( \boldsymbol{\ell}{\boldsymbol{\imath}} \Big({\boldsymbol{\jmath}}\boldsymbol{\ell} \Big(\boldsymbol{\ell}\boldsymbol{k} (\boldsymbol{\ell} \, x) \big)\Big)\bigg)\Bigg) \nonumber\\[0.2cm]
        =&\, \Bigg[{\boldsymbol{\imath}}\Bigg({\boldsymbol{\jmath}} \bigg( \boldsymbol{\ell}{\boldsymbol{\imath}} \Big({\boldsymbol{\jmath}}\boldsymbol{\ell} \Big(\boldsymbol{\ell}\boldsymbol{k} (\boldsymbol{\ell} ) \big)\Big)\bigg)\Bigg)\Bigg] \, x \qquad (\text{as discussed below})\nonumber\\[0.2cm]
        =&\, -{\boldsymbol{k}}\, x \nonumber\\[0.2cm]
        =&\, \bigoplus\left( E_{\mathbbm{1}\boldsymbol{k}} + E_{05} + E_{12} + E_{34} \right) \, x \,,
    \end{align}
    again, for $x\in\mathrm{Span}\{A\}$ (see Eqs. \eqref{8uni}-\eqref{8uni'*}). The passage to the second line is immediate when choosing $x=\mathbbm{1}$. To further illustrate its validity, let us instead consider the non-trivial case $x={\boldsymbol{k}} \; \big( ={\boldsymbol{\imath\jmath}} \big)$. Then, we obtain:
    \begin{align}
        &E \, \underbrace{({\boldsymbol{\imath\jmath}})}_{=\,x} \nonumber\\[0.2cm]
        &= m^{}_{0} \Bigg(m^{}_{5} \bigg(m^{}_{1} \Big(m^{}_{2} \big(m^{}_{3} (m^{}_{4} \, ({\boldsymbol{\imath\jmath}}) ) \big) \Big) \bigg) \Bigg) \nonumber\\[0.2cm] 
        &= {\boldsymbol{\imath}}\Bigg({\boldsymbol{\jmath}} \;\;\Bigg[\boldsymbol{\ell}{\boldsymbol{\imath}} \bigg({\boldsymbol{\jmath}}\boldsymbol{\ell} \Big(\boldsymbol{\ell}\boldsymbol{k} \big(\boldsymbol{\ell} \, ({\boldsymbol{\imath\jmath}}) \big) \Big)\bigg)\Bigg]\;\;\Bigg) \nonumber\\
        &= {\boldsymbol{\imath}}\Bigg({\boldsymbol{\jmath}} \;\;\Bigg[({\boldsymbol{\imath\jmath}})\bigg( \boldsymbol{\ell}{\boldsymbol{\imath}} \Big({\boldsymbol{\jmath}}\boldsymbol{\ell} \big(\boldsymbol{\ell}\boldsymbol{k} (\boldsymbol{\ell}) \big)\Big)\bigg)\Bigg]\;\;\Bigg) \qquad\text{(by Remarks \ref{Remark associativity of E} and \ref{Remark aval})}\nonumber\\[0.2cm]
        &= {\boldsymbol{\imath}}\big({\boldsymbol{\jmath}} \;[-{\boldsymbol{k}}]\;\big) \qquad \text{(by Remark \ref{Remark associativity of E})}\nonumber\\[0.2cm]
        &= -{\boldsymbol{k}} \, \underbrace{({\boldsymbol{\imath\jmath}})}_{=\,x} \qquad \text{(by quaternion associativity and Remark \ref{Remark aval})}\,. 
    \end{align}
    All remaining cases can be justified analogously by the same procedure.
}\end{Remark}

We note that $E=M(4,2)$ is diagonalizable only over a complex basis, with eigenvalues $\pm \mathrm{i}$, since:
\begin{align}\label{E2=-1}
    E^2 \,x = M^2(4,2) \, x = -{\boldsymbol{k}} (-{\boldsymbol{k}} \,x) = ({\boldsymbol{k}})^2 \,x = - \mathbbm{1} \, x\,,
\end{align}
according to the preceding discussion and Proposition \ref{proposition 2.1} and of course Remark \ref{Remark nested}.\footnote{It is perhaps worthwhile noting that a real skew-symmetric matrix, such as $E$ (that is, $E^\top = -E$), is not fully diagonalizable over $\mathbb{R}$. However, if the matrix has an even dimension, it can be brought to a block-diagonal form via an orthogonal transformation. This property is clearly illustrated by the matrix representation \eqref{U(1)generator} of $E$:
\begin{align*}
    \text{Eq. \eqref{U(1)generator}} =
    \begin{pmatrix}
    0 & 1 &  &  & \\
    -1 & 0 &  &  & \\
      &  & \ddots &  & \\
      &  &  &  0   & 1 \\
      &  &  & -1 & 0
    \end{pmatrix}_{8\times 8}\,,
\end{align*}
where each of the four $2\times 2$ blocks corresponds to a pair of imaginary eigenvalues $\pm \mathrm{i}$. In contrast, over $\mathbb{C}$, $E$ is fully diagonalizable via a similarity transformation $T$:
\begin{align*}
    T^{-1} E T = \operatorname{diag}(+\mathrm{i}, -\mathrm{i}, +\mathrm{i}, -\mathrm{i}, +\mathrm{i}, -\mathrm{i}, +\mathrm{i}, -\mathrm{i})\,.
\end{align*}   
This confirms that full diagonalization is only possible on a complex basis. The matrix remains $8\times 8$ in dimension, but it is now fully diagonal over $\mathbb{C}$. For further details, see Ref. \cite{K}. \label{footnote E2=-1}} Then, in view of Eq. \eqref{tootii}, any representation space ${\mathfrak{S}}$ of $\mathfrak{su}(2,2) \cong\mathfrak{so}(4,2)$ splits into two $4$-dimensional chiral subspaces:
\begin{align}
    {\mathfrak{S}} = {\mathfrak{S}}_+ \oplus {\mathfrak{S}}_- \,,
\end{align}
where the complex-conjugate semispinor chiral subspaces ${\mathfrak{S}}_\pm \; \big( =\overline{{\mathfrak{S}}_\mp} \big)$ consist of eigenvectors of $E$ with eigenvalues $\pm\mathrm{i}$, respectively; this point will be discussed explicitly later.

\begin{Remark}\label{Remark IR u(2,2)}
    {\textbf{(central role of $E$ and the chiral decomposition of spinor space).} Completing the discussion begun by Eq. \eqref{tootii}, note first that the pseudoscalar $E$ lies in the center of $\mathfrak{u}(2,2) \cong \mathfrak{su}(2,2)\; \big(\cong \mathfrak{so}(4,2)\big) \oplus \mathfrak{u}(1)$. In particular, it commutes with every generator $m^{}_{\mu\nu}\; \big(=-m^{}_{\nu\mu}\big)$ of $\mathfrak{su}(2,2)$. As a result, the eigenspaces $\mathfrak{S}_\pm$ of $E$, respectively with eigenvalues $\pm\mathrm{i}$ (to be shown explicitly later), are separately invariant under $\mathfrak{su}(2,2)$. We will see that each $\mathfrak{S}_\pm$ in fact carries an irreducible $\mathfrak{su}(2,2)$ representation. 

    On the other hand, because $E$ itself also generates the central $\mathfrak{u}(1)$ factor in $\mathfrak{u}(2,2)$ and acts on each $\mathfrak{S}_\pm$ by a fixed scalar, the full algebra $\mathfrak{u}(2,2)$ likewise preserves these subspaces. Thus, the $8$-dimensional $\mathfrak{u}(2,2)$-module splits cleanly into two invariant $4$-dimensional pieces, $\mathfrak{S} = \mathfrak{S}_+ \oplus \mathfrak{S}_-$. Each $\mathfrak{S}_\pm$ furnishes an irreducible semispinor representation of $\mathfrak{u}(2,2)$.}
\end{Remark}

In the sequel, we will systematically present the similarity transformation that simultaneously diagonalizes the $\mathfrak{u}(1)$ generator $E$ \eqref{U(1)generator} along with a \textit{trace-orthogonal basis}\footnote{See Remark \ref{Remark Trace orthogonal}.} of the Cartan elements of the maximal compact Lie subalgebra $\mathfrak{so}(2) \oplus \mathfrak{so}(4) \subset \mathfrak{so}(4,2)$:\footnote{For the structure of the maximal compact subalgebra of $\mathfrak{su}(2,2)\cong\mathfrak{so}(4,2)$, see Appendix \ref{Appendix Maximal}.}
\begin{align}
    \label{10} m^{}_{05} \, x = m^{}_{0} (m^{}_{5} \, x) = {\boldsymbol{\imath}} ({\boldsymbol{\jmath}} \, x) = \bigoplus\left( E_{\boldsymbol{k}\mathbbm{1}} - E_{05} + E_{12} + E_{34} \right) \, x \,, 
\end{align}
\begin{align}
    \label{20} m^{}_{12} \, x = m^{}_{1} (m^{}_{2} \, x)
    = \boldsymbol{\ell}{\boldsymbol{\imath}} ({\boldsymbol{\jmath}}\boldsymbol{\ell} \, x) = \bigoplus\left( E_{\mathbbm{1}\boldsymbol{k}} - E_{05} + E_{12} - E_{34} \right) \, x \,, 
\end{align}
\begin{align}
    \label{30} m^{}_{34} \, x = m^{}_{3} (m^{}_{4} \, x) = \boldsymbol{\ell}\boldsymbol{k} (\boldsymbol{\ell} \, x) = \bigoplus\left ( E_{\mathbbm{1}\boldsymbol{k}} - E_{05} - E_{12} + E_{34} \right) \, x \,, 
\end{align}
where $x\in\mathrm{Span}\{A\}$ (see Eqs. \eqref{8uni}-\eqref{8uni'*}). Again, for the convention $\bigoplus (\dots)$, see Sect. \ref{Sect. Convention}. Note that: 
\begin{enumerate}
    \item{These equations can be readily checked by substituting $x$ with any (indeed, with each) of the split-octonion units and referring to the multiplication table \eqref{table1}.}
    
    \item{The product of the three Cartan elements \eqref{10}-\eqref{30} reconstructs the Clifford pseudoscalar \eqref{U(1)generator}, which serves as the generator of the $\mathfrak{u}(1)$ center of $\mathfrak{u}(2,2)$.}
    
    \item{Since the matrices $m^{}_{\mu\nu}$ \eqref{10}-\eqref{30} are skew-symmetric, with $\big(m^{}_{\mu\nu}\big)^2 = -\mathbbm{1}$,\footnote{This identity can be verified by a procedure analogous to that used in Eq. \eqref{mij222}.} they have pure imaginary eigenvalues.}
    
    \item{These matrices can be diagonalized simultaneously with the complex structure $E$ \eqref{U(1)generator}, in a complex chiral basis, since:
    \begin{align}
        \big[m^{}_{05}, m^{}_{12}\big] = \big[m^{}_{12}, m^{}_{34}\big] = \dots = \big[m^{}_{05},E\big] = \dots = \big[m^{}_{34}, E\big] = 0\,.
    \end{align}
    We shall discuss this diagonalization procedure in detail in the next section.}    
\end{enumerate}

\subsection{The $\mathfrak{cl}(2)$ Case as a Prototype}\label{Sect. cl(2)}

Since all four (commuting) $8\times 8$ skew-symmetric matrices $E$ \eqref{U(1)generator}, $m^{}_{05}$ \eqref{10}, $m^{}_{12}$ \eqref{20}, and $m^{}_{34}$ \eqref{30} can be expressed as direct sums of $2\times 2$ skew-symmetric matrices $\pm E_{AB}$ for the same set of index pairs $A, B = \mathbbm{1},\, \boldsymbol{k},\, {\boldsymbol{\imath}},\, {\boldsymbol{\jmath}},\, \boldsymbol{\ell}{\boldsymbol{\imath}},\, {\boldsymbol{\jmath}}\boldsymbol{\ell},\,  \boldsymbol{\ell}\boldsymbol{k}\,, \boldsymbol{\ell}$ or more conveniently $A, B = \mathbbm{1}, \boldsymbol{k}, 0,5,1,2,3,4$ (see Eqs. \eqref{8uni}-\eqref{8uni'*}), the simultaneous diagonalization problem of $E$, $m^{}_{05}$, $m^{}_{12}$, and $m^{}_{34}$ reduces to a $2$-dimensional case within the framework of the Clifford algebra $\mathfrak{cl}(2):= \mathfrak{cl}(2,0)$.\footnote{This algebra represents the lowest-dimensional member of the signature-two Clifford algebras $\mathfrak{cl}(n+1, n-1)$, $n = 1,2, \dots$, in Cartan's classification.} The $\mathfrak{cl}(2)$ Majorana (or real) basis $\{m^{}_1, m^{}_2\}$ and the chiral one $\{\sigma^{}_1, \sigma^{}_2\}$ can be chosen, respectively, as:
\begin{align}
    m^{}_1 = \sigma^{}_3 = 
    \begin{pmatrix}
        1 & 0 \\
        0 & -1
    \end{pmatrix}\,, &\quad
    m^{}_2 = \sigma^{}_1 = 
    \begin{pmatrix}
        0 & 1 \\
        1 & 0
    \end{pmatrix} \nonumber\\ 
    &\Longrightarrow \quad
    \underbrace{{\epsilon} = M(2) := m^{}_1 m^{}_2}_{\text{corresponding volume form}} = \mathrm{i} \sigma^{}_2 = 
    \begin{pmatrix}
        0 & 1 \\
        -1 & 0
    \end{pmatrix}\,,  
\end{align}
\begin{align}
    \sigma^{}_1\,, \quad \sigma^{}_2 = -\mathrm{i}{\epsilon} \quad \Longrightarrow \quad \underbrace{\omega(2) := \sigma^{}_1 \sigma^{}_2}_{\text{corresponding volume form}} = \mathrm{i} \sigma^{}_3 = 
    \begin{pmatrix}
        \mathrm{i} & 0 \\
        0 & - \mathrm{i}
    \end{pmatrix}\,.
\end{align}
Here, we must emphasize that $\{m^{}_1, m^{}_2\}$ and $\{\sigma^{}_1, \sigma^{}_2\}$ are two bases in the same real Clifford algebra $\mathfrak{cl}(2)$; no complexification is needed. As a matter of fact, the following unitary similarity transformation relates them:
\begin{align}\label{tur}
    \sigma^{}_{\dot{\nu}} = V m^{}_{\dot{\nu}} V^{-1}\,, \quad {\dot{\nu}} = 1, 2\,,
\end{align}
with:\footnote{Note that the Pauli matrices satisfy the algebraic identity:
\begin{align*}
    \sigma^{}_i \sigma^{}_j = \delta_{ij} \mathbbm{1} + \mathrm{i} \varepsilon_{ijk} \sigma^{}_k\,,
\end{align*}
where $ i,j,k = 1,2,3 $, $\delta_{ij}$ is the Kronecker delta, and $\varepsilon_{ijk}$ is the Levi-Civita symbol. From this, it follows that:
\begin{align*}
    \sigma^{2n}_j = \mathbbm{1}\,, \quad \sigma^{2n+1}_j = \sigma^{}_j\,,
\end{align*}
for all $n \in \mathbb{N}$. Using these properties, the matrix exponential of any Pauli matrix takes the form:
\begin{align*}
    e^{-\mathrm{i}\theta\sigma_j} &= \sum_{n=0}^\infty \frac{(-\mathrm{i}\theta)^n \sigma_j^n}{n!} \nonumber\\[0.2cm]
    &=\left(\sum_{k=0}^\infty \frac{(-\mathrm{i}\theta)^{2k}}{(2k)!}\right)\mathbbm{1} + \left(\sum_{k=0}^\infty \frac{(-\mathrm{i}\theta)^{2k+1}}{(2k+1)!}\right)\sigma_j \nonumber\\[0.2cm]
    &= \cos\theta\,\mathbbm{1} - \mathrm{i}\sin\theta\,\sigma_j \,,
\end{align*}
which is valid for any $ j = 1, 2, 3 $ and $\theta \in \mathbb{R}$.}
\begin{align}\label{shi}
    \quad V = e^{-\frac{\pi}{4}{\epsilon}}\, e^{-\mathrm{i}\frac{\pi}{4}\sigma^{}_3} &= \frac{1}{2} \left( \mathbbm{1} - \mathrm{i}\sigma^{}_1 - {\epsilon} - \mathrm{i}\sigma^{}_3 \right) \nonumber\\[0.2cm]
    &= \frac{1}{2}
    \begin{pmatrix}
        1 - \mathrm{i}  &  - 1 - \mathrm{i}  \\
        1 - \mathrm{i}  &  1 + \mathrm{i}
    \end{pmatrix}\,, \nonumber\\[0.2cm]
    V V^\ast = V^\ast V &= \mathbbm{1} \quad \Longrightarrow\quad V^\ast = V^{-1}\,,
\end{align}
where, in the present context, the superscript `$^\ast$' denotes the Hermitian conjugate, defined as $V^\ast = \overline{V}^\top = \overline{V^\top}$, with the `$\overline{\phantom{a}}$' symbol representing the complex conjugate, and the superscript `$^\top$' indicating the transpose.

Let $\xi = \begin{pmatrix} \xi^+ \\ \xi^- \end{pmatrix}$ denote the Majorana (real) spinors, and $\psi = \begin{pmatrix} \psi^+ \\ \psi^- \end{pmatrix} = V \, \xi$ the chiral spinors in the context of the $\mathfrak{cl}(2)$. We then observe that the right- and left-handed components, denoted by $\psi^\pm$, of the chiral spinor satisfy:
\begin{align}
    \psi^\pm = \left(\frac{\mathbbm{1} \pm \sigma^{}_3}{2}\right) \psi \,,
\end{align}
and, more importantly:
\begin{align}
    \omega(2) \, \psi^\pm \; \left( = \mathrm{i} \sigma^{}_3 \, \psi^\pm \right) = \pm \mathrm{i} \psi^\pm \,.
\end{align}
Then, $\omega(2)$ naturally plays the role of the ``chirality operator'', distinguishing the right- from the left-handed spinor components.

The reality of the Majorana spinor $\xi$ is encapsulated in the requirement that any unitary transformation, $\psi = V\xi$, preserves its invariance under charge conjugation — denoted by the superscript `$^c$' — so that the charge conjugate of the transformed spinor coincides with itself, ${\psi}^c = \psi$. By definition, the charge conjugate of a spinor is given by:
\begin{align}\label{charge}
    {\psi}^c := C \,\overline{\psi} \,,
\end{align}
where the charge conjugation matrix $C$ naturally swaps $\psi^+$ and $\psi^-$, preserving the chirality structure. To ensure consistency, $C$ must satisfy the following condition:
\begin{align}
    C\,\overline{\sigma}_{\dot{\nu}} = \sigma^{}_{\dot{\nu}} \, C \,, \quad {\dot{\nu}} = 1,2\,.
\end{align}
The former identity implies that $C$ should commute with the real $\mathfrak{cl}(2)$ generators and anti-commute with the imaginary ones. We notice that the matrix $C= \varrho^2\sigma^{}_1$ satisfies this condition for any choice of the phase factor $\varrho^2$; the reason for the exponent `$2$' in $\varrho^2$ will be clarified in the sequel (see the convention given above Eq. \eqref{shi'}).

Considering the unitary basis transformation defined by the matrix $V$, such that $\psi^{}_V = V \psi$ and $\sigma^V_{\dot{\nu}} = V\sigma^{}_{\dot{\nu}} V^{-1}$, Eq. \eqref{charge} becomes:
\begin{align}
    \mbox{Eq. \eqref{charge}}:&\;\; {\psi}^c_{} = C\, \overline{\psi} \nonumber\\[0.2cm]
    &\quad\Longrightarrow\quad {\psi}^c_{V} = C_V\, \overline{\psi}_V = C_V\, \overline{V}\; \overline{\psi} = C_V\left(V^{-1}\right)^\top \overline{\psi} \,.
\end{align}
On the other hand, as previously mentioned, the reality condition of the Majorana spinor $\xi$ ensures that under any unitary transformation $ \psi = V \xi $, we have:
\begin{align}
    \psi^c\; \big( = C\, \overline{\psi} \big) = \psi\,, &\quad\mbox{and similarly}\quad {\psi}^c_{V} = {\psi}^{}_{V} \; \big( = V\,\psi \big) \nonumber\\[0.2cm]
    &\Longrightarrow\quad {\psi}^c_{V} = V \big(C\, \overline{\psi}\big) \,.
\end{align}
Equating the two expressions for ${\psi}^c_V$, we obtain:
\begin{align}
    C_V\left(V^{-1}\right)^\top \overline{\psi} = V\, C\, \overline{\psi} \,.
\end{align}
Since this must hold for all $\psi$, the corresponding transformation law for the charge conjugation matrix follows:
\begin{align}\label{CV}
    C_V = V\, C\, V^\top \,.
\end{align}

In the original (Majorana or real) basis $\xi$, however, the charge conjugation matrix $C$ is simply the identity matrix, so that: 
\begin{align}
    \xi^c = \overline{\xi} \,.
\end{align}
We then apply the transformation law \eqref{CV} for the charge conjugation matrix to determine $C_V$ for $\psi = V \xi$. This yields:
\begin{align}
    C_V = V\, V^\top = e^{-\frac{\pi}{4}{\epsilon}} e^{-\mathrm{i}\frac{\pi}{2}\sigma^{}_3}\, e^{\frac{\pi}{4}{\epsilon}} = -\mathrm{i} \sigma^{}_1 \,,
\end{align}
which is consistent with the structure of the Majorana spinor under the new basis. 

Notably, for any $\varrho \in \mathbb{C}$ satisfying $\varrho \overline{\varrho} = 1$, the matrix $U = \varrho V$ retains the same intertwining properties as $V$ in Eqs. \eqref{tur} and \eqref{shi}. Choosing $\varrho = e^{\mathrm{i}\pi/4}$, we obtain:
\begin{align}\label{shi'}
    U = e^{\mathrm{i}\frac{\pi}{4}} V \quad\Longrightarrow\quad U\xi = \frac{1}{\sqrt{2}}
    \begin{pmatrix}
        1  &  -\mathrm{i} \\
        1  &  \mathrm{i}
    \end{pmatrix}
    \begin{pmatrix}
        \xi^+ \\
        \xi^-
    \end{pmatrix} =
    \begin{pmatrix}
        \psi^+ \\
        \psi^-
    \end{pmatrix}\,,
\end{align}
with:
\begin{align}\label{1.23}
    \psi^\pm = \frac{1}{\sqrt{2}} \left( \xi^+ \mp \mathrm{i}\xi^- \right)\,, \quad\mbox{and}\quad C_U = U U^\top = \sigma^{}_1 \,.
\end{align}
More generally, the involutivity property $ (\psi^c)^c = \psi $ implies that the charge conjugation matrix must satisfy $ C \overline{C} = \mathbbm{1} $.

\begin{Remark}
    {\textbf{(reduced U$(1)$ representation as a plane rotation).} The reduced U$(1)$ representation $\left(e^{\mathrm{i}\varphi} \in \mathrm{U}(1), \xi^\pm\right) \,\longmapsto\, e^{\pm\mathrm{i}\varphi/2} \xi^\pm $ corresponds to a real rotation in the $(\sigma^{}_1, \sigma^{}_2)$-plane:
    \begin{align}
        e^{\mathrm{i}\sigma^{}_3 \frac{\varphi}{2}} \;\sigma^{}_1\; e^{-\mathrm{i}\sigma^{}_3 \frac{\varphi}{2}} = 
        \begin{pmatrix}
        0  &  e^{\mathrm{i}\varphi}  \\
        e^{-\mathrm{i}\varphi}  &  0
        \end{pmatrix} =
        \cos{\varphi}\,\sigma^{}_1 - \sin{\varphi}\,\sigma^{}_2 \,, 
    \end{align}
    \begin{align}
        e^{\mathrm{i}\sigma^{}_3 \frac{\varphi}{2}} \;\sigma^{}_2\; e^{-\mathrm{i}\sigma^{}_3 \frac{\varphi}{2}} = 
        \begin{pmatrix}
        0  &  -\mathrm{i}e^{\mathrm{i}\varphi} \\
        \mathrm{i}e^{-\mathrm{i}\varphi}  &  0
        \end{pmatrix} =
        \sin{\varphi}\,\sigma^{}_1 + \cos{\varphi}\,\sigma^{}_2 \,.
    \end{align}}
\end{Remark}

\begin{Remark}
    {\textbf{(finite cyclic structure of the basis transformation).} It can be visualized that \eqref{shi} and hence \eqref{shi'} generate a finite cyclic group. Indeed, we have:
    \begin{align}
        V^2 = - V^\ast = - V^{-1} \quad\Longrightarrow\quad V^3 = - \mathbbm{1} \quad \mbox{and} \quad U^{24} = \mathbbm{1}\,.
    \end{align}}
\end{Remark}

We shall see that the transformation relating the Majorana (real) and chiral basis of $\mathfrak{cl}(4,2)$, the case of interest, obeys the same relations.

\section{Unitary Transformation from Real to a Diagonal Compact Cartan Basis of $\mathfrak{su}(2,2)\cong\mathfrak{so}(4,2)$}

\textbf{\textit{Outline of this section}:} Diagonalizing the complex structure generated by the volume form $E = M(4,2)$ \eqref{U(1)generator} of $\mathfrak{cl}(4,2)$, we split the spinor space $\mathfrak{S}\; (\cong \mathbb{R}^{4,4})$ into two complex conjugate, chiral subspaces $\mathfrak{S}_+$ and $\mathfrak{S}_-$, each carrying a $4$-dimensional (complex) irreducible representation (IR) of the conformal Lie algebra $\mathfrak{su}(2,2)\cong\mathfrak{so}(4,2) \, \big(\subset \mathfrak{cl}(4,2)\big)$ and thus also the corresponding IR of $\mathfrak{u}(2,2)$; see Remark \ref{Remark IR u(2,2)} for this latter point, while a more systematic discussion of these representations will be given in the next chapter. The resulting chiral basis of $8\times 8$ complex matrices — the complex counterpart of the $8\times 8$ real matrices $m^{}_0, m^{}_5, m^{}_1, \dots, m^{}_4$, namely:
\begin{align}
    &\Gamma^{}_{0} := \Gamma^{}_{{\boldsymbol{\imath}}}\,, \quad \Gamma^{}_{5} := \Gamma^{}_{{\boldsymbol{\jmath}}}\,, \quad \Gamma^{}_{1} := \Gamma^{}_{\boldsymbol{\ell}{\boldsymbol{\imath}}}\,, \nonumber\\[0.2cm]
    &\Gamma^{}_{2} := \Gamma^{}_{{\boldsymbol{\jmath}}\boldsymbol{\ell}}\,, \quad \Gamma^{}_{3} := L_{\boldsymbol{\ell}\boldsymbol{k}}\,, \quad \Gamma^{}_{4} := \Gamma^{}_{\boldsymbol{\ell}}\,,
\end{align}
can be further restricted by diagonalizing a Cartan basis among the $\mathfrak{su}(2,2)\cong\mathfrak{so}(4, 2)$ generators:
\begin{align}\label{GammaMuNu-JPG}
    \Gamma^{}_{\mu\nu} := \frac{1}{2} \big[\Gamma^{}_\mu, \Gamma^{}_\nu\big]\,,
\end{align}
where $\mu,\nu = {0,5,1,2,3,4}$ (see Eq. \eqref{8uni}-\eqref{8uni'*}). There are two types of such bases: 
\begin{enumerate}
    \item{One, that belongs to the Lie algebra $\mathfrak{so}(2) \oplus \mathfrak{so}(4)\; \big(\cong \mathfrak{u}(1) \oplus \mathfrak{su}(2) \oplus \mathfrak{su}(2)\big)$ of the maximal compact subgroup $\mathrm{S}\big(\mathrm{U}(2) \times \mathrm{U}(2)\big)$ of SU$(2,2)$.}
    \item{Another one that involves generators like $\Gamma^{}_{{03}}$ and $\Gamma^{}_{{45}}$ on non-compact boost transformations.}
\end{enumerate}
Here, we shall restrict attention to the compact Cartan basis with generators $\Gamma^{}_{05}$, $\Gamma^{}_{12}$, and $\Gamma^{}_{34}$ — the chiral counterpart of $m^{}_{05}$ \eqref{10}, $m^{}_{12}$ \eqref{20}, and $m^{}_{34}$ \eqref{30}, respectively. 

Let us begin by noting that the product of each pair of the real matrices $m^{}_{05}$, $m^{}_{12}$, and $m^{}_{34}$ gives rise to the (Hermitian) volume forms of either $\mathfrak{cl}(2,2)$ or $ \mathfrak{cl}(4,0) $. These volume forms are diagonal and commute with each individual $ E_{AB} $ appearing in \eqref{U(1)generator} and \eqref{10}-\eqref{30}. Indeed, we have:
\begin{align}\label{99}
    M_{12}(2,2) \, x &= \big( m^{}_{05} (m^{}_{12} \, x) \big) \nonumber\\[0.2cm]
    &= {\boldsymbol{\imath}} \Big( {\boldsymbol{\jmath}} \big( \boldsymbol{\ell}{\boldsymbol{\imath}} ({\boldsymbol{\jmath}}\boldsymbol{\ell} \, x) \big) \Big) \nonumber\\[0.2cm]
    &= \bigoplus(\mathbbm{1}_{\mathbbm{1}\boldsymbol{k}} - \mathbbm{1}_{05} - \mathbbm{1}_{12} + \mathbbm{1}_{34}) \, x \,,  
\end{align}
\begin{align}\label{88}
    M_{34}(2,2) \, x &= \big( m^{}_{05} (m^{}_{34} \, x ) \big) \nonumber\\[0.2cm]
    &= {\boldsymbol{\imath}} \Big( {\boldsymbol{\jmath}} \big( \boldsymbol{\ell}\boldsymbol{k} (\boldsymbol{\ell} \, x) \big) \Big) \nonumber\\[0.2cm]
    &= \bigoplus(\mathbbm{1}_{\mathbbm{1}\boldsymbol{k}} - \mathbbm{1}_{05} + \mathbbm{1}_{12} - \mathbbm{1}_{34}) \, x \,,  
\end{align}
\begin{align}\label{77}
    M(4,0) \, x &= \big( m^{}_{12} (m^{}_{{43}} \, x ) \big) \nonumber\\[0.2cm]
    &= \boldsymbol{\ell}{\boldsymbol{\imath}} \Big( {\boldsymbol{\jmath}}\boldsymbol{\ell} \big( \boldsymbol{\ell} (\boldsymbol{\ell} \boldsymbol{k}\, x) \big) \Big) \nonumber\\[0.2cm]
    &= \bigoplus(\mathbbm{1}_{\mathbbm{1}\boldsymbol{k}} + \mathbbm{1}_{05} - \mathbbm{1}_{12} - \mathbbm{1}_{34}) \, x \,,
\end{align}
where $x\in\mathrm{Span}\{A\}$ (see Eqs. \eqref{8uni}-\eqref{8uni'*}) and the $\mathbbm{1}_{AB}$ represents the unit $2\times 2$ matrices defined in Eq. \eqref{1def}. Again, for the convention $\bigoplus (\dots)$, see Sect. \ref{Sect. Convention}. The above results can be verified by substituting $x$ with any of the split-octonion units (and in fact with all of them) and consulting the multiplication table \eqref{table1}. 

Regarding the above construction, we observe that: 
\begin{enumerate}
    \item{The following relation holds:
    \begin{align}
        M(4,0)\, x = M_{12}(2,2) \big(M_{34}(2,2) \,x\big)\,.
    \end{align}
    To verify this relation, it is sufficient — by Proposition \ref{proposition 2.1} — to follow a procedure analogous to that in Eq. \eqref{8888888888888}.}
    \item{The Lorentz $\mathfrak{cl}(3,1)$ volume form is:
    \begin{align}
        E_L\, x = M(3,1) \, x &= \big( m^{}_{{03}} (m^{}_{12} \, x) \big) \nonumber\\[0.2cm]
        &= {\boldsymbol{\imath}} \Big( \boldsymbol{\ell}\boldsymbol{k} \big( \boldsymbol{\ell}{\boldsymbol{\imath}} ({\boldsymbol{\jmath}}\boldsymbol{\ell} \, x) \big) \Big) \nonumber\\[0.2cm]
        \label{1.25} &= \bigoplus(E_{\mathbbm{1} {1}} - E_{{0 4}} + E_{{5 3}} - E_{{2} \boldsymbol{k}}) \, x \,, 
    \end{align}
    where, once again, it can be verified by substituting $x$ with any of the split-octonion units (see Eqs. \eqref{8uni}-\eqref{8uni'*}) and consulting the multiplication table \eqref{table1}. We observe that the aforementioned volume forms $M_{12}(2,2)$ and $M(4,0)$ do not commute with the Lorentz one $M(3,1)$:
    \begin{align}
        &\big[ M_{12}(2,2) , M(3,1) \big] x \neq 0 \,, \nonumber\\[0.2cm]
        &\big[ M(4,0) , M(3,1) \big] x \neq 0 \,.
    \end{align}
    Once again, verification of these results reduces — by Proposition \ref{proposition 2.1} — to following a procedure analogous to that in Eq. \eqref{8888888888888}.}
\end{enumerate}

Happily, there exists a \emph{third-order}\footnote{The Clifford algebra $\mathfrak{cl}(l,r)$ admits a natural basis for its $2^{n=l+r}$-dimensional vector space, consisting of all possible products of its distinct anti-commuting generators $m^{}_1, \dots, m^{}_n$. These basis elements can be systematically arranged according to the standard grading of the associated Grassmann algebra $\Lambda^\otimes = \Lambda^0 \oplus \Lambda^1 \oplus \dots \oplus \Lambda^n$, where each subspace $\Lambda^k$ consists of all products of exactly $k$ distinct generators $m$, while $\Lambda^0$ corresponds to the identity element $\mathbbm{1}$ (i.e., no $m$-generators at all). The dimension of $\Lambda^k$ is given by the binomial coefficient $\begin{pmatrix} n \\ k \end{pmatrix}$, which counts the number of ways to choose $k$ generators from $n$. We refer informally to elements of $\Lambda^k$ as elements of order $k$.} diagonal element $D \in\mathfrak{cl}(4,2)$ that anti-commutes with the Cartan elements  $m^{}_{05}$, $m^{}_{12}$, and $m^{}_{34}$ of the maximal compact Lie subalgebra $\mathfrak{so}(2) \oplus \mathfrak{so}(4) \subset \mathfrak{so}(4,2)$, and consequently commutes with the fourth-order diagonal volume forms given in Eq. \eqref{99}-\eqref{77}:
\begin{align} \label{D}
    D \,x &= m^{}_{5}\big(m^{}_{4} (m^{}_{2} \,x)\big) \nonumber\\[0.2cm]
    &= {\boldsymbol{\jmath}}\, \big({\boldsymbol{\ell}}\, ({\boldsymbol{\jmath}}\boldsymbol{\ell} \,x)\big) \nonumber\\[0.2cm]
    &= \bigoplus(D_{\mathbbm{1}\boldsymbol{k}} - D_{05} - D_{12} - D_{34})\, x \nonumber\\[0.2cm]
    &= \underbrace{\bigoplus(\mathbbm{1}_{\mathbbm{1}, {5}, {2}, {4}} - \mathbbm{1}_{\boldsymbol{k}, {0}, {1}, {3}})\,x}_{\text{as a by-product of Sect. \ref{Sect. Convention} (see Eq. \eqref{Oplus M+N=N+M})}} \,, \nonumber\\
\end{align}
such that:
\begin{align}\label{1.26}
    \big\{ D, m^{}_{05} \big\}\,x = \big\{ D, m^{}_{12} \big\}\,x = \big\{ D, m^{}_{34} \big\}\,x = \big\{ D, E \big\}\,x = 0 \,,
\end{align}
\begin{align}\label{1000.26}
    \big[D, M_{12}(2,2)\big]\,x = \big[D, M_{34}(2,2)\big]\,x = \big[D, M(2,2)\big]\,x = 0\,,
\end{align}
where $x\in\mathrm{Span}\{A\}$ (see Eqs. \eqref{8uni}-\eqref{8uni'*}), $D_{AB}$ denotes the $2 \times 2$ matrices defined in Eq. \eqref{ddef}, while ${\mathbbm{1}}_{A,B,C,D}$ represents the $4 \times 4$ identity matrix corresponding to its indices. To verify relation \eqref{D}, one simply substitutes $x$ with any of the split-octonion units (and in fact with all of them) and refers to the multiplication table \eqref{table1}. The verification of the (anti-)commutation relations, however, relies on Proposition \ref{proposition 2.1} and proceeds analogously to the method outlined in Eq. \eqref{8888888888888}.

\begin{Remark} 
    {\textbf{(necessity of the operator $D$ and resolution of the degeneracy).} Let $\mathfrak{S} \cong \mathbb{R}^{4,4}$ be the spinor representation space of $\mathfrak{u}(2,2) \subset \mathfrak{cl}(4,2)$. We have already introduced the four mutually commuting $8 \times 8$ skew-symmetric operators:
    \begin{align}
        E \; (\text{see Eq. \eqref{U(1)generator}})\,, &\quad 
        m^{}_{05} \; (\text{Eq. \eqref{10}})\,, \nonumber\\[0.2cm]
        m^{}_{12} \; (\text{Eq. \eqref{20}})\,, &\quad 
        m^{}_{34} \; (\text{Eq. \eqref{30}})\,,
    \end{align}
    where $E$ generates the $\mathfrak{u}(1)$ center of $\mathfrak{u}(2,2)$, and $m^{}_{05}, m^{}_{12}$, and $m^{}_{34}$ form a trace-orthogonal Cartan basis of the maximal compact subalgebra $\mathfrak{so}(2)\oplus\mathfrak{so}(4) \subset \mathfrak{so}(4,2)$, such that:
    \begin{enumerate}[leftmargin=*]
        \item{Each operator can be expressed as a direct sum of four $2 \times 2$ skew-symmetric blocks $E_{AB}$ (up to a sign), with exactly the same set of index pairs $A,B = \mathbbm{1}, \boldsymbol{k}, \boldsymbol{\imath}, \boldsymbol{\jmath}, \boldsymbol{\ell}\boldsymbol{\imath}, \boldsymbol{\jmath}\boldsymbol{\ell}, \boldsymbol{\ell}\boldsymbol{k}, \boldsymbol{\ell}$ or more conveniently $A, B = \mathbbm{1}, \boldsymbol{k},0,5,1,2,3,4$ (see Eqs. \eqref{8uni}-\eqref{8uni'*}); that is:
        \begin{align}
            E_{\mathbbm{1}\boldsymbol{k}}\,,\quad E_{05}\,,\quad E_{12}\,,\quad E_{34}\,,
        \end{align}
        where each block acts independently.}

        \item{The joint eigenspace of a given quadruple of eigenvalues $(\mu^{}_E, \mu^{}_{05}, \mu^{}_{12}, \mu^{}_{34})$:
        \begin{align}
            \texttt{E}(\mu^{}_E, \mu_1, \mu_2, \mu_3) 
            := \Big\{ v \in \mathfrak{S} \;;\;& Ev = \mu^{}_E v\,,\quad m^{}_{05}v = \mu^{}_{05} v\,,\nonumber\\ 
            &m^{}_{12}\,v = \mu^{}_{12} v\,,\quad m^{}_{34}v = \mu^{}_{34} v \Big\}\,,
        \end{align}   
        is then $2$-dimensional over $\mathbb{R}$ (see footnote \ref{footnote E2=-1}). Indeed, each $2\times 2$ block $E_{AB}$ corresponds to exactly one eigenvalue quadruple. In particular, each block is spanned by two linearly independent vectors; for example:
        \begin{align}
            v_1 = \begin{pmatrix} 1 \\ 1 \end{pmatrix}\,, \quad 
        v_2 = \begin{pmatrix} 1 \\ -1 \end{pmatrix}\,.
        \end{align}
        However, there is no canonical way to distinguish these individual vectors (say $v_1$ and $v_2$) inside this joint eigenspace using $E, m^{}_{05}, m^{}_{12}$, and $m^{}_{34}$ alone, producing an intrinsic twofold degeneracy.}
    \end{enumerate}

    To resolve this degeneracy, one requires an extra operator that acts within each $2$-dimensional joint eigenspace and distinguishes the two vectors. This operator is indeed the third-order Clifford element $D \in \mathfrak{cl}(4,2)$ (see Eq. \eqref{D}), which satisfies the anti-commutation relations \eqref{1.26}. Since $D$ anti-commutes with $E, m^{}_{05}, m^{}_{12}$, and $m^{}_{34}$, its action on any vector $v$ in a joint eigenspace with eigenvalues $\mu^{}_X$ satisfies:
    \begin{align}
        X (D v) = - D (X v) = - \mu^{}_X (D v)\,, \quad\mbox{for}\quad X = E\,, \; m^{}_{05}\,, \; m^{}_{12}\,, \; m^{}_{34}\,,
    \end{align}
    where $\mu^{}_X$ are the eigenvalues of $v$ under $X$. Since each joint eigenspace defines a $2$-dimensional invariant plane under the action of $E, m^{}_{05}, m^{}_{12}, m^{}_{34}$, flipping the signs of all eigenvalues does not move a vector outside this plane. Consequently, $Dv$ lies in the same $2$-dimensional joint eigenspace as $v$, but is linearly independent of $v$; if $Dv$ were proportional to $v$, i.e., $Dv = \alpha v$ for some non-zero scalar $\alpha$, then acting with any $X = E, m^{}_{05}, m^{}_{12}, m^{}_{34}$ would yield $\mu^{}_X = -\mu^{}_X$, forcing $\mu^{}_X = 0$, which does not occur for the non-trivial $2 \times 2$ blocks under consideration. Therefore, $D$ automatically generates a second independent vector in each joint eigenspace, completely resolving the intrinsic twofold degeneracy. Concretely, $D$ acts like a Pauli-$\sigma_3$ operator within each $2 \times 2$ block, splitting the two states into eigenvectors with distinct eigenvalues:
    \begin{align}
        D_{AB} \, v_1 &= \begin{pmatrix} 1 & 0 \\ 0 & -1 \end{pmatrix}\begin{pmatrix} 1 \\ 1 \end{pmatrix} = 
        \begin{pmatrix} 1 \\ -1 \end{pmatrix} = v_2\,, \\[0.2cm]
        D_{AB} \, v_2 &= \begin{pmatrix} 1 & 0 \\ 0 & -1 \end{pmatrix}\begin{pmatrix} 1 \\ -1 \end{pmatrix} = \begin{pmatrix} 1 \\ 1 \end{pmatrix} = v_1\,.
    \end{align}
    In addition, $D$ commutes (see Eq. \eqref{1000.26}) with all fourth-order volume forms $M_{12}(2,2)$ \eqref{99}, $M_{34}(2,2)$ \eqref{88}, and $M(4,0)$ \eqref{77}, ensuring that it preserves the previously obtained block decomposition.
    
    Accordingly, after including $D$, the extended set of (anti-)commuting operators:
    \begin{align}
        \Big\{ E,\, m^{}_{05},\, m^{}_{12},\, m^{}_{34}, D \Big\}
    \end{align}
    provides a complete, non-degenerate labeling of the spinor basis of $\mathfrak{S}$.
}\end{Remark}

\begin{Remark}{
    \textbf{(non-uniqueness of the operator $D$).} In connection with the above, we note that:
    \begin{enumerate}[leftmargin=*]
        \item{The representation of \eqref{D} as a sum of $2\times 2$ diagonal matrices $D_{AB}$ is not unique. Actually, as a by-product of the discussion in Sect. \ref{Sect. Convention} (in particular, Eq. \eqref{Oplus M+N=N+M}), we may write, for instance:
        \begin{align}
            \bigoplus\big(D_{\mathbbm{1}\boldsymbol{k}} - D_{05}\big) &= \bigoplus\big(D_{\mathbbm{1}{0}} - D_{\boldsymbol{k}{5}}\big)\,, \nonumber\\[0.2cm]
            \bigoplus\big(D_{12} + D_{34}\big) &= \bigoplus\big(D_{14} - D_{23}\big)\,, \quad \mbox{etc.}
        \end{align}}
        
        \item{Besides the representation of \eqref{D}, it is important to emphasize that the choice of the third-order diagonal matrix $D=m^{}_5 m^{}_4 m^{}_2$ in \eqref{D} is not unique either. This choice corresponds to triples of imaginary split-octonion units that generate an associative subalgebra within the larger, non-associative algebra of split-octonions. For a more detailed discussion, see Appendix \ref{Appendix A}.}
    \end{enumerate}
}\end{Remark}

\begin{proposition}\label{proposition S=DE}
    \textbf{($\mathfrak{cl}(2)$ subalgebra generated by $E$, $D$, and $S := D\, E$ in the split-octonionic realization).} The operators $E$ \eqref{U(1)generator}, $D$ \eqref{D}, and their product $S := D\, E$ obey the same multiplication laws as their $2\times 2$ matrix components (see Remark \ref{Remark E,D,S,1-AB}):
    \begin{align}\label{S=DE=-ED}
        S = D\, E = - E\, D = \bigoplus\big(S_{\mathbbm{1}\boldsymbol{k}} - S_{05} - S_{12} - S_{34}\big) \,,
    \end{align}
    \begin{align}\label{DS=-SD=E}
        D\, S = - S\, D = E \,, \quad\mbox{and}\quad D^2 = S^2 = {\mathbbm{1}}_{8} = - E^2\,,
    \end{align}
    generating a representation of the $\mathfrak{cl}(2)$ algebra. Here, ${\mathbbm{1}}_{8}$ denotes the $8\times 8$ identity matrix. [We note that the labels of the indices of $D$ and $S$ will change in the chiral basis.]
\end{proposition}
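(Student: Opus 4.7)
The plan is to treat $D$ and $E$ as literal block-diagonal $8 \times 8$ matrices assembled from four $2 \times 2$ blocks indexed by the same set of pairs $\mathbbm{1}\boldsymbol{k}$, $05$, $12$, $34$. Because the direct-sum notation is literally (and not merely up-to-isomorphism) commutative in the present setup (Eq. \eqref{Oplus M+N=N+M}), corresponding block slots line up, and matrix multiplication of $D$ with $E$ reduces to ordinary multiplication of the corresponding $2\times 2$ blocks. The whole proposition then collapses to four independent $2 \times 2$ identities combined with straightforward sign bookkeeping.

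First I would compute $D \cdot E$ blockwise. Starting from the explicit expressions \eqref{U(1)generator} and \eqref{D} and invoking the block identity $D_{AB}\, E_{AB} = S_{AB}$ recorded in Remark \ref{Remark E,D,S,1-AB}, the $\mathbbm{1}\boldsymbol{k}$ block produces $(+D_{\mathbbm{1}\boldsymbol{k}})(+E_{\mathbbm{1}\boldsymbol{k}}) = +S_{\mathbbm{1}\boldsymbol{k}}$, whereas each of the $05$, $12$, $34$ blocks produces $(-D_{AB})(+E_{AB}) = -S_{AB}$. Reassembling yields precisely $\bigoplus(S_{\mathbbm{1}\boldsymbol{k}} - S_{05} - S_{12} - S_{34})$, which is the claimed closed form for $S$. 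The equality $S = DE = -ED$ is then immediate from the anticommutation $\{D,E\}\,x = 0$ already established in Eq. \eqref{1.26}, and can be cross-checked blockwise via $E_{AB} D_{AB} = -D_{AB} E_{AB}$ from Remark \ref{Remark E,D,S,1-AB}.

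Next I would dispatch the remaining identities. Since block-diagonal structure is preserved under multiplication, $D^2$ is computed by squaring each block of $D$; the overall signs drop out and $D_{AB}^2 = \mathbbm{1}_{AB}$ (Eq. \eqref{1def}) in every block, giving $D^2 = \mathbbm{1}_{8}$. The same reasoning applied to $S$, using $S_{AB}^2 = \mathbbm{1}_{AB}$ from \eqref{1def}, yields $S^2 = \mathbbm{1}_{8}$, while $-E^2 = \mathbbm{1}_{8}$ is already recorded in Eq. \eqref{E2=-1}. The products $DS$ and $SD$ then follow from one-line manipulations: $DS = D(DE) = D^2 E = E$ and $SD = (DE)D = -D(DE) = -D^2 E = -E$, using $ED = -DE$. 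Because $D$ and $S$ are mutually anticommuting involutions with unit square, they satisfy precisely the defining relations of a representation of $\mathfrak{cl}(2)$.

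There is no genuine technical obstacle here; the entire argument is careful bookkeeping within the block-diagonal framework. The only conceptual point worth flagging is that the literal commutativity of the direct sum in the present context (Eq. \eqref{Oplus M+N=N+M}) is what guarantees that the same index pair $AB$ occupies the same block slot in both $D$ and $E$, so that multiplication really produces the blockwise products $\pm D_{AB}\, E_{AB}$ without any shuffled pairing that would obstruct the identification with $\pm S_{AB}$.
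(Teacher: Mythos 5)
Your proof is correct, but it follows a genuinely different route from the paper's. The paper works at the level of the octonionic left-multiplication operators: it writes $S\,x = D(E\,x)$ as a nested product $m^{}_{5}\big(m^{}_{4}(m^{}_{2}(m^{}_{0}(\cdots)))\big)$ of nine factors, collapses it using the anticommutation/cancellation rules of Proposition \ref{proposition 2.1} and the multiplication table \eqref{table1} to the compact third-order form $S\,x = -\,{\boldsymbol{\imath}}\big(\boldsymbol{\ell}{\boldsymbol{\imath}}(\boldsymbol{\ell}\boldsymbol{k}\,x)\big) = -\,m^{}_{0}\big(m^{}_{1}(m^{}_{3}\,x)\big)$, and only then reads off the block-diagonal matrix; the remaining identities are handled the same way. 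You instead stay entirely inside the block-diagonal matrix picture, multiplying the already-established expressions \eqref{U(1)generator} and \eqref{D} block by block via the $2\times 2$ relations of Remark \ref{Remark E,D,S,1-AB}, with \eqref{1.26} and \eqref{E2=-1} supplying the anticommutation and $E^2=-\mathbbm{1}$. Both arguments are complete; yours is the more mechanical and self-contained given that the matrix forms are already in hand, and you are right to flag the literal block alignment of the direct sums as the one point that needs justifying. What the paper's route buys in exchange is the explicit octonionic identification $S = -\,L_{\boldsymbol{\imath}}L_{\boldsymbol{\ell}\boldsymbol{\imath}}L_{\boldsymbol{\ell}\boldsymbol{k}}$ recorded in \eqref{S}, which is reused downstream (e.g., in deriving \eqref{Cartan'www.}--\eqref{Cartan'} and the charge-conjugation matrix $C=-\mathrm{i}S$ in \eqref{alberait}); your proof establishes everything the proposition claims but does not produce that auxiliary expression.
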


\begin{proof}{ 
    Let us rewrite the above identities explicitly, one by one, by substituting the operators $E$ \eqref{U(1)generator} and $D$ \eqref{D} as they act from the left on $x\in\mathrm{Span}\{A\}$ (see Eqs. \eqref{8uni}-\eqref{8uni'*}):
    \begin{enumerate}
        \item{We begin with Eq. \eqref{S=DE=-ED}:
        \begin{align}
            S \,x = D\, (E \,x) = m^{}_{5}\left[m^{}_{4} \left[m^{}_{2} \Bigg[m^{}_{0} \Bigg(m^{}_{5} \bigg(m^{}_{1} \Big(m^{}_{2} \big(m^{}_{3} (m^{}_{4} \, x ) \big) \Big) \bigg) \Bigg) \;\Bigg]\; \right]\; \right]\,.
        \end{align}
        Applying Proposition \ref{proposition 2.1} in the same way as Eq. \eqref{8888888888888}, and consulting the multiplication table \eqref{table1}, we find:
        \begin{align}
            S \,x = D\, (E \,x) &= - m^{}_{0} \big(m^{}_{1} (m^{}_{3} \, x ) \big) \nonumber\\[0.2cm]
            &= - {\boldsymbol{\imath}} \big( \boldsymbol{\ell}{\boldsymbol{\imath}} ( \boldsymbol{\ell}\boldsymbol{k}\, x ) \big) \nonumber\\[0.2cm]
            &= \bigoplus(S_{\mathbbm{1}\boldsymbol{k}} - S_{05} - S_{12} - S_{34})\, x \,.
        \end{align} 
        For the convention $\bigoplus (\dots)$, see Sect. \ref{Sect. Convention}. Following a similar procedure, we obtain:
        \begin{align}\label{S}
            S \,x = -E\, (D \,x) &= - m^{}_{0} \big(m^{}_{1} (m^{}_{3} \, x ) \big) \nonumber\\[0.2cm]
            &= - {\boldsymbol{\imath}} \big( \boldsymbol{\ell}{\boldsymbol{\imath}} ( \boldsymbol{\ell}\boldsymbol{k}\, x ) \big) \nonumber\\[0.2cm]
            &= \bigoplus(S_{\mathbbm{1}\boldsymbol{k}} - S_{05} - S_{12} - S_{34})\, x \,.
        \end{align}
        Therefore, Eq. \eqref{S=DE=-ED} follows.}

        \item{Proceeding analogously, we find $D\, S = - S\, D = E$; in particular:
        \begin{align}
            D (S\, x) &= - m^{}_{5} \Bigg(m^{}_{4} \bigg(m^{}_{2} \Big(m^{}_{0} \big(m^{}_{1} (m^{}_{3}\, x) \big) \Big) \bigg) \Bigg) \nonumber\\
            &= m^{}_{0} \Bigg(m^{}_{5} \bigg(m^{}_{1} \Big(m^{}_{2} \big(m^{}_{3} (m^{}_{4} \, x ) \big) \Big) \bigg) \Bigg) = E\, x\,,
        \end{align}
        and:
        \begin{align}
            -S (D\, x) &= m^{}_{0} \Bigg(m^{}_{1} \bigg(m^{}_{3} \Big(m^{}_{5} \big(m^{}_{4} (m^{}_{2}\, x) \big) \Big) \bigg) \Bigg) \nonumber\\
            &= m^{}_{0} \Bigg(m^{}_{5} \bigg(m^{}_{1} \Big(m^{}_{2} \big(m^{}_{3} (m^{}_{4} \, x ) \big) \Big) \bigg) \Bigg) = E\, x\,.
        \end{align}}

        \item{Lastly, by a similar argument, the second identity in Eq. \eqref{DS=-SD=E} follows trivially:
        \begin{align}
            D (D\,x) = S(S\,x) = {\mathbbm{1}}_{8} = - E (E\,x)\,.
        \end{align}}
    \end{enumerate}
}\end{proof}

We now have all the ingredients to define the $\mathfrak{cl}(4,2)$ analogue of the similarity transformation \eqref{shi} (or, more generally, \eqref{shi'}) corresponding to the $\mathfrak{cl}(2)$ case, which simultaneously diagonalizes the $\mathfrak{u}(1)$ generator $E$ \eqref{U(1)generator} of the conformal Lie algebra $\mathfrak{u}(2,2)\cong\mathfrak{su}(2,2)\oplus\mathfrak{u}(1)$ and the trace-orthogonal Cartan elements $m^{}_{05}$ \eqref{10}, $m^{}_{12}$ \eqref{20}, and $m^{}_{34}$ \eqref{30} of the maximal compact subalgebra $\mathfrak{so}(2) \oplus \mathfrak{so}(4) \subset \mathfrak{so}(4,2)\cong\mathfrak{su}(2,2)$. The corresponding unitary transformation $U$ explicitly reads as:
\begin{align}\label{1.31}
    &U = \varrho V \,, \quad\varrho\in\mathbb{C}\,,\quad  \varrho\overline{\varrho} = 1\,, \nonumber\\[0.2cm]
    &V = e^{-\frac{\pi}{4}E}\, e^{-\mathrm{i}\frac{\pi}{4} D} = \frac{1}{2} \left( \mathbbm{1} - \mathrm{i}S - E - \mathrm{i}D \right)\,,\nonumber\\[0.2cm]
    &V^\ast \; \big( = V^{-1} \big) = \frac{1}{2} \left( \mathbbm{1} + \mathrm{i}S + E + \mathrm{i}D \right) = -V^2 \quad\Longrightarrow\quad V^3 = - \mathbbm{1}_8\,,
    \end{align}
    where, by construction, we have:
    \begin{align}
        UU^\ast = U^\ast U = \mathbbm{1}_8 = V^\ast V = VV^\ast \,.
\end{align}
Note that the phase factor is chosen as $\varrho = e^{\mathrm{i}\pi/4}$ so that the corresponding charge-conjugation matrix $C$ is real; this point will be clarified later. Furthermore, it is worth noting that, in the above matrix realization, the superscript `$^\ast$' denotes the Hermitian conjugate, defined as $V^\ast = \overline{V}^\top = \overline{V^\top}$, where the `$\overline{\phantom{a}}$' symbol represents the complex conjugate, and the superscript `$^\top$' indicates the transpose.

To achieve our goal of simultaneous diagonalization of the aforementioned operators, there are two (semi)equivalent approaches:
\begin{enumerate}
    \item{First, we transform the operators and let the conjugated versions $U E U^{-1}$, $U m_{05} U^{-1},\;\dots$ act on the original Majorana (real) basis $A = \mathbbm{1},\, \boldsymbol{k},\, \nu \in \big\{ 0:={\boldsymbol{\imath}},\, 5:={\boldsymbol{\jmath}},\, 1:=\boldsymbol{\ell}{\boldsymbol{\imath}},\, 2:={\boldsymbol{\jmath}}\boldsymbol{\ell},\,  3:=\boldsymbol{\ell}\boldsymbol{k}\,, 4:=\boldsymbol{\ell} \big\}$.}

    \item{Second, we leave the operators in their original form but instead transform the basis, working with $U^{-1}A$, i.e., the chiral basis.}
\end{enumerate}
The two procedures are equivalent in the sense that the resulting matrix realizations are similar; however, the meaning of the indices differs in each case. In what follows, we present both approaches explicitly, beginning with the first.

\subsection{Operator Transformation in the Majorana Basis}

Under the linear operator $U$, the volume form $M(4,2)=E$ \eqref{U(1)generator} transforms as:
\begin{align}\label{omega(4,2)=iD} 
    \omega(4,2) &= U\,M(4,2)\,U^{-1} \; \big(= V\,M(4,2)\,V^{-1}\big) \nonumber\\[0.2cm]
    &= U\,E\,U^{-1} \; \big(= V\,E\,V^{-1}\big) \nonumber\\[0.2cm]   
    &= e^{-\frac{\pi}{4}E}\, e^{-\mathrm{i}\frac{\pi}{4} D} \; E\; e^{\mathrm{i}\frac{\pi}{4} D}\, e^{\frac{\pi}{4}E} \nonumber\\[0.2cm]
    &= E\; e^{-\frac{\pi}{4}E}\, e^{\mathrm{i}\frac{\pi}{2} D}\, e^{\frac{\pi}{4}E} \nonumber\\[0.1cm]
    &= \frac{1}{2}\, E\, (1-E)(\mathrm{i}D) (1+E) \nonumber\\[0.2cm]
    &= \mathrm{i} E\,D\,E = \mathrm{i} D \,.
\end{align}
To obtain the above result, we consider the left multiplication on the split-octonion units and invoke Proposition \ref{proposition S=DE}. Under the action of $\omega(4,2)$ on the split-octonion units — namely, on the Majorana basis — the operator assumes the diagonal matrix form (up to a factor of `$\mathrm{i}$') given in Eq. \eqref{D}:
\begin{align}\label{omega(4,2)=iDMat} 
    \omega(4,2) \,x = \big(U\,E\,U^{-1}\big)\, x = \mathrm{i} D \,x = \mathrm{i}\bigoplus (D_{\mathbbm{1}\boldsymbol{k}} - D_{05} - D_{12} - D_{34})\, x\,,
\end{align}
where $x\in\mathrm{Span}\{A\}$ (see Eqs. \eqref{8uni}-\eqref{8uni'*}). For the convention $\bigoplus(\dots)$, refer to Sect. \ref{Sect. Convention}.

We observe that the operator $D$ \eqref{D} commutes with the factors $m_{\mu^\prime}$, for $\mu^\prime = 5,4,2$, while it anti-commutes with the remaining factors $m_{\nu^\prime}$, for $\nu^\prime = 0,1,3$. By contrast, the operator $E$ anti-commutes with all $m_{\nu}$, for $\nu = 0,5,1,2,3,4$. These properties can once again be readily justified by substituting the operators as they act from the left on $x\in\mathrm{Span}\{A\}$ (see Eqs. \eqref{8uni}-\eqref{8uni'*}), and then invoking Proposition \ref{proposition 2.1} together with an argument analogous to that in Eq. \eqref{8888888888888}; see also the proof of Proposition \ref{proposition S=DE}. With these observations in hand and using Proposition \ref{proposition S=DE}, we obtain:
\begin{align}\label{Gamma542}
    \Gamma^{}_{\mu^\prime} &= U\, m^{}_{\mu^\prime}\, U^{-1} \; \big( = V\, m^{}_{\mu^\prime}\, V^{-1} \big) \nonumber\\[0.2cm]
    &= e^{-\frac{\pi}{4}E}\, e^{-\mathrm{i}\frac{\pi}{4} D} \; m^{}_{\mu^\prime} \; e^{\mathrm{i}\frac{\pi}{4} D}\, e^{\frac{\pi}{4}E} \nonumber\\[0.2cm]
    &= m^{}_{\mu^\prime} \, e^{\frac{\pi}{2}E} \nonumber\\[0.2cm]
    &= m^{}_{\mu^\prime} E\,, \quad\mbox{for}\quad {\mu^\prime} = {5,4,2}\,,
\end{align}
\begin{align}\label{Gamma013}
    \Gamma^{}_{\nu^\prime} &= U\, m^{}_{\nu^\prime}\, U^{-1} \; \big(= V\, m^{}_{\nu^\prime}\, V^{-1}\big) \nonumber\\[0.2cm]
    &= e^{-\frac{\pi}{4}E}\, e^{-\mathrm{i}\frac{\pi}{4} D} \; m^{}_{\nu^\prime} \; e^{\mathrm{i}\frac{\pi}{4} D}\, e^{\frac{\pi}{4}E} \nonumber\\[0.2cm]
    &= m^{}_{\nu^\prime} \; e^{\frac{\pi}{4}E}\, e^{\mathrm{i}\frac{\pi}{2} D}\, e^{\frac{\pi}{4}E} \nonumber\\[0.2cm]
    &= \frac{1}{2} \, m^{}_{\nu^\prime} (1+E) (\mathrm{i}D) (1+E) \nonumber\\[0.2cm]
    &= \mathrm{i} m^{}_{\nu^\prime} D\,, \quad\mbox{for}\quad {\nu^\prime} = {0,1,3}\,, 
\end{align}
and:
\begin{align}\label{GammaMuNu}
    \Gamma^{}_{{\mu^\prime}{\nu^\prime}} &= U\, m^{}_{{\mu^\prime}{\nu^\prime}}\, U^{-1} \; \big(= V\, m^{}_{{\mu^\prime}{\nu^\prime}}\, V^{-1}\big) \nonumber\\[0.2cm]
    &= e^{-\frac{\pi}{4}E}\, e^{-\mathrm{i}\frac{\pi}{4} D} \; m^{}_{{\mu^\prime}{\nu^\prime}} \; e^{\mathrm{i}\frac{\pi}{4} D}\, e^{\frac{\pi}{4}E} \nonumber\\[0.2cm]
    &= m^{}_{{\mu^\prime}{\nu^\prime}} \; e^{-\frac{\pi}{4}E}\, e^{\mathrm{i}\frac{\pi}{2} D}\, e^{\frac{\pi}{4}E} \nonumber\\[0.2cm]
    &= \frac{1}{2}\, m^{}_{{\mu^\prime}{\nu^\prime}} (1-E)(\mathrm{i}D)(1+E) \nonumber\\[0.2cm]
    &= \mathrm{i}m^{}_{{\mu^\prime}{\nu^\prime}}\, S\,, \quad\mbox{while}\quad {\mu^\prime} = {5,4,2}\,, \quad\mbox{and}\quad {\nu^\prime} = {0,1,3}\,,
\end{align}
but, when $\mu^\prime$ and $\nu^\prime$ simultaneously belong to either ${0,1,3}$ or ${5,4,2}$, we have:
\begin{align}\label{GammaMuNu'}
    \Gamma^{}_{{\mu^\prime}{\nu^\prime}} &= U\, m^{}_{{\mu^\prime}{\nu^\prime}}\, U^{-1} \; \big(= V\, m^{}_{{\mu^\prime}{\nu^\prime}}\, V^{-1}\big) \nonumber\\[0.2cm]
    &= e^{-\frac{\pi}{4}E}\, e^{-\mathrm{i}\frac{\pi}{4} D} \; m^{}_{{\mu^\prime}{\nu^\prime}} \; e^{\mathrm{i}\frac{\pi}{4} D}\, e^{\frac{\pi}{4}E} \nonumber\\[0.2cm]
    &= m^{}_{{\mu^\prime}{\nu^\prime}} \,, \quad\mbox{while}\quad {\mu^\prime,\nu^\prime} = {5,4,2}\,, \quad\mbox{or}\quad {\mu^\prime,\nu^\prime} = {0,1,3}\,.
\end{align}

Using Eqs. \eqref{GammaMuNu} and \eqref{S}, together with Proposition \ref{proposition 2.1} and the multiplication table \eqref{table1} — in a manner analogous to the proof of Proposition \ref{proposition S=DE} — one verifies that the transformed versions of the skew-symmetric Cartan matrices $m^{}_{05}, m^{}_{12}$, and $m^{}_{34}$ (see Eqs. \eqref{10}-\eqref{30}) in the Majorana (real) basis assume, respectively, the following diagonal forms:
\begin{align}    
    \label{Cartan'www.} \Gamma^{}_{05} \, x &= \big(U\, m^{}_{05}\, U^{-1}\big) \, x \nonumber\\[0.2cm]
    &= \mathrm{i} m^{}_{05} \, S \, x \nonumber\\[0.2cm]
    &= - \mathrm{i} m^{}_{0} \bigg(m^{}_{5} \Big(m^{}_{0} \big(m^{}_{1} (m^{}_{3}\, x) \big) \Big)\bigg) \quad \nonumber\\[0.2cm]
    &= - \mathrm{i} m^{}_{5} \big(m^{}_{1} (m^{}_{3}\, x) \big) \nonumber\\[0.2cm]
    &= - \mathrm{i} {\boldsymbol{\jmath}}\, \big(\boldsymbol{\ell}\boldsymbol{\imath}\, (\boldsymbol{\ell}{\boldsymbol{k}}\, x) \big) \nonumber\\[0.2cm]
    &= \mathrm{i}\bigoplus \big( - D_{\mathbbm{1}{\boldsymbol{k}}} + D_{05} - D_{12} - D_{34} \big)\, x\,,
\end{align}
\begin{align}
    \label{AfsaneH} \Gamma^{}_{12} \, x &= \big(U\, m^{}_{12}\, U^{-1}\big) \, x \nonumber\\[0.2cm]
    &= \mathrm{i} m^{}_{12} \, S \, x \nonumber\\[0.2cm]
    &= - \mathrm{i} m^{}_{1} \bigg(m^{}_{2} \Big(m^{}_{0} \big(m^{}_{1} (m^{}_{3} \, x) \big) \Big)\bigg) \quad \nonumber\\[0.2cm]
    &= - \mathrm{i} m^{}_{2} \big(m^{}_{0} (m^{}_{3} \, x) \big) \nonumber\\[0.2cm]
    &= - \mathrm{i} {\boldsymbol{\jmath}}\boldsymbol{\ell}\, \big(\boldsymbol{\imath}\, (\boldsymbol{\ell}{\boldsymbol{k}} \, x) \big) \nonumber\\[0.2cm]
    &= \mathrm{i}\bigoplus \big( D_{\mathbbm{1}{\boldsymbol{k}}} + D_{05} - D_{12} + D_{34} \big) \, x\,,
\end{align}
\begin{align}
    \Gamma^{}_{34} \, x &= \big(U\, m^{}_{34}\, U^{-1}\big) \, x \nonumber\\[0.2cm]
    &= \mathrm{i} m^{}_{34} \, S \, x \nonumber\\[0.2cm]
    &= - \mathrm{i} m^{}_{3} \bigg(m^{}_{4} \Big(m^{}_{0} \big(m^{}_{1} (m^{}_{3} \, x) \big) \Big)\bigg) \quad \nonumber\\[0.2cm]
    &= \mathrm{i} m^{}_{4} \big(m^{}_{0} (m^{}_{1} \, x) \big) \nonumber\\[0.2cm]
    &= \mathrm{i} \boldsymbol{\ell}\, \big(\boldsymbol{\imath}\, (\boldsymbol{\ell}\boldsymbol{\imath} \, x) \big) \nonumber\\[0.2cm]
    &= \mathrm{i}\bigoplus \big( D_{\mathbbm{1}{\boldsymbol{k}}} + D_{05} + D_{12} - D_{34} \big) \, x\,, \label{Cartan'}
\end{align}
where $x\in\mathrm{Span}\{A\}$ (see Eqs. \eqref{8uni}-\eqref{8uni'*}). Again, for the convention $\bigoplus (\dots)$, see Sect. \ref{Sect. Convention}.

Moreover, one can show that the product of these diagonal matrices reproduces $\omega(4,2)$ \eqref{omega(4,2)=iD}:
\begin{align}\label{OOmega}
    \big(\Gamma^{}_{05}\, \Gamma^{}_{12}\, \Gamma^{}_{34}\big)\, x = \big(U\, E\, U^{-1}\big)\, x = \omega(4,2)\, x\,,
\end{align}
and, utilizing the standard properties \eqref{xy=yx} and \eqref{x+y=x+y} of the trace, that the four $8\times 8$ matrices \eqref{omega(4,2)=iDMat} and \eqref{Cartan'www.}-\eqref{Cartan'} are trace orthogonal:  
\begin{align}
    T\left(\Gamma^{}_{05} \Gamma^{}_{12}\right) = T\left(\Gamma^{}_{05} \Gamma^{}_{34}\right) = T\left(\Gamma^{}_{12} \Gamma^{}_{34}\right) = \dots = T\big(\Gamma^{}_{12} D\big) = 0\,.
\end{align}
We observe further that the diagonal volume forms \eqref{99}-\eqref{77} of the Clifford subalgebras $\mathfrak{cl}(2,2)$ and $\mathfrak{cl}(4,0)$ remain unchanged (up to the meaning of the indices) under the similarity transformation.

\subsection{Changing the Basis: From Majorana to Chiral}

Now, we turn to the second approach, namely, keeping the operators in their original form while transforming the basis, that is, working with:
\begin{align}\label{tytyty}
    &n^{}_{\widehat{\mathbbm{1}\boldsymbol{k}}_+} \big( = U^{-1}\, \mathbbm{1} \big) = \frac{1}{\sqrt{2}} \left( \mathbbm{1} + \mathrm{i}{\boldsymbol{k}} \right)\,, \nonumber\\[0.2cm]
    & n^{}_{\widehat{\mathbbm{1}\boldsymbol{k}}_-} \big( = U^{-1}\, {\boldsymbol{k}} \big) = \frac{1}{\sqrt{2}} \left( \mathbbm{1} - \mathrm{i}{\boldsymbol{k}} \right) = \overline{n^{}_{\widehat{\mathbbm{1}\boldsymbol{k}}_+}} \,,
\end{align}
\begin{align}\label{zx}
    &n^{}_{\widehat{05}_+} \big( = \mathrm{i}U^{-1}\, {\boldsymbol{\jmath}} \big) = \frac{1}{\sqrt{2}} \left( \boldsymbol{\imath} + \mathrm{i}{\boldsymbol{\jmath}} \right)\,, \nonumber\\[0.2cm]
    &n^{}_{\widehat{05}_-} \big( = \mathrm{i}U^{-1}\, \boldsymbol{\imath} \big) = \frac{1}{\sqrt{2}} \left( \boldsymbol{\imath} - \mathrm{i}{\boldsymbol{\jmath}} \right) = \overline{n^{}_{\widehat{05}_+}} \,,
\end{align}
\begin{align}\label{zc}
    &n^{}_{\widehat{12}_+} \big( = \mathrm{i}U^{-1}\, {\boldsymbol{\jmath}}\boldsymbol{\ell} \big) = \frac{1}{\sqrt{2}} \left( \boldsymbol{\ell}\boldsymbol{\imath} + \mathrm{i}{\boldsymbol{\jmath}}\boldsymbol{\ell} \right)\,, \nonumber\\[0.2cm]
    &n^{}_{\widehat{12}_-} \big( = \mathrm{i}U^{-1}\, \boldsymbol{\ell}\boldsymbol{\imath} \big) = \frac{1}{\sqrt{2}} \left( \boldsymbol{\ell}\boldsymbol{\imath} - \mathrm{i}{\boldsymbol{\jmath}}\boldsymbol{\ell} \right) = \overline{n^{}_{\widehat{12}_+}} \,, 
\end{align}
\begin{align}\label{zv}
    &n^{}_{\widehat{34}_+} \big( = \mathrm{i}U^{-1}\, \boldsymbol{\ell} \big) = \frac{1}{\sqrt{2}} \left( \boldsymbol{\ell}{\boldsymbol{k}} + \mathrm{i}\boldsymbol{\ell} \right)\,, \nonumber\\[0.2cm]
    &n^{}_{\widehat{34}_-} \big( = \mathrm{i} U^{-1}\, \boldsymbol{\ell}{\boldsymbol{k}} \big) = \frac{1}{\sqrt{2}} \left( \boldsymbol{\ell}{\boldsymbol{k}} - \mathrm{i}\boldsymbol{\ell} \right) = \overline{n^{}_{\widehat{34}_+}} \,.
\end{align}
To drive the above results, it suffices to substitute $U=e^{\mathrm{i}\frac{\pi}{4}} V$, taking into account Eqs. \eqref{1.31}, \eqref{S=DE=-ED}, \eqref{D}, and \eqref{U(1)generator}. For the complex rescaling of $n^{}_{\widehat{05}_\pm}, n^{}_{\widehat{12}_\pm}$, and $n^{}_{\widehat{34}_\pm}$ by `$\mathrm{i}$', see Remark \ref{Remark i}. 

The above transformed basis spans the chiral space $\mathfrak{S}$, which, by construction, decomposes into two complex-conjugate semispinor chiral subspaces $\mathfrak{S}_\pm$:
\begin{align}\label{Spm}
    \mathfrak{S}_\pm = \mathrm{Span} \Big\{ n^{}_{a^{}_\pm} \;;\;\; n^{}_{\widehat{\mathbbm{1}\boldsymbol{k}}_\pm}, \;\; n^{}_{\widehat{05}_\pm},\;\; n^{}_{\widehat{12}_\pm},\;\; n^{}_{\widehat{34}_\pm} \Big\}\,, \quad \mathfrak{S}_\pm = \overline{\mathfrak{S}_\mp}\,.
\end{align}
Notably, each chiral subspace is totally isotropic; that is, for all $n^{}_{a^{}_\pm}, n^{}_{b^{}_\pm} \in \mathfrak{S}_\pm$, the bilinear form defined in \eqref{muhum} satisfies:
\begin{align}
    \big\langle n^{}_{a^{}_+}, n^{}_{b^{}_+} \big\rangle = 0 = \big\langle n^{}_{a^{}_-}, n^{}_{b^{}_-} \big\rangle\,,
\end{align}
so that every vector has zero norm and all vectors within the same subspace are mutually orthogonal. It is important to emphasize that this property holds independently for $\mathfrak{S}_+$ and $\mathfrak{S}_-$, while in general the bilinear form between vectors from different subspaces does not need to vanish. Actually, explicit calculations show that:\footnote{Note that the star `$^\ast$' on $n^{}_{a}$ denotes octonionic conjugation, which reverses the sign of the imaginary units of $\mathbb{O}_{\mathbb{S}}$ but leaves the complex scalar $\mathrm{i}\in\mathbb{C}$ unchanged. Accordingly, the resulting norm differs from that with respect to which the similarity transformation $U$ is unitary.}
\begin{align}
    \big\langle n^{}_{\widehat{\mathbbm{1}\boldsymbol{k}}_+}, n^{}_{\widehat{\mathbbm{1}\boldsymbol{k}}_+} \big\rangle = N\big(n^{}_{\widehat{\mathbbm{1}\boldsymbol{k}}_+}\big) = n^{}_{\widehat{\mathbbm{1}\boldsymbol{k}}_+}\, n^{\ast}_{\widehat{\mathbbm{1}\boldsymbol{k}}_+} = \frac{1}{2} \left( 1 + \mathrm{i}{\boldsymbol{k}} \right)\left( 1 - \mathrm{i}{\boldsymbol{k}} \right) = 0 \,,
\end{align}
\begin{align}
    \big\langle n^{}_{\widehat{05}_+}, n^{}_{\widehat{05}_+} \big\rangle = N\big(n^{}_{\widehat{05}_+}\big) = n^{}_{\widehat{05}_+}\, n^{\ast}_{\widehat{05}_+} = -\frac{1}{2} \left( \boldsymbol{\imath} + \mathrm{i}{\boldsymbol{\jmath}} \right)\left( \boldsymbol{\imath} + \mathrm{i}{\boldsymbol{\jmath}} \right) = 0 \,,
\end{align}
\begin{align}
    \big\langle n^{}_{\widehat{12}_+}, n^{}_{\widehat{12}_+} \big\rangle = N\big(n^{}_{\widehat{12}_+}\big) = n^{}_{\widehat{12}_+}\, n^{\ast}_{\widehat{12}_+} = -\frac{1}{2} \left( \boldsymbol{\ell}\boldsymbol{\imath} + \mathrm{i}{\boldsymbol{\jmath}}\boldsymbol{\ell} \right)\left( \boldsymbol{\ell}\boldsymbol{\imath} + \mathrm{i}{\boldsymbol{\jmath}}\boldsymbol{\ell} \right) = 0 \,,
\end{align}
\begin{align}
    \big\langle n^{}_{\widehat{34}_+}, n^{}_{\widehat{34}_+} \big\rangle = N\big(n^{}_{\widehat{34}_+}\big) = n^{}_{\widehat{34}_+}\, n^{\ast}_{\widehat{34}_+} &= -\frac{1}{2} \left( \boldsymbol{\ell}{\boldsymbol{k}} + \mathrm{i}\boldsymbol{\ell} \right)\left( \boldsymbol{\ell}{\boldsymbol{k}} + \mathrm{i}\boldsymbol{\ell} \right) = 0 \,,
\end{align}
moreover, employing Eq. \eqref{muhum}, we have:
\begin{align}
    \big\langle n^{}_{\widehat{\mathbbm{1}\boldsymbol{k}}_+} , n^{}_{\widehat{05}_+} \big\rangle &= \frac{1}{2} \left( N\big(n^{}_{\widehat{\mathbbm{1}\boldsymbol{k}}_+} + n^{}_{\widehat{05}_+}\big) -  N\big(n^{}_{\widehat{\mathbbm{1}\boldsymbol{k}}_+}\big) - N\big(n^{}_{\widehat{05}_+}\big) \right) \nonumber\\[0.2cm]
    &= \frac{1}{2} N\big(n^{}_{\widehat{\mathbbm{1}\boldsymbol{k}}_+} + n^{}_{\widehat{05}_+}\big) \nonumber\\[0.2cm]
    &= \frac{1}{2}\big(n^{}_{\widehat{\mathbbm{1}\boldsymbol{k}}_+} + n^{}_{\widehat{05}_+}\big)\big(n^{}_{\widehat{\mathbbm{1}\boldsymbol{k}}_+} + n^{}_{\widehat{05}_+}\big)^\ast \nonumber\\[0.2cm]
    &= \frac{1}{4} \big((\mathbbm{1}+\boldsymbol{\imath}) + \mathrm{i}({\boldsymbol{k}}+{\boldsymbol{\jmath}}) \big) \big((\mathbbm{1}-\boldsymbol{\imath}) - \mathrm{i}({\boldsymbol{k}}+{\boldsymbol{\jmath}}) \big) = 0\,,
\end{align}
\begin{align}
    \big\langle n^{}_{\widehat{12}_+} , n^{}_{\widehat{05}_+} \big\rangle &= \frac{1}{2} \left( N\big(n^{}_{\widehat{12}_+} + n^{}_{\widehat{05}_+}\big) -  N\big(n^{}_{\widehat{12}_+}\big) - N\big(n^{}_{\widehat{05}_+}\big) \right) \nonumber\\[0.2cm]
    &= \frac{1}{2} N\big(n^{}_{\widehat{12}_+} + n^{}_{\widehat{05}_+}\big) \nonumber\\[0.2cm]
    &= \frac{1}{2}\big(n^{}_{\widehat{12}_+} + n^{}_{\widehat{05}_+}\big)\big(n^{}_{\widehat{12}_+} + n^{}_{\widehat{05}_+}\big)^\ast \nonumber\\[0.2cm]
    &= -\frac{1}{4} \big((\boldsymbol{\ell}\boldsymbol{\imath}+\boldsymbol{\imath}) + \mathrm{i}({\boldsymbol{\jmath}}\boldsymbol{\ell}+{\boldsymbol{\jmath}}) \big) \big((\boldsymbol{\ell}\boldsymbol{\imath}+\boldsymbol{\imath}) + \mathrm{i}({\boldsymbol{\jmath}}\boldsymbol{\ell}+{\boldsymbol{\jmath}}) \big) = 0\,,
\end{align}
and so forth:
\begin{align}
    \big\langle n^{}_{\widehat{34}_+} , n^{}_{\widehat{05}_+} \big\rangle = \big\langle n^{}_{\widehat{34}_+} , n^{}_{\widehat{12}_+} \big\rangle = \big\langle n^{}_{\widehat{1{\boldsymbol{k}}}_+} , n^{}_{\widehat{12}_+} \big\rangle = \big\langle n^{}_{\widehat{1{\boldsymbol{k}}}_+} , n^{}_{\widehat{34}_+} \big\rangle = \dots = 0\,.
\end{align}
As naturally expected, the set of basis vectors $\{n^{}_{a^{}_-}\}$ satisfies orthogonality relations analogous to those of its complex-conjugate counterpart $\{n^{}_{a^{}_+}\}$:
\begin{align}
    \big\langle n^{}_{\widehat{\mathbbm{1}\boldsymbol{k}}_-} , n^{}_{\widehat{\mathbbm{1}\boldsymbol{k}}_-} \big\rangle = \big\langle n^{}_{\widehat{05}_-} , n^{}_{\widehat{05}_-} \big\rangle = \big\langle n^{}_{\widehat{12}_-} , n^{}_{\widehat{12}_-} \big\rangle = \big\langle n^{}_{\widehat{34}_-} , n^{}_{\widehat{34}_-} \big\rangle = 0\,,
\end{align}
\begin{align}
    \big\langle n^{}_{\widehat{34}_-} , n^{}_{\widehat{05}_-} \big\rangle = \big\langle n^{}_{\widehat{34}_-} , n^{}_{\widehat{12}_-} \big\rangle = \big\langle n^{}_{\widehat{1{\boldsymbol{k}}}_-} , n^{}_{\widehat{12}_-} \big\rangle = \big\langle n^{}_{\widehat{1{\boldsymbol{k}}}_-} , n^{}_{\widehat{34}_-} \big\rangle = \dots = 0\,.
\end{align}
In this construction, the only non-zero inner products are (see again Eq. \eqref{muhum}):
\begin{align}\label{aram}
    \big\langle n^{}_{\widehat{\mathbbm{1}\boldsymbol{k}}_-} , n^{}_{\widehat{\mathbbm{1}\boldsymbol{k}}_+} \big\rangle &= \frac{1}{2} \left( N\big(n^{}_{\widehat{\mathbbm{1}\boldsymbol{k}}_-} + n^{}_{\widehat{\mathbbm{1}\boldsymbol{k}}_+}\big) -  N\big(n^{}_{\widehat{\mathbbm{1}\boldsymbol{k}}_-}\big) - N\big(n^{}_{\widehat{\mathbbm{1}\boldsymbol{k}}_+}\big) \right) \nonumber\\[0.2cm]
    &= \frac{1}{2} N\big(n^{}_{\widehat{\mathbbm{1}\boldsymbol{k}}_-} + n^{}_{\widehat{\mathbbm{1}\boldsymbol{k}}_+}\big) \nonumber\\[0.2cm]
    &= \frac{1}{2}\big(n^{}_{\widehat{\mathbbm{1}\boldsymbol{k}}_-} + n^{}_{\widehat{\mathbbm{1}\boldsymbol{k}}_+}\big)\big(n^{}_{\widehat{\mathbbm{1}\boldsymbol{k}}_-} + n^{}_{\widehat{\mathbbm{1}\boldsymbol{k}}_+}\big)^\ast \nonumber\\[0.2cm]
    &= \frac{1}{4} \big((\mathbbm{1}+\mathbbm{1}) + \mathrm{i}(-{\boldsymbol{k}}+{\boldsymbol{k}}) \big) \big((\mathbbm{1}+\mathbbm{1}) + \mathrm{i}({\boldsymbol{k}}-{\boldsymbol{k}}) \big) = 1\,,
\end{align}
\begin{align}
    \big\langle n^{}_{\widehat{05}_-} , n^{}_{\widehat{05}_+} \big\rangle &= \frac{1}{2} \left( N\big(n^{}_{\widehat{05}_-} + n^{}_{\widehat{05}_+}\big) -  N\big(n^{}_{\widehat{05}_-}\big) - N\big(n^{}_{\widehat{05}_+}\big) \right) \nonumber\\[0.2cm]
    &= \frac{1}{2} N\big(n^{}_{\widehat{05}_-} + n^{}_{\widehat{05}_+}\big) \nonumber\\[0.2cm]
    &= \frac{1}{2}\big(n^{}_{\widehat{05}_-} + n^{}_{\widehat{05}_+}\big)\big(n^{}_{\widehat{05}_-} + n^{}_{\widehat{05}_+}\big)^\ast \nonumber\\[0.2cm]
    &= \frac{1}{4} \big((\boldsymbol{\imath}+\boldsymbol{\imath}) + \mathrm{i}(0) \big) \big(-(\boldsymbol{\imath}+\boldsymbol{\imath}) + \mathrm{i}(0) \big) = 1 \,,
\end{align}
\begin{align}
    \big\langle n^{}_{\widehat{12}_-} , n^{}_{\widehat{12}_+} \big\rangle &= \frac{1}{2} \left( N\big(n^{}_{\widehat{12}_-} + n^{}_{\widehat{12}_+}\big) -  N\big(n^{}_{\widehat{12}_-}\big) - N\big(n^{}_{\widehat{12}_+}\big) \right) \nonumber\\[0.2cm]
    &= \frac{1}{2} N\big(n^{}_{\widehat{12}_-} + n^{}_{\widehat{12}_+}\big) \nonumber\\[0.2cm]
    &= \frac{1}{2}\big(n^{}_{\widehat{12}_-} + n^{}_{\widehat{12}_+}\big)\big(n^{}_{\widehat{12}_-} + n^{}_{\widehat{12}_+}\big)^\ast \nonumber\\[0.2cm]
    &= \frac{1}{4} \big(2\boldsymbol{\ell}\boldsymbol{\imath} + \mathrm{i}(0)\big) \big(-2\boldsymbol{\ell}\boldsymbol{\imath} + \mathrm{i}(0)\big) = -1 \,,
\end{align}
\begin{align}\label{aram'}
    \big\langle n^{}_{\widehat{34}_-} , n^{}_{\widehat{34}_+} \big\rangle &= \frac{1}{2} \left( N\big(n^{}_{\widehat{34}_-} + n^{}_{\widehat{34}_+}\big) -  N\big(n^{}_{\widehat{34}_-}\big) - N\big(n^{}_{\widehat{34}_+}\big) \right) \nonumber\\[0.2cm]
    &= \frac{1}{2} N\big(n^{}_{\widehat{34}_-} + n^{}_{\widehat{34}_+}\big) \nonumber\\[0.2cm]
    &= \frac{1}{2}\big(n^{}_{\widehat{34}_-} + n^{}_{\widehat{34}_+}\big)\big(n^{}_{\widehat{34}_-} + n^{}_{\widehat{34}_+}\big)^\ast \nonumber\\[0.2cm]
    &= \frac{1}{4} \big(2\boldsymbol{\ell}{\boldsymbol{k}} + \mathrm{i}(0)\big) \big(-2\boldsymbol{\ell}{\boldsymbol{k}} + \mathrm{i}(0) \big) = -1 \,.
\end{align}

\begin{Remark}\label{Remark i}
    {\textbf{(complex rescaling of the null vectors as a phase rotation).} In light of the foregoing discussion, it is worth emphasizing that the complex rescaling of the null vectors $n^{}_{\widehat{05}_\pm}, n^{}_{\widehat{12}_\pm}$, and $n^{}_{\widehat{34}_\pm}$ by the factor `$\mathrm{i}$' is prescribed in such a way as to ensure the validity of the identities \eqref{aram}-\eqref{aram'}. This rescaling corresponds to a rotation in the complex plane and leaves the underlying geometric structure invariant.}
\end{Remark}

Taking the above into account, we now examine the matrix realization of the $\mathfrak{u}(1)$ generator $E$ \eqref{U(1)generator} of the conformal Lie algebra $\mathfrak{u}(2,2)\cong\mathfrak{su}(2,2)\oplus\mathfrak{u}(1)$, expressed with respect to the chiral basis. This yields a form analogous to $\omega(4,2)$ \eqref{omega(4,2)=iDMat}, but with a different interpretation of the indices:
\begin{align}\label{omega(4,2)=iDMat'}
    E\, y &= \mathrm{i} \bigoplus\big(D_{\widehat{\mathbbm{1}\boldsymbol{k}}_+ \widehat{\mathbbm{1}\boldsymbol{k}}_-} + D_{\widehat{05}_+ \widehat{05}_-} + D_{\widehat{12}_+ \widehat{{12}}_-} + D_{\widehat{34}_+ \widehat{{34}}_-}\big)\, y \nonumber\\[0.2cm]
    &\,\underbrace{=: \mathrm{i}D\, y = \omega(4,2)\,y}_{\text{by abuse of notation!}}\,,
\end{align}
where $y\in\mathrm{Span}\{ n^{}_{a^{}_\pm} \}$. For the convention $\bigoplus (\dots)$, see Sect. \ref{Sect. Convention}. This result can be easily checked by substituting $y$ with any of the zero-norm basis vectors $n^{}_{a^{}_\pm}$ (in fact, by substituting each one individually). Just to clarify the point, let us set $y=n^{}_{\widehat{05}_+}$. Then, employing Eqs. \eqref{zx}, \eqref{omega(4,2)=iD}, and \eqref{omega(4,2)=iDMat}, we obtain:
\begin{align}\label{Ey}
    E\, n^{}_{\widehat{05}_+} = E\, \big( \mathrm{i}U^{-1} {\boldsymbol{\jmath}} \big) = \mathrm{i}U^{-1} \big(U\, E\, U^{-1}\big) \,{\boldsymbol{\jmath}} = \mathrm{i}U^{-1} \big( \mathrm{i} {\boldsymbol{\jmath}} \big) = \mathrm{i} \big( n^{}_{\widehat{05}_+} \big)\,,
\end{align}
which is consistent with Eq. \eqref{omega(4,2)=iDMat'}. In this context, one observes that the chiral (isotropic) subspaces $\mathfrak{S}_\pm$ \eqref{Spm} are precisely the eigenspaces of $\omega(4,2)$ \eqref{omega(4,2)=iDMat'} corresponding to the eigenvalues $\pm \mathrm{i}$, respectively. That is:
\begin{align}\label{chirality matrix}
    \omega(4,2) \,\mathfrak{S}_\pm = \pm\mathrm{i}\,\mathfrak{S}_\pm \,.
\end{align}

Similarly, one verifies that the skew-symmetric Cartan matrices $m^{}_{05}, m^{}_{12}$, and $m^{}_{34}$ (see Eqs. \eqref{10}-\eqref{30}) in their original forms assume the following diagonal representations in the chiral basis, analogous to those in Eqs. \eqref{Cartan'www.}-\eqref{Cartan'}, but with a different interpretation of the indices:
\begin{align}\label{za} 
    m^{}_{05}\, y &= \mathrm{i} \bigoplus\big( - D_{\widehat{\mathbbm{1}\boldsymbol{k}}_+\widehat{\mathbbm{1}\boldsymbol{k}}_-} - D_{\widehat{05}_+\widehat{05}_-} + D_{\widehat{12}_+\widehat{{12}}_-} + D_{\widehat{34}_+\widehat{{34}}_-} \big)\,y \nonumber\\[0.2cm]
    &\underbrace{=: \Gamma^{}_{05}\, y}_{\text{by abuse of notation!}}\,,
\end{align}
\begin{align}\label{za'}
    m^{}_{12}\, y &= \mathrm{i} \bigoplus\big( D_{\widehat{\mathbbm{1}\boldsymbol{k}}_+\widehat{\mathbbm{1}\boldsymbol{k}}_-} - D_{\widehat{05}_+\widehat{05}_-} + D_{\widehat{12}_+\widehat{{12}}_-} - D_{\widehat{34}_+\widehat{{34}}_-} \big) \, y \nonumber\\[0.2cm]
    &\underbrace{=: \Gamma^{}_{12}\, y}_{\text{by abuse of notation!}}\,,
\end{align}
\begin{align}\label{za''}
    m^{}_{34}\, y &= \mathrm{i} \bigoplus\big( D_{\widehat{\mathbbm{1}\boldsymbol{k}}_+\widehat{\mathbbm{1}\boldsymbol{k}}_-} - D_{\widehat{05}_+\widehat{05}_-} - D_{\widehat{12}_+\widehat{{12}}_-} + D_{\widehat{34}_+\widehat{{34}}_-} \big) \,y \nonumber\\[0.2cm]
    &\underbrace{=: \Gamma^{}_{34}\, y}_{\text{by abuse of notation!}}\,, 
\end{align}
where $y\in\mathrm{Span}\{ n^{}_{a^{}_\pm} \}$ and see Sect. \ref{Sect. Convention} for the convention on $\bigoplus (\dots)$. These relations can be readily verified by following the same procedure outlined above (see Eq. \eqref{Ey}).

With the above in mind, and analogously to the $\mathfrak{cl}(2)$ case (see Sect. \ref{Sect. cl(2)}), the following relations hold for the chiral spinor $\psi$:
\begin{align}\label{1.33}
    &\psi = \begin{pmatrix}
        \psi^+ \\ 
        \psi^-
    \end{pmatrix} = U\,\xi \,, \quad D_\pm \psi = \left( \frac{\mathbbm{1}\pm D}{2} \right) \psi = \psi^\pm \qquad & \nonumber\\[0.2cm]
    &\Longrightarrow\quad \omega(4,2) \,\psi^\pm = \pm \mathrm{i}\, \psi^\pm \;\;;\quad {\psi}^c = C\, \overline{\psi}\,, \quad\mbox{where}\quad C\,\overline{\Gamma_\nu} = \Gamma^{}_\nu \, C\,,
\end{align}
\begin{align}
    \left(\psi^c\right)^c = \psi \quad\Longrightarrow\quad C\,\overline{C} = \mathbbm{1} \;;\;\;\; C = U U^\top\,, \quad (C_{\text{Majorana}} = \mathbbm{1})\,. 
\end{align}

Finally, repeating the analysis for $\mathfrak{cl}(2)$ leading to \eqref{1.23}, we find the phase factor $\varrho$ relating $U$ and $V$ in \eqref{1.31} can be again fixed in a way to make the charge conjugation matrix $C$ real. Specifically, one obtains:
\begin{align}\label{1.36}
    C_V = VV^\top = e^{-\frac{\pi}{4}E} e^{-\mathrm{i}\frac{\pi}{4}D} e^{-\mathrm{i}\frac{\pi}{4}D} e^{\frac{\pi}{4}E} = - \mathrm{i} DE = - \mathrm{i} S\,.
\end{align}
Note that $D^\top = D$ and $E^\top = - E$. For $\varrho = e^{\mathrm{i}\pi/4}$ (that is, $U = e^{\mathrm{i}\pi/4} V$), this relation yields:
\begin{align}
    C_U = \overline{C_U} = UU^\top = S \,.
\end{align}

\subsection{Adjustment of Notation}\label{Sect. adjusment}

From this point forward, we adopt the compact chiral basis of isotropic vectors, defined as:
\begin{align}
    \label{1.42} \xi^\mathbbm{1} \mathbbm{1} + \xi^{\boldsymbol{k}} {\boldsymbol{k}} =&\, \psi^{\widehat{\mathbbm{1}\boldsymbol{k}}_+}\, n^{}_{\widehat{\mathbbm{1}\boldsymbol{k}}_+} + \psi^{\widehat{\mathbbm{1}\boldsymbol{k}}_-}\, n^{}_{\widehat{\mathbbm{1}\boldsymbol{k}}_-}\,, \nonumber\\[0.2cm]
    \mbox{where}&\quad n^{}_{\widehat{\mathbbm{1}\boldsymbol{k}}_\pm} = \frac{1}{\sqrt{2}} (\mathbbm{1} \pm \mathrm{i}{\boldsymbol{k}})\,,\quad \psi^{\widehat{\mathbbm{1}\boldsymbol{k}}_\pm} = \frac{1}{\sqrt{2}} \big( \xi^\mathbbm{1} \mp \mathrm{i}\xi^{\boldsymbol{k}} \big)\,,
\end{align}
\begin{align}
    \label{1.42'} \xi^{1} \boldsymbol{\ell}\boldsymbol{\imath} + \xi^{2} {\boldsymbol{\jmath}}\boldsymbol{\ell} =&\, \psi^{\widehat{12}_+}\, n^{}_{\widehat{12}_+} + \psi^{\widehat{12}_-}\, n^{}_{\widehat{12}_-} \,, \nonumber\\[0.2cm]
    \mbox{where}&\quad n^{}_{\widehat{12}_\pm} = \frac{1}{\sqrt{2}} \left( \boldsymbol{\ell}\boldsymbol{\imath} \pm \mathrm{i} {\boldsymbol{\jmath}}\boldsymbol{\ell} \right)\,, \quad \psi^{\widehat{12}_\pm} = \frac{1}{\sqrt{2}} \big( \xi^{1} \mp \mathrm{i}\xi^{2} \big)\,,
\end{align}
and similar expression for $n^{}_{\widehat{05}_\pm}$, $n^{}_{\widehat{34}_\pm}$, $\psi^{\widehat{05}_\pm}$, and $\psi^{\widehat{34}_\pm}$. Again, the vectors $n^{}_{a^{}_+}$, with $a^{}_+ = {\widehat{\mathbbm{1}\boldsymbol{k}}_+, \widehat{05}_+, \widehat{12}_+, \widehat{34}}_+$, form an isotropic basis consisting of eigenvectors of $\omega(4,2)$ with eigenvalue $+\mathrm{i}$. Their complex conjugates $n^{}_{a^{}_-} \, \big( =\overline{n^{}_{a^{}_+}} \big)$, with $a^{}_- = {\widehat{\mathbbm{1}\boldsymbol{k}}_-, \widehat{05}_-, \widehat{12}_-, \widehat{34}}_-$, correspondingly form an isotropic basis with eigenvalue $-\mathrm{i}$. That is:
\begin{align}
    \left( \omega(4,2) \mp \mathrm{i}\mathbbm{1}\right) n^{}_{a_\pm} = 0\,,
\end{align}
where:
\begin{align}
    \big\langle n^{}_{a_\pm} , n^{}_{a_\pm} \big\rangle = N\big(n^{}_{a_\pm}\big) = n^{}_{a_\pm} n^{\ast}_{a_\pm} = 0\,, 
\end{align}
\begin{align}
    \big\langle n^{}_{a_-} , n^{}_{b_+} \big\rangle = \epsilon^{}_a \, \delta_{ab}\,,
\end{align}
with $\epsilon^{}_{\widehat{1{\boldsymbol{k}}}} = \epsilon^{}_{\widehat{05}} = 1 = - \epsilon^{}_{\widehat{12}} = - \epsilon^{}_{\widehat{34}}$.

\section{Appendix: A Mnemonic Matrix Rule for (Split-)Octonionic Multiplication}\label{Appendix Mnemonic}

This appendix develops a mnemonic matrix formulation following the methodology recently outlined in Ref. \cite{Gazeau-Mnemonic}. We consider a (split-)octonion expressed in Cayley-Dickson form:
\begin{align}
    o = q + \boldsymbol{\ell} p\,, \quad q,p \in \mathbb{H}\,, \quad (\boldsymbol{\ell})^2 = \pm 1 \,.
\end{align}
The multiplication of two (split-)octonions is given by Eq. \eqref{Multi-oct}; for convenience, we reproduce it here:
\begin{align}\label{Multi-oct000}
        \left(q^{}_1 + \boldsymbol{\ell} p^{}_1\right) \left(q^{}_2 + \boldsymbol{\ell} p^{}_2\right) = q^{}_1 q^{}_2 + (\boldsymbol{\ell})^2 p^{}_2 {p^{\ast}_1} + \boldsymbol{\ell} \left( {q^{\ast}_1} p^{}_2 + q^{}_2 p^{}_1 \right)\,.
\end{align}
To each octonion, we associate the matrix:
\begin{align} \label{eq:M}
    M(o) =
    \begin{pmatrix}
        q & & & (\boldsymbol{\ell})^2 p^{\ast} \\[1ex]
        p & & & q^{\ast}
    \end{pmatrix} \,.
\end{align}

A direct computation yields:
\begin{align} \label{eq:mnemo-block}
    M(o^{}_1) M(o^{}_2) =
    \begin{pmatrix}
        q^{}_1 q^{}_2 + (\boldsymbol{\ell})^2\, \overset{\curvearrowleft}{p_1^{\ast}p^{}_2} & & & (\boldsymbol{\ell})^2\,\overset{\curvearrowleft}{q^{}_1 p_2^{\ast}}  + (\boldsymbol{\ell})^2 p_1^{\ast} q_2^{\ast} \\[2ex]
        \overset{\curvearrowleft}{p^{}_1 q^{}_2} + q_1^{\ast} p^{}_2 & & & (\boldsymbol{\ell})^2 p^{}_1 p_2^{\ast} + \overset{\curvearrowleft}{q_1^{\ast} q_2^{\ast}}
    \end{pmatrix}\,,
\end{align}
where the symbol `$\curvearrowleft$' indicates reversal of the order of multiplication. Remarkably, this matrix product reproduces the octonionic multiplication once these reversed products are replaced by their adjoint counterparts, thereby recovering Eq. \eqref{Multi-oct000} (or Eq. \eqref{Multi-oct}). 

Inspection of Eq. \eqref{eq:mnemo-block} reveals a clear mnemonic pattern: in each block of the matrix product, the terms naturally split into two contributions — one involving the quaternionic product in standard (right) order, denoted $R$, and one in reversed (left) order, denoted $L$. This separation provides a convenient guide for reconstructing the octonionic multiplication from the matrix representation:
\begin{align}
    \text{upper row}\;; \qquad & R + L \quad \text{and} \quad L + R \,, \label{eq:RL-up} \\[0.2cm]
    \text{lower row}\;; \qquad & L + R \quad \text{and} \quad R + L \,. \label{eq:RL-down}
\end{align}
Equivalently:
\begin{align}
    \text{left column}\;; \qquad & R + L \quad \text{and} \quad L + R \,, \\[0.2cm]
    \text{right column}\;; \qquad & L + R \quad \text{and} \quad R + L \,.
\end{align}
In this way, the octonion multiplication table manifests as a highly structured block arrangement of quaternionic products, with only half of the terms appearing in reversed order. The matrix representation $M(o)$ thus provides a compact and practical mnemonic for the non-associative octonionic product, offering a clear framework for navigating the inherent non-associativity.

\section{Appendix: Realization of the Conformal Group $\mathrm{U}(2,2)\cong\mathrm{SU}(2,2)\; \big( \cong\mathrm{SO}(4,2)\big)\times\mathrm{U}(1)$ in the Clifford Algebra $\mathfrak{cl}(4,2)$ and its Action on Spinors}\label{Appendix U(2,2) in the Clifforrd}

Using the Clifford-algebraic framework developed in this chapter, the unitary-conformal group $\mathrm{U}(2,2)$ — equivalently $\mathrm{SU}(2,2)\; \big(\cong\mathrm{SO}(4,2) \big)\times\mathrm{U}(1)$ — admits a concrete realization in the real Clifford algebra $\mathfrak{cl}(4,2)$, specifically within its even subalgebra $\mathfrak{cl}^{\text{even}}(4,2)$. In this realization, the bivectors of the even subalgebra generate the conformal Lie algebra $\mathfrak{su}(2,2)\cong\mathfrak{so}(4,2)$, while the Clifford pseudoscalar (volume element) provides the $1$-dimensional central $\mathfrak{u}(1)$ component, so that $\mathfrak{u}(2,2)\cong\mathfrak{su}(2,2)\, \big(\cong\mathfrak{so}(4,2)\big)\oplus\mathfrak{u}(1)$. This unified Clifford-algebraic setting naturally produces both the conformal Lorentz group $\mathrm{SO}(4,2)$ and its spin representation $\mathrm{SU}(2,2)$ directly from the Clifford algebra's multiplicative structure, with the central $\mathrm{U}(1)$ factor realized as internal phase rotations generated by the pseudoscalar. In this way, the full $\mathrm{U}(2,2)$ group emerges naturally from the Clifford-algebraic framework. We make this realization explicit below by giving a step-by-step derivation that yields an explicit construction.

The even Clifford subalgebra $\mathfrak{cl}^{\text{even}}(4,2)$ is given by:
\begin{align}
    \mathfrak{cl}^{\text{even}}(4,2) = \mathrm{Span} \Big\{ \mathbbm{1}\,, \; m^{}_{\mu\nu} := m^{}_\mu\, m{}_\nu\,, \; m^{}_\mu\, m{}_\nu\, m^{}_\varrho\, m{}_\sigma\,,&\nonumber\\[0.2cm]
    \; E := m^{}_0\, m^{}_5\, m^{}_1\, m^{}_2\, m^{}_3\, m^{}_4 \,;&\nonumber\\[0.2cm]
    \; 0\leq\mu<\nu<\varrho<\sigma\leq 5 &\Big\} \,,
\end{align} 
such that:
\begin{enumerate}
    \item{The $6$ operators $m^{}_\mu$ represent left multiplication by the $6$ imaginary split-octonion units $\big\{0:={\boldsymbol{\imath}},\, 5:={\boldsymbol{\jmath}},\, 1:=\boldsymbol{\ell}{\boldsymbol{\imath}},\, 2:={\boldsymbol{\jmath}}\boldsymbol{\ell},\,  3:=\boldsymbol{\ell}\boldsymbol{k},\, 4:=\boldsymbol{\ell}\big\}$, as defined in Eqs. \eqref{8uni}-\eqref{8uni'*}, and they satisfy the Clifford anti-commutation relations \eqref{clif}.}

    \item{The Lie algebra generated by the $15$ bivectors $m^{}_{\mu\nu} := \frac{1}{2}\big[ m^{}_\mu , m^{}_\nu \big] = m^{}_\mu m^{}_\nu$ coincides with the conformal Lie algebra $\mathfrak{su}(2,2)\cong\mathfrak{so}(4,2)$ (see Eq. \eqref{8888888888888}). We therefore identify the Lie algebra $\mathfrak{su}(2,2)\cong\mathfrak{so}(4,2)$ as embedded in the even Clifford subalgebra $\mathfrak{cl}^{\text{even}}(4,2)$ via:
    \begin{align}
        \mathfrak{su}(2,2) \quad\hookrightarrow\quad \mathrm{Span}\big\{m^{}_{\mu\nu}\big\} \;\subset\; \mathfrak{cl}^{\text{even}}(4,2)\,.
    \end{align}
    Each $m^{}_{\mu\nu}$ acts on the real $8$-dimensional spinor space $\mathfrak{S} = \mathrm{Span}\{A\}$ through left multiplication (see Eqs. \eqref{8uni}-\eqref{8uni'*} and Remark \ref{Remark nested}).}

    \item{The Clifford pseudoscalar (volume form) $E := m^{}_0\, m^{}_5\, m^{}_1\, m^{}_2\, m^{}_3\, m^{}_4$ commutes with every bivector $m^{}_{\mu\nu}$ (see Eq. \eqref{tootii}) and therefore generates the $1$-dimensional center of the algebra, corresponding to the $\mathfrak{u}(1)$ factor in $\mathfrak{u}(2,2)$. Then, combining this central element with the conformal sector generated by $m^{}_{\mu\nu}$, one obtains the full Lie algebra:
    \begin{align}
        \mathfrak{u}(2,2) \cong \mathfrak{su}(2,2)\oplus\mathfrak{u}(1) \;\subset\; \mathfrak{cl}^{\text{even}}(4,2) \,.
    \end{align}}    
\end{enumerate}

Any element of the group $\mathrm{U}(2,2)$ can be expressed naturally by exponentiating a linear combination of the bivectors $m^{}_{\mu\nu}$ and the pseudoscalar $E$:\footnote{Note that the factor `$\tfrac{1}{2}$' ensures the correct spinorial form of conformal transformations. In Clifford algebra, bivectors generate rotations and boosts with twice the physical angle when acting on vectors; including `$\tfrac{1}{2}$' adjusts this so that $e^{\tfrac{1}{2}X}$ gives the proper single-angle rotation on spinors, as required by the double-cover relation $\mathrm{Spin}(4,2)\cong\mathrm{SU}(2,2)$.}
\begin{align}
    \mathrm{U}(2,2) \;\ni\; \tilde{R}(X,\omega_0 \, E) = e^{\tfrac{1}{2}X+\omega_0 E}\,,
\end{align}
with $X\; \big(\in \mathfrak{su}(2,2)\cong\mathfrak{so}(4,2) \big) = \sum_{\mu<\nu} \omega_{\mu\nu}\, m^{}_{\mu\nu}$, while $\omega_{\mu\nu},\omega_0 \in \mathbb{R}$. Since $E$ commutes with all $m^{}_{\mu\nu}$, this expression factorizes:\footnote{For matrices or Lie algebra elements $M$ and $N$, $e^{M+N} = e^M e^N$ if $\big[M,N\big] = 0$; otherwise $e^M e^N = e^L$ with $L$ given by the Baker-Campbell-Hausdorff series.}
\begin{align}
    \tilde{R}(X,\omega_0) = e^{\tfrac{1}{2}X} e^{\omega_0 E} = e^{\omega_0 E} e^{\tfrac{1}{2}X} \,,
\end{align}
thereby realizing the direct product structure $\mathrm{U}(2,2) \cong \mathrm{SU}(2,2)\times\mathrm{U} (1)$.

In this regard, it is important to emphasize that:
\begin{enumerate}[leftmargin=*]
    \item{Expanding the exponential as a power series yields:
    \begin{align}
        e^{\tfrac{1}{2}X+\omega_0 E} = e^{\tfrac{1}{2}X} e^{\omega_0 E} = \sum_{n=0}^{\infty} \frac{\big(\tfrac{1}{2}X\big)^n}{n!} \sum_{m=0}^{\infty} \frac{\big(\omega_0 E\big)^m}{m!}\,.
    \end{align}  
    Each term in this series is a product of an even number of Clifford generators and therefore belongs to the even subalgebra $\mathfrak{cl}^{\text{even}}(4,2)$. Specifically, since $\mathfrak{cl}^{\text{even}}(4,2)$ is finite-dimensional, the series converges within this subalgebra, ensuring that the exponential (the group element) remains entirely inside $\mathfrak{cl}^{\text{even}}(4,2)$: 
    \begin{enumerate}[leftmargin=*]
        \item{From Proposition \ref{proposition 2.1} and of course Remark \ref{Remark nested}, we recall that all powers $(E)^n = (L_{-\boldsymbol{k}})^n$ are well-defined and unambiguous. Exploiting this fact together with the identity $E^2=-\mathbbm{1}$ (see Eq. \eqref{E2=-1}), the exponential of the central element can be computed explicitly:
        \begin{align}
            e^{\omega_0 E} &= \sum_{m=0}^{\infty} \frac{(\omega_0 E)^m}{m!} \nonumber\\[0.2cm]
            &= \sum_{m=0}^{\infty} \frac{(-1)^m (\omega_0)^{2m}}{(2m)!} \, \mathbbm{1} + \sum_{m=0}^{\infty} \frac{(-1)^m (\omega_0)^{2m+1}}{(2m+1)!} \, E \nonumber\\[0.2cm]
            &= \cos\omega_0 \,\mathbbm{1} + \sin\omega_0 \,E \,,
        \end{align}
        so that $\omega_0$ plays the role of the internal $\mathrm{U}(1)$ phase angle.}

        \item{Similarly, invoking Proposition \ref{proposition 2.1} and Remark \ref{Remark nested}, all powers of the form $\big(m^{}_\mu m^{}_\nu\big)^n$ are well-defined and unambiguous. Exploiting this property, together with the fact that each bivector $m^{}_{\mu\nu}=m^{}_\mu m^{}_\nu$ squares to:\footnote{This identity can be verified by a procedure analogous to that used in Eq. \eqref{mij222}.}
        \begin{align}
            \big(m^{}_{\mu\nu}\big)^2 = -\eta_{\mu\mu}\, \eta_{\nu\nu}\,  \mathbbm{1} = \pm \mathbbm{1}\,,
        \end{align}
        the exponential of the bivector component admits the power series expansion:
        \begin{align}
            e^{\tfrac{1}{2}X} &= e^{\tfrac{1}{2}\omega_{\mu\nu} m^{}_{\mu\nu}} \nonumber\\[0.2cm]
            &= 
            \begin{cases}
                \cos\tfrac{\omega_{\mu\nu}}{2} \, \mathbbm{1} + \sin\tfrac{\omega_{\mu\nu}}{2} \, m^{}_{\mu\nu}\, \quad& \big(m^{}_{\mu\nu}\big)^2 = -\mathbbm{1} \vspace{0.3cm}\\
                \cosh\tfrac{\omega_{\mu\nu}}{2} \, \mathbbm{1} + \sinh\tfrac{\omega_{\mu\nu}}{2} \, m^{}_{\mu\nu}\, \quad& \big(m^{}_{\mu\nu}\big)^2 = +\mathbbm{1}
            \end{cases}\,,
        \end{align}
        with each term lying in the even subalgebra $\mathfrak{cl}^{\text{even}}(4,2)$.}
    \end{enumerate}}

    \item{The inverse of any group element follows directly from the exponential structure:
    \begin{align}
        \tilde{R}^{-1}(X,\omega_0) = e^{-\tfrac{1}{2}X - \omega_0 E} = e^{-\tfrac{1}{2}X} e^{-\omega_0 E} = e^{-\omega_0 E} e^{-\tfrac{1}{2}X} \,.
    \end{align}}

    \item{Because the even subalgebra is closed under multiplication, the product of any two such elements again yields an element of the same form:
    \begin{align}
        \tilde{R}(X,\omega_0)\, \tilde{R}(X^\prime,\omega^\prime_0) = \tilde{R}(X^{\prime\prime},\omega^{\prime\prime}_0) \;\in\; \mathfrak{cl}^{\mathrm{even}}(4,2) \,.
    \end{align}
    Accordingly, the exponential construction realizes $\mathrm{U}(2,2)$ entirely within $\mathfrak{cl}^{\text{even}}(4,2)$.}

    \item{Finally, the associativity required for the group law is ensured by the associativity of operator composition, which holds by construction as explained in Remark \ref{Remark nested}.}
\end{enumerate}

\begin{Remark}{
     \textbf{($2\times 2$ matrix realization of $\mathrm{SU}(2,2)$ in complex quaternionic form).} For a more tangible visualization of the $\mathrm{SU}(2,2)$ group structure, we conclude this appendix by exhibiting a $2\times 2$-matrix realization of $\mathrm{SU}(2,2)$ in terms of complex quaternionic components:
    \begin{align}
        \mathrm{SU}(2,2) = \bigg\{ &g= 
        \begin{pmatrix}
            a & b \\
            c & d
        \end{pmatrix} \;;\;\; \det(g) = 1 \,,\;\, J\,g^\dagger J = g^{-1} \,, \nonumber\\[0.2cm]
        &\mbox{with}\;\, J = \begin{pmatrix}
            \mathbbm{1}_2 & 0 \\
            0 & -\mathbbm{1}_2
        \end{pmatrix} \;\,\mbox{and}\;\, a,b,c,d \in \mathbb{H}\otimes\mathbb{C}\cong\mathrm{Mat}(2,\mathbb{C}) \bigg\} \,.
    \end{align}
    For a detailed account of complex quaternions and their matrix realization, we refer to Ref. \cite{Gazeau2023}.    
}\end{Remark}

\section{Appendix: Maximal Compact Subalgebra of $\mathfrak{su}(2,2)\cong\mathfrak{so}(4,2)$}\label{Appendix Maximal}

The maximal compact subalgebra of $\mathfrak{su}(2,2)$ is given by:
\begin{align}\label{Appen 172}
    \mathfrak{s}\big(\mathfrak{u}(2) \oplus \mathfrak{u}(2)\big) = \Big\{ &(X, Y) \in \mathfrak{u}(2) \oplus \mathfrak{u}(2) \;;\;\; \nonumber\\[0.2cm]
    &\qquad\operatorname{Tr}(X + Y) = \operatorname{Tr}(X) + \operatorname{Tr}(Y) = 0 \Big\}\,.
\end{align}
Each $\mathfrak{u}(2)$ factor admits the decomposition:
\begin{align}
    \mathfrak{u}(2) \cong \mathfrak{su}(2) \oplus \mathfrak{u}(1)\,.
\end{align}
Let $\mathfrak{su}(2)_L$ and $\mathfrak{su}(2)_R$ denote the special unitary parts of the left and right $\mathfrak{u}(2)$ factors, respectively, in \eqref{Appen 172}. The two corresponding $\mathfrak{u}(1)$ generators, denoted $Z_L$ and $Z_R$, represent the scalar (central) components of the two $\mathfrak{u}(2)$ summands.

The tracelessness condition $\operatorname{Tr}(X) + \operatorname{Tr}(Y) = 0$ implies that $Z_L$ and $Z_R$ are not independent, but satisfy the linear constraint:
\begin{align}
    Z_L + Z_R = 0\,.
\end{align}
As a consequence, the center of the maximal compact subalgebra is $1$-dimensional and generated by the single $\mathfrak{u}(1)$ element
\begin{align}
    Z := Z_L - Z_R \,.
\end{align}

Accordingly, the maximal compact subalgebra of $\mathfrak{su}(2,2)$ reads as:
\begin{align}\label{S..Z2}
    \mathfrak{s}\big(\mathfrak{u}(2) \oplus \mathfrak{u}(2)\big) \cong \mathfrak{su}(2) \oplus \mathfrak{su}(2) \oplus \mathfrak{u}(1)\,,
\end{align}
or, equivalently, in terms of real Lie algebras:
\begin{align}
    \mathfrak{s}\big(\mathfrak{u}(2) \oplus \mathfrak{u}(2)\big) \cong \mathfrak{so}(4) \oplus \mathfrak{so}(2)\,.
\end{align}

At the group level, the corresponding maximal compact subgroup of $\mathrm{SU}(2,2)$ is:
\begin{align}
    S\big(\mathrm{U}(2) \times \mathrm{U}(2)\big) = \Big\{ &(g_L, g_R) \in \mathrm{U}(2) \times \mathrm{U}(2) \;;\; \nonumber\\[0.2cm]
    &\qquad\det(g_L\, g_R) = \det(g_L)\det(g_R) = 1 \Big\}\,.
\end{align}
Each factor admits the local decomposition:
\begin{align}
    \mathrm{U}(2) \cong \frac{\mathrm{SU}(2) \times U(1)}{\mathbb{Z}_2}\,,
\end{align}
where the quotient by $\mathbb{Z}_2$ is generated by the element $(-\mathbbm{1}_2, -1)$, which acts trivially on all representations and enforces the determinant constraint globally.

Substituting this decomposition into the product $\mathrm{U}(2)_L \times \mathrm{U}(2)_R$, the defining unimodularity condition $\det(g_L)\det(g_R) = 1$ relates the two central $\mathrm{U}(1)$ phases via:
\begin{align}
    e^{\mathrm{i}(\theta_L + \theta_R)} = 1\,,
\end{align}
where $e^{\mathrm{i}\theta_L}$ and $e^{\mathrm{i}\theta_R}$ denote the phase rotations associated with $\mathrm{U}(1)_L$ and $\mathrm{U}(1)_R$, respectively. This condition removes the overall (diagonal) combination of the two $\mathrm{U}(1)$ phases, leaving a single residual $\mathrm{U}(1)$ generated by the relative phase $e^{\mathrm{i}(\theta_L - \theta_R)}$. Accounting for this shared central identification yields the global structure:
\begin{align}\label{SZ2}
    S\big(\mathrm{U}(2) \times \mathrm{U}(2)\big) \cong \frac{\mathrm{SU}(2)_L \times \mathrm{SU}(2)_R \times \mathrm{U}(1)}{\mathbb{Z}_2}\,,
\end{align}
where the $\mathbb{Z}_2$ quotient is generated by the element $(-\mathbbm{1}_2, -\mathbbm{1}_2, -1)$, which acts trivially on all representations and enforces the determinant constraint globally.

\section{Appendix: Diagonal Inner Products of Real Matrices Representing Imaginary Split-Octonion Units}\label{Appendix A}

\begin{proposition}
    \textbf{(diagonal structure of triple products for associative split-octonion triples).} The product $m^{}_\mu\, m^{}_\nu\, m^{}_\lambda$, where $\mu,\nu,\lambda \in \big\{ {\boldsymbol{\imath}},\, {\boldsymbol{\jmath}},\, \boldsymbol{\ell}{\boldsymbol{\imath}},\, {\boldsymbol{\jmath}}\boldsymbol{\ell},\,  \boldsymbol{\ell}\boldsymbol{k}\,, \boldsymbol{\ell} \big\}$, is a \emph{diagonal} $8\times 8$ matrix iff $(\mu,\nu,\lambda)$ is an associative triple belonging to a split-octonion subalgebra such that their product is the unit octonion:
    \begin{align}\label{A.1}
        \mu\, \nu\, \lambda = (\mu\, \nu)\, \lambda = \mu\, (\nu\, \lambda) = \mathbbm{1} \quad \iff \quad\mbox{diagonal}\quad m^{}_\mu\, m^{}_\nu\, m^{}_\lambda\,.
    \end{align}
\end{proposition}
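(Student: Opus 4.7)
The plan is to regard each $m^{}_\mu = L_\mu$ as a signed permutation of the ordered basis $\{\mathbbm{1},\boldsymbol{k},\boldsymbol{\imath},\boldsymbol{\jmath},\boldsymbol{\ell}\boldsymbol{\imath},\boldsymbol{\jmath}\boldsymbol{\ell},\boldsymbol{\ell}\boldsymbol{k},\boldsymbol{\ell}\}$ of $\mathbb{O}_{\mathbb{S}}$, since $L_A$ sends each basis unit to a signed basis unit. Composition preserves this structure, so $m^{}_\mu m^{}_\nu m^{}_\lambda$ is itself a signed permutation matrix; such a matrix is diagonal if and only if its underlying unsigned permutation is the identity, i.e. $\mu\bigl(\nu(\lambda A)\bigr) = \pm A$ for every basis unit $A$. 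The proposition therefore reduces to proving the equivalence between this fixed-ray condition and the joint requirement that $(\mu,\nu,\lambda)$ form an associative triple with $\mu\nu\lambda = \mathbbm{1}$, up to an overall sign that is absorbed into the orientation convention of $\mathbb{R}^{4,2}$.

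For the implication $\Leftarrow$, the key ingredient is that any two bracketings of a word in the split-octonion basis units coincide up to a global sign. This follows from the alternativity of $\mathbb{O}_{\mathbb{S}}$ together with the modified associative law recorded in Remarks~\ref{Remark 2.2222}--\ref{Remark 2.202022}, and admits a clean phrasing at the level of the Moufang loop $M_{16}=\{\pm A\}$ of basis units, whose center is $\{\pm\mathbbm{1}\}$. Under the hypotheses that $(\mu,\nu,\lambda)$ is associative and $\mu\nu\lambda = \mathbbm{1}$, one would then reassociate, for an arbitrary basis unit $A$, as $\mu\bigl(\nu(\lambda A)\bigr) = \pm\bigl((\mu\nu\lambda)\,A\bigr) = \pm A$, establishing diagonality on the entire basis.

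For the converse $\Rightarrow$, I would test the operator on $A=\mathbbm{1}$: diagonality forces $\mu(\nu\lambda) = \pm\mathbbm{1}$, whence $\nu\lambda = \pm\mu^{\ast}$ is proportional to $\mu$. Artin's theorem, applied to the associative subalgebra generated by the two elements $\nu$ and $\lambda$, then places $\mu$ in that same subalgebra; hence $(\mu,\nu,\lambda)$ is an associative triple and $\mu\nu\lambda = \mu(\nu\lambda) = \pm\mathbbm{1}$, thereby recovering the structural content of \eqref{A.1} (with the overall sign being a matter of convention, as indicated above).

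The principal obstacle lies in making the sign bookkeeping in the implication $\Leftarrow$ rigorous and uniform. Concretely, the identity $\mu\bigl(\nu(\lambda A)\bigr) = \pm(\mu\nu\lambda)\,A$ must be established without ad hoc case analysis on which of the quadruples $(\mu,\nu,\lambda,A)$ lie in an associative subalgebra and which do not. The most economical route is the loop-theoretic observation that, since the center of $M_{16}$ is $\{\pm\mathbbm{1}\}$, two bracketings of any word in $M_{16}$ can only differ by a central element; once this fact is verified by a finite inspection (e.g. by checking the defining Moufang identities on basis units against the multiplication table \eqref{table1}), both directions of the proposition reduce to short, direct computations.
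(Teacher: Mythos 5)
Your proposal is correct and, for the main direction, follows essentially the same route as the paper: both arguments rest on the fact that any re-bracketing of a product of split-octonion basis units changes the result by at most an overall sign (the paper invokes its ``modified associative law''; you phrase it loop-theoretically via the center $\{\pm\mathbbm{1}\}$ of the unit loop), so that $\mu\bigl(\nu(\lambda A)\bigr)=\pm(\mu\nu\lambda)A=\pm A$ for every basis unit $A$. Your signed-permutation framing is a genuinely useful addition, since it makes precise why ``diagonal'' is equivalent to ``every basis ray is fixed''. Where you go beyond the paper is the converse: the paper disposes of it with a one-line appeal to Moufang's theorem, whereas you evaluate on $A=\mathbbm{1}$ to get $\mu(\nu\lambda)=\pm\mathbbm{1}$, hence $\nu\lambda\propto\mu$, and then use Artin's theorem on the (associative) subalgebra generated by $\nu$ and $\lambda$ to conclude that the triple is associative --- a cleaner and more explicit argument than the paper supplies. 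Two minor points: your arrow labels $\Leftarrow/\Rightarrow$ are swapped relative to the paper's reading of the ``iff'' (harmless, since you identify each direction by its content), and your converse only yields $\mu\nu\lambda=\pm\mathbbm{1}$ rather than $+\mathbbm{1}$; but the paper's own proof concedes the same sign ambiguity (``diagonal elements are $1$, up to a sign''), so you are at exactly the level of precision of the original.
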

 
\begin{proof}
    For any split-octonion $x$ and $\mu, \nu, \lambda$ satisfying \eqref{A.1}, we have the following identity, up to a sign (see Remark \ref{Remark 2.202022}):
    \begin{align}
        m^{}_\nu ( m^{}_\lambda\, x ) \underbrace{= \nu (\lambda\,x) = (\nu\,\lambda)\,x =}_{\text{clarification omitted henceforth for simplicity!}} ( m^{}_\nu\, m^{}_\lambda ) \,x \,.
    \end{align}
    Expanding this relation, again up to a sign, we obtain:
    \begin{align}
        m^{}_\mu \big(m^{}_\nu ( m^{}_\lambda\, x )\big) = m^{}_\mu \big((m^{}_\nu\, m^{}_\lambda)\, x\big)  = \underbrace{( m^{}_\mu\, m^{}_\nu\, m^{}_\lambda ) \,x = x}_{\text{since $(\mu\, \nu\, \lambda)\,x = \mathbbm{1}x$ by Eq. \eqref{A.1}}}\,.
    \end{align}
    This proves the proposition ($\Longrightarrow$) and also tells us that the diagonal elements of the matrix  $m^{}_\mu\, m^{}_\nu\, m^{}_\lambda$ are $1$, up to a sign. The inverse statement ($\Longleftarrow$) of the above, however, follows from Moufang's theorem (see Remark \ref{Remark 2.2222}). 
    
    Here are two examples that differ from $D = m^{}_{{\boldsymbol{\jmath}}}\, m^{}_{\boldsymbol{\ell}}\, m^{}_{{\boldsymbol{\jmath}}\boldsymbol{\ell}} = m^{}_{5}\, m^{}_{4}\, m^{}_{2} $ \eqref{D}:
    \begin{align}
        D^\prime \,x &= m^{}_{0}\big( m^{}_{2} (m^{}_{3}\, x) \big) \nonumber\\[0.2cm]
        &= {{\boldsymbol{\imath}}}\big({{\boldsymbol{\jmath}}\boldsymbol{\ell}} ({\boldsymbol{\ell}\boldsymbol{k}} \,x)\big) \nonumber\\[0.2cm]
        &= \bigoplus(D_{\mathbbm{1}\boldsymbol{k}} + D_{05} - D_{12} + D_{34})\,x\,, 
    \end{align}
    \begin{align}
        D^{\prime\prime}\, x &= m^{}_{5}\big(m^{}_{1} (m^{}_{3}\, x)\big) \nonumber\\[0.2cm]
        &= {{\boldsymbol{\jmath}}}\big({\boldsymbol{\ell}{\boldsymbol{\imath}}} ({\boldsymbol{\ell}\boldsymbol{k}}\, x)\big) \nonumber\\[0.2cm]
        &= \bigoplus(D_{\mathbbm{1}\boldsymbol{k}} - D_{05} + D_{12} + D_{34})\, x\,,
    \end{align}
    where $x\in\mathrm{Span}\{A\}$ (see Eqs. \eqref{8uni}-\eqref{8uni'*}) and see Sect. \ref{Sect. Convention} for the convention on $\bigoplus (\dots)$.
\end{proof}
	\renewcommand*\vec{\mathaccent"017E\relax}

\setcounter{equation}{0} 

\chapter{Ladder Representation of $\mathfrak{u}(2, 2)$ as a Quantized $4$-dimensional Realization}\label{Chapter 3}

\begin{abstract}
    {This chapter develops a rigorous framework for positive-energy ladder representations of the conformal Lie algebra:
    \begin{align*}
        \mathfrak{u}(2,2) \cong \mathfrak{su}(2,2) \, \big(\cong \mathfrak{so}(4,2)\big) \oplus \mathfrak{u}(1) \;\subset\; \mathfrak{cl}(4,2)\,,
    \end{align*}
    realized as quantized models of massless fields with arbitrary helicity in $4$-dimensional Minkowski, de Sitter (dS), and anti-dS spacetimes. The construction is founded on an invariant bilinear form in the Majorana-spinor space of $\mathfrak{cl}(4,2)$, establishing the geometric basis for the representation theory. Through the isomorphism $\mathfrak{cl}^{\text{even}}(4,2) \cong \mathfrak{cl}(4,1) \;\big(\supset \mathfrak{su}(2,2)\big)$, an explicit $4 \times 4$ matrix realization is introduced, encoding conformal symmetry in a computationally tractable spinorial form. This provides the foundation for the systematic construction of ladder operators, spectral analysis, and conformally covariant field operators.}

    {The representations appear as lowest-weight modules characterized by conformal energy and helicity, giving a full classification of massless conformal fields under $\mathfrak{u}(2,2)$. In particular, restriction to the dS subalgebra $\mathfrak{sp}(2,2) \cong \mathfrak{so}(4,1)$ — a central focus of this study — preserves irreducibility, thereby capturing the notion of masslessness in this curved background. Embedding these modules into $4$-dimensional conformal vertex algebras reveals their analytic structure, including operator product expansions and modular properties. The Casimir invariants are explicitly computed, clarifying their role in the spectrum, while the correspondence between $\mathfrak{so}(4,2)$ and its Euclidean form $\mathfrak{so}(5,1)$ completes a unified algebraic framework for the study of massless conformal systems in both flat and curved settings.}
\end{abstract}

\section{Invariant Bilinear Form in the Space of $\mathfrak{cl}(4, 2)$ Spinors}

\textbf{\textit{Note}:} Having clarified in the preceding chapter the Majorana realization of the conformal Clifford algebra $\mathfrak{cl}(4,2)$ and its subalgebras — implemented via left multiplication by imaginary split-octonion units on the split-octonion algebra — along with their chiral realizations, we shall henceforth streamline the notation and present only the resulting formulas.

In the preceding chapter, we showed that the $8$-dimensional Majorana (real) vector space of $\mathfrak{cl}(4,2)$ spinors, spanned by the split-octonion units, naturally carries the structure of a composition algebra — non-associative but alternative. In particular, it can be equipped with a real-valued trace, a Cayley-Dickson conjugation, a factorizable quadratic norm $N(\xi)$, and an associated non-degenerate symmetric bilinear form $\langle\xi, \eta\rangle$ such that:
\begin{align}\label{2.0}
    N(\xi) &= \xi\xi^\ast \, \big(= \xi^\ast \xi \big) \nonumber\\[0.2cm]
    &= (\xi^{}_{\mathbbm{1}})^2 + (\xi^{}_{{\boldsymbol{k}}})^2 + (\xi^{}_{0})^2 + (\xi^{}_{5})^2 - (\xi^{}_{1})^2 - (\xi^{}_{2})^2 -(\xi^{}_{3})^2 - (\xi^{}_{4})^2\,,
\end{align}
where $\xi = \xi_{}^{\mathbbm{1}}\mathbbm{1} + \xi_{}^{\boldsymbol{k}} {\boldsymbol{k}} + \xi_{}^{{0}} \boldsymbol{\imath} + \xi_{}^{{5}} {\boldsymbol{\jmath}} + \xi_{}^{{1}} \boldsymbol{\ell}\boldsymbol{\imath} + \xi_{}^{{2}} {\boldsymbol{\jmath}}\boldsymbol{\ell} + \xi_{}^{{3}} \boldsymbol{\ell}{\boldsymbol{k}} + \xi_{}^{{4}} \boldsymbol{\ell} \in \mathbb{O}_{\mathbb{S}}$, while the star `$^\ast$' changes the sign of the imaginary $\mathbb{O}_{\mathbb{S}}$ units, and according to Eq. \eqref{muhum}:  
\begin{align}\label{2.1}
    \langle \xi , \eta \rangle = \langle \eta, \xi \rangle &= \frac{1}{2} \big( N(\xi+\eta) - N(\xi) - N(\eta) \big) \nonumber\\[0.2cm]
    &= \frac{1}{2} \big( \xi\eta^\ast + \eta\xi^\ast \big) \nonumber\\[0.2cm]
    &= \xi^{}_{\mathbbm{1}}\eta^{}_{\mathbbm{1}} + \xi^{}_{{\boldsymbol{k}}}\eta^{}_{{\boldsymbol{k}}} + \xi^{}_{0}\eta^{}_{0} + \xi^{}_{5}\eta^{}_{5} - \xi^{}_{1}\eta^{}_{1} - \xi^{}_{2}\eta^{}_{2} -\xi^{}_{3}\eta^{}_{3} - \xi^{}_{4}\eta^{}_{4} \,.
\end{align}
Recall that, throughout this work, we employ the simplified index notation $\nu \in \big\{ 0:={\boldsymbol{\imath}},\, 5:={\boldsymbol{\jmath}},\, 1:=\boldsymbol{\ell}{\boldsymbol{\imath}},\, 2:={\boldsymbol{\jmath}}\boldsymbol{\ell},\,  3:=\boldsymbol{\ell}\boldsymbol{k},\, 4:=\boldsymbol{\ell}\big\}$. This construction yields two inequivalent $8$-dimensional spinor representations of $\mathrm{Spin}(4,4)$, arising from left and right multiplication by split-octonion units, together with a third $8$-dimensional vector representation. The three are tied together by an invariant trilinear (``Yukawa'') form, which manifests the triality symmetry of the Lie algebra $D_{4}$ (see Refs. \cite{Y, Br}, for more details).

We now reproduce the inner product \eqref{2.1} in the chiral (complex) basis of the generators $\Gamma^{}_\nu$ of $\mathfrak{cl}(4,2)$. The basis generators satisfy the anti-commutation relations: 
\begin{align}
    \big\{ \Gamma^{}_\mu,  \Gamma^{}_\nu\big\} = 2\eta_{\mu\nu}\,,
\end{align}
which follow directly from Eq. \eqref{clif}. Their behaviour under Hermitian conjugation is given by:
\begin{align}\label{2.2}
    \Gamma^\ast_\nu = \Gamma^2_\nu\, \Gamma^{}_\nu \quad\Longrightarrow\quad \Gamma^\ast_{{0}} &= - \Gamma^{}_{{0}}\,, \nonumber\\[0.2cm]
    \Gamma^\ast_{{5}} &= - \Gamma^{}_{{5}}\,, \nonumber\\[0.2cm] 
    \Gamma^\ast_{i} &=  \Gamma^{}_{i}\,, \qquad (i=1,2,3,4) \,.
\end{align}
This is indeed a direct consequence of Eqs. \eqref{anti-Hermitian} and \eqref{anti-Hermitian'}; in a real basis $\Gamma^{}_\nu = m^{}_\nu$, $m^\ast_\nu$ stands for the transposed $m^\top_\nu$ of $m^{}_\nu$. In light of the preceding discussion, our first objective — whose justification will be given in the following Remark — is therefore to determine a unitary element $B$ of $\mathfrak{cl}(4,2)$ satisfying:
\begin{align}\label{2.3}
    \Gamma^\ast_\nu \,B = - B\, \Gamma^{}_\nu \quad\Longrightarrow\quad&\; \Gamma^{}_0 \,B = B\, \Gamma^{}_0 \,, \nonumber\\[0.2cm]
    &\; \Gamma^{}_5 \,B = B\, \Gamma^{}_5 \,,\nonumber\\[0.2cm]
    \Downarrow \qquad\qquad\qquad \;\;&\; \Gamma^{}_i \,B = - B\, \Gamma^{}_i \,, \qquad (i=1,2,3,4) \,,\nonumber\\[0.2cm]
    \Gamma^{\ast}_{\mu\nu}\, B = - B\, \Gamma^{}_{\mu\nu} \quad\Longrightarrow\quad&\; \Gamma^{}_{05}\, B = B\, \Gamma^{}_{05} \,, \nonumber\\[0.2cm]
    &\; \Gamma^{}_{12}\, B = B\, \Gamma^{}_{12} \,,\nonumber\\[0.2cm]
    &\; \Gamma^{}_{34}\, B = B\, \Gamma^{}_{34} \,,\nonumber\\[0.2cm]
    &\qquad\;\;\Downarrow \nonumber\\[0.2cm]
    &\big[B, \omega(4,2)\big] = 0\,.
\end{align}
Recall that $\omega(4,2)$ is the $\mathfrak{cl}(4,2)$ volume form in the basis $\big\{ \Gamma^{}_\nu \big\}$ (see Eq. \eqref{OOmega}), while the bivector generators $\Gamma_{\mu\nu}$ are defined in Eq. \eqref{GammaMuNu-JPG}.

\begin{Remark}{
    \textbf{(the $B$-inner product and restoration of anti-Hermiticity in the chiral basis).} As noted above, in the chiral basis of $\mathfrak{cl}(4,2)$, the Clifford generators $\Gamma^{}_\nu$ are not all anti-Hermitian with respect to ordinary Hermitian conjugation (see Eq. \eqref{2.2}); this reflects the mixed Hermiticity properties induced by the Clifford relations in signature $(4,2)$. As a result, the standard inner product fails to be compatible with the requirement that the generators $\Gamma^{}_{\mu\nu}$ of the conformal algebra $\mathfrak{so}(4,2)$ act as anti-Hermitian operators.\footnote{Let $(\mathscr{H},\langle\cdot,\cdot\rangle)$ be a complex inner product space and let $X$ be a linear operator on $\mathscr{H}$. The adjoint $X^\dagger$ of $X$ (with respect to $\langle\cdot,\cdot\rangle$) is defined by the relation:
    \begin{align*}
        \langle \psi, X \phi \rangle = \langle X^\dagger \psi, \phi \rangle\,, \quad \text{for all}\quad \psi,\phi \in \mathscr{H} \,.
    \end{align*}
    The operator $X$ is said to be Hermitian if $X^\dagger = X$ and anti-Hermitian if $X^\dagger = -X$. These notions depend explicitly on the chosen inner product. In particular, when the inner product is modified, the corresponding adjoint changes accordingly. Only in the special case of the standard inner product does the adjoint $X^\dagger$ coincide with ordinary Hermitian conjugation $X^\ast:= \overline{X}^\top$.} Indeed, with respect to the standard inner product:
    \begin{align}\label{std-InnerProduct}
        \langle \psi , \phi \rangle^{}_{\text{std}} = \psi^\ast \phi \,,
    \end{align}
    one finds:
    \begin{align}
        \langle \psi , \Gamma^{}_{\mu\nu}\phi \rangle^{}_{\text{std}} &= \psi^\ast \, \Gamma^{}_{\mu\nu} \phi \nonumber\\[0.2cm]
        &= \big(\Gamma^{\ast}_{\mu\nu} \psi \big)^\ast\, \phi \nonumber\\[0.2cm]
        &= \langle \Gamma^{\ast}_{\mu\nu} \psi , \phi \rangle^{}_{\text{std}} = \langle \pm\Gamma^{}_{\mu\nu} \psi , \phi \rangle^{}_{\text{std}} \,,
    \end{align}
    where the sign depends on the specific bivector generator $\Gamma^{}_{\mu\nu}$ under consideration. Thus, the generators $\Gamma^{}_{\mu\nu}$ are not uniformly anti-Hermitian with respect to \eqref{std-InnerProduct}, obstructing the construction of a unitary representation in the chiral basis.\footnote{For a unitary representation on a Hilbert space, it is necessary that the generators $X$ of the Lie algebra be anti-Hermitian with respect to the chosen inner product, ensuring that their exponentials $e^{\alpha X}$ $(\alpha \in \mathbb{R})$ are unitary operators and hence that the group representation is unitary.}

    To remedy this, one introduces a unitary element $B \in \mathfrak{cl}(4,2)$ and defines a modified notion of adjoint by imposing $\Gamma^\ast_\nu \,B = - B\, \Gamma^{}_\nu$, and hence $\Gamma^{\ast}_{\mu\nu}\, B = - B\, \Gamma^{}_{\mu\nu}$ (see Eq. \eqref{2.3}). This condition allows one to define a new inner product:
    \begin{align}\label{B-InnerProduct}
        \langle \psi, \phi \rangle_B = \psi^\ast B\, \phi \,,
    \end{align}
    with respect to which the generators $\Gamma^{}_{\mu\nu}$ act anti-Hermitian. Indeed:
    \begin{align}
        \langle \psi , \Gamma^{}_{\mu\nu}\phi \rangle^{}_B &= \psi^\ast B\, \Gamma^{}_{\mu\nu} \phi \nonumber\\[0.2cm]
        &= \psi^\ast \big(-\Gamma^{\ast}_{\mu\nu}\big)\, B\, \phi \nonumber\\[0.2cm]
        &= \big(-\Gamma^{}_{\mu\nu} \psi \big)^\ast B\, \phi = \langle -\Gamma^{}_{\mu\nu} \psi , \phi \rangle^{}_B \,,
    \end{align}
    showing explicitly that $\Gamma^{}_{\mu\nu}$ is anti-Hermitian with respect to \eqref{B-InnerProduct}. Consequently, the associated one-parameter subgroups of the conformal group are represented by unitary operators on the Hilbert space, guaranteeing the unitarity of the representation.

    Finally, requiring that $B$ commute with the Clifford volume form $\omega(4,2)$ ensures that the inner product respects the chirality decomposition of the spinor space. In particular, left- and right-handed sectors remain orthogonal, and the action of the bivector generators $\Gamma^{}_{\mu\nu}$ preserves each chiral sector. In this way, the element $B$ provides a canonical and algebraically consistent framework for realizing unitary positive-energy irreducible representations (UPEIRs) of the conformal algebra in the chiral Clifford basis.

    \textbf{\textit{Note}:} For the sake of notational simplicity, we shall henceforth drop the subscripts  
    `$\text{std}$' and `$B$' from the inner product, as the relevant choice will always be clear from the context.
}\end{Remark}

Remarkably, there exists such a Hermitian unitary $B\in\mathfrak{cl}(4,2)$ determined by the above properties up to sign. It turns out that the resulting $B$ coincides with the $\mathfrak{cl}(4,0)$ pseudo scalar:
\begin{align}
    B = M(4,0) &= m^{}_{{12}} m^{}_{{43}} \nonumber\\[0.2cm]
    &= U\, m^{}_{{12}} m^{}_{{43}} \, U^{-1} \qquad (\text{since} \; [m^{}_{{12}} m^{}_{{43}} \,,\, U]\,=\,0) \nonumber\\[0.2cm]
    &= U\, m^{}_{{12}} \, U\, U^{-1} m^{}_{{43}} \, U^{-1}  \nonumber\\[0.2cm]
    &= \Gamma^{}_{{12}} \Gamma^{}_{{43}} \,.
\end{align}
The commutation relation $\big[m^{}_{{12}} m^{}_{{43}}, U\big]=0$ can be verified straightforwardly by substituting the explicit form of the unitary transformation $U$ from Eq. \eqref{1.31}, making use of Eqs. \eqref{U(1)generator}, \eqref{D}, and \eqref{S}, and referring to Proposition \ref{proposition 2.1}. The Hermitian unitary operator $B$ admits a diagonal form both in the Majorana basis (see Eq. \eqref{77}):
\begin{align}\label{2.4}
    B\,x \; \big( = B^\ast \,x\big) &= m^{}_{{12}} \,(m^{}_{{43}}\,x)  \nonumber\\[0.2cm]
    &= \Gamma^{}_{{12}} \, \big(\Gamma^{}_{{43}}\,x\big)  \nonumber\\[0.2cm]
    &= \bigoplus(\mathbbm{1}_{\mathbbm{1}, {\boldsymbol{k}}, {5}, {0} } - \mathbbm{1}_{{2}, {1}, {4}, {3}})\,x \,,
\end{align}
and in the compact chiral basis (following the procedure described in Eq. \eqref{Ey}):
\begin{align}\label{2.4H}
    B\,y \; \big( = B^\ast\,y \big) &= m^{}_{{12}} (m^{}_{{43}}\,y)  \nonumber\\[0.2cm]
    &= \Gamma^{}_{{12}} \big(\Gamma^{}_{{43}}\,y\big) \nonumber\\[0.2cm]
    &= \bigoplus\big(\mathbbm{1}_{\widehat{\mathbbm{1}\boldsymbol{k}}_+, \widehat{\mathbbm{1}\boldsymbol{k}}_-, \widehat{{05}}_+, \widehat{{05}}_-} - \mathbbm{1}_{\widehat{{12}}_+, \widehat{{{12}}}_-, \widehat{{34}}_+, \widehat{{{34}}}_-}\big)\,y \,,
\end{align}
where $x\in\mathrm{Span}\{A\}$ (see Eqs. \eqref{8uni}-\eqref{8uni'*}), $y\in\mathrm{Span}\{ n^{}_{a^{}_\pm} \}$ (see Eq. \eqref{Spm} and Sect. \ref{Sect. adjusment}), and consult Sect. \ref{Sect. Convention} for the convention concerning $\bigoplus (\dots)$. We observe that:
\begin{align}
    B^2 \, x = \mathbbm{1}_8 \, x \,, \quad (\mbox{and similarly on $y$})\,.
\end{align}
The sign is fixed by $(B-1) \mathbbm{1}_8 = 0$, to preserve the spinor equal to the real unit $\mathbbm{1}$ of $\mathbb{O}_\mathbb{S}$. 

Remarkably, we see that, for the Majorana (real) spinors $\xi$ and $\eta$, the Hermitian form \eqref{2.4} reproduces the scalar product \eqref{2.1}:
\begin{align}\label{2.5}
    \langle \xi , \eta \rangle = \xi^A\, B_{AB}\, \eta^B\,,
\end{align}
where $A,B= \mathbbm{1},\, \boldsymbol{k},\, \nu\in \big\{ 0:={\boldsymbol{\imath}},\, 5:={\boldsymbol{\jmath}},\, 1:=\boldsymbol{\ell}{\boldsymbol{\imath}},\, 2:={\boldsymbol{\jmath}}\boldsymbol{\ell},\,  3:=\boldsymbol{\ell}\boldsymbol{k},\, 4:=\boldsymbol{\ell}\big\}$ (see Eq. \eqref{8uni}-\eqref{8uni'*}). Similarly, in the space of $4$-component pure chiral spinors\footnote{For reviews on Cartan simple (or Chevalley pure) $8$-spinors, see Refs. \cite{BTSpinor, D}} $\psi^+ = \psi^{\widehat{\mathbbm{1}\boldsymbol{k}}_+} \,n^{}_{\widehat{\mathbbm{1}\boldsymbol{k}}_+} + \psi^{{\widehat{{05}}}_+} \,n^{}_{\widehat{05}_+} + \psi^{\widehat{{12}}_+} \,n^{}_{\widehat{12}_+} + \psi^{\widehat{{34}}_+} \,n^{}_{\widehat{34}_+}$ displayed in the compact chiral basis (see Sect. \ref{Sect. adjusment}), where:
\begin{align}
    \psi^{\widehat{\mathbbm{1}\boldsymbol{k}}_+} \;\left(:= \sqrt{2}\,\psi^{\widehat{\mathbbm{1}\boldsymbol{k}}_+}\right) &= \xi^{\mathbbm{1}} - \mathrm{i} \xi^{{\boldsymbol{k}}}\,, \nonumber\\[0.2cm]
    \psi^{{\widehat{{05}}}_+} \;\left(:= \sqrt{2}\,\psi^{{\widehat{{05}}}_+}\right) &= \xi^{{0}} - \mathrm{i} \xi^{{5}}\,,\nonumber\\[0.2cm]
    \psi^{\widehat{{12}}_+} \;\left(:= \sqrt{2}\,\psi^{\widehat{{12}}_+}\right) &= \xi^{{1}} - \mathrm{i} \xi^{{2}}\,, \nonumber\\[0.2cm]
    \psi^{\widehat{{34}}_+} \;\left(:= \sqrt{2}\,\psi^{\widehat{{34}}_+}\right) &= \xi^{{3}} - \mathrm{i} \xi^{{4}}\,,
\end{align}
the Hermitian form \eqref{2.4H} also allows us to reproduce the quadratic norm \eqref{2.0}:
\begin{align}\label{2.6}
    &\overline{{\psi^{a_+}}}\, B_{a_+ b_+}\, \psi^{b_+}  \nonumber\\[0.2cm]
    &= \overline{{\psi^{\widehat{\mathbbm{1}\boldsymbol{k}}_+}}}\,{\psi^{\widehat{\mathbbm{1}\boldsymbol{k}}_+}} + \overline{{\psi^{\widehat{{05}}_+}}}\, {\psi^{\widehat{{05}}_+}} - \overline{{\psi^{\widehat{{12}}_+}}}\, {\psi^{\widehat{{12}}_+}} - \overline{{\psi^{\widehat{{34}}_+}}}\, {\psi^{\widehat{{34}}_+}} \nonumber\\[0.2cm]
    &= \xi^2_{\mathbbm{1}} + (\xi^{}_{{\boldsymbol{k}}})^2 + (\xi^{}_{0})^2 + (\xi^{}_{5})^2 - (\xi^{}_{1})^2 - (\xi^{}_{2})^2 -(\xi^{}_{3})^2 - (\xi^{}_{4})^2 \nonumber\\[0.2cm]
    &= \xi\xi^\ast = \xi^\ast\xi = N(\xi) = \langle \xi , \xi \rangle\,,
\end{align}
where $a^{}_+, b^{}_+ = {\widehat{\mathbbm{1}\boldsymbol{k}}_+, \widehat{05}_+, \widehat{12}_+, \widehat{34}}_+$, while $\overline{\psi}$ stands for the complex conjugate of $\psi$.

\section{Isomorphism of $\mathfrak{cl}(4, 1)$ with the Even Part of $\mathfrak{cl}(4, 2)$: $4\times 4$ Matrix Realization of $\mathfrak{su}(2,2)\cong\mathfrak{so}(4, 2)$}

It looks natural to view the de Sitter (dS) Clifford algebra $\mathfrak{cl}(4,1)$ as a Clifford subalgebra of the conformal one $\mathfrak{cl}(4,2)$. It turns out, however, to be advantageous to use the known isomorphism between the even part of $\mathfrak{cl}(4,2)$, denoted here by $\mathfrak{cl}^{\text{even}}(4,2)$, and the irreducible realization $\mathbb{C}[4]$ of $\mathfrak{cl}(4,1)$. Specifically, we identify the $4\times 4$-matrix realization of the generators $\gamma^{}_\alpha$ ($\alpha = {0,1,2,3,4}$) of $\mathfrak{cl}(4,1)$ with the projected ``boosts'' $\Gamma^{}_{\alpha{5}}$ of the Lie algebra $\mathfrak{su}(2,2)\cong\mathfrak{so}(4,2)$:
\begin{align}\label{2.7}
    \gamma^{}_i := \gamma^{}_{i{5}} = \Gamma^{}_{i {5}}\, D_+ \,, \quad\mbox{and}\quad \gamma^{{0}}_{} := \gamma^{}_{{05}} = \Gamma^{}_{{05}}\, D_+ \,,
\end{align}
where: 
\begin{align}
    D_+ = \frac{\mathbbm{1}+D}{2}\,, \quad\mbox{and}\quad \Gamma^{}_{\alpha{5}} = \Gamma^{}_{\alpha}\, \Gamma^{}_{{5}}\,,
\end{align}
with $\alpha = {0,1,2,3,4}$ and $i = {1,2,3,4}$, while $D_+$ is the projection operator on $\mathfrak{S}_+$; that is:
\begin{align}
    D_+ \mathfrak{S} =: \mathfrak{S}_+ \,,
\end{align}
where we recall that in the chiral (complex) basis \eqref{Spm} (see also Eqs. \eqref{tytyty}-\eqref{zv}):
\begin{align}
    \mathfrak{S} = \mathrm{Span}\Big\{ n^{}_{a^{}_\pm} \;;\; n^{}_{a^{}_\pm} = n^{}_{\widehat{\mathbbm{1}\boldsymbol{k}}_\pm}, n^{}_{\widehat{{05}}_\pm}, n^{}_{\widehat{{12}}_\pm}, n^{}_{\widehat{{34}}_\pm} \Big\} \,,\quad \mathfrak{S}_+ = \mathrm{Span}\big\{ n^{}_{a^{}_+} \big\} \,,
\end{align}
and in the Majorana (real) basis \eqref{8uni}-\eqref{8uni'*} (for connection with the above basis, see again Eqs. \eqref{tytyty}-\eqref{zv}):
\begin{align}
    \mathfrak{S} = \mathrm{Span} \big\{A \;:\; A= \mathbbm{1},\, \boldsymbol{k},\, 0,\, 5,\, 1,\, 2,\, 3,\, 4\big\}\,,\quad \mathfrak{S}_+ = \mathrm{Span}\big\{\mathbbm{1},\, 5,\, 2,\, 4 \big\}\,.
\end{align}
Moreover, for the definitions of $D$ in the chiral (complex) and Majorana (real) bases, respectively, see Eqs. \eqref{omega(4,2)=iDMat'} and \eqref{D}.

\begin{Remark}{
    \textbf{(the map $\Gamma^{}_{\alpha 5} \;\longmapsto\; \gamma^{}_\alpha$ and the realization of $\mathfrak{cl}(4,1)$).} Some remarks on the map $\Gamma^{}_{\alpha 5} \;\longmapsto\; \gamma^{}_\alpha$ \eqref{2.7} are as follows:
    \begin{enumerate}[leftmargin=*]
        \item{The above realization of the generators $\gamma^{}_\alpha$ of the dS Clifford algebra $\mathfrak{cl}(4,1)\cong\mathfrak{cl}^{\text{even}}(4,2)$ follows from the identities:
        \begin{align}
            \big\{\Gamma^{}_{\alpha{5}} , \Gamma^{}_{\beta{5}}\big\} = \big\{\Gamma^{}_{\alpha} , \Gamma^{}_{\beta}\big\} = 2\eta_{\alpha\beta}\,, \quad\mbox{and}\quad \big[\Gamma^{}_{\alpha{5}} , \Gamma^{}_{\beta{5}}\big] = \big[\Gamma^{}_{\alpha} , \Gamma^{}_{\beta}\big]\,,
        \end{align}
        where $\eta_{\alpha\beta} = \mbox{diag}(-,+,+,+,+)$, with $\alpha, \beta = {0,1,2,3,4}$. These relations show that the matrix $\Gamma^{}_{5}$ drops out of both the anti-commutation and commutation relations among the projected generators $\Gamma^{}_{\alpha5}$. Consequently the projected ``boosts'' $\Gamma^{}_{\alpha5}$ obey the same Clifford $\mathfrak{cl}(4,1)$ and Lie algebra $\mathfrak{so}(4,1)$ relations as the $\Gamma^{}_{\alpha}$. Hence, one can consistently regard $\Gamma^{}_{\alpha 5}$ as an $8\times8$-matrix representation of $\mathfrak{cl}(4,1)$, and define the $4\times4$ matrices $\gamma^{}_i := \Gamma^{}_{i5} D_+$ and $\gamma^{0} := \Gamma^{}_{05} D_+$, which provide a $4\times4$-matrix representation of $\mathfrak{cl}(4,1)$ (and therefore of the corresponding Lie algebra $\mathfrak{so}(4,1)$).}

        \item{Obviously, the map $\Gamma^{}_{\alpha 5} \;\longmapsto\; \gamma^{}_\alpha$ \eqref{2.7} does not respect the $\mathbb{Z}_2$ grading of $\mathfrak{cl}(4,2)$. The Clifford algebra $\mathfrak{cl}(4,2)$ is naturally $\mathbb{Z}_2$-graded, meaning it has an even and odd part:
        \begin{align}
            \mathfrak{cl}(4,2) = \mathfrak{cl}^{\text{even}}(4,2) \oplus \mathfrak{cl}^{\text{odd}}(4,2)\,.
    \end{align}
    By definition, the even subalgebra $\mathfrak{cl}^{\text{even}}(4,2)$ consists of elements that can be expressed as products of an even number of generators $\Gamma^{}_\nu$, while the odd subspace $\mathfrak{cl}^{\text{odd}}(4,2)$ consists of elements formed from an odd number of generators. The mapping in question explicitly interchanges even and odd elements, sending even elements of $\mathfrak{cl}(4,2)$ to odd elements of $\mathfrak{cl}(4,1)$ and vice versa. As a result, it fails to preserve the $\mathbb{Z}_2$ grading.}

    \item{As a benefit of this $4\times 4$ matrix realization of $\mathfrak{cl}(4,1) \cong \mathfrak{cl}^{\text{even}}(4,2)$, the $\mathfrak{cl}(4,1)$ is found to have the constant value `$\mathrm{i}$', which corresponds to the eigenvalue of $\omega(4,2)$ on the chiral subspace $\mathfrak{S}_+$:\footnote{Note that $\omega(4,2)$ can be identified with the third-order Casimir invariant of $\mathfrak{so}(4,2)$.}
    \begin{align}\label{Hashomi}
        \texttt{w}(4,1)\, y :=&\, \gamma^{}_{{05}}\, \gamma^{}_{{15}}\, \gamma^{}_{{25}}\, \gamma^{}_{{35}}\, \gamma^{}_{{45}} \, y \nonumber\\[0.2cm]
        =&\,\Gamma^{}_{{05}}\, \Gamma^{}_{{15}}\, \Gamma^{}_{{25}}\, \Gamma^{}_{{35}}\, \Gamma^{}_{{45}} \, D_+\, y \nonumber\\[0.2cm]
        =&\, \Gamma^{}_{{05}}\, \Gamma^{}_{{12}}\, \Gamma^{}_{{34}}\, D_+\, y \nonumber\\[0.2cm]
        =&\, \omega(4,2)\, D_+\, y \nonumber\\[0.2cm]
        =&\, \mathrm{i} \mathbbm{1} D_+\, y\,,
    \end{align}
    where $y\in\mathrm{Span}\{ n^{}_{a^{}_\pm} \}$ and $\mathbbm{1}$ is the unit operator in the totally isotropic space $\mathfrak{S}_+$ spanned by the null vectors $n^{}_{\widehat{\mathbbm{1}\boldsymbol{k}}_+}, n^{}_{\widehat{{05}}_+}, n^{}_{\widehat{{12}}_+}, n^{}_{\widehat{{34}}_+}$.
    
    The significance of this result, and in particular the sense in which it constitutes a genuine benefit of the present realization, merits a brief clarification. In general, the Clifford pseudoscalar of $\mathfrak{cl}(4,1)$ is central but need not act as a fixed scalar on a spinor module, and its value typically depends on representation-dependent conventions. In the present realization, however, $\texttt{w}(4,1)$ is identified with the conformal volume element $\omega(4,2)$ restricted to the chiral subspace $\mathfrak{S}_+$; since $\omega(4,2)$ is a Casimir invariant of $\mathfrak{so}(4,2)$, its action is fixed, yielding $\texttt{w}(4,1) = \mathrm{i}\mathbbm{1}$. Thus, the $4\times 4$ representation of $\mathfrak{cl}(4,1)$ is automatically irreducible, and its complex structure is canonically inherited from conformal chirality rather than imposed by hand.}
    \end{enumerate}
}\end{Remark}

We proceed to displaying the $4\times 4$-matrix realization of the relevant physical operators of $\mathfrak{cl}^{\text{even}}(4,2) \supset \mathfrak{so}(4,2) \supset \mathfrak{so}(4,1)$. To begin with, the $4$-dimensional projection of the Hermitian form $B$ (see Eq. \eqref{2.4} and \eqref{2.4H}) is:
\begin{align}
    {\texttt{B}}\, x := B\, D_+\, x = \bigoplus(\mathbbm{1}_{\mathbbm{1} {5}} - \mathbbm{1}_{{2}{4}})\, D_+\, x \,, 
\end{align}
\begin{align}\label{2.9}
    {\texttt{B}}\, y := B\, D_+\, y = \bigoplus\big(\mathbbm{1}_{\widehat{\mathbbm{1}\boldsymbol{k}}_+ \widehat{{0}{5}}_+} - \mathbbm{1}_{\widehat{{1}{2}}_+ \widehat{{3}{4}}_+}\big)\, D_+\, y\,,
\end{align}
in the Majorana $x\in\mathrm{Span}\{A\}$ (see Eqs. \eqref{8uni}-\eqref{8uni'*}) and chiral $y\in\mathrm{Span}\{ n^{}_{a^{}_\pm} \}$ bases, respectively. Sect. \ref{Sect. Convention} provides the convention for $\bigoplus (\dots)$. The (diagonal) Cartan elements of the maximal compact subalgebra $\mathfrak{so}(2)\oplus\mathfrak{so}(4)$ of the $\mathfrak{so}(4,2)$ — specifically, in the chiral basis — are obtained respectively by projecting \eqref{za}, \eqref{za'}, and \eqref{za''}:
\begin{align}\label{2.9+1}
    \gamma^{}_{{05}}\, y = \Gamma^{}_{{05}}\, D_+\, y = \mathrm{i} \bigoplus\big( D_{\widehat{{1}{2}}_+ \widehat{\mathbbm{1}\boldsymbol{k}}_+ } + D_{\widehat{{3}{4}}_+ \widehat{{0}{5}}_+ }\big)\, D_+\, y \; \big(= -\mathrm{i} {\texttt{B}}\, y \big)\,,
\end{align}
\begin{align}\label{pp12}
    \gamma^{}_{{12}}\, y = \Gamma^{}_{{12}}\, D_+\, y = \mathrm{i} \bigoplus\big( D_{\widehat{\mathbbm{1}\boldsymbol{k}}_+ \widehat{{0}{5}}_+ } + D_{\widehat{{1}{2}}_+ \widehat{{3}{4}}_+ }\big)\,D_+\, y\,,
\end{align}
\begin{align}\label{pp34}
    \gamma^{}_{{34}}\, y = \Gamma^{}_{{34}}\, D_+\, y = \mathrm{i} \bigoplus\big( D_{\widehat{\mathbbm{1}\boldsymbol{k}}_+ \widehat{{0}{5}}_+ } - D_{\widehat{{1}{2}}_+ \widehat{{3}{4}}_+}\big)\,D_+ y\,,
\end{align}
reproducing $\omega(4,2) = \mathrm{i}\mathbbm{1}$ (see Eq. \eqref{OOmega}) and similar to Eq. \eqref{2.4H}: 
\begin{align}
    {\texttt{B}}\, y = \gamma^{}_{{12}}\, \gamma^{}_{{43}} \, y = \Gamma^{}_{{12}}\, \Gamma^{}_{{43}} \, D_+ y = \bigoplus\big( \mathbbm{1}_{\widehat{\mathbbm{1}\boldsymbol{k}}_+ \widehat{{0}{5}}_+} - \mathbbm{1}_{\widehat{{1}{2}}_+ \widehat{{3}{4}}_+}\big)\,D_+\, y \,.
\end{align}


\begin{Remark}{
    \textbf{(chiral coupling and the $\mathfrak{su}(2)$ commutant).} The inner product \eqref{2.6} couples the $4$-dimensional pure conformal chiral spinors $\psi^+$ in $\mathfrak{S}_+$ to their complex conjugate $\psi^- \; \big( = \overline{\psi^+} \big)$ in $\mathfrak{S}_-\; \big(= \overline{\mathfrak{S}_+} \big)$; the $4$-dimensional representations of $\mathfrak{so}(4,1) \subset \mathfrak{so}(4,2)$ in the chiral subspaces $\mathfrak{S}_+$ and $\mathfrak{S}_-$ are equivalent. The covariant transformation of charge conjugation $\psi \longmapsto \psi^c$ \eqref{1.33}, however, requires returning to the reducible $8$-dimensional spinors. In this setting, the commutant of the reducible $\Gamma^{}_{\alpha\beta}$-realization of the dS Lie algebra $\mathfrak{so}(4,1)$ is generated by the volume form $\omega(4,1)$ and $\Gamma^{}_{{5}}$. In the Majorana basis, where $x\in\mathrm{Span}\{A\}$ (see Eqs. \eqref{8uni}-\eqref{8uni'*}), these read as:
    \begin{align}
        \omega(4,1) \,x = \Gamma^{}_{{05}}\, \Gamma^{}_{{15}}\, \Gamma^{}_{{25}}\, \Gamma^{}_{{35}}\, \Gamma^{}_{{45}} \,x = \Gamma^{}_{{05}}\, \Gamma^{}_{{12}}\, \Gamma^{}_{{34}} \,x = \omega(4,2) \,x = \mathrm{i} D \,x \,,
    \end{align}
    \begin{align}
        \mbox{Eq. \eqref{Gamma542}}\;;\quad \Gamma^{}_{{5}} \,x = m^{}_{{5}}\, E \,x &= m^{}_{{0}}\, m^{}_{{1}}\, m^{}_{{2}}\, m^{}_{{3}}\, m^{}_{{4}} \,x \nonumber\\[0.2cm]
        &= \bigoplus(E_{\mathbbm{1}{0}} + E_{5{\boldsymbol{k}}} - E_{{14}} - E_{{23}}) \,x\,.
    \end{align}
    Note that $\mathrm{i}D$ and $\Gamma^{}_{{5}}$ generate a $\mathfrak{su}(2)$ Lie algebra.
}\end{Remark}

\section{Positive Energy Ladder Representations of $\mathfrak{u}(2, 2) \cong \mathfrak{su}(2,2)\, \big(\cong\mathfrak{so}(4,2) \big) \oplus \mathfrak{u}(1) \subset \mathfrak{cl}(4,2)$ Describing Massless Particles of Any Helicity}

We now present an updated treatment of the $1969$ construction \cite{MT} of irreducible massless representations of the (U$(1)$-extended) conformal group U$(2,2)$, framed within the context of the $\mathfrak{cl}(4,2)$ semispinors developed in the preceding sections. An intermediate step toward this construction was given in Ref. \cite{T19}, where the Lie algebra $\mathfrak{u}(2,2)$ was embedded in $\mathfrak{cl}(4,1)$.

The idea is to view the components of the conformal chiral semispinors $\psi^+$, denoted here by $\varphi$ and its Dirac conjugate $\widetilde{\varphi} = \varphi^\ast {\texttt{B}}$, as bosonic creation and annihilation operators.\footnote{See Ref. \cite{MT} for references to earlier work on such representations, including the 1962 book by B. Kursunoglu, which is now recognized as a key reference in the field alongside more recent sources.} More precisely, we postulate that $\varphi\in\mathfrak{S}_+$ and $\widetilde{\varphi}\in\mathfrak{S}_-$ satisfy the canonical commutation relations:
\begin{align}\label{CCR1}
    \big[\varphi^a, \varphi^{a^\prime}\big] &= 0 \,, \nonumber\\[0.2cm]
    \big[\varphi^a, \widetilde{\varphi}_{a^\prime}\big] &= \delta^a_{a^\prime}\,, \nonumber\\[0.2cm]
    \big[\widetilde{\varphi}_a, \widetilde{\varphi}_{a^\prime}\big] &= 0 \,,
\end{align}
where $\widetilde{\varphi}_{a^\prime} = \varphi^{\ast b} {\texttt{B}}_{b a^\prime}$, with $a,a^\prime,b = {\widehat{\mathbbm{1}\boldsymbol{k}}_+}, {\widehat{{05}}_+}, {\widehat{{12}}_+}, {\widehat{{34}}_+} $ (see Sect. \ref{Sect. adjusment}), and ${\texttt{B}} \; \big(= B\, D_+ \big)$ is the $4$-dimensional (reduced) Hermitian form \eqref{2.9}.  

One can construct two series of unitary irreducible representations (UIRs) of the $\mathfrak{u}(2,2)$ Lie algebra with the ``second-quantized'' ladder operators:
\begin{align}\label{fJf6}
    &J_{\mu\nu} = \frac{1}{2}\, \big(\widetilde{\varphi}\, \gamma^{}_{\mu\nu}\, \varphi\big) \,,  \nonumber\\[0.2cm]
    &\mbox{or, more explicitly,}\quad J_{\mu\nu} = \frac{1}{2}\, \left(\widetilde{\varphi}_a\, (\gamma^{}_{\mu\nu})^a_b\, \varphi^b \right) \,,
\end{align}
where $a,b = {\widehat{\mathbbm{1}\boldsymbol{k}}_+}, {\widehat{{05}}_+}, {\widehat{{12}}_+}, {\widehat{{34}}_+}$, while $\mu,\nu = {0}, {5}, {1}, {2}, {3}, {4}$, $\gamma^{}_{\mu\nu} = - \gamma^{}_{\nu\mu}$, and:
\begin{align}
    \gamma^{}_{i{5}} \, \big(=: \gamma^{}_i\big) = \Gamma^{}_{i{5}}\, D_+\,, \quad\mbox{and}\quad \gamma^{}_{{05}} \,\big(=: \gamma^{0}\big) = \Gamma^{}_{{05}}\, D_+\,,
\end{align}
and hence (as a consequence of Proposition \ref{proposition 2.1}):
\begin{align}\label{i5j5=ij}
    \gamma^{}_{ij} = \gamma^{}_{i5}\, \gamma^{}_{j5}\,, \quad\mbox{and}\quad \gamma^{}_{0i} = \gamma^{}_{05}\, \gamma^{}_{i5}\,,
\end{align}
with $i,j = 1,2,3,4$.

\begin{Remark}\label{Remark closed faithful}{    
    \textbf{(faithful chiral spinor representation of $\mathfrak{so}(4,2)$ via bilinears).} The bilinear operators $J_{\mu\nu}$ introduced in Eq. \eqref{fJf6} furnish an explicit realization of the Lie algebra $\mathfrak{su}(2,2)\cong\mathfrak{so}(4,2)$ acting on the spinor space:\footnote{It is important to note that the Lie algebra $\mathfrak{u}(2, 2) \cong \mathfrak{su}(2,2) \big(\cong\mathfrak{so}(4,2) \big) \oplus \mathfrak{u}(1) \subset \mathfrak{cl}(4,2)$ contains, in addition to the ``second-quantized'' conformal generators $J_{\mu\nu} \in \mathfrak{su}(2,2)\cong\mathfrak{so}(4,2)$, a central $\mathfrak{u}(1)$ generator, which commutes with all elements of the algebra. In the setting involving the ``second-quantized'' ladder operators, we denote this central element by ${\mathcal{C}}_1$. A more detailed discussion of this point will be given later; see Eq. \eqref{C111111} and the subsequent discussion.}
    \begin{enumerate}[leftmargin=*]
        \item{\textbf{\textit{Algebra closure and commutation relations:}} The bilinears $J_{\mu\nu} = \frac{1}{2} \widetilde{\varphi}\, \gamma^{}_{\mu\nu}\, \varphi$ form a closed representation of the Lie algebra $\mathfrak{so}(4,2)$, as they satisfy the same commutation relations as the matrix-valued generators $\frac{1}{2} \gamma^{}_{\mu\nu}$:
        \begin{align}\label{Ghalee}
            &\left[J_{\mu\nu} \;,\; J_{\lambda\sigma}\right]  \nonumber\\[0.2cm]
            &= \frac{1}{4} \left[ \widetilde{\varphi}^{}_a\, (\gamma^{}_{\mu\nu})^a_b\, \varphi^b \;,\; \widetilde{\varphi}^{}_c\, (\gamma^{}_{\lambda\sigma})^c_d\, \varphi^d \right] \nonumber\\[0.2cm]
            &= \frac{1}{4}\widetilde{\varphi}^{}_a\, (\gamma^{}_{\mu\nu})^a_b \,\underbrace{\left[\varphi^b_{} \;,\; \widetilde{\varphi}^{}_c \,(\gamma^{}_{\lambda\sigma})^c_d\right]}_{=\; \delta^b_c \,(\gamma^{}_{\lambda\sigma})^c_d} \, \varphi^d + \frac{1}{4} \underbrace{\left[\widetilde{\varphi}^{}_a\, (\gamma^{}_{\mu\nu})^a_b \;,\; \widetilde{\varphi}^{}_c \,(\gamma^{}_{\lambda\sigma})^c_d\right]}_{=\;0}\, \varphi^b_{}\, \varphi^d_{} \nonumber\\
            &\;\;\;+ \frac{1}{4} \widetilde{\varphi}^{}_c \,(\gamma^{}_{\lambda\sigma})^c_d\, \widetilde{\varphi}^{}_a\, (\gamma^{}_{\mu\nu})^a_b\, \underbrace{\left[\varphi^b_{} \;,\; \varphi^d_{}\right]}_{=\;0} + \,\frac{1}{4} \widetilde{\varphi}^{}_c\,(\gamma^{}_{\lambda\sigma})^c_d \,\underbrace{\left[\widetilde{\varphi}^{}_a\, (\gamma^{}_{\mu\nu})^a_b \;,\; \varphi^d_{}\right]}_{=\; - \delta^d_a \,(\gamma^{}_{\mu\nu})^a_b} \,\varphi^b_{} \nonumber\\
            &= \frac{1}{4} \underbrace{\widetilde{\varphi}^{}_a\, \left( (\gamma^{}_{\mu\nu})^a_b\, (\gamma^{}_{\lambda\sigma})^b_d - (\gamma^{}_{\lambda\sigma})^a_b\, (\gamma^{}_{\mu\nu})^b_d \right) \, \varphi^d_{}}_{\text{by reassigning the dummy indices}} \nonumber\\
            &= \frac{1}{4} \widetilde{\varphi}\, \left[\gamma^{}_{\mu\nu} \;,\; \gamma^{}_{\lambda\sigma}\right]\, \varphi \nonumber\\[0.2cm]
            &= \frac{1}{2}\, \widetilde{\varphi}\,\big(\eta_{\nu\lambda} \gamma^{}_{\mu\sigma} - \eta_{\mu\lambda} \gamma^{}_{\nu\sigma} - \eta_{\nu\sigma} \gamma^{}_{\mu\lambda} + \eta_{\mu\sigma} \gamma^{}_{\nu\lambda} \big) \, \varphi \nonumber\\[0.2cm]
            &= \big(\eta_{\nu\lambda} J^{}_{\mu\sigma} - \eta_{\mu\lambda} J^{}_{\nu\sigma} - \eta_{\nu\sigma} J^{}_{\mu\lambda} + \eta_{\mu\sigma} J^{}_{\nu\lambda} \big) \,,
        \end{align}
        where, again, $\eta_{\mu\nu}$ is the mostly `$+$' metric tensor in $\mathbb{R}^{4,2}$. Note that to derive the identity above, we make use of the following formula:
        \begin{align}\label{LubiA}
            [AB,CD] = A[B,C]D + [A,C]BD + CA[B,D] + C[A,D]B\,.
        \end{align}}
    
        \item{\textbf{\textit{Chirality preservation:}} The full set $\{\gamma^{}_{\mu\nu}\}$ preserves chirality, since each $\gamma^{}_{\mu\nu}$ commutes with the ``chirality matrix'' $\omega(4,2)$, that is $\big[\gamma^{}_{\mu\nu}, \omega(4,2)\big] = 0$ (see also Eqs. \eqref{chirality matrix} and \eqref{1.33}). Specifically, we have:
        \begin{align}
            \omega(4,2)\big(\gamma^{}_{\mu\nu} \varphi\big) = \gamma^{}_{\mu\nu} \big(\omega(4,2) \varphi\big) = + \mathrm{i} \big(\gamma^{}_{\mu\nu} \varphi\big) \,,
        \end{align}
        \begin{align}
            \big(\widetilde\varphi\, \gamma^{}_{\mu\nu}\big) \omega(4,2) = \big(\widetilde\varphi\,\omega(4,2)\big) \gamma^{}_{\mu\nu} = - \mathrm{i} \big(\widetilde\varphi\, \gamma^{}_{\mu\nu} \big) \,,
        \end{align}
        meaning that each chiral subspace $\mathfrak{S}_+ \; \big( =\mathrm{Span}\{\varphi\} \big)$ or $\mathfrak{S}_- \; \big( =\mathrm{Span}\{\widetilde{\varphi}\} \big)$ is invariant under the action of $\gamma^{}_{\mu\nu}$. Therefore, the bilinear operators $J_{\mu\nu} = \tfrac{1}{2} (\widetilde{\varphi}\, \gamma^{}_{\mu\nu}\, \varphi)$ are well-defined on each chiral subspace, in the sense that $\gamma^{}_{\mu\nu}$ does not mix the chiralities, yielding a closed representation of $\mathfrak{so}(4,2)$ on each subspace.}
    
        \item{\textbf{\textit{Faithfulness of the chiral spinor representation:}} The chiral spinor representation of $\mathfrak{so}(4,2)$, obtained by restricting the action of the matrices $\gamma^{}_{\mu\nu}$ to either $\mathfrak{S}_+$ or $\mathfrak{S}_-$ is not only closed under commutation but also faithful. This can be seen from the following observations: 
        \begin{enumerate} 
            \item{Each chiral spinor subspace $\mathfrak{S}_\pm$ is a $4$-dimensional complex vector space. Therefore, the algebra of all linear operators on it, $\mathrm{End}(\mathfrak{S}_\pm)$, is isomorphic to $\mathrm{Mat}(4,\mathbb{C})$ and has complex dimension $16$.}
            
            \item{The $15$ anti-symmetric matrices $\gamma^{}_{\mu\nu}$, obtained by restricting to a fixed chiral subspace, are linearly independent and satisfy the $\mathfrak{so}(4,2)$ Lie algebra commutation relations.}
            
            \item{Therefore, the image of the representation of $\mathfrak{so}(4,2)$ on either $\mathfrak{S}_+$ or $\mathfrak{S}_-$ is a $15$-dimensional subalgebra of $\text{End}(\mathfrak{S}_\pm)$. Since the kernel of the map is trivial, the representation is injective, establishing a one-to-one correspondence between the elements of $\mathfrak{so}(4,2)$ and their images in $\text{End}(\mathfrak{S}_\pm)$, and thus confirming the faithfulness of the representation.}
        \end{enumerate}
        
        \textbf{\textit{Note}:} This key fact — namely, the faithfulness of the chiral spinor representation on either $\mathfrak{S}_+$ or $\mathfrak{S}_-$ — will be illustrated explicitly in the realization presented in Eqs. \eqref{123456}-\eqref{654321}.
        
        \textbf{\textit{Note}:} If the chiral spinor representation on either $\mathfrak{S}_+$ or $\mathfrak{S}_-$ were not faithful, then distinct elements of $\mathfrak{so}(4,2)$ could act identically on the corresponding subspace. In that case, the bilinears $J_{\mu\nu} = \tfrac12(\widetilde{\varphi}\,\gamma_{\mu\nu}\,\varphi)$ would generate only a smaller algebra — either the quotient of $\mathfrak{so}(4,2)$ by the kernel of the representation, or, in certain realizations, a proper subalgebra corresponding to a restricted set of generators. In other words, part of the Lie algebra would be ``invisible'' to the spinors, and the representation would effectively ``forget'' that portion of the algebraic structure.}
    \end{enumerate}
}\end{Remark}

From now on, we focus on a series of the $\mathrm{U}(2,2)$ representations characterized by a lowest weight state, treated as a ``Fock vacuum'' $|0^{\scriptscriptstyle{\mathrm{SO}(4)}}_{}\rangle$ that transforms as a \textit{singlet}\footnote{In physics and mathematics — especially in group theory, quantum mechanics, and representation theory — a singlet means a state or object that transforms trivially under the action of a group.} under the maximal compact subgroup $\mathrm{S}\big(\mathrm{U}(2) \times \mathrm{U}(2)\big) \cong \mathrm{SO}(4)\times\mathrm{U}(1)$ of $\mathrm{SU}(2,2)$, more specifically under the $\mathrm{SO}(4)$ factor, and such that the \textit{conformal Hamiltonian} $H$\footnote{Its significance was recognized early on by Irving Segal (1918-1998), albeit in a rather speculative context \cite{S}.} (i.e., the $\mathrm{U}(1)$ generator of the subgroup) is bounded from below:
\begin{align}\label{2.16}
    H := \mathrm{i} J_{05} = \frac{1}{2} \big( \widetilde{\varphi}\, \mathrm{i}\gamma^{}_{05} \varphi \big) = \frac{1}{2} \big( \widetilde{\varphi}\, \mathrm{i}\gamma^{0}_{} \varphi \big) \geqslant 1\,, \quad \left(H - 1 \right) |0^{\scriptscriptstyle{\mathrm{SO}(4)}}_{}\rangle = 0\,.
\end{align}
The identification of $|0^{\scriptscriptstyle{\mathrm{SO}(4)}}_{}\rangle$ with a ``Fock vacuum'' presupposes singling out the creation and annihilation:
\begin{align}\label{2.17}
    \mathrm{P}_+ \, \varphi \,|0^{\scriptscriptstyle{\mathrm{SO}(4)}}_{}\rangle = 0 = \widetilde{\varphi} \;\mathrm{P}_- \,|0^{\scriptscriptstyle{\mathrm{SO}(4)}}_{}\rangle\,,
\end{align}
where $\mathrm{P}_\pm = \big(\mathbbm{1} \pm {\texttt{B}}\big)/2$.

\textbf{\textit{Note}:} A second series of $\mathrm{U}(2,2)$ representations arises, corresponding to highest-weight ladder representations, in which the conformal Hamiltonian is negative definite. This situation occurs when the roles of the projectors $\mathrm{P}_+$ and $\mathrm{P}_-$ are interchanged. However, this case will not be considered in the present work.

To make the subsequent discussion more transparent, we shall display a realization of the $4\times 4$ matrices $\gamma^{}_i$ (with $i= {1}, {2}, {3}, {4}$) in accord with the construction \eqref{2.7}. It follows from Eqs. \eqref{1.33}, \eqref{1.36}, and \eqref{S} that:
\begin{align}\label{alberait}
    C\,\overline{\Gamma^{}_\nu} = \Gamma^{}_\nu \, C\,, \quad\mbox{with}\quad C = - \mathrm{i} S = \mathrm{i} m^{}_{0}\, m^{}_{1}\, m^{}_{3}\,,
\end{align}
for $\nu = {0,5,1,2,3,4}$. On the other hand, from Eq. \eqref{Gamma542} and \eqref{Gamma013}, we have, respectively:
\begin{align}
    \Gamma^{}_{\mu^\prime} &= m^{}_{\mu^\prime}\, E \,, \quad\mbox{for}\quad \mu^\prime = {5,4,2} \,, \nonumber\\[0.2cm]
    \Gamma^{}_{\nu^\prime} &= \mathrm{i} m^{}_{\nu^\prime}\, D\,, \quad\mbox{for}\quad \nu^\prime = {0,1,3} \,.
\end{align}
By Proposition \ref{proposition S=DE} and in fact its natural extension, we obtain:
\begin{align}
    \Gamma^{}_{\mu^\prime} \,C = C\, \Gamma^{}_{\mu^\prime} \quad\Longrightarrow\quad\; \overline{\Gamma^{}_{\mu^\prime}} &= \Gamma^{}_{\mu^\prime}\,, \quad\mbox{for}\quad \mu^\prime = {5,4,2} \,, \nonumber\\[0.2cm]
    \Gamma^{}_{\nu^\prime} \,C = - C\, \Gamma^{}_{\nu^\prime} \quad\Longrightarrow\quad\; \overline{\Gamma^{}_{\nu^\prime}} &= -\Gamma^{}_{\nu^\prime}\,, \quad\mbox{for}\quad \nu^\prime = {0,1,3} \,. 
\end{align}
These results imply that the $8 \times 8$ matrices $\Gamma^{}_{{2}}$, $\Gamma^{}_{{4}}$, and $\Gamma^{}_{{5}}$ are real, whereas $\Gamma^{}_{{0}}$, $\Gamma^{}_{{1}}$, and $\Gamma^{}_{{3}}$ are purely imaginary. Correspondingly, we find:
\begin{align}\label{2.18}
    \overline{\gamma^{}_{0}} = -\gamma^{}_{0} \,, \quad\mbox{and}\quad \overline{\gamma^{}_i} = (-1)^i\, \gamma^{}_i\,,
\end{align}
where $i = 1,2,3,4$. 

\begin{proposition}\label{proposition maghz}
    \textbf{(explicit matrix realization of the $\gamma$- and $\Gamma$-matrices in the twisted quaternionic basis).} With the above in mind, and under the aforementioned convention — namely, the basis ordered as $n^{}_{\widehat{\mathbbm{1}\boldsymbol{k}}_+}, n^{}_{\widehat{05}_+}, n^{}_{\widehat{12}_+}, n^{}_{\widehat{34}_+}$ — we introduce the following $4 \times 4$ matrix representations for the $\gamma$-matrices:
    \begin{align}\label{Bdiagonal}
        \gamma^{}_{{0}} \; \big( = -\gamma^{{0}}_{} \big) := - \gamma^{}_{05} = \mathrm{i} \sigma^{}_3 \otimes \sigma^{}_0 = \mathrm{i} \begin{pmatrix} \sigma^{}_{0} & 0 \\ 0 & -\sigma^{}_{0} \end{pmatrix} = \mathrm{i}{\texttt{B}}\,, 
    \end{align}
    \begin{align}\label{Bdiagonal'}
        \gamma^{}_i :=\gamma^{}_{i5} = \begin{pmatrix} 0 & {\mathcal{E}}^{}_i \\ {\mathcal{E}}^{\ast}_i & 0 \end{pmatrix} \,, \quad i={1,2,3,4}\,,
    \end{align}
    where: 
    \begin{align}\label{E}
        {\mathcal{E}}_{{1}} &= \mathrm{i} \sigma^{}_{1} = \mathrm{i} \begin{pmatrix} 0 & 1 \\ 1 & 0 \end{pmatrix}\,, \nonumber\\[0.2cm]
        {\mathcal{E}}_{{2}} &= \mathrm{i} \sigma^{}_{2} = \mathrm{i} \begin{pmatrix} 0 & - \mathrm{i} \\ \mathrm{i} & 0 \end{pmatrix}\,, \nonumber\\[0.2cm] 
        {\mathcal{E}}_{{3}} &= \mathrm{i} \sigma^{}_{3} = \mathrm{i} \begin{pmatrix} 1 & 0 \\ 0 & -1 \end{pmatrix}\,, \nonumber\\[0.2cm]
        {\mathcal{E}}_{{4}} &= \sigma^{}_{0} = \begin{pmatrix} 1 & 0 \\ 0 & 1 \end{pmatrix} = \mathbbm{1}_2 \,,
    \end{align}
    with ${\mathcal{E}}_i$ representing a set of matrices that mirror the algebraic structure of quaternionic units but with crucial sign differences. Specifically, they obey a twisted quaternionic algebra — sometimes referred to as the ``opposite'' or ``conjugate-quaternionic'' system — in which the products of the imaginary units acquire an overall sign flip compared to the standard quaternion multiplication rules; this reversal, highlighted below using quotation marks ($\mbox{`}{-}\mbox{'}$). This subtle modification leads to the following algebraic relations:
    \begin{align}\label{mahdavi}
        ({\mathcal{E}}^{}_{1})^2 = ({\mathcal{E}}^{}_{2})^2 = ({\mathcal{E}}^{}_{3})^2 = - {\mathcal{E}}^{}_{4} \,,&\nonumber\\[0.2cm]
        {\mathcal{E}}^{}_{1}{\mathcal{E}}^{}_{2} = - {\mathcal{E}}^{}_{2} {\mathcal{E}}^{}_{1} = \mbox{`}{-}\mbox{'} {\mathcal{E}}^{}_{3}\,,& \nonumber\\[0.2cm]
        {\mathcal{E}}^{}_{2}{\mathcal{E}}^{}_{3} = - {\mathcal{E}}^{}_{3} {\mathcal{E}}^{}_{2} = \mbox{`}{-}\mbox{'}{\mathcal{E}}^{}_{1}\,,& \nonumber\\[0.2cm]
        {\mathcal{E}}^{}_{3}{\mathcal{E}}^{}_{1} = - {\mathcal{E}}^{}_{1} {\mathcal{E}}^{}_{3} = \mbox{`}{-}\mbox{'} {\mathcal{E}}^{}_{2}\,,&\nonumber\\[0.2cm]
        {\mathcal{E}}^{}_{1} {\mathcal{E}}^{}_{4} = {\mathcal{E}}^{}_{4} {\mathcal{E}}^{}_{1} = {\mathcal{E}}^{}_{1}\,, &\nonumber\\[0.2cm]
        {\mathcal{E}}^{}_{2} {\mathcal{E}}^{}_{4} = {\mathcal{E}}^{}_{4} {\mathcal{E}}^{}_{2} = {\mathcal{E}}^{}_{2}\,, &\nonumber\\[0.2cm]
        {\mathcal{E}}^{}_{3} {\mathcal{E}}^{}_{4} = {\mathcal{E}}^{}_{4} {\mathcal{E}}^{}_{3} = {\mathcal{E}}^{}_{3}\,, &\nonumber\\[0.2cm]
        {\mathcal{E}}_{1}^\ast = -{\mathcal{E}}^{}_{1}\,, \quad {\mathcal{E}}_{2}^\ast = -{\mathcal{E}}^{}_{2}\,,& \nonumber\\[0.2cm]
        {\mathcal{E}}_{3}^\ast = -{\mathcal{E}}^{}_{3}\,, \quad {\mathcal{E}}_{4}^\ast = {\mathcal{E}}^{}_{4}\,.
    \end{align}
    Correspondingly, the $8\times 8$ $\Gamma$-matrices are set to be:
    \begin{align}\label{poorshirazi}
        \Gamma^{}_0 &= \sigma^{}_1 \otimes \gamma^{0}\,, \nonumber\\[0.2cm]
        \Gamma^{}_{5} &= \epsilon^\ast \otimes \mathbbm{1}_4\,, \quad \epsilon = \mathrm{i}\sigma^{}_2\,, \quad \epsilon^\ast = 
        \begin{pmatrix}
            0 & -1 \\ 1 & 0
        \end{pmatrix} \,, \nonumber\\[0.2cm]
        \Gamma^{}_i &= \sigma^{}_1 \otimes \gamma^{}_i \,, \quad i = {1,2,3,4}\,. 
    \end{align}
\end{proposition}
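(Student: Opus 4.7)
The plan is to view Proposition \ref{proposition maghz} as essentially a verification result: the claimed matrices must (i) satisfy the defining Clifford relations of $\mathfrak{cl}(4,1)$ (for the $\gamma$'s) and of $\mathfrak{cl}(4,2)$ (for the $\Gamma$'s), (ii) reproduce the diagonal Cartan data already established in Eqs. \eqref{2.9+1}, \eqref{pp12}, \eqref{pp34}, and (iii) be consistent with the reality/Hermiticity pattern \eqref{2.2} and \eqref{2.18}. I would carry this out in four graded steps, starting from the internal block algebra and moving outward.

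First I would establish the twisted quaternionic relations \eqref{mahdavi} directly. Since $\mathcal{E}_k = \mathrm{i}\sigma_k$ for $k=1,2,3$ and $\mathcal{E}_4 = \mathbbm{1}_2$, the standard Pauli identity $\sigma_i\sigma_j = \delta_{ij}\mathbbm{1}_2 + \mathrm{i}\varepsilon_{ijk}\sigma_k$ immediately gives $\mathcal{E}_i\mathcal{E}_j = -\delta_{ij}\mathcal{E}_4 - \varepsilon_{ijk}\mathcal{E}_k$, which is precisely the ``opposite quaternion'' structure with the overall extra sign flip. The Hermiticity properties $\mathcal{E}_k^\ast = -\mathcal{E}_k$ for $k=1,2,3$ and $\mathcal{E}_4^\ast = \mathcal{E}_4$ follow from $\sigma_k^\ast = \sigma_k$ and the factor of $\mathrm{i}$.

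Second, using these block relations I would verify that the $4\times 4$ matrices in \eqref{Bdiagonal}--\eqref{Bdiagonal'} satisfy $\{\gamma_\alpha,\gamma_\beta\} = 2\eta_{\alpha\beta}$ with $\eta_{\alpha\beta}=\mathrm{diag}(-,+,+,+,+)$. The off-diagonal block form of the $\gamma_i$ gives $\gamma_i\gamma_j = \mathrm{diag}(\mathcal{E}_i\mathcal{E}_j^\ast,\, \mathcal{E}_i^\ast\mathcal{E}_j)$, from which $\gamma_i^2 = \mathbbm{1}_4$ and $\{\gamma_i,\gamma_j\}=0$ ($i\neq j$) drop out of \eqref{mahdavi}. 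The computation $\gamma_0^2=-\mathbbm{1}_4$ and $\{\gamma_0,\gamma_i\}=0$ is immediate from $\gamma_0$ being diagonal with entries $\pm\mathrm{i}$ in conjugate blocks. I would then cross-check compatibility with the chiral Cartan data: the identification $\gamma_0 = \mathrm{i}\texttt{B}$ in \eqref{Bdiagonal} agrees with Eq. \eqref{2.9+1}, and the products $\gamma_{ij}=\gamma_i\gamma_j$ reproduced via \eqref{i5j5=ij} must match the diagonal forms in \eqref{pp12}--\eqref{pp34} on the ordered basis $n_{\widehat{\mathbbm{1}\boldsymbol{k}}_+}, n_{\widehat{05}_+}, n_{\widehat{12}_+}, n_{\widehat{34}_+}$, which is a quick check on the eigenvalues.

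Third, for the $8\times 8$ matrices in \eqref{poorshirazi} I would verify the tensor-product structure block by block. The factor $\sigma_1$ in $\Gamma_0$ and $\Gamma_i$ is the chirality-swap operator exchanging $\mathfrak{S}_+\leftrightarrow\mathfrak{S}_-$, so the projection onto $\mathfrak{S}_+$ (i.e., multiplying by $D_+$) kills the off-diagonal piece in a way that precisely returns $\gamma_0$ and $\gamma_i$, consistent with $\gamma_{\alpha 5}=\Gamma_{\alpha 5}D_+$. The form $\Gamma_5 = \epsilon^\ast\otimes\mathbbm{1}_4$ is dictated by the requirement $\{\Gamma_5,\Gamma_0\}=\{\Gamma_5,\Gamma_i\}=0$ and $\Gamma_5^2=\mathbbm{1}_8$, which one checks directly using $\{\sigma_1,\epsilon^\ast\}=0$ and $(\epsilon^\ast)^2=-\mathbbm{1}_2$; together with $\eta_{55}=+1$ one recovers the full $\mathfrak{cl}(4,2)$ anticommutator. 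Finally, the reality/Hermiticity constraints \eqref{2.2} and \eqref{2.18} are verified by inspection: the matrices $\gamma_0,\gamma_1,\gamma_3$ carry an overall $\mathrm{i}$ and are thus purely imaginary, while $\gamma_2,\gamma_4$ are real, and the same pattern lifts to $\Gamma_0,\Gamma_i$ through the real factor $\sigma_1$.

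The hard part will not be any single computation but the bookkeeping of sign and ordering conventions: ensuring that the specific ordering of the chiral basis, the sign choice in $\gamma_0 = -\gamma_{05}$, the phase conventions inherited from the rescalings in \eqref{zx}--\eqref{zv}, and the identification $B = \Gamma_{12}\Gamma_{43}$ in \eqref{2.9} all cohere to produce exactly the stated forms rather than a unitarily equivalent variant. I would handle this by fixing at the outset that $\texttt{B}=\mathrm{diag}(+,+,-,-)$ in the ordered chiral basis (so that $\gamma_0=\mathrm{i}\texttt{B}=\mathrm{i}\sigma_3\otimes\sigma_0$) and then propagating this convention consistently through \eqref{i5j5=ij} and the $8\times 8$ lifts, which removes the residual sign ambiguities.
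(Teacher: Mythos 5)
Your proposal is correct and follows essentially the same route as the paper, whose own argument is explicitly presented as a consistency check: it verifies the $\mathcal{E}_i$ block algebra implicitly, confirms that the stated matrices reproduce the diagonal Cartan data \eqref{2.9+1}, \eqref{pp12}, \eqref{pp34} and the volume forms $\texttt{w}(4,1)=\mathrm{i}\mathbbm{1}_4$, $\omega(4,2)=\mathrm{i}D$, checks the $\mathfrak{cl}(4,2)$ anticommutators for the $\Gamma$'s, and confirms $\gamma^{}_{\alpha 5}=\Gamma^{}_{\alpha 5}D_+$. Your additional explicit verification of the reality pattern \eqref{2.18} and your up-front fixing of $\texttt{B}=\mathrm{diag}(+,+,-,-)$ are harmless refinements of the same strategy.
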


Here, and before proceeding any further, it is perhaps worth noting that the above realization of ${\mathcal{E}}_1, \,\dots, \,{\mathcal{E}}_4$ as the ``opposite'' or ``conjugate-quaternionic'' structure is not unique. For example, one may equally identify them with the quaternion basis elements ${\boldsymbol{\imath}},\, -{\boldsymbol{\jmath}},\, {\boldsymbol{k}},\, \mathbbm{1}$, respectively.

\begin{proof}
    {\textbf{(consistency check).} The consistency of the above setup is ensured by the fact that all algebraic relations within each realization — namely, among the $\gamma$ and among the $\Gamma$ matrices — are explicitly satisfied. For example:
    \begin{enumerate}[leftmargin=*]
        \item{Recalling that the basis used in the matrix representations \eqref{Bdiagonal} and \eqref{Bdiagonal'} is ordered as $n^{}_{\widehat{\mathbbm{1}\boldsymbol{k}}_+}, n^{}_{\widehat{05}_+}, n^{}_{\widehat{12}_+}, n^{}_{\widehat{34}_+}$, one readily verifies that the matrix realization of $\gamma^{}_{05}$ in \eqref{Bdiagonal} precisely coincides with that given in \eqref{2.9+1}.}
        
        \item{The matrix representations given in \eqref{Bdiagonal} and \eqref{Bdiagonal'} for $\gamma^{}_0 := - \gamma^{}_{05}$ and $\gamma^{}_i := \gamma^{}_{i{5}}$ $(i = {1,2,3,4})$ satisfy the commutation relations of the Lie algebra $\mathfrak{su}(2,2) \cong \mathfrak{so}(4,2)$ — more specifically, those of the subalgebra $\mathfrak{so}(4,1) \subset \mathfrak{so}(4,2)$; for instance, we have:
        \begin{align}
            \big[\gamma^{}_{{15}}, \gamma^{}_{{25}}\big] &= 2
            \begin{pmatrix}
                {\mathcal{E}}_3 & 0 \\ 0 & {\mathcal{E}}_3 
            \end{pmatrix} 
            = 2 \gamma^{}_{{12}} \,, \quad (\mbox{see Eq. \eqref{pp12}})\,, \\[0.2cm]
            \big[\gamma^{}_{{35}}, \gamma^{}_{{45}}\big] &= 2
            \begin{pmatrix}
                {\mathcal{E}}_3 & 0 \\ 0 & -{\mathcal{E}}_3
            \end{pmatrix} = 2 \gamma^{}_{{34}} \,, \quad (\mbox{see Eq. \eqref{pp34}})\,.
        \end{align}}

        \item{In view of the above, one can further verify that:
        \begin{align}
            \texttt{w}(4,1) = \gamma^{}_{{05}}\, \gamma^{}_{{12}}\, \gamma^{}_{{34}} = -\mathrm{i} 
            \begin{pmatrix} 
                \sigma^{}_{0} & 0 \\ 0 & -\sigma^{}_{0}
            \end{pmatrix} 
            \begin{pmatrix}
                {\mathcal{E}}_3 & 0 \\ 0 & {\mathcal{E}}_3 
            \end{pmatrix}
            \begin{pmatrix}
                {\mathcal{E}}_3 & 0 \\ 0 & -{\mathcal{E}}_3 
            \end{pmatrix} = \mathrm{i} \mathbbm{1}_4\,,
        \end{align}
        or similarly considering the realization given in \eqref{poorshirazi}: 
        \begin{align}
            \omega(4,2) = \Gamma^{}_{{0}}\, \Gamma^{}_{{5}}\, \Gamma^{}_{{1}}\, \dots\, \Gamma^{}_{{4}} = \mathrm{i} \sigma^{}_3 \otimes \mathbbm{1}_4 = \mathrm{i} D\,, \quad (D = \sigma^{}_3 \otimes \mathbbm{1}_4)\,,
        \end{align}
        which is compatible with \eqref{Hashomi}.}
        
        \item{The $\Gamma$-matrices \eqref{poorshirazi} satisfy the anti-commutation relations of the Clifford algebra $\mathfrak{cl}(4,2)$, namely:
        \begin{align}
            \big\{ \Gamma^{}_\mu, \Gamma^{}_\nu \big\} = 2 \eta_{\mu\nu}\mathbbm{1}\,,
        \end{align}
        where, as before, the indices run over $\mu, \nu = 0,5,1,2,3,4$, and the metric is $\eta_{\mu\nu} = \mathrm{diag}(-1, -1, +1, +1, +1, +1)$.}
        
        \item{Lastly, with the above setup, one verifies that:
        \begin{align}
            \gamma^{}_{i5} = \Gamma^{}_{i5} D_+ = \Gamma^{}_{i} \Gamma^{}_5 D_+ = \gamma^{}_i \,, \quad \gamma^{}_{05} = \Gamma^{}_{05} D_+ = \Gamma^{}_{0} \Gamma^{}_5 D_+ = \gamma^0 \,.
        \end{align}}
    \end{enumerate}
}\end{proof}

In the ${\texttt{B}}$-diagonal basis, ordered as $n^{}_{\widehat{\mathbbm{1}\boldsymbol{k}}_+}, n^{}_{\widehat{05}_+}, n^{}_{\widehat{12}_+}, n^{}_{\widehat{34}_+}$ (see Eq. \eqref{Bdiagonal}), we implement \eqref{2.17} by introducing two types of creation and annihilation operators $\mathfrak{a}$ (resp. $\mathfrak{a}^\ast$) and $\mathfrak{b}$ (resp. $\mathfrak{b}^\ast$):
\begin{align}
    \varphi = 
    \begin{pmatrix}
        \mathfrak{a} \\ \mathfrak{b}^\ast         
    \end{pmatrix}\,, \quad\mbox{and}\quad \widetilde{\varphi} = \varphi^\ast {\texttt{B}} = 
    \begin{pmatrix}
        \mathfrak{a}^\ast & -\mathfrak{b}         
    \end{pmatrix}\,,
\end{align}
where:\footnote{It is clear that the index $A$ here must be distinguished from the one previously used to label the set of split-octonion units (see Eqa. \eqref{8uni}-\eqref{8uni'*}).}
\begin{align}
    \mathfrak{a} = \big(\mathfrak{a}^A_{} \;;\; A=1,2 \big)\,, \quad\mbox{and}\quad \mathfrak{a}^\ast = \big(\mathfrak{a}^\ast_A \;;\; A=1,2 \big) \,,
\end{align}
\begin{align}
    \mathfrak{b} = \big(\mathfrak{b}^B_{} \;;\; B=1,2 \big) \,, \quad\mbox{and}\quad \mathfrak{b}^\ast = \big(\mathfrak{b}^\ast_B \;;\; B=1,2 \big) \,,
\end{align}
such that the non-vanishing canonical commutation relations, consistent with those in Eq. \eqref{CCR1}, are:
\begin{align}\label{CRoperators}
    \left[\mathfrak{a}^{A^\prime}_{} , \mathfrak{a}^\ast_A \right] = \delta^{A^\prime}_A\,, \quad \left[\mathfrak{b}^{B^\prime}_{} , \mathfrak{b}^\ast_B \right] = \delta^{B^\prime}_B\,.
\end{align}
Moreover, consistent with Eq. \eqref{2.17}, we have:
\begin{align}\label{Vacdef}
    \mathfrak{a}^A |0^{\scriptscriptstyle{\mathrm{SO}(4)}}_{}\rangle = \mathfrak{b}^B |0^{\scriptscriptstyle{\mathrm{SO}(4)}}_{}\rangle = 0 \,, \quad \langle 0^{\scriptscriptstyle{\mathrm{SO}(4)}}_{} |\mathfrak{a}^\ast_A = \langle 0^{\scriptscriptstyle{\mathrm{SO}(4)}}_{} |\mathfrak{b}^\ast_B = 0\,.
\end{align}

On this basis, the conformal Hamiltonian $H$ \eqref{2.16} — namely, the $\mathrm{U}(1)$ generator of the maximal compact subgroup $\mathrm{S}\big(\mathrm{U}(2)\times \mathrm{U}(2)\big) \cong \mathrm{SO}(4)\times \mathrm{U}(1)$ of $\mathrm{SU}(2,2)$ — takes the explicit form:
\begin{align}\label{2.16'}
    H = \mathrm{i} J_{05} &= \frac{1}{2} \big( \widetilde{\varphi}\, \mathrm{i}\gamma^{}_{05}\, \varphi \big)  \nonumber\\[0.2cm]
    &= \frac{1}{2} \big( \widetilde{\varphi}\, \mathrm{i}\gamma^{0}_{}\, \varphi \big)  \nonumber\\[0.2cm]
    &= \frac{1}{2} \big({\mathfrak{a}}^\ast \mathfrak{a} + \mathfrak{b}\, \mathfrak{b^\ast}\big) \nonumber\\[0.2cm]
    &= \frac{1}{2} \big( \mathfrak{a}^\ast_1\, \mathfrak{a}^1 + \mathfrak{a}^\ast_2\, \mathfrak{a}^2 + \mathfrak{b}^1\, \mathfrak{b}^\ast_1 + \mathfrak{b}^2\, \mathfrak{b}^\ast_2 \big)\,,
\end{align}
which explicitly satisfies $\left(H - 1 \right) |0^{\scriptscriptstyle{\mathrm{SO}(4)}}_{}\rangle = 0$.

For an irreducible representation (IR) of $\mathrm{U}(2,2)$, the Lie algebra $\mathfrak{u}(2, 2) \cong \mathfrak{su}(2,2) \big(\cong\mathfrak{so}(4,2) \big) \oplus \mathfrak{u}(1)$ includes, in addition to the conformal generators $J_{\mu\nu} \in \mathfrak{su}(2,2) \cong \mathfrak{so}(4,2)$, a central $\mathfrak{u}(1)$ generator which commutes with all elements of the algebra. We denote this central element by ${\mathcal{C}}_1$, which takes the explicit form:\footnote{At this point, it is important to emphasize a subtle but significant detail:
\begin{align*}
    \big[\mathfrak{b}, \mathfrak{b}^\ast\big] = \big[\mathfrak{b}^1 + \mathfrak{b}^2, \mathfrak{b}^\ast_1 + \mathfrak{b}^\ast_2\big] = \big[\mathfrak{b}^1 , \mathfrak{b}^\ast_1\big] + \big[\mathfrak{b}^2 , \mathfrak{b}^\ast_2\big] = 2\,.
\end{align*}}  
\begin{align}\label{C111111}
    {\mathcal{C}}_1 = \widetilde{\varphi} \, \varphi = {\mathfrak{a}}^\ast \mathfrak{a} - \mathfrak{b}\, \mathfrak{b^\ast} = {\mathfrak{a}}^\ast \mathfrak{a} - (\mathfrak{b}^\ast \mathfrak{b} + 2) \,.
\end{align}
Following steps analogous to those used in deriving Eq. \eqref{Ghalee}, one readily verifies that: 
\begin{align}
    \big[J_{\mu\nu} , {\mathcal{C}}_1\big] = 0\,,
\end{align}
for all $\mu,\nu = {0,5,1,2,3,4}$. It is important to emphasize that the ``second-quantized'' ladder operator $\mathcal{C}_1$ (which belongs to $\mathfrak{u}(2,2)$ but not to $\mathfrak{su}(2,2)$) serves as the representative of the pseudoscalar (volume form) of $E = M(4,2)$, as defined in Eq. \eqref{U(1)generator}, within this framework.

The normal ordering in \eqref{C111111} ensures that ${\mathcal{C}}_1$ is a Hermitian operator with a well-defined spectrum. Since the bosonic operators $N_{\mathfrak{a}} := \mathfrak{a}^\ast \mathfrak{a}$ and $N_{\mathfrak{b}} := \mathfrak{b}^\ast \mathfrak{b}$, which act on the Fock space, are number operators with non-negative integer eigenvalues, their difference has integer eigenvalues in the Fock space realization. Thus, we can write:
\begin{align}
    {\mathfrak{a}}^\ast \mathfrak{a} - \mathfrak{b}^\ast \mathfrak{b} =: 2\lambda\,, \quad\mbox{with}\quad \lambda \in \frac{1}{2}\mathbb{Z} = \Big\{0, \pm \tfrac{1}{2}, \pm 1, \dots \Big\}\,.
\end{align}
Substituting into Eq. \eqref{C111111}, we obtain:
\begin{align}\label{C1}
    {\mathcal{C}}_1 = \widetilde{\varphi} \, \varphi = {\mathfrak{a}}^\ast \mathfrak{a} - \mathfrak{b}\, \mathfrak{b^\ast} = {\mathfrak{a}}^\ast \mathfrak{a} - \mathfrak{b}^\ast \mathfrak{b} - 2 =: 2 (\lambda - 1) \;\in\;\mathbb{Z}\,.
\end{align}
Notably, this quantized parameter $\lambda$ plays a fundamental role in classifying the UPEIRs of $\mathfrak{u}(2,2)$, and will later be identified as the ``hilicity'' — the conformal analog of helicity — characterizing massless conformal particles in $4$ dimensions \cite{MT}.

\begin{Remark}{
    \textbf{(single-valuedness of the $\mathrm{U}(2,2)$ representation and integer eigenvalues of ${\mathcal{C}}_1$).} The fact that ${\mathcal{C}}_1$ assumes a fixed integer value in any IR of $\mathfrak{u}(2,2)$ reflects a deeper principle rooted in the representation theory of Lie groups, as outlined in the following steps:
    \begin{enumerate}[leftmargin=*]
        \item{The generator ${\mathcal{C}}_1$ lies in the center of $\mathfrak{u}(2,2)$, as it commutes with all elements of the algebra.}
        
        \item{By Schur's lemma, in any IR of $\mathfrak{u}(2,2)$, such a central operator must act as a scalar multiple of the identity.}
        
        \item{Therefore, in an IR of $\mathfrak{u}(2,2)$, all states are eigenstates of ${\mathcal{C}}_1$ with a fixed eigenvalue.}
        
        \item{To ensure that this representation lifts to a single-valued representation $U$ of the Lie group $\mathrm{U}(2,2)$ (and not just of its algebra), the exponential of the central generator must be well-defined and single-valued:\footnote{The presence of `$\mathrm{i}$' in the exponent, together with the Hermiticity of $\mathcal{C}_1$, ensures that the representation $U(\theta)$ is unitary; $U(\theta) \big(U(\theta)\big)^\ast = \mathbbm{1}$.}
        \begin{align}
            U(\theta) = e^{\mathrm{i} \theta {\mathcal{C}}_1}\,, \quad\mbox{with}\quad U(\theta + 2\pi) = U(\theta)\,.
        \end{align}
        This periodicity condition requires:
        \begin{align}
            e^{2\mathrm{i}\pi {\mathcal{C}}_1} = 1 \quad \Longrightarrow \quad {\mathcal{C}}_1 \in \mathbb{Z}\,.
        \end{align}
        Hence, in any IR of $\mathfrak{u}(2,2)$, the eigenvalue of the central generator ${\mathcal{C}}_1$ must be an integer.}
    \end{enumerate}
}\end{Remark}

\begin{proposition}\label{TheormCasimirs}
    \textbf{(polynomial dependence of conformal and dS Casimir invariants on the central $\mathfrak{u}(1)$ parameter).} The higher-order Casimir invariants of the ladder-type UPEIRs of $\mathfrak{su}(2,2)$ are polynomial functions of the parameter $\lambda$, which labels the eigenvalue of the central $\mathfrak{u}(1)$ generator $\mathcal{C}_1$ of $\mathfrak{u}(2,2)$. It turns out that this is also true for the dS subalgebra $\mathfrak{sp}(2,2) \subset \mathfrak{su}(2,2)$. In fact, invariants of $\mathfrak{sp}(2,2)$ and $\mathfrak{su}(2,2)$ are both integer multiplies of:
    \begin{align}\label{C2base}
        J^{\alpha{5}} J_{\alpha{5}} = \frac{1}{4} ({\mathcal{C}}_1)^2 + {\mathcal{C}}_1 &= (\lambda - 1)^2 + 2(\lambda - 1)  \nonumber\\[0.2cm]
        &= (\lambda-1)(\lambda+1) = \lambda^2 - 1 \,,
    \end{align}
    such that:
    \begin{align}\label{C2dS}
        {\mathcal{C}}^{\mathfrak{sp}(2,2)}_2 = -\frac{1}{2} J^{\alpha\beta}J_{\alpha\beta} = -2 (\lambda^2 - 1)\,,
    \end{align}
    and:
    \begin{align}\label{C2conform}        
        {\mathcal{C}}^{\mathfrak{su}(2,2)}_2 = -\frac{1}{2} J^{\mu\nu}J_{\mu\nu} = \underbrace{-\frac{1}{2} J^{\alpha\beta}J_{\alpha\beta}}_{=\, {\mathcal{C}}^{\mathfrak{sp}(2,2)}_2} - J^{\alpha{5}} J_{\alpha{5}} = -3 (\lambda^2 - 1)\,,
    \end{align}
    where $\alpha,\beta = {0,1,2,3,4}$ and $\mu,\nu = {0,5,1,2,3,4}$.
\end{proposition}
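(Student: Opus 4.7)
The plan is to prove the three identities via a combination of direct oscillator-algebra manipulation and Schur's lemma, the central step being the operator identity \eqref{C2base} on the bosonic Fock space generated by the creation and annihilation operators of \eqref{CRoperators}. The remaining identities follow once \eqref{C2base} is in hand together with a second independent Casimir computation: antisymmetry of $J_{\mu\nu}$ combined with the index split $\{0,5,1,2,3,4\} = \{0,1,2,3,4\}\sqcup\{5\}$ gives $J^{\mu\nu}J_{\mu\nu} = J^{\alpha\beta}J_{\alpha\beta} + 2J^{\alpha 5}J_{\alpha 5}$, so that ${\mathcal{C}}_2^{\mathfrak{su}(2,2)} = {\mathcal{C}}_2^{\mathfrak{sp}(2,2)} - J^{\alpha 5}J_{\alpha 5}$, and substituting $\mathcal{C}_1 = 2(\lambda-1)$ from \eqref{C1} into $\tfrac{1}{4}\mathcal{C}_1^2 + \mathcal{C}_1$ yields $(\lambda-1)^2 + 2(\lambda-1) = \lambda^2-1$, reproducing the right-hand side of \eqref{C2base}.

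To establish \eqref{C2base}, I would expand $4J^{\alpha 5}J_{\alpha 5} = \widetilde{\varphi}_a(\gamma^\alpha)^a_b\varphi^b\,\widetilde{\varphi}_c(\gamma_\alpha)^c_d\varphi^d$ and commute $\varphi^b$ past $\widetilde{\varphi}_c$ via $[\varphi^b,\widetilde{\varphi}_c]=\delta^b_c$ from \eqref{CCR1}. This splits the product into a contact contribution $\widetilde{\varphi}(\gamma^\alpha\gamma_\alpha)\varphi$ and a normal-ordered quartic piece. The projected matrices $\gamma_\alpha := \gamma_{\alpha 5}$ defined in \eqref{2.7} obey $\{\gamma_\alpha,\gamma_\beta\} = 2\eta_{\alpha\beta}\mathbbm{1}_4$ on $\mathfrak{S}_+$, hence $\gamma^\alpha\gamma_\alpha = \tfrac{1}{2}\eta^{\alpha\beta}\{\gamma_\alpha,\gamma_\beta\} = 5\,\mathbbm{1}_4$, and the contact term collapses to $5\,\mathcal{C}_1$. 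For the quartic piece, I would invoke a Fierz completeness identity on $\mathrm{End}(\mathfrak{S}_+)\cong\mathbb{C}[4]$ in the $16$-element basis $\{\mathbbm{1},\gamma_\alpha,\gamma_{\alpha\beta}\}$ afforded by $\mathfrak{cl}^{\text{even}}(4,2)\cong\mathfrak{cl}(4,1)$. Since $\varphi^b\varphi^d$ and $\widetilde{\varphi}_a\widetilde{\varphi}_c$ are each symmetric under exchange of their indices, the antisymmetric bivector part of $(\gamma^\alpha)^a_b(\gamma_\alpha)^c_d$ is projected out, leaving only the scalar contractions proportional to $\delta^a_b\delta^c_d$ and $\delta^a_d\delta^c_b$; reassembling these into $(\widetilde{\varphi}\,\varphi)^2 = \mathcal{C}_1^2$ while carefully tracking the CCR-induced lower-order contact terms yields the quartic piece $\mathcal{C}_1^2 - \mathcal{C}_1$, so that $4J^{\alpha 5}J_{\alpha 5} = \mathcal{C}_1^2 + 4\mathcal{C}_1$ as required. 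The same recipe, applied to $J^{\alpha\beta}J_{\alpha\beta}$ with the bivector matrices $\gamma_{\alpha\beta}$ (cf.~\eqref{i5j5=ij}), then delivers \eqref{C2dS}.

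A conceptually cleaner alternative, which I would use as a cross-check (or as the main route if the Fierz bookkeeping proves too intricate), is to invoke Schur's lemma: each of ${\mathcal{C}}^{\mathfrak{sp}(2,2)}_2$ and ${\mathcal{C}}^{\mathfrak{su}(2,2)}_2$ acts as a scalar on every UPEIR labeled by $\lambda$ (by Remark~\ref{Remark closed faithful} the representation remains irreducible on restriction to $\mathfrak{sp}(2,2)$). The scalars are then fixed by evaluation on the lowest-weight state $|\omega^\lambda\rangle$, which for $\lambda\geq 0$ can be taken as a multiple of $(\mathfrak{a}^\ast_1)^{2\lambda}|0^{\scriptscriptstyle{\mathrm{SO}(4)}}_{}\rangle$ and analogously with $\mathfrak{b}^\ast$-oscillators for $\lambda\leq 0$. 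Since all lowering operators of $\mathfrak{su}(2,2)$ annihilate $|\omega^\lambda\rangle$, the evaluation collapses to number-operator eigenvalues on explicit Fock states. The main obstacle in either route is the same: verifying that all normal-ordered quartic terms reassemble cleanly into polynomials in the single central quantity $\mathcal{C}_1$ with no residual operator dependence; this is guaranteed a priori by centrality and Schur's lemma, but the direct Fierz route requires careful sign-tracking across the $\mathfrak{cl}(4,1)$ completeness relation and the canonical commutation relations.
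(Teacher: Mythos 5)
Your proposal is essentially correct and, at its computational core, follows the same route as the paper: both reduce everything to the oscillator bilinears $J_{\mu\nu}=\tfrac12\,\widetilde{\varphi}\,\gamma_{\mu\nu}\varphi$, expand the doubled gamma-matrix tensor in invariant spinor structures, and normal-order with the canonical commutation relations \eqref{CCR1} to land on polynomials in $\mathcal{C}_1=2(\lambda-1)$. The organizational differences are real but modest. The paper computes all three contractions independently, each time positing the two-delta ansatz \eqref{sumover} and fixing both coefficients from the two trace conditions (via \eqref{Golamam}, giving $60$, $40$ and $20$ for the conformal, dS and $\alpha5$ sums), whereas you compute only $J^{\alpha5}J_{\alpha5}$ and $J^{\alpha\beta}J_{\alpha\beta}$ directly and obtain \eqref{C2conform} from the split $J^{\mu\nu}J_{\mu\nu}=J^{\alpha\beta}J_{\alpha\beta}+2J^{\alpha5}J_{\alpha5}$ — a small economy the paper uses only implicitly. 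Your order of operations (commute first, extracting the contact term $\widetilde{\varphi}\,\gamma^{\alpha}\gamma_{\alpha}\,\varphi=5\,\mathcal{C}_1$, then Fierz the residual quartic) is equivalent to the paper's (Fierz first, then commute), and your bookkeeping $4J^{\alpha5}J_{\alpha5}=(\mathcal{C}_1^2-\mathcal{C}_1)+5\,\mathcal{C}_1$ reproduces \eqref{C2base}. The Schur-plus-lowest-weight evaluation you offer as a cross-check does not appear in the paper's proof; it is legitimate for $\mathfrak{su}(2,2)$, but applying it to $\mathcal{C}_2^{\mathfrak{sp}(2,2)}$ presupposes irreducibility of the dS restriction, which Remark~\ref{Remark closed faithful} does not supply (it concerns closure and faithfulness of the $4\times4$ spinor realization, not the Fock-space UPEIR). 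The direct computation sidesteps this by exhibiting the dS Casimir as a polynomial in the central $\mathcal{C}_1$.

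Two steps need tightening. First, you assert that the symmetric contraction leaves only the structures $\delta^a_b\delta^c_d$ and $\delta^a_d\delta^c_b$ but never fix their coefficients; both trace conditions (contract $a=d,\,b=c$ and $a=b,\,c=d$) are needed, after which the quartic piece is $(\mathcal{A}''+\mathcal{B}'')(\mathcal{C}_1^2-\mathcal{C}_1)=\mathcal{C}_1^2-\mathcal{C}_1$ since $\mathcal{A}''+\mathcal{B}''=\tfrac43-\tfrac13=1$. Second, the claim that the bivector channel $(\gamma^{\beta\rho})^a_d(\gamma_{\beta\rho})^c_b$ of the $16$-element Fierz basis is ``projected out'' by the exchange symmetry of $\varphi^b\varphi^d$ and $\widetilde{\varphi}_a\widetilde{\varphi}_c$ is the one genuinely delicate point: determining the symmetry of that structure in the relevant index positions requires precisely the charge-conjugation data that your argument (and, to be fair, the paper's own Lorentz-invariance justification of \eqref{sumover}) sets aside. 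Either justify the two-structure ansatz by counting invariants in the doubly symmetrized channel, or rely on the Schur/lowest-weight evaluation — or on consistency with Eq.~\eqref{Casimir rank 2 massless} — to confirm the coefficients.
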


This proposition reflects a general feature of IRs, where Casimir operators act as scalars determined by the representation labels. As already pointed out, physically, $\lambda$ plays the role of a helicity-like quantum number, so these Casimir eigenvalues encode invariant properties of massless conformal fields classified by helicity $\lambda$.

\begin{proof}{
    {\textbf{— Step I (proof of Eq. \eqref{C2conform}).} Starting with Eq. \eqref{C2conform}, we observe that the proof rests on the following essential tensor identity:
    \begin{align}\label{sumover}
        (\gamma^{}_{\mu\nu})^a_b\, (\gamma^{\mu\nu})^c_d = {\mathcal{A}}\; \delta^a_d\, \delta^c_b + {\mathcal{B}}\; \delta^a_b\, \delta^c_d \,,
    \end{align}
    where ${\mathcal{A}}$ and ${\mathcal{B}}$ are constants to be determined and $\gamma^{}_{\mu\nu} = \frac{1}{2}\big[\gamma^{}_\mu, \gamma^{}_\nu\big]$. Let us first justify the validity of this ansatz. 

    In even-dimensional spaces of dimension $\texttt{d}$ (as in our case, $\texttt{d}=6$), the Dirac spinor representation has dimension $N = 2^{\texttt{d}/2}$, so the space of all $N \times N$ matrices acting on spinors has dimension $N^2 = 2^{\texttt{d}}$. The corresponding Clifford algebra provides a natural basis for this matrix space. Strictly speaking, it consists of the anti-symmetrized products of gamma matrices:
    \begin{align}\label{899921}
        \Big\{ \mathbbm{1},\, \gamma^{}_\mu,\, \gamma^{}_{\mu\nu},\, \gamma^{}_{\mu\nu\rho},\, \dots,\, \gamma^{}_{\mu_1 \dots \mu_\texttt{d}} \Big\}\,,
    \end{align}
    which are linearly independent and span the entire space of operators on spinor space; one observes that the number of independent elements in this set is $\sum_{k=0}^{\texttt{d}} \binom{\texttt{d}}{k} = 2^{\texttt{d}} = N^2$, exactly matching the dimension of the matrix space. Consequently, any matrix $M$ acting on spinors can be expressed as a linear combination of the corresponding Clifford basis elements:
    \begin{align}\label{M=gamma}
        M = a\,\mathbbm{1} + b^\mu \gamma_\mu + c^{\mu\nu} \gamma_{\mu\nu} + \dots\,,
    \end{align}
    for some coefficients $a, b^\mu, c^{\mu\nu}, \dots \in \mathbb{C}$. 
    
    With these observations in mind, let us now return to the expression:
    \begin{align}
        (\gamma^{}_{\mu\nu})^a_b\, (\gamma^{\mu\nu})^c_d\,,
    \end{align}
    which, by the preceding discussion, must be expressible as a linear combination of the basis elements of the corresponding Clifford algebra, that is, $\mathfrak{cl}(4,2)$. Because the Lorentz indices $\mu, \nu$ are fully contracted, the resulting object must be a Lorentz scalar. This rules out any terms involving uncontracted gamma matrices such as $\gamma^{}_\mu$, $\gamma^{}_{\mu\nu\rho}$, etc., which would transform as Lorentz tensors and hence violate the scalar nature of the left-hand side. On the other hand, the spinor-index structure of the right-hand side must also be built from Lorentz-invariant tensors in spinor space. In the absence of additional invariant structures (e.g., a charge-conjugation matrix $C$ or chirality projectors; as in our setup\footnote{Recall that we work with IRs of the conformal algebra realized on the chiral subspace $\mathfrak{S}_+$.}), the only such invariant tensors are Kronecker deltas:
    \begin{align}
        \delta^a_b\,, \quad \delta^c_d\,, \quad \delta^a_d\,, \quad \delta^c_b \,.
    \end{align}
    Out of these, the combinations $\delta^a_d\, \delta^c_b$ and $\delta^a_b\, \delta^c_d$ are linearly independent and span the space of Lorentz-invariant rank-$(2,2)$ spinor tensors.

    Finally, note that the left-hand side is symmetric under the simultaneous exchange $(a \leftrightarrow c,\ b \leftrightarrow d)$. This symmetry must be preserved by the right-hand side. Indeed, both $\delta^a_d \,\delta^c_b$ and $\delta^a_b \,\delta^c_d$ are symmetric under this exchange:
    \begin{align}
        \delta^a_d\, \delta^c_b \;\;\xrightarrow{(a \leftrightarrow c,\, b \leftrightarrow d)}\;\; \delta^c_b \,\delta^a_d = \delta^a_d \, \delta^c_b\,, 
    \end{align}
    \begin{align}
        \delta^a_b\, \delta^c_d \;\;\xrightarrow{(a \leftrightarrow c,\, b \leftrightarrow d)}\;\; \delta^c_d \, \delta^a_b = \delta^a_b \,\delta^c_d \,.
    \end{align}
    This confirms that no additional index structures are needed, and the ansatz used in the expansion \eqref{sumover} is the most general form consistent with Lorentz invariance and spinor symmetry.

    With the expansion \eqref{sumover} established, we now aim to determine the constants $\mathcal{A}$ and $\mathcal{B}$. To this end, we make use of standard trace identities of the Clifford algebra in $\texttt{d}$ dimensions (applicable to both odd and even $\texttt{d}$; here $\texttt{d}=6$). The general identity in $\texttt{d}$ dimensions, for spinors of dimension $N$ (here $N=4$\footnote{Since we focus exclusively on the IR of $\mathfrak{so}(4,2)$ realized on the chiral subspace $\mathfrak{S}_+$.}), reads:    
    \begin{align}\label{Golamam}
       \mathrm{Tr}\big(\gamma^{}_{\mu\nu} \gamma^{\mu\nu}\big) = N \frac{\texttt{d}(\texttt{d} - 1)}{2}\,, \quad\mbox{with}\quad \mu<\nu\,,
    \end{align}
    where for $\mathfrak{so}(4,2)$, with $\texttt{d} = 6$ and $N = 4$, is:
    \begin{align}
        \mathrm{Tr}\big(\gamma^{}_{\mu\nu} \gamma^{\mu\nu}\big) = 4 \frac{6 \times 5}{2} = 60 \,, \quad\mbox{with}\quad 0\leq\mu<\nu\leq 5 \,.   
    \end{align}
    
    Keeping this in mind, let us proceed by taking two traces of \eqref{sumover}:
    \begin{enumerate}[leftmargin=*]
        \item{Contract $a = d$, $b = c$:
            \begin{align}
                60 = 16{\mathcal{A}} + 4{\mathcal{B}} \,.
            \end{align}}
        \item{Contract $a = b$, $c = d$:
            \begin{align}
                0 = 4{\mathcal{A}} + 16{\mathcal{B}}\,,
            \end{align}
            since $\mathrm{Tr}\big(\gamma^{}_{\mu\nu}\big) = \mathrm{Tr}\big(\tfrac{1}{2}[\gamma^{}_\mu, \gamma^{}_\nu]\big) = 0$ (see Eqs. \eqref{xy=yx} and \eqref{x+y=x+y}).}
    \end{enumerate}
    Accordingly, we obtain ${\mathcal{A}} = 4$ and $\mathcal{B} = -1$, and hence:
    \begin{align}
        (\gamma^{}_{\mu\nu})^a_b\, (\gamma^{\mu\nu})^c_d = 4\, \delta^a_d\, \delta^c_b - \delta^a_b\, \delta^c_d\,.
    \end{align}

    Equipped with the above expansion, together with Eqs. \eqref{CCR1} and \eqref{fJf6}, we are now in a position to establish Eq. \eqref{C2conform} rigorously:
    \begin{align}\label{C2conform'}        
        &-\,{\mathcal{C}}^{\mathfrak{su}(2,2)}_2  \nonumber\\[0.2cm]
        &= \frac{1}{2} J^{\mu\nu}J_{\mu\nu} = \sum_{{0} \leq\mu<\nu\leq {5}} J^{\mu\nu}J_{\mu\nu} \nonumber\\[0.2cm]       
        &= \frac{1}{4} \big(\widetilde{\varphi}\, \gamma^{\mu\nu} \, \varphi\big)\big(\widetilde{\varphi} \,\gamma^{}_{\mu\nu}\, \varphi\big) \nonumber\\[0.2cm]
        &= \frac{1}{4} \left(\widetilde{\varphi}_a\, (\gamma^{\mu\nu})^a_b \,\varphi^b \right) \left(\widetilde{\varphi}_c \, (\gamma^{}_{\mu\nu})^c_d \, \varphi^d \right) \nonumber\\[0.2cm]
        &= \left(\widetilde{\varphi}_a\, \varphi^b \right) \left(\widetilde{\varphi}_c \, \varphi^d \right) \left(\delta^a_d \,\delta^c_b - \frac{1}{4} \delta^a_b \,\delta^c_d\right) \nonumber\\[0.2cm]
        &= \left(\widetilde{\varphi}_a\, \varphi^b \,\widetilde{\varphi}_b \, \varphi^a \right) - \frac{1}{4} \left(\widetilde{\varphi}_a\, \varphi^a \right) \left(\widetilde{\varphi}_c \, \varphi^c \right)\nonumber\\[0.2cm]
        &= \left(\widetilde{\varphi}_a \, \widetilde{\varphi}_b \,\varphi^b \, \varphi^a\right) + \widetilde{\varphi}_a \underbrace{\left[\varphi^b, \widetilde{\varphi}_b\right]}_{=\, \delta^b_b =\, 4} \varphi^a - \frac{1}{4} \left(\widetilde{\varphi}_a\, \varphi^a \right) \left(\widetilde{\varphi}_c \, \varphi^c \right) \nonumber\\
        &= \left(\widetilde{\varphi}_b \,\varphi^b \right) \left(\widetilde{\varphi}_a  \varphi^a\right) + \widetilde{\varphi}_b \underbrace{\left[\widetilde{\varphi}_a , \varphi^b\right]}_{=\, -\delta^b_a } \varphi^a + 4 \left(\widetilde{\varphi}_a \, \varphi^a\right) - \frac{1}{4} \left(\widetilde{\varphi}_a\, \varphi^a \right) \left(\widetilde{\varphi}_c \, \varphi^c \right) \nonumber\\  
        &= \frac{3}{4} \left(\widetilde{\varphi}_b \,\varphi^b \right) \left(\widetilde{\varphi}_a  \varphi^a\right) + 3 \left(\widetilde{\varphi}_a  \varphi^a\right) \nonumber\\[0.2cm]
        &= 3 \left(\frac{1}{4}({\mathcal{C}}_1)^2 + {\mathcal{C}}_1\right) = 3 (\lambda^2 - 1)\,.
    \end{align}

    \textbf{— Step II (proof of Eq. \eqref{C2dS}).} Here, we turn our attention to the second identity stated in the proposition, namely Eq. \eqref{C2dS}. To begin, and in analogy with expression \eqref{fJf6}, we express $J_{\alpha\beta}$ in terms of spinor bilinears as:
    \begin{align}\label{fJf5}
        J_{\alpha\beta} = \frac{1}{2} \big( \widetilde{\varphi}\, \gamma^{}_{\alpha\beta} \, \varphi \big)\,, \quad \gamma^{}_{\alpha\beta} = \frac{1}{2}\big[\gamma^{}_\alpha, \gamma^{}_\beta\big] \,.
    \end{align}
    Under this identification, the generators $J_{\alpha\beta}$ form a closed and complete set for the Lie algebra $\mathfrak{sp}(2,2) \cong \mathfrak{so}(4,1)$, as can be demonstrated using arguments analogous to those in Remark \ref{Remark closed faithful}.

    It is important to emphasize that in odd-dimensional spaces of dimension $\texttt{d}$ (for example, $\texttt{d}=5$, relevant for the dS case), the Clifford algebra, just as in the even-dimensional case discussed earlier, provides a complete basis for $N\times N$ matrices acting on spinor space, where $N = 2^{\lfloor \texttt{d}/2 \rfloor}$. This basis consists of the anti-symmetrized products of gamma matrices (see Eq. \eqref{899921}). Consequently, any spinor-valued matrix can be expanded in terms of this basis. However, unlike in even dimensions, there is no chirality operator in odd dimensions, so the corresponding representation is irreducible by construction.    
 
    Following the same reasoning as in the previous step, we perform the expansion:    
    \begin{align}\label{sumoverdS}
        (\gamma^{}_{\alpha\beta})^a_b\, (\gamma^{\alpha\beta})^c_d = {\mathcal{A}}^\prime\; \delta^a_d\, \delta^c_b + {\mathcal{B}}^\prime\; \delta^a_b\, \delta^c_d \,,
    \end{align}
    where, again, ${\mathcal{A}}^\prime$ and ${\mathcal{B}}^\prime$ are constants to be determined. We take two traces of \eqref{sumoverdS}:
    \begin{enumerate}[leftmargin=*]
        \item{Contract $a = d$, $b = c$:
            \begin{align}
                40 = 16{\mathcal{A}}^\prime + 4{\mathcal{B}}^\prime \,,
            \end{align}
            where we have used Eq. \eqref{Golamam}, with $\texttt{d}=5$ and $N=4$.}
        \item{Contract $a = b$, $c = d$:
            \begin{align}
                0 = 4{\mathcal{A}}^\prime + 16{\mathcal{B}}^\prime\,.
            \end{align}}
    \end{enumerate}
    We find ${\mathcal{A}}^\prime = \frac{8}{3}$ and $\mathcal{B}^\prime = -\frac{2}{3}$, and hence:
    \begin{align}
        (\gamma^{}_{\mu\nu})^a_b\, (\gamma^{\mu\nu})^c_d = \frac{8}{3}\, \delta^a_d\, \delta^c_b - \frac{2}{3}\, \delta^a_b\, \delta^c_d\,.
    \end{align}

    Then, the proof of Eq. \eqref{C2dS} proceeds as follows:
    \begin{align}        
        &-\, {\mathcal{C}}^{\mathfrak{sp}(2,2)}_2  \nonumber\\[0.2cm]
        &= \frac{1}{2} J^{\alpha\beta}J_{\alpha\beta} = \sum_{{0} \leq\alpha<\beta\leq {4}} J^{\alpha\beta}J_{\alpha\beta} \nonumber\\[0.2cm]
        &= \frac{1}{4} \left(\widetilde{\varphi}\, \gamma^{\alpha\beta} \, \varphi\right)\left(\widetilde{\varphi} \,\gamma^{}_{\alpha\beta}\, \varphi\right) \nonumber\\[0.2cm]
        &= \frac{1}{4} \left(\widetilde{\varphi}_a\, (\gamma^{\alpha\beta})^a_b \,\varphi^b \right) \left(\widetilde{\varphi}_c \, (\gamma^{}_{\alpha\beta})^c_d \, \varphi^d \right) \nonumber\\[0.2cm]
        &= \frac{2}{3}\left(\widetilde{\varphi}_a\, \varphi^b \right) \left(\widetilde{\varphi}_c \, \varphi^d \right) \left(\delta^a_d \,\delta^c_b - \frac{1}{4} \delta^a_b \,\delta^c_d\right) \nonumber\\[0.2cm]
        &\quad \text{(following analogous steps to those in Eq. \eqref{C2conform'})} \nonumber\\[0.2cm]
        &= \frac{2}{3}\times 3 \left(\frac{1}{4}({\mathcal{C}}_1)^2 + {\mathcal{C}}_1\right) = 2 (\lambda^2 - 1)\,.
    \end{align}

    \textbf{— Step III (proof of Eq. \eqref{C2base}).} Finally, we turn to Eq. \eqref{C2base}. We have the following expansion:
    \begin{align}\label{fJf6-5}
        J_{\alpha{5}} = \frac{1}{2} \left( \widetilde{\varphi}\, \gamma^{}_{\alpha{5}} \, \varphi \right)\,,
    \end{align}
    which follows directly from Eqs. \eqref{fJf6} and \eqref{fJf5} by construction. As in the previous steps, the proof proceeds with the key expansion:
    \begin{align}\label{sumoverbase}
        (\gamma^{}_{\alpha{5}})^a_b\, (\gamma^{\alpha{5}})^c_d = {\mathcal{A}}^{\prime\prime}\; \delta^a_d\, \delta^c_b + {\mathcal{B}}^{\prime\prime}\; \delta^a_b\, \delta^c_d \,,
    \end{align}
    where, once again, ${\mathcal{A}}^{\prime\prime}$ and ${\mathcal{B}}^{\prime\prime}$ denote constants to be determined. To fix their values, we take two independent traces of Eq. \eqref{sumoverbase}:
    \begin{enumerate}[leftmargin=*]
        \item{Contract $a = d$, $b = c$:
            \begin{align}
                20 = 16{\mathcal{A}}^{\prime\prime} + 4{\mathcal{B}}^{\prime\prime} \,,
            \end{align}
            where, to obtain the left-hand side, we have used Eq. \eqref{x+y=x+y} together with the results established earlier in this proof; that is:
            \begin{align}
                \mathrm{Tr} \big(\gamma^{\alpha{5}}\gamma^{}_{\alpha{5}}\big) &= \mathrm{Tr} \big( \gamma^{\mu\nu}\gamma^{}_{\mu\nu} - \gamma^{\alpha\beta}\gamma^{}_{\alpha\beta}\big) \nonumber\\[0.2cm]
                &= \mathrm{Tr} \big( \gamma^{\mu\nu}\gamma^{}_{\mu\nu} \big) - \mathrm{Tr} \big(\gamma^{\alpha\beta}\gamma^{}_{\alpha\beta}\big) = 60 - 40 = 20\,,
            \end{align}
            with $0\leq\mu<\nu\leq 5$ and $0\leq\alpha<\beta\leq 4$.}
        \item{Contract $a = b$, $c = d$:
            \begin{align}
                0 = 4{\mathcal{A}}^{\prime\prime} + 16{\mathcal{B}}^{\prime\prime}\,.
            \end{align}}
    \end{enumerate}
    We have ${\mathcal{A}}^{\prime\prime} = \frac{4}{3}$ and $\mathcal{B}^{\prime\prime} = -\frac{1}{3}$, and hence:
    \begin{align}
        (\gamma^{}_{\alpha{5}})^a_b\, (\gamma^{\alpha{5}})^c_d = \frac{4}{3}\, \delta^a_d\, \delta^c_b - \frac{1}{3}\, \delta^a_b\, \delta^c_d\,.
    \end{align}

    The proof of Eq. \eqref{C2base} then unfolds as follows:  
    \begin{align}        
        J^{\alpha{5}}J_{\alpha{5}} &= \sum_{\alpha = {0}}^{{4}} J^{\alpha{5}} J_{\alpha{5}}  \nonumber\\[0.2cm]
        &= \frac{1}{4} \left(\widetilde{\varphi}\, \gamma^{\alpha{5}} \, \varphi\right)\left(\widetilde{\varphi} \,\gamma^{}_{\alpha{5}}\, \varphi\right) \nonumber\\[0.2cm]
        &= \frac{1}{4} \left(\widetilde{\varphi}_a\, (\gamma^{\alpha{5}})^a_b \,\varphi^b \right) \left(\widetilde{\varphi}_c \, (\gamma^{}_{\alpha{5}})^c_d \, \varphi^d \right) \nonumber\\[0.2cm]
        &= \frac{1}{3}\left(\widetilde{\varphi}_a\, \varphi^b \right) \left(\widetilde{\varphi}_c \, \varphi^d \right) \left(\delta^a_d \,\delta^c_b - \frac{1}{4} \delta^a_b \,\delta^c_d\right) \nonumber\\[0.2cm]
        &\quad \text{(following analogous steps to those in Eq. \eqref{C2conform'})} \nonumber\\[0.2cm]
        &= \frac{1}{3}\times 3 \left(\frac{1}{4}({\mathcal{C}}_1)^2 + {\mathcal{C}}_1\right) = (\lambda^2 - 1)\,.
    \end{align}}
}\end{proof}

\begin{Remark}
    {\textbf{(unified treatment of the dS massless representations via $\mathfrak{cl}(4,2)$).} It is worth noting that the dS (quadratic) Casimir ${\mathcal{C}}^{\mathfrak{sp}(2,2)}_2 = -2 (\lambda^2 - 1)$ \eqref{C2dS}, derived within a rigorous Clifford-algebraic framework, provides a clear explanation of the observation made by Barut and B\"{o}hm in 1970 \cite{BB} using a conventional group-theoretic analysis of the conformal group. Restricting the ladder-type UPEIRs of the conformal algebra to the dS subalgebra, the Casimir invariant \eqref{C2dS} coincides exactly with the quadratic Casimir eigenvalues of the dS group associated with the so-called massless representations (see Eq. \eqref{Casimir rank 2 massless}). By adopting the conformal Clifford algebra $\mathfrak{cl}(4,2)$ as the primary algebraic framework, this formulation not only reproduces these results but also provides a geometrically enriched, unified treatment of algebraic, spinorial, and conformal structures, facilitating the construction and analysis of massless ladder representations and their restriction to dS symmetry. This picture will be clarified further in the sequel.}
\end{Remark}

We now revisit the ``second-quantized'' ladder generators of the $\mathfrak{su}(2,2)$ Lie algebra, defined by $J_{\mu\nu} = \frac{1}{2} (\widetilde{\varphi}\, \gamma^{}_{\mu\nu}\, \varphi)$ \eqref{fJf6}, by providing an explicit and consistent matrix realization of these generators. To this end, we employ Eq. \eqref{i5j5=ij}, along with the matrix representation of the $\gamma$-matrices given in Proposition \ref{proposition maghz}. Our analysis begins with the elements of the (diagonal) Cartan subalgebra:
\begin{align}\label{123456}   
    2\, J_{{12}} = \widetilde{\varphi}\, \gamma^{}_{{12}}\, \varphi = \widetilde{\varphi}\, \left(\gamma^{}_{15} \gamma^{}_{25}\right)\, \varphi &= 
    \begin{pmatrix}
        \mathfrak{a}^\ast & -\mathfrak{b}
    \end{pmatrix}
    \begin{pmatrix}
        {\mathcal{E}^{}_1}{\mathcal{E}^{\ast}_2} & 0 \\
        0 & {\mathcal{E}^{\ast}_1}{\mathcal{E}^{}_2} 
    \end{pmatrix} 
    \begin{pmatrix}
        \mathfrak{a} \\ 
        \mathfrak{b}^{\ast}
    \end{pmatrix} \nonumber\\[0.2cm]
    &= \mathfrak{a}^{\ast} \left( {\mathcal{E}^{}_1}{\mathcal{E}^{\ast}_2} \right) \mathfrak{a} - \mathfrak{b} \left( {\mathcal{E}^{\ast}_1}{\mathcal{E}^{}_2} \right) \mathfrak{b}^\ast\nonumber\\[0.2cm]
    &= \mathrm{i} \big( \underbrace{\mathfrak{a}^{\ast}_1 \mathfrak{a}^1 - \mathfrak{a}^{\ast}_2 \mathfrak{a}^2}_{=:\, H_1} \underbrace{- \mathfrak{b}^1 \mathfrak{b}^{\ast}_1 + \mathfrak{b}^2 \mathfrak{b}^{\ast}_2}_{=:\, H_3} \big)  \nonumber\\[0.2cm]
    &= \mathrm{i} (H_1 + H_3) \,,
\end{align}
\begin{align}
    2\, J_{{34}} = \widetilde{\varphi}\, \gamma^{}_{{34}}\, \varphi = \widetilde{\varphi}\, \left(\gamma^{}_{35} \gamma^{}_{45}\right)\, \varphi &= 
    \begin{pmatrix}
        \mathfrak{a}^\ast & -\mathfrak{b}
    \end{pmatrix}
    \begin{pmatrix}
         {\mathcal{E}^{}_3}{\mathcal{E}^{\ast}_4} & 0 \\
        0 & {\mathcal{E}^{\ast}_3}{\mathcal{E}^{}_4}  
    \end{pmatrix} 
    \begin{pmatrix}
        \mathfrak{a} \\
        \mathfrak{b}^{\ast}
    \end{pmatrix} \nonumber\\[0.2cm]
    &= \mathfrak{a}^{\ast} \left( {\mathcal{E}^{}_3}{\mathcal{E}^{\ast}_4} \right) \mathfrak{a} - \mathfrak{b} \left( {\mathcal{E}^{\ast}_3}{\mathcal{E}^{}_4} \right) \mathfrak{b}^{\ast}  \nonumber\\[0.2cm]
    &= \mathrm{i} \big( \underbrace{\mathfrak{a}^{\ast}_1 \mathfrak{a}^1 - \mathfrak{a}^{\ast}_2 \mathfrak{a}^2}_{=:\, H_1} + \underbrace{\mathfrak{b}^1 \mathfrak{b}^{\ast}_1 - \mathfrak{b}^2 \mathfrak{b}^{\ast}_2}_{=:\, -H_3} \big)  \nonumber\\[0.2cm]
    &= \mathrm{i} (H_1 - H_3) \,,
\end{align}
\begin{align}
    2\, J_{{05}} = \widetilde{\varphi}\, \gamma^{}_{{05}}\, \varphi &= 
    \begin{pmatrix}
        \mathfrak{a}^\ast & -\mathfrak{b} 
    \end{pmatrix}
    \begin{pmatrix}
        -\mathrm{i}\sigma_0 & 0 \\ 
        0 & \mathrm{i}\sigma_0 
    \end{pmatrix} 
    \begin{pmatrix}
        \mathfrak{a} \\
        \mathfrak{b}^{\ast}
    \end{pmatrix} \nonumber\\[0.2cm]
    &= -\mathrm{i} \big(\mathfrak{a}^{\ast} \mathfrak{a} + \mathfrak{b} \mathfrak{b}^{\ast} \big)\nonumber\\[0.2cm]
    &= -\mathrm{i} \big(\mathfrak{a}^{\ast}_1 \mathfrak{a}^1 + \underbrace{\mathfrak{a}^{\ast}_2 \mathfrak{a}^2 + \mathfrak{b}^1 \mathfrak{b}^{\ast}_1}_{=:\, H_2} + \mathfrak{b}^2 \mathfrak{b}^{\ast}_2 \big)  \nonumber\\[0.2cm]
    &= -\mathrm{i} (H_1 + 2H_2 + H_3) \,,
\end{align}
where we recall that:
\begin{align}
    \mathfrak{a} := \begin{pmatrix} \mathfrak{a}^1 \\ \mathfrak{a}^2 \end{pmatrix}\,, \quad \mathfrak{b}^\ast := \begin{pmatrix} \mathfrak{b}^\ast_1 \\ \mathfrak{b}^\ast_2 \end{pmatrix}\,, \quad\mbox{and}\quad \mathfrak{a}^\ast := \begin{pmatrix} \mathfrak{a}^\ast_1 & \mathfrak{a}^\ast_2 \end{pmatrix}\,, \quad \mathfrak{b} := \begin{pmatrix} \mathfrak{b}^1 & \mathfrak{b}^2 \end{pmatrix}\,.
\end{align}
Next, we turn our attention to the remaining elements of the algebra:
\begin{align}
    2\, J_{{01}} = \widetilde{\varphi}\, \gamma^{}_{{01}}\, \varphi = \widetilde{\varphi}\, \left(\gamma^{}_{05} \gamma^{}_{15}\right)\, \varphi &= 
    \begin{pmatrix}
        \mathfrak{a}^\ast & -\mathfrak{b}
    \end{pmatrix}
    \begin{pmatrix}
        0 & -\mathrm{i} {\mathcal{E}^{}_1} \\
        \mathrm{i} {\mathcal{E}^{\ast}_1} & 0 
    \end{pmatrix} 
    \begin{pmatrix}
        \mathfrak{a} \\
        \mathfrak{b}^{\ast}
    \end{pmatrix} \nonumber\\[0.2cm]
    &= -\mathrm{i} \big( \mathfrak{b}\,{\mathcal{E}^{\ast}_1} \mathfrak{a} + \mathfrak{a}^{\ast} {\mathcal{E}^{}_1} \mathfrak{b}^{\ast}\big) \nonumber\\[0.2cm]
    &= -\mathfrak{b}^2 \mathfrak{a}^1 - \mathfrak{b}^1 \mathfrak{a}^2 + \mathfrak{a}^{\ast}_2 \mathfrak{b}^{\ast}_1 + \mathfrak{a}^{\ast}_1 \mathfrak{b}^{\ast}_2 \,,
\end{align}
\begin{align}
    2\, J_{{02}} = \widetilde{\varphi}\, \gamma^{}_{{02}}\, \varphi = \widetilde{\varphi}\, \left(\gamma^{}_{05} \gamma^{}_{25}\right)\, \varphi &= 
    \begin{pmatrix}
        \mathfrak{a}^\ast & -\mathfrak{b}
    \end{pmatrix}
    \begin{pmatrix}
        0 & -\mathrm{i}{\mathcal{E}^{}_2} \\
        \mathrm{i} {\mathcal{E}^{\ast}_2} & 0 
    \end{pmatrix} 
    \begin{pmatrix}
        \mathfrak{a} \\ 
        \mathfrak{b}^{\ast}
    \end{pmatrix} \nonumber\\[0.2cm]
    &= -\mathrm{i} \big( \mathfrak{b}\,{\mathcal{E}^{\ast}_2} \mathfrak{a} + \mathfrak{a}^\ast {\mathcal{E}^{}_2} \mathfrak{b}^\ast\big) \nonumber\\[0.2cm]
    &= -\mathrm{i} \big( \mathfrak{b}^2 \mathfrak{a}^1 - \mathfrak{b}^1 \mathfrak{a}^2 - \mathfrak{a}^{\ast}_2 \mathfrak{b}^{\ast}_1 + \mathfrak{a}^{\ast}_1 \mathfrak{b}^{\ast}_2 \big) \,,
\end{align}
\begin{align}
    2\, J_{{03}} = \widetilde{\varphi}\, \gamma^{}_{{03}}\, \varphi = \widetilde{\varphi}\, \left(\gamma^{}_{05} \gamma^{}_{35}\right)\, \varphi &= 
    \begin{pmatrix}
        \mathfrak{a}^\ast & -\mathfrak{b}
    \end{pmatrix}
    \begin{pmatrix}
        0 & -\mathrm{i} {\mathcal{E}^{}_3} \\
        \mathrm{i} {\mathcal{E}^{\ast}_3} & 0 
    \end{pmatrix}  
    \begin{pmatrix}
        \mathfrak{a} \\
        \mathfrak{b}^{\ast}
    \end{pmatrix} \nonumber\\[0.2cm]
    &= -\mathrm{i} \big( \mathfrak{b}\,{\mathcal{E}^{\ast}_3} \mathfrak{a} + \mathfrak{a}^\ast {\mathcal{E}^{}_3} \mathfrak{b}^\ast\big) \nonumber\\[0.2cm]
    &= -\mathfrak{b}^1 \mathfrak{a}^1 + \mathfrak{b}^2 \mathfrak{a}^2 + \mathfrak{a}^{\ast}_1 \mathfrak{b}^{\ast}_1 - \mathfrak{a}^{\ast}_2 \mathfrak{b}^{\ast}_2 \,,
\end{align}
\begin{align}
    2\, J_{{04}} = \widetilde{\varphi}\, \gamma^{}_{{04}}\, \varphi = \widetilde{\varphi}\, \left(\gamma^{}_{05} \gamma^{}_{45}\right)\, \varphi &= 
    \begin{pmatrix}
        \mathfrak{a}^\ast & -\mathfrak{b}
    \end{pmatrix}
    \begin{pmatrix}
        0 & -\mathrm{i} {\mathcal{E}^{}_4} \\
        \mathrm{i} {\mathcal{E}^{\ast}_4} & 0 
    \end{pmatrix}  
    \begin{pmatrix}
        \mathfrak{a} \\
        \mathfrak{b}^{\ast}
    \end{pmatrix} \nonumber\\[0.2cm]
    &= -\mathrm{i} \big( \mathfrak{b}\, {\mathcal{E}^{\ast}_4} \mathfrak{a} + \mathfrak{a}^\ast {\mathcal{E}^{}_4}\mathfrak{b}^\ast\big) \nonumber\\[0.2cm]
    &= -\mathrm{i} \big( \mathfrak{b}^1 \mathfrak{a}^1 + \mathfrak{b}^2 \mathfrak{a}^2 + \mathfrak{a}^{\ast}_1 \mathfrak{b}^{\ast}_1 + \mathfrak{a}^{\ast}_2 \mathfrak{b}^{\ast}_2 \big) \,,
\end{align}
\begin{align}
    2\, J_{{13}} = \widetilde{\varphi}\, \gamma^{}_{{13}}\, \varphi = \widetilde{\varphi}\, \left(\gamma^{}_{15} \gamma^{}_{35}\right)\, \varphi &= 
    \begin{pmatrix}
        \mathfrak{a}^\ast & -\mathfrak{b}
    \end{pmatrix}
    \begin{pmatrix}
        {\mathcal{E}^{}_1}{\mathcal{E}^{\ast}_3} & 0 \\
        0 & {\mathcal{E}^{\ast}_1}{\mathcal{E}^{}_3} 
    \end{pmatrix} 
    \begin{pmatrix}
        \mathfrak{a} \\
        \mathfrak{b}^{\ast}
    \end{pmatrix} \nonumber\\[0.2cm]
    &= \mathfrak{a}^\ast \left( {\mathcal{E}^{}_1}{\mathcal{E}^{\ast}_3} \right) \mathfrak{a} - \mathfrak{b} \left( {\mathcal{E}^{\ast}_1}{\mathcal{E}^{}_3} \right) \mathfrak{b}^\ast \nonumber\\[0.2cm]
    &= \mathfrak{a}^{\ast}_2 \mathfrak{a}^1 - \mathfrak{a}^{\ast}_1 \mathfrak{a}^2 - \mathfrak{b}^2 \mathfrak{b}^{\ast}_1 + \mathfrak{b}^1 \mathfrak{b}^{\ast}_2 \,,
\end{align}
\begin{align}
    2\, J_{{14}} = \widetilde{\varphi}\, \gamma^{}_{{14}}\, \varphi = \widetilde{\varphi}\, \left(\gamma^{}_{15} \gamma^{}_{45}\right)\, \varphi &= 
    \begin{pmatrix}
        \mathfrak{a}^\ast & -\mathfrak{b}
    \end{pmatrix}
    \begin{pmatrix}
        {\mathcal{E}^{}_1}{\mathcal{E}^{\ast}_4} & 0 \\
        0 & {\mathcal{E}^{\ast}_1}{\mathcal{E}^{}_4} 
    \end{pmatrix} 
    \begin{pmatrix}
        \mathfrak{a} \\
        \mathfrak{b}^{\ast}
    \end{pmatrix} \nonumber\\[0.2cm]
    &= \mathfrak{a}^\ast \left( {\mathcal{E}^{}_1}{\mathcal{E}^{\ast}_4} \right) \mathfrak{a} - \mathfrak{b} \left( {\mathcal{E}^{\ast}_1}{\mathcal{E}^{}_4} \right) \mathfrak{b}^\ast  \nonumber\\[0.2cm]
    &= \mathrm{i} \big(\mathfrak{a}^{\ast}_2 \mathfrak{a}^1 + \mathfrak{a}^{\ast}_1 \mathfrak{a}^2 + \mathfrak{b}^2 \mathfrak{b}^{\ast}_1 + \mathfrak{b}^1 \mathfrak{b}^{\ast}_2 \big) \,,
\end{align}
\begin{align}
    2\, J_{{15}} = \widetilde{\varphi}\, \gamma^{}_{{15}}\, \varphi &= 
    \begin{pmatrix}
        \mathfrak{a}^\ast & -\mathfrak{b}
    \end{pmatrix}
    \begin{pmatrix}
        0 & {\mathcal{E}^{}_1} \\
        {\mathcal{E}^{\ast}_1} & 0 
    \end{pmatrix} 
    \begin{pmatrix}
        \mathfrak{a} \\
        \mathfrak{b}^{\ast}
    \end{pmatrix} \nonumber\\[0.2cm]
    &= -\mathfrak{b}\,{\mathcal{E}^{\ast}_1} \mathfrak{a} + \mathfrak{a}^\ast{\mathcal{E}^{}_1}\mathfrak{b}^\ast \nonumber\\[0.2cm]
    &= \mathrm{i} \big(\mathfrak{b}^2 \mathfrak{a}^1 + \mathfrak{b}^1 \mathfrak{a}^2 + \mathfrak{a}^\ast_2 \mathfrak{b}^{\ast}_1 + \mathfrak{a}^\ast_1 \mathfrak{b}^{\ast}_2 \big) \,,
\end{align}
\begin{align}
    2\, J_{{23}} = \widetilde{\varphi}\, \gamma^{}_{{23}}\, \varphi = \widetilde{\varphi}\, \left(\gamma^{}_{25} \gamma^{}_{35}\right)\, \varphi &= 
    \begin{pmatrix}
        \mathfrak{a}^\ast & -\mathfrak{b}
    \end{pmatrix}
    \begin{pmatrix}
        {\mathcal{E}^{}_2}{\mathcal{E}^{\ast}_3} & 0 \\
        0 & {\mathcal{E}^{\ast}_2}{\mathcal{E}^{}_3} 
    \end{pmatrix} 
    \begin{pmatrix}
        \mathfrak{a} \\
        \mathfrak{b}^{\ast}
    \end{pmatrix} \nonumber\\[0.2cm]
    &= \mathfrak{a}^\ast \left( {\mathcal{E}^{}_2}{\mathcal{E}^{\ast}_3} \right) \mathfrak{a} - \mathfrak{b} \left( {\mathcal{E}^{\ast}_2}{\mathcal{E}^{}_3} \right) \mathfrak{b}^\ast \nonumber\\[0.2cm]
    &= \mathrm{i} \big(\mathfrak{a}^\ast_2 \mathfrak{a}^1 + \mathfrak{a}^\ast_1 \mathfrak{a}^2 - \mathfrak{b}^2 \mathfrak{b}^{\ast}_1 - \mathfrak{b}^1 \mathfrak{b}^{\ast}_2 \big) \,,
\end{align}
\begin{align}
    2\, J_{{24}} = \widetilde{\varphi}\, \gamma^{}_{{24}}\, \varphi = \widetilde{\varphi}\, \left(\gamma^{}_{25} \gamma^{}_{45}\right)\, \varphi&= 
    \begin{pmatrix}
        \mathfrak{a}^\ast & -\mathfrak{b}
    \end{pmatrix}
    \begin{pmatrix}
        {\mathcal{E}^{}_2}{\mathcal{E}^{\ast}_4} & 0 \\
        0 & {\mathcal{E}^{\ast}_2}{\mathcal{E}^{}_4} 
    \end{pmatrix} 
    \begin{pmatrix}
        \mathfrak{a} \\
        \mathfrak{b}^{\ast}
    \end{pmatrix} \nonumber\\[0.2cm]
    &= \mathfrak{a}^\ast \left( {\mathcal{E}^{}_2}{\mathcal{E}^{\ast}_4} \right) \mathfrak{a} - \mathfrak{b} \left( {\mathcal{E}^{\ast}_2}{\mathcal{E}^{}_4} \right) \mathfrak{b}^\ast  \nonumber\\[0.2cm]
    &= -\mathfrak{a}^\ast_2 \mathfrak{a}^1 + \mathfrak{a}^\ast_1 \mathfrak{a}^2 - \mathfrak{b}^2 \mathfrak{b}^{\ast}_1 + \mathfrak{b}^1 \mathfrak{b}^{\ast}_2  \,,
\end{align}
\begin{align}
    2\, J_{{25}} = \widetilde{\varphi}\, \gamma^{}_{{25}}\, \varphi &= 
    \begin{pmatrix}
        \mathfrak{a}^\ast & -\mathfrak{b}
    \end{pmatrix}
    \begin{pmatrix}
        0 & {\mathcal{E}^{}_2} \\
        {\mathcal{E}^{\ast}_2} & 0 
    \end{pmatrix} 
    \begin{pmatrix}
        \mathfrak{a} \\
        \mathfrak{b}^{\ast}
    \end{pmatrix} \nonumber\\[0.2cm]
    &= -\mathfrak{b}\,{\mathcal{E}^{\ast}_2} \mathfrak{a} + \mathfrak{a}^\ast{\mathcal{E}^{}_2}\mathfrak{b}^\ast \nonumber\\[0.2cm]
    &= -\mathfrak{b}^2 \mathfrak{a}^1 + \mathfrak{b}^1 \mathfrak{a}^2 - \mathfrak{a}^\ast_2 \mathfrak{b}^{\ast}_1 + \mathfrak{a}^\ast_1 \mathfrak{b}^{\ast}_2 \,,
\end{align}
\begin{align}
    2\, J_{{35}} = \widetilde{\varphi}\, \gamma^{}_{{35}}\, \varphi &= 
    \begin{pmatrix}
        \mathfrak{a}^\ast & -\mathfrak{b}
    \end{pmatrix}
    \begin{pmatrix}
        0 & {\mathcal{E}^{}_3} \\
        {\mathcal{E}^{\ast}_3} & 0 
    \end{pmatrix} 
    \begin{pmatrix}
        \mathfrak{a} \\
        \mathfrak{b}^{\ast}
    \end{pmatrix} \nonumber\\[0.2cm]
    &= -\mathfrak{b}\,{\mathcal{E}^{\ast}_3} \mathfrak{a} + \mathfrak{a}^\ast{\mathcal{E}^{}_3}\mathfrak{b}^\ast \nonumber\\[0.2cm]
    &= \mathrm{i} \big(\mathfrak{b}^1 \mathfrak{a}^1 - \mathfrak{b}^2 \mathfrak{a}^2 + \mathfrak{a}^\ast_1 \mathfrak{b}^{\ast}_1 - \mathfrak{a}^\ast_2 \mathfrak{b}^{\ast}_2 \big)\,,
\end{align}
\begin{align}\label{654321}
    2\, J_{{45}} = \widetilde{\varphi}\, \gamma^{}_{{45}}\, \varphi &= 
    \begin{pmatrix}
        \mathfrak{a}^\ast & -\mathfrak{b}
    \end{pmatrix}
    \begin{pmatrix}
        0 & {\mathcal{E}^{}_4} \\
        {\mathcal{E}^{\ast}_4} & 0 
    \end{pmatrix} 
    \begin{pmatrix}
        \mathfrak{a} \\
        \mathfrak{b}^{\ast}
    \end{pmatrix} \nonumber\\[0.2cm]
    &= - \mathfrak{b}\,{\mathcal{E}^{\ast}_4} \mathfrak{a} + \mathfrak{a}^\ast{\mathcal{E}^{}_4}\mathfrak{b}^\ast  \nonumber\\[0.2cm]
    &= - \mathfrak{b}^1 \mathfrak{a}^1 - \mathfrak{b}^2 \mathfrak{a}^2 + \mathfrak{a}^\ast_1 \mathfrak{b}^{\ast}_1 + \mathfrak{a}^\ast_2 \mathfrak{b}^{\ast}_2 \,.
\end{align}
The above results render the anti-Hermitian nature of $J_{\mu\nu}$ manifest: 
\begin{align}\label{anti-Hermitian J}
    J_{\mu\nu} = -(J_{\mu\nu})^\ast\,.
\end{align}

To define the unitary ladder representations of $\mathfrak{u}(2,2) \cong \mathfrak{su}(2,2)\oplus\mathfrak{u}(1)$, we introduce the standars Chevalley-Cartan basis of $\mathfrak{su}(2,2)$:
\begin{align}
    E_a = \widetilde{\varphi}_a\, \varphi^{a+1} \;;\quad E_1 = \mathfrak{a}^\ast_1 \mathfrak{a}^2\,, \quad E_2 = \mathfrak{a}^\ast_2 \mathfrak{b}^{\ast}_1\,, \quad E_3 = -\mathfrak{b}^1 \mathfrak{b}^{\ast}_2\,,
\end{align}
\begin{align}
    F_a = \widetilde{\varphi}_{a+1}\, \varphi^{a} \;;\quad F_1 = \mathfrak{a}^\ast_2 \mathfrak{a}^1\,, \quad F_2 = -\mathfrak{b}^1 \mathfrak{a}^2\,, \quad F_3 = -\mathfrak{b}^2 \mathfrak{b}^{\ast}_1\,,
\end{align}
\begin{align}\label{cartoon}
    \underbrace{H_a = [E_a, F_a] = \widetilde{\varphi}_{a}\, \varphi^{a} - \widetilde{\varphi}_{a+1}\, \varphi^{a+1}}_{\mbox{by Eq. \eqref{LubiA}}} \;;\quad H_1 &= \mathfrak{a}^\ast_1 \mathfrak{a}^1 - \mathfrak{a}^\ast_2 \mathfrak{a}^2\,, \nonumber\\
    H_2 &= \mathfrak{a}^\ast_2 \mathfrak{a}^2 + \mathfrak{b}^1 \mathfrak{b}^{\ast}_1\,, \nonumber\\[0.2cm]
    H_3 &= \mathfrak{b}^2 \mathfrak{b}^{\ast}_2 - \mathfrak{b}^1 \mathfrak{b}^{\ast}_1\,,
\end{align}
where $E_a$, $F_a$, and $H_a$ denote the raising, lowering (see Remark \ref{Remark Raising and Lowering}), and (diagonal) Cartan operators of $\mathfrak{su}(2,2)$, respectively. We actually observe that:
\begin{align}
    [H_a , H_b] = 0\,,
\end{align}
\begin{align}
    [E_a, F_b] = \delta_{ab}\, H_a\,,
\end{align}
\begin{align}
    [H_a , E_b] = + A_{ab}\, E_b \;\;;&\quad [H_1 , E_1] = +2\, E_1\,,\nonumber\\[0.2cm]
    & \quad [H_1 , E_2] = -\, E_2\,, \nonumber\\[0.2cm]
    & \quad [H_1 , E_3] = 0\,, \nonumber\\[0.2cm]
    & \quad [H_2 , E_1] = -\, E_1\,, \nonumber\\[0.2cm]
    & \quad [H_2 , E_2] = +2\, E_2\,, \nonumber\\[0.2cm]
    & \quad [H_2 , E_3] = -\, E_3\,, \nonumber\\[0.2cm]
    & \quad [H_3 , E_1] = 0\,, \nonumber\\[0.2cm]
    & \quad [H_3 , E_2] = -\, E_2\,, \nonumber\\[0.2cm]
    & \quad [H_3 , E_3] = +2\, E_3\,,
\end{align}
\begin{align}
    [H_a , F_b] = - A_{ab}\, F_b \;\;;&\quad [H_1 , F_1] = -2\, F_1\,,\nonumber\\[0.2cm]
    & \quad [H_1 , F_2] = +\, F_2\,, \nonumber\\[0.2cm]
    & \quad [H_1 , F_3] = 0\,, \nonumber\\[0.2cm]
    & \quad [H_2 , F_1] = +\, F_1\,, \nonumber\\[0.2cm]
    & \quad [H_2 , F_2] = -2\, F_2\,, \nonumber\\[0.2cm]
    & \quad [H_2 , F_3] = +\, F_3\,, \nonumber\\[0.2cm]
    & \quad [H_3 , F_1] = 0\,, \nonumber\\[0.2cm]
    & \quad [H_3 , F_2] = +\, F_2\,, \nonumber\\[0.2cm]
    & \quad [H_3 , F_3] = -2\, F_3\,.
\end{align}
Note that $A_{ab}$ is called the Cartan matrix of the algebra:
\begin{align}\label{Cartan matrix}
    A_{ab} = 
    \begin{pmatrix}
        +2 & -1 & 0 \\
        -1 & +2 & -1 \\
        0 & -1 & +2 
    \end{pmatrix}\,.
\end{align}

\begin{Remark}
    {\textbf{(non-inclusion of the central generator ${\mathcal{C}}_1$ in $\mathfrak{su}(2,2)$).} In the Chevalley-Cartan basis of $\mathfrak{su}(2,2)$ introduced above, it is manifest that the central $\mathfrak{u}(1)$ generator $\mathcal{C}_1 = \mathfrak{a}^\ast \mathfrak{a} - \mathfrak{b}\,\mathfrak{b}^\ast$ \eqref{C111111} of the full $\mathfrak{u}(2,2)$ algebra lies outside the $\mathfrak{su}(2,2)$ subalgebra, since it cannot be expressed as a linear combination of the Chevalley-Cartan generators of $\mathfrak{su}(2,2)$.}
\end{Remark}

\begin{Remark}\label{Remark maximal compact subalgebra}{
    \textbf{(maximal compact subalgebra of $\mathfrak{su}(2,2)$ and its oscillator realization).} The maximal compact subalgebra $\mathfrak{s}\big(\mathfrak{u}(2) \oplus \mathfrak{u}(2)\big) \cong \mathfrak{su}(2) \oplus \mathfrak{su}(2) \oplus \mathfrak{u}(1)$ of $\mathfrak{su}(2,2)$ (see Appendix \ref{Appendix Maximal}) can be explicitly realized by the following set of generators:
    \begin{enumerate}[leftmargin=*]
        \item{\textit{\textbf{$\mathfrak{su}(2)_L$ subalgebra:}} This acts on the $\mathfrak{a}$-oscillators and is generated by (the Schwinger realization \cite{Schwinger}):
        \begin{align}
            \begin{cases}
                J_+^{(L)} := E_1 = \mathfrak{a}^\ast_1 \mathfrak{a}^2\,, \\[0.2cm]
                J_-^{(L)} := F_1 = \mathfrak{a}^\ast_2 \mathfrak{a}^1\,, \\[0.2cm]
                J_3^{(L)} := \frac{1}{2} H_1 = \frac{1}{2} \left(\mathfrak{a}^\ast_1 \mathfrak{a}^1 - \mathfrak{a}^\ast_2 \mathfrak{a}^2 \right)\,.
            \end{cases}
        \end{align}
        These satisfy the standard $\mathfrak{su}(2)$ commutation relations:
        \begin{align}
            \big[J_3^{(L)}, J_\pm^{(L)}\big] = \pm J_\pm^{(L)}\,, \quad \big[J_+^{(L)}, J_-^{(L)}\big] = 2 J_3^{(L)}\,.
        \end{align}
        The corresponding quadratic Casimir form is:
        \begin{align}\label{CSU2-L}
            {\mathcal{C}}_2^{\mathfrak{su}(2)_L} &= {J^{(L)}_3}^2 + \frac{J^{(L)}_+J^{(L)}_- + J^{(L)}_-J^{(L)}_+ }{2} \nonumber\\[0.2cm]
            &= \frac{1}{4} \big(\mathfrak{a}^\ast_1 \mathfrak{a}^1 + \mathfrak{a}^\ast_2 \mathfrak{a}^2\big) \big(\mathfrak{a}^\ast_1 \mathfrak{a}^1 + \mathfrak{a}^\ast_2 \mathfrak{a}^2 + 2\big) \nonumber\\[0.2cm]
            &=: \texttt{j}_L (\texttt{j}_L + 1)\,,
        \end{align}
        where we have defined the number operator $N_\mathfrak{a} = \mathfrak{a}^\ast \mathfrak{a} = \mathfrak{a}^\ast_1 \mathfrak{a}^1 + \mathfrak{a}^\ast_2 \mathfrak{a}^2 = 2\texttt{j}_L$, with $\texttt{j}_L\in\tfrac{1}{2}\mathbb{Z}_{\geq 0}$.}
        
        \item{\textit{\textbf{$\mathfrak{su}(2)_R$ subalgebra:}} This acts on the $\mathfrak{b}$-oscillators and is generated by:
        \begin{align}
            \begin{cases}
                J_+^{(R)} := E_3 = - \mathfrak{b}^1 \mathfrak{b}^\ast_2\,, \\[0.2cm]
                J_-^{(R)} := F_3 = - \mathfrak{b}^2 \mathfrak{b}^\ast_1\,, \\[0.2cm]
                J_3^{(R)} := \frac{1}{2}H_3 = -\frac{1}{2} \left(\mathfrak{b}^1 \mathfrak{b}^\ast_1 - \mathfrak{b}^2 \mathfrak{b}^\ast_2 \right)\,.
            \end{cases}
        \end{align}
        These also satisfy the canonical $\mathfrak{su}(2)$ commutation relations:
        \begin{align}
            \big[J_3^{(R)}, J_\pm^{(R)}\big] = \pm J_\pm^{(R)}\,, \quad \big[J_+^{(R)}, J_-^{(R)}\big] = 2J_3^{(R)}\,.
        \end{align}
        The corresponding quadratic Casimir form is:
        \begin{align}\label{CSU2-R}
            {\mathcal{C}}_2^{\mathfrak{su}(2)_R} &= \frac{1}{4} \big(\mathfrak{b}^\ast_1 \mathfrak{b}^1 + \mathfrak{b}^\ast_2 \mathfrak{b}^2\big) \big(\mathfrak{b}^\ast_1 \mathfrak{b}^1 + \mathfrak{b}^\ast_2 \mathfrak{b}^2 + 2\big)\nonumber\\[0.2cm]
            &=: \texttt{j}_R (\texttt{j}_R + 1)\,,
        \end{align}
        where we have defined the number operator $N_\mathfrak{b} = \mathfrak{b}^\ast \mathfrak{b} = \mathfrak{b}^\ast_1 \mathfrak{b}^1 + \mathfrak{b}^\ast_2 \mathfrak{b}^2 = 2\texttt{j}_R$, with $\texttt{j}_R\in\tfrac{1}{2}\mathbb{Z}_{\geq 0}$.}
        
        \item{\textit{\textbf{$\mathfrak{u}(1)$ center:}} The $\mathfrak{u}(1)$ center of the maximal compact subalgebra is generated by the conformal Hamiltonian $H$ \eqref{2.16'}:
        \begin{align}\label{2.16''}
            H := \frac{1}{2} \big(H_1 + 2H_2 + H_3 \big) = \frac{1}{2} \big(\mathfrak{a}^\ast_1 \mathfrak{a}^1 + \mathfrak{a}^\ast_2 \mathfrak{a}^2 + \mathfrak{b}^1 \mathfrak{b}^\ast_1 + \mathfrak{b}^2 \mathfrak{b}^\ast_2 \big)\,.
        \end{align}
        This central element commutes with all the $\mathfrak{su}(2)_L$ and $\mathfrak{su}(2)_R$ generators:
        \begin{align}\label{Ghibli}
            \big[H, J_\pm^{(L)}\big] = \big[H, J_3^{(L)}\big] = \big[H, J_\pm^{(R)}\big] = \big[H, J_3^{(R)}\big] = 0\,.
        \end{align}}
    \end{enumerate}
}\end{Remark}

\begin{Remark}\label{Remark Raising and Lowering}{
    \textbf{(interpretation of $E_a$ and $F_a$ as raising and lowering operators).} Before proceeding, let us clarify in what sense the operators $E_a$ and $F_a$ are called raising and lowering operators for representations of $\mathfrak{su}(2,2)$. Let $|\text{W}\rangle$ be a weight vector defined by:
    \begin{align}
        H_a |\text{W}\rangle = \text{W}_a\, |\text{W}\rangle\,,
    \end{align}
    with weight components $\text{W} = (\text{W}_1, \text{W}_2, \text{W}_3)$ corresponding to the Cartan generators $H_1, H_2$, and $H_3$; since the Cartan elements commute, they are simultaneously diagonalizable, and thus admit a common eigenbasis. Consider the action of $F_b$ on $|\text{W}\rangle$. Using the commutation relation $[H_a, F_b] = -A_{ab} F_b$, we find:
    \begin{align}
        H_a \big(F_b |\text{W}\rangle\big)
        &= \big([H_a, F_b] + F_b H_a\big) |\text{W}\rangle \nonumber\\[0.2cm]
        &= \big(-A_{ab} F_b + \text{W}_a F_b\big) |\text{W}\rangle \nonumber\\[0.2cm]
        &= \big(\text{W}_a - A_{ab}\big) F_b |\text{W}\rangle\,.
    \end{align}
    This shows that $F_b |\text{W}\rangle$ is again a weight vector, whose weight component is shifted as:
    \begin{align}
        \text{W}_a \quad\longmapsto\quad \text{W}_a - A_{ab}\,.
    \end{align}
    Equivalently, the total weight vector transforms as:
    \begin{align}
        F_b |\text{W}\rangle \quad\longmapsto\quad \bigl|\text{W} - \alpha_b\bigr\rangle\,,
    \end{align}
    where the simple root $\alpha_b$ is given by the $b$-th column of the Cartan matrix $A_{ab}$; explicitly, $\alpha_1 = (2, -1, 0)^\top$, $\alpha_2 = (-1, 2, -1)^\top$, and $\alpha_3 = (0, -1, 2)^\top$.

    Similarly, the operator $E_b$ acts as a raising operator, shifting the weight vector upward by the corresponding simple root:
    \begin{align}
        E_b |\text{W}\rangle \;\propto\; \bigl|\text{W} + \alpha_b\bigr\rangle\,,
    \end{align}
    where $\alpha_b$ is the simple root associated with $E_b$.
}\end{Remark}

Given the above, we define the lowest weight state $|\mathrm{LW}_\lambda\rangle$ of the UPEIR of $\mathfrak{su}(2,2)$ (with a given helicity $\lambda$) as:
\begin{align}
    \label{LW ><0} |\mathrm{LW}_\lambda\rangle &\; \Big( =: |\mathrm{LW}_{\lambda\geq 0}\rangle \Big) = (\mathfrak{a}^\ast_2)^{2\lambda}\, |0^{\scriptscriptstyle{\mathrm{SO}(4)}}_{}\rangle \,, \quad \mbox{for} \quad \lambda\geq 0\,, \\[0.1cm]
    \label{LW ><0''} |\mathrm{LW}_\lambda\rangle &\; \Big( =: |\mathrm{LW}_{\lambda< 0}\rangle \Big) = (\mathfrak{b}^\ast_1)^{-2\lambda}\, |0^{\scriptscriptstyle{\mathrm{SO}(4)}}_{}\rangle\,, \quad \mbox{for} \quad \lambda < 0\,.
\end{align}
This state is annihilated by all lowering operators $F_a$ (see Eqs. \eqref{CRoperators} and \eqref{Vacdef}):
\begin{align}
    F_a\, |\mathrm{LW}_\lambda\rangle = 0\,, \quad \mbox{for}\quad a=1,2,3\,.
\end{align}
Moreover, $|\mathrm{LW}_\lambda\rangle$ is an eigenstate of the conformal Hamiltonian $H$ \eqref{2.16''} (see also \eqref{2.16'}) with eigenvalue:
\begin{align}\label{E100}
    \mathscr{E}_\circ := 1 + |\lambda|\,, \quad\mbox{with}\quad \lambda \in \frac{1}{2}\mathbb{Z}\,,
\end{align}
so that:
\begin{align}\label{2.166}
    (H-\mathscr{E}_\circ)|\mathrm{LW}_\lambda\rangle = 0 \,.
\end{align}
This eigenvalue — the ``lowest conformal energy'' (or conformal dimension) — serves to label the lowest weight state $|\mathrm{LW}_\lambda\rangle$ of the representation.\footnote{In the special case of a free massless scalar field (helicity $\lambda = 0$), the lowest weight state is simply denoted by $|\mathrm{LW}_{\lambda=0}\rangle =: |0^{\scriptscriptstyle{\mathrm{SO}(4)}}_{}\rangle$, and the corresponding eigenvalue equation \eqref{2.166} reduces to \eqref{2.16}, indicating that the lowest conformal energy (or dimension) is $\mathscr{E}_\circ = 1$.\label{footnoteCdimension}}

\begin{Remark}\label{Remark Helicit+-}
    {\textbf{(transformation properties of the lowest weight state under $\mathfrak{su}(2)_L \oplus \mathfrak{su}(2)_R$).} One verifies that the lowest weight state $|\mathrm{LW}_\lambda\rangle$ (see Eqs. \eqref{LW ><0} and \eqref{LW ><0''}) transforms in the spin-$s=\lambda$ representation of $\mathfrak{su}(2)_L$ and is a singlet under $\mathfrak{su}(2)_R$ for positive helicity $\lambda \geq 0$, whereas for negative helicity $\lambda < 0$ it transforms in the spin-$s=|\lambda|$ representation of $\mathfrak{su}(2)_R$ and is a singlet under $\mathfrak{su}(2)_L$; specifically:
    \begin{align}
        {\mathcal{C}}_2^{\mathfrak{su}(2)_L} |\mathrm{LW}_{\lambda\geq 0}\rangle = s(s+1)|\mathrm{LW}_{\lambda\geq 0}\rangle \,, \quad\mbox{and}\quad {\mathcal{C}}_2^{\mathfrak{su}(2)_R} |\mathrm{LW}_{\lambda\geq 0}\rangle &= 0\,,\nonumber\\[0.2cm]
        J_+^{(R)} |\mathrm{LW}_{\lambda\geq 0}\rangle &= 0\,,\nonumber\\[0.2cm]
        J_-^{(R)} |\mathrm{LW}_{\lambda\geq 0}\rangle &= 0\,,\nonumber\\[0.2cm]
        J_3^{(R)} |\mathrm{LW}_{\lambda\geq 0}\rangle &= 0\,,
    \end{align}
    while:
    \begin{align}
        {\mathcal{C}}_2^{\mathfrak{su}(2)_L} |\mathrm{LW}_{\lambda< 0}\rangle &= 0\,, \quad\mbox{and}\quad {\mathcal{C}}_2^{\mathfrak{su}(2)_R} |\mathrm{LW}_{\lambda< 0}\rangle = s(s+1) |\mathrm{LW}_{\lambda< 0}\rangle \,, \nonumber\\[0.2cm]
        J_+^{(L)} |\mathrm{LW}_{\lambda\geq 0}\rangle &= 0\,,\nonumber\\[0.2cm]
        J_-^{(L)} |\mathrm{LW}_{\lambda\geq 0}\rangle &= 0\,,\nonumber\\[0.2cm]
        J_3^{(L)} |\mathrm{LW}_{\lambda\geq 0}\rangle &= 0\,.
    \end{align}}
\end{Remark}

The entire one-particle Hilbert space $\mathscr{H}^{(1)}_{\lambda}$ carrying the UPEIR of $\mathfrak{su}(2,2)$ is then generated by the successive action of the raising operators $E_a$ on this lowest weight state $|\mathrm{LW}_\lambda\rangle$, uniquely characterized (up to the helicity sign) by the lowest conformal energy $\mathscr{E}_\circ$. This iterative construction naturally endows $\mathscr{H}^{(1)}_{\lambda}$ with a hierarchical structure, since the action of the raising operators $E_a$ increases the eigenvalue of the conformal Hamiltonian $H$ in discrete steps (see Remark \ref{Remark Raising and Lowering}). Consequently, $\mathscr{H}^{(1)}_{\lambda}$ is foliated into the \textit{finite-dimensional}\footnote{See Eq. \eqref{d(Delta)}.} eigenspaces $\mathcal{H}^{(1)}_\mathscr{E} := \ker(H - \mathscr{E}) \,\big(\subset\mathscr{H}^{(1)}_{\lambda}\big)$ of the conformal Hamiltonian $H$, each satisfying:
\begin{align}\label{EDelta}
    (H - \mathscr{E}) |W\rangle = 0 \,, \quad \text{for all} \quad |W\rangle \in \mathcal{H}^{(1)}_\mathscr{E} \,,
\end{align}
where the spectrum of $H$ in this representation is given by:
\begin{align}\label{Delta}
    \mathscr{E} = \Big\{ |\lambda| + n \;;\; n = 1, 2, 3, \dots \Big\}\,,
\end{align}
with $\mathscr{E}_\circ = 1 + |\lambda|$ as the lowest value. This construction furnishes a transparent realization of the UPEIR in terms of conformal energy levels and, importantly, shows that every UPEIR of $\mathfrak{su}(2,2)$ — in particular the one-particle Hilbert space $\mathscr{H}^{(1)}_{\lambda}$ of a free massless conformal field of definite helicity $\lambda \in \tfrac{1}{2}\mathbb{Z}$ — is completely characterized by the lowest eigenvalue $\mathscr{E}_\circ$ of the conformal Hamiltonian (up to the sign of the helicity parameter $\lambda$), just as its lowest weight state $|\mathrm{LW}_\lambda\rangle$ is. This point is discussed further in Remark \ref{Remark parity start}.

\begin{Remark}
    {\textbf{(the energy eigenspaces as $\mathfrak{so}(4)$ modules).} It is important to emphasize that each eigenspace $\mathcal{H}^{(1)}_\mathscr{E} \,\big(\subset\mathscr{H}^{(1)}_{\lambda}\big)$ is not invariant under the full conformal Lie algebra $\mathfrak{su}(2,2)$; rather, it is invariant under the subalgebra $\mathfrak{so}(4)$. This is because the conformal Hamiltonian $H$ generates the $\mathfrak{u}(1)$ center of the maximal compact subalgebra $\mathfrak{s}\big(\mathfrak{u}(2) \oplus \mathfrak{u}(2)\big) \cong \mathfrak{so}(4) \oplus \mathfrak{u}(1)$, and therefore commutes with the $\mathfrak{so}(4)$ generators but not with the entire algebra $\mathfrak{su}(2,2)$. Consequently, the action of $\mathfrak{so}(4)$ preserves each eigenspace $\mathcal{H}^{(1)}_\mathscr{E}$ of $H$; strictly speaking, each $\mathcal{H}^{(1)}_\mathscr{E}$ carries a (generally reducible) finite-dimensional representation of $\mathfrak{so}(4)$. Proposition \ref{Proposition 3.3}, and in particular the third step of its proof, provides a more detailed analysis of this structure.}
\end{Remark}

Accordingly, a discrete, compact basis can be constructed for the Hilbert space $\mathscr{H}^{(1)}_{\lambda}$ of massless one-particle states with helicity $\lambda$, consisting of simultaneous eigenstates of the conformal Hamiltonian $H$ \eqref{2.16''} and of the Casimir operator ${\mathcal{C}}_2^{\mathfrak{so}(4)}$ associated with the $\mathfrak{so}(4)$ Lie subalgebra. The physical, positive-definite Casimir invariant of $\mathfrak{so}(4)$ $\left(\subset \mathfrak{so}(4,1) \subset \mathfrak{so}(4,2)\right)$ is found to be:
\begin{align}\label{2.16666}
    {\mathcal{C}}_2^{\mathfrak{so}(4)} &= {\mathcal{C}}_2^{\mathfrak{su}(2)_L} + {\mathcal{C}}_2^{\mathfrak{su}(2)_R} \nonumber\\[0.2cm]
    &= \frac{1}{4}\mathfrak{a}^\ast \mathfrak{a} (\mathfrak{a}^\ast \mathfrak{a} + 2) + \frac{1}{4}\mathfrak{b}^\ast \mathfrak{b} (\mathfrak{b}^\ast \mathfrak{b} + 2) \nonumber\\[0.2cm]
    &= \frac{1}{8} \Big( \underbrace{(\mathfrak{a}^\ast \mathfrak{a} + \mathfrak{b} \mathfrak{b}^\ast)^2}_{=\,4H^2} + \underbrace{(\mathfrak{a}^\ast \mathfrak{a} - \mathfrak{b} \mathfrak{b}^\ast) (\mathfrak{a}^\ast \mathfrak{a} - \mathfrak{b} \mathfrak{b}^\ast + 4)}_{=\, \mathcal{C}_1(\mathcal{C}_1+4)} \Big) \nonumber\\[0.1cm]
    &= \frac{1}{2} \big( \mathscr{E}^2 + \lambda^2 - 1 \big) \,,
\end{align}
hence:
\begin{align}\label{-1}
    {\mathcal{C}}_2^{\mathfrak{so}(4)} &= \texttt{j}_L (\texttt{j}_L + 1) + \texttt{j}_R (\texttt{j}_R + 1)  \nonumber\\[0.2cm]
    &= \frac{1}{2} \big( \mathscr{E}^2 + \lambda^2 - 1 \big) \geq |\lambda| \left(|\lambda|+1\right) \,.
\end{align}
From the above relation, it follows that $\texttt{j}_L, \texttt{j}_R \,\big(\in\tfrac{1}{2}\mathbb{Z}_{\geq 0}\big)$ satisfy the bound (see Appendix \ref{Appen L,R E}): 
\begin{align}\label{-1'}
    \texttt{j}_L, \texttt{j}_R \leq \mathscr{E}-1 \,.
\end{align}
Note that the dimension of the eigenspace $\mathcal{H}^{(1)}_\mathscr{E} \subset {\mathscr{H}}^{(1)}_\lambda$ of $H$, which carries a representation of $\mathfrak{so}(4)$ and corresponds to the eigenvalue $\mathscr{E}$, is given by:
\begin{align}\label{d(Delta)}
    \texttt{d} \big(\mathcal{H}^{(1)}_\mathscr{E}\big) = (2\texttt{j}_L+1)(2\texttt{j}_R+1) \,.
\end{align}

\begin{Remark}\label{Remark parity start} 
    {\textbf{(classification of the $\mathfrak{su}(2,2)$ UPEIRs).} Taking all of the above into account, the states within each $\mathfrak{su}(2,2)$ UPEIR space $\mathscr{H}^{(1)}_{\lambda}$ are labeled by three (half-)integers: $\mathscr{E}$, which labels the $\mathfrak{so}(4)$-invariant eigenspaces $\mathcal{H}^{(1)}_\mathscr{E}$ of the conformal Hamiltonian $H$ that foliate the full Hilbert space $\mathscr{H}^{(1)}_{\lambda}$; and $\texttt{j}_L$ and $\texttt{j}_R$, which label the states within each such eigenspace $\mathcal{H}^{(1)}_\mathscr{E}$, corresponding to a finite-dimensional (generally reducible) representation of $\mathfrak{so}(4)$:
    \begin{align}
        \mathscr{H}^{(1)}_{\lambda} = \operatorname{Span}\,\Big\{ |\mathscr{E}, \texttt{j}_L, \texttt{j}_R \rangle \;;\quad &\mathscr{E} = |\lambda|+1, |\lambda|+2, \,\dots\,, \nonumber\\
        &\texttt{j}_L, \texttt{j}_R \in\tfrac{1}{2}\mathbb{Z}_{\geq 0} \;\leq\; \mathscr{E}-1\Big\} \,.
    \end{align}    
    In this context, the $\mathfrak{su}(2,2)$ UPEIR space $\mathscr{H}^{(1)}_{\lambda}$ is fully generated by repeated action of the raising operators $E_a$ on the lowest weight state:
    \begin{align}
        |\mathrm{LW}_\lambda\rangle = |\mathscr{E}_\circ, \texttt{j}_L, \texttt{j}_R \rangle\,, 
    \end{align}
    which has lowest conformal energy $\mathscr{E}_\circ = 1 + |\lambda|$, and is accordingly denoted by:
    \begin{align}
        \mathscr{H}^{(1)}_{\lambda} =: \mathscr{C}^>_{\mathscr{E}_\circ, \texttt{j}_L, \texttt{j}_R} \,.
    \end{align} 
    Note that the superscript `$^>$' indicates that the representations $\mathscr{C}^>_{\mathscr{E}_\circ, \texttt{j}_L, \texttt{j}_R}$ are characterized by the positivity of the conformal energy, meaning that the spectrum of $H$ is bounded from below (see Eq. \eqref{2.16}).

    Although this point has already been noted, it is appropriate to scrutinize, within the above labelling, that the lowest weight state $|\mathscr{E}_\circ, \texttt{j}_L, \texttt{j}_R \rangle$, which furnishes the full Hilbert space $\mathscr{H}^{(1)}_{\lambda}$, is itself uniquely specified by its conformal energy $\mathscr{E}_\circ$, up to the sign of the helicity parameter $\lambda$. In particular, for a given $\mathscr{E}_\circ$ (and hence a fixed $|\lambda|$), there exist two distinct configurations of $(\texttt{j}_L, \texttt{j}_R)$, each corresponding to one of the two possible helicity signs, which give rise to two inequivalent realizations of the corresponding UPEIR, distinguished as positive- and negative-helicity representations.
    
    To render this dependence explicit, we impose relation \eqref{-1} on the lowest weight state $|\mathscr{E}_\circ, \texttt{j}_L, \texttt{j}_R \rangle$, yielding:
    \begin{align}\label{-100}
        \texttt{j}_L (\texttt{j}_L + 1) + \texttt{j}_R (\texttt{j}_R + 1) = |\lambda| (|\lambda|+1) \,.
    \end{align}
    For a given $\mathscr{E}_\circ$ (and hence a fixed $|\lambda|$), this equation — consistent with Remark \ref{Remark Helicit+-} — admits only two possible choices for the lowest weight state $|\mathscr{E}_\circ,\texttt{j}_L, \texttt{j}_R\rangle$:
    \begin{align}\label{second choice}
        |\mathscr{E}_\circ, \texttt{j}_L , \texttt{j}_R \rangle = |\mathscr{E}_\circ, \mathscr{E}_\circ-1, 0 \rangle = \big||\lambda|+1, |\lambda|, 0\big\rangle \,,
    \end{align}
    \begin{align}\label{first choice}
         |\mathscr{E}_\circ, \texttt{j}_L , \texttt{j}_R \rangle = |\mathscr{E}_\circ, 0, \mathscr{E}_\circ-1\rangle = \big||\lambda|+1, 0, |\lambda|\big\rangle \,,
    \end{align}
    where:
    \begin{enumerate}[leftmargin=*]
        \item{The first choice \eqref{second choice} specifies the positive helicity UPEIR of $\mathfrak{su}(2,2)$ and is denoted by:
        \begin{align}
            \mathscr{C}^>_{|\lambda|+1, |\lambda|, 0} =: \mathscr{C}^>_{s+1, s, 0}\,, \quad\mbox{with}\quad |\lambda|=:s\,.
        \end{align}
        The positivity of the helicity follows from the fact that the lowest weight state $|s+1,s,0\rangle$ in this representation transforms in the spin-$s$ representation of $\mathfrak{su}(2)_L$ and is a singlet under $\mathfrak{su}(2)_R$ (see Remark \ref{Remark Helicit+-}).}

        \item{The second choice \eqref{second choice}, however, specifies the $\mathfrak{su}(2,2)$ UPEIR with negative helicity, in which the lowest weight state is annihilated by the same lowering operators, but with the roles of $\text{SU}(2)_L$ and $\text{SU}(2)_R$ exchanged. This representation is usually denoted as:
        \begin{align}
            \mathscr{C}^>_{|\lambda|+1, 0, |\lambda|} =: \mathscr{C}^>_{s+1, 0, s}\,, \quad\mbox{with}\quad |\lambda|=:s\,.
        \end{align}
        The negative helicity arises from the fact that the lowest weight state $|s+1, 0, s\rangle$ in this representation transforms in the spin-$s$ representation of $\mathfrak{su}(2)_R$ and is a singlet under $\mathfrak{su}(2)_L$ (see, again, Remark \ref{Remark Helicit+-}).}
    \end{enumerate}
    Thus, for each fixed lowest conformal energy $\mathscr{E}_\circ$, and hence for fixed $|\lambda| = s$, there is a unique positive-helicity $\mathfrak{su}(2,2)$ UPEIR $\mathscr{C}^>_{s+1, 0, s}$ and a corresponding negative-helicity UPEIR $\mathscr{C}^>_{s+1, s, 0}$, related by exchanging the $\mathfrak{su}(2)_L$ and $\mathfrak{su}(2)_R$ roles in the lowest weight state.
    
    With this labelling, in the zero-helicity ($\lambda=0$) case, the lowest weight state $|\mathrm{LW}_{\lambda=0}\rangle =: |0^{\scriptscriptstyle{\mathrm{SO}(4)}}_{}\rangle$ is identified with $|1,0,0\rangle$. It is perhaps worthwhile noting that the zero-helicity space $\mathscr{H}^{(1)}_{\lambda=0}$ — corresponding to the representation at the very bottom of the ladder representations, with $\texttt{j}_L = \texttt{j}_R = 0$ and $\lambda = 0$ — coincides with the state space of the non-relativistic, regularized hydrogen atom. This connection was first uncovered in the 1960s through the study of spectrum-generating algebras. 
    
    \textbf{\textit{Note}:} For a detailed account of how the conformal massless (ladder) UPEIR restricts to the dS case, see Appendix \ref{appendix: dS UIRs}.}
\end{Remark}

\begin{Remark}\label{Remark E=JL+JR}
    {\textbf{(integer spectrum of $\mathscr{E} - \big(\texttt{j}_L + \texttt{j}_R\big)$).} From the construction above, one immediately observes a key identity that will prove useful in the sequel. Specifically, for any helicity parameter $\lambda$ in the lowest weight state, one has:
    \begin{align}
        \mathscr{E}_\circ - \big(\texttt{j}_L + \texttt{j}_R\big) = 1 \,.
    \end{align}
    This relation naturally extends to all descendant states as:
    \begin{align}
        \mathscr{E} - \big(\texttt{j}_L + \texttt{j}_R\big) \in \mathbb{Z}_{>0} \,.
    \end{align}
    Indeed, all descendant states are obtained by repeated action of the raising operators $E_a$ on the lowest weight state, which increases the eigenvalue of the conformal Hamiltonian $H$ in discrete steps. Moreover, such actions leave the integer or half-integer nature of $\texttt{j}_L + \texttt{j}_R$ invariant. This latter observation can be viewed as a by-product of the discussion in Appendix \ref{appendix: dS UIRs}, in particular Eq. \eqref{begaee}.}
\end{Remark}

\begin{Remark}
    {\textbf{(classification of the $\mathfrak{u}(2,2)$ UPEIRs).} In this setting, the corresponding ladder representations of the extended conformal algebra $\mathfrak{u}(2,2) = \mathfrak{su}(2,2) \oplus \mathfrak{u}(1)$ are constructed as direct sums of the UPEIRs $\mathscr{H}^{(1)}_{\lambda} =: \mathscr{C}^>_{\mathscr{E}_\circ, \texttt{j}_L, \texttt{j}_R}$ of $\mathfrak{su}(2,2)$, each specified by a fixed helicity $\lambda\in\tfrac{1}{2}\mathbb{Z}$:
    \begin{align}
        \mathscr{H}^{(1)}_{} = \bigoplus_{\lambda\in\tfrac{1}{2}\mathbb{Z}} \mathscr{H}^{(1)}_{\lambda} \,.
    \end{align}
    The additional $\mathfrak{u}(1)$ generator endows $\mathscr{H}^{(1)}_{}$ with a natural $\mathfrak{u}(1)$-grading by helicity and coherently couples the graded sectors into a single IR of $\mathfrak{u}(2,2)$. States in the $\mathfrak{u}(2,2)$ $1$-particle space $\mathscr{H}^{(1)}_{}$ are therefore labelled by the helicity $\lambda$ together with the usual energy and spin quantum numbers:
    \begin{align}
        \mathscr{H}^{(1)}_{} = \operatorname{Span}\,\Big\{ |\lambda, \mathscr{E}, \texttt{j}_L, \texttt{j}_R \rangle \;;\quad &\lambda\in\tfrac{1}{2}\mathbb{Z}\,,\nonumber\\[0.1cm]
        &\mathscr{E} = |\lambda|+1, |\lambda|+2, \,\dots\,, \nonumber\\
        &\texttt{j}_L, \texttt{j}_R \in\tfrac{1}{2}\mathbb{Z}_{\geq 0} \;\leq\; \mathscr{E}-1\Big\} \,.
    \end{align}}
\end{Remark}

\subsection{An Alternative Realization of $\mathfrak{su}(2,2)$ in the Conformal Compact Basis}

The momentum generator $p^{}_{\dot{\mu}}$ ($\dot{\mu} = 0,1,2,3$) of the Poincar\'{e} subalgebra of $\mathfrak{u}(2,2)$ is verified to have zero square, thus corresponding to massless particles:
\begin{align}
    p^{}_{\dot{\mu}} := \frac{1}{2}\widetilde{\varphi}\, \gamma^{}_{\dot{\mu}} \left(\mathbbm{1} + \gamma^{}_{{4}}\right) \varphi \quad\Longrightarrow\quad p^{\dot{\mu}} p^{}_{\dot{\mu}} = ({\vec{p}})^2 - (p_0)^2 = 0\,,
\end{align}
where $\vec{p} := (p^{}_1 , p^{}_2 , p^{}_3)$ and $(\vec{p})^2 := (p^{}_1)^2 + (p^{}_2)^2 + (p^{}_3)^2$. This is proven in Ref. \cite{MT}, using a realization of $\gamma$ matrices with $\gamma^{}_4$ — rather than $\texttt{B}$ — diagonal. Here, we instead adopt a complex-variable formulation of $4$-dimensional conformal space, providing a unified treatment of both dS and Minkowski cases.

\begin{proposition}\label{proposition T2=0}
    \textbf{(null structure of the complex translation and special conformal operators).} Let us introduce the complex translation operators $T_i$ and the special conformal generators $C_i$, both naturally suited to the ladder formalism and characterized by a vanishing four-square:
    \begin{align}
        T_i = \widetilde{\varphi}\, \mathrm{P}_+ \gamma^{}_i \mathrm{P}_- \varphi 
        = \mathfrak{a}^\ast\, {\mathcal{E}}_i\, \mathfrak{b}^\ast\,,
    \end{align}
    \begin{align}\label{C=TAST}
        C_i \; \big(= T_i^\ast \big) = \mathfrak{b} \,{\mathcal{E}}^\ast_i \,\mathfrak{a}\,,
    \end{align}
    where $i=1,2,3,4$ and the operators ${\mathcal{E}}_i$ are defined in Eq. \eqref{E}. They satisfy:
    \begin{align}\label{T2=0=C2}
        T^i T_i \; \left( = \sum_{i=1}^4 T^i T_i \right) = 0 = C^i C_i\,.
    \end{align}
\end{proposition}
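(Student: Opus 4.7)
The plan is to reduce the proof of $T^i T_i = 0 = C^i C_i$ to a single tensor identity among the $2\times 2$ matrices $\mathcal{E}_i$, and then combine it with the bosonic commutativity of the oscillators. Since $\eta_{ii} = +1$ for $i = 1,2,3,4$, the index raising is trivial, and writing $T_i = \mathfrak{a}^\ast_A (\mathcal{E}_i)_{AB}\, \mathfrak{b}^\ast_B$ together with $[\mathfrak{a}^\ast_A,\mathfrak{b}^\ast_B]=0$ gives
\begin{align*}
    T^i T_i = \sum_{A,B,C,D} \mathfrak{a}^\ast_A \mathfrak{a}^\ast_C \; \mathfrak{b}^\ast_B \mathfrak{b}^\ast_D \; \sum_{i=1}^{4} (\mathcal{E}_i)_{AB}(\mathcal{E}_i)_{CD} \,.
\end{align*}

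The first step is to evaluate the completeness-type tensor $\sum_i (\mathcal{E}_i)_{AB}(\mathcal{E}_i)_{CD}$. Using $\mathcal{E}_k = \mathrm{i}\sigma_k$ for $k=1,2,3$ and $\mathcal{E}_4 = \mathbbm{1}_2$, together with the standard Pauli completeness relation $\sum_{k=1}^{3} (\sigma_k)_{AB}(\sigma_k)_{CD} = 2\delta_{AD}\delta_{CB} - \delta_{AB}\delta_{CD}$, the factor $\mathrm{i}^2=-1$ conspires with the $\mathcal{E}_4$ contribution to produce
\begin{align*}
    \sum_{i=1}^{4} (\mathcal{E}_i)_{AB}(\mathcal{E}_i)_{CD} = 2\bigl(\delta_{AB}\delta_{CD} - \delta_{AD}\delta_{CB}\bigr)\,,
\end{align*}
which is manifestly antisymmetric under $B \leftrightarrow D$ (and equivalently under $A \leftrightarrow C$).

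The second and decisive step is to observe that the bosonic creation operators commute, so $\mathfrak{b}^\ast_B \mathfrak{b}^\ast_D$ is symmetric in $(B,D)$; contracting a symmetric pair of indices with a tensor antisymmetric in the same pair forces $T^i T_i = 0$. Equivalently, one may expand the sum and exhibit the explicit cancellation
\begin{align*}
    T^i T_i = 2\bigl(\mathfrak{a}^\ast_A \mathfrak{b}^\ast_A\bigr)\bigl(\mathfrak{a}^\ast_C \mathfrak{b}^\ast_C\bigr) - 2\bigl(\mathfrak{a}^\ast_A \mathfrak{b}^\ast_B\bigr)\bigl(\mathfrak{a}^\ast_C \mathfrak{b}^\ast_D\bigr)\delta_{AD}\delta_{CB} = 0 \,,
\end{align*}
after relabeling dummy indices in the second term.

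For the companion identity $C^i C_i = 0$, the cleanest route is to recall from \eqref{C=TAST} that $C_i = T_i^\ast$, whence $C^i C_i = (T^i T_i)^\ast = 0$. Alternatively, one repeats the argument verbatim with $\mathfrak{a}$ and $\mathfrak{b}$ in place of $\mathfrak{a}^\ast$ and $\mathfrak{b}^\ast$ and $\mathcal{E}^\ast_i$ in place of $\mathcal{E}_i$; the relative minus signs in $\mathcal{E}^\ast_k = -\mathcal{E}_k$ cancel in the quadratic sum, leaving the same antisymmetric tensor, and the symmetry $\mathfrak{a}^B \mathfrak{a}^D = \mathfrak{a}^D \mathfrak{a}^B$ again kills the contraction. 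The only mildly delicate point, and the one worth double-checking carefully, is the sign bookkeeping between the factor $\mathrm{i}^2 = -1$ coming from $\mathcal{E}_{1,2,3}$ and the $+\mathbbm{1}_2$ contribution from $\mathcal{E}_4$, since the proof hinges entirely on the resulting antisymmetry.
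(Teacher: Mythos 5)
Your proof is correct, but it takes a genuinely different route from the paper's. The paper spends three preparatory steps introducing the spinor metric $\epsilon=\mathrm{i}\sigma_2$, defining the raised-index tensors $({\mathcal{E}}_i)^{AB}=\epsilon^{AC}({\mathcal{E}}_i)^B_C$, and deriving the Fierz-type identity $({\mathcal{E}}_i)^{AB}\,({\mathcal{E}}^{i})^{CD}=2\,\epsilon^{AC}\epsilon^{BD}$; the four-square then factorizes as $T^iT_i = 2\,\big(\mathfrak{a}^\ast_A\epsilon^{AC}\mathfrak{a}^\ast_C\big)\big(\mathfrak{b}^\ast_B\epsilon^{BD}\mathfrak{b}^\ast_D\big)$, each factor vanishing separately as an antisymmetric contraction of commuting creation operators. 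You instead work directly with the bare matrix entries $({\mathcal{E}}_i)_{AB}$, invoke the standard Pauli completeness relation, and let the $\mathrm{i}^2=-1$ from ${\mathcal{E}}_{1,2,3}$ combine with the ${\mathcal{E}}_4=\mathbbm{1}_2$ contribution to produce the antisymmetric tensor $2(\delta_{AB}\delta_{CD}-\delta_{AD}\delta_{CB})$, which dies against the symmetric bosonic bilinear $\mathfrak{b}^\ast_B\mathfrak{b}^\ast_D$. Both arguments ultimately rest on the same mechanism — the sign flip from the three imaginary units turning the quadratic sum antisymmetric — but yours is more economical, bypassing the entire $\epsilon$-machinery (which the paper does, however, reuse elsewhere), and your reading of $T_i=\mathfrak{a}^\ast\,{\mathcal{E}}_i\,\mathfrak{b}^\ast$ as the plain matrix contraction is the one consistent with the explicit oscillator expressions for $J_{i5}+\mathrm{i}J_{0i}$ in the surrounding text. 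Your dispatch of $C^iC_i=0$ via $C_i=T_i^\ast$ is also fine and matches the paper, which likewise proves only the $T$ case explicitly. Your computations (the sign bookkeeping you flag, the completeness relation, and the final dummy-index cancellation) all check out.
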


Note that this proposition will be complemented by Proposition \ref{Proposition 3.3}, which specifically justifies the naming of the complex translation operators $T_i$ and the special conformal generators $C_i$.

\begin{proof}{
    {\textbf{— Step I (the role of $\epsilon = \mathrm{i}\sigma_2$ as a spinor metric).} In $2$-component spinor formalism, the anti-symmetric tensor $\epsilon_{AB}$ ($\epsilon_{AB} = - \epsilon_{BA}$, with $A,B = 1,2$) — given explicitly by:
    \begin{align}\label{WEBziad}
        &\epsilon^{}_{AB} = \mathrm{i}\sigma_2 = 
        \begin{pmatrix}
        0 & 1 \\
        -1 & 0
        \end{pmatrix}\,,  \nonumber\\[0.2cm]
        &\epsilon^{AB} = (\epsilon^{}_{AB})^\ast = (\mathrm{i}\sigma_2)^\ast = 
        \begin{pmatrix}
        0 & -1 \\
        1 & 0
        \end{pmatrix}\,,  \nonumber\\[0.2cm]
        &\epsilon^{AC} \epsilon^{}_{CB} = \delta^A_B\,,
    \end{align}
    — plays the distinguished role of a spinor metric, enabling one to raise and lower spinor indices:
    \begin{align}
        \psi_A = \epsilon_{AB}\, \psi^B\,, \quad \psi^A = \epsilon^{AB} \,\psi_B\,.
    \end{align}
    Thus, if $\psi^A = \begin{pmatrix} \psi^1 \\ \psi^2 \end{pmatrix}$, then:
    \begin{align}
        \psi_A = \epsilon^{}_{AB} \psi^B = 
        \begin{pmatrix}
        0 & 1 \\ 
        -1 & 0
        \end{pmatrix} 
        \begin{pmatrix} 
        \psi^1 \\ 
        \psi^2 
        \end{pmatrix} = 
        \begin{pmatrix} \psi^2 \\ -\psi^1 \end{pmatrix}\,.
    \end{align}
    This structure is essential both algebraically and geometrically, for the following reasons:
    \begin{enumerate}[leftmargin=*]
        \item{\textbf{\textit{Anti-symmetry and fermionic bilinears}:} The anti-symmetry of $\epsilon_{AB}$ guarantees that spinor contractions like $\psi^A \chi_A = \psi^A \,\epsilon_{AB}\, \chi^B$ are anti-symmetric under exchange $\psi^A \chi_A = -\chi^A \psi_A$. This anti-symmetry is a necessary condition when dealing with Grassmann-valued spinor fields (as in quantum field theory (QFT)), ensuring that bilinear quantities obey the correct statistics for fermions.}
        \item{\textbf{\textit{Invariant tensor under $\mathrm{SL}(2,\mathbb{C})$}:} The matrix $\epsilon_{AB}$ is, up to a non-zero scalar multiple, the unique non-degenerate, anti-symmetric rank-2 tensor invariant under the action of $\mathrm{SL}(2,\mathbb{C})$ — the double cover of the proper, orthochronous Lorentz group. It satisfies the identity:
        \begin{align}
            (S \epsilon S^\intercal)_{AB} = \epsilon_{AB}\,, \quad \mbox{for all} \quad S \in \mathrm{SL}(2,\mathbb{C})\,,
        \end{align}
        thereby ensuring that spinor index raising and lowering, as well as Lorentz-invariant contractions, are preserved under transformations of the spinor representation.}
        \item{\textbf{\textit{Why $\sigma_1$ and $\sigma_3$ fail}:} The other Pauli matrices, namely $\sigma_1 = \begin{pmatrix} 0 & 1 \\ 1 & 0 \end{pmatrix}$ and $\sigma_3 = \begin{pmatrix} 1 & 0 \\ 0 & -1 \end{pmatrix}$ are symmetric and do not yield anti-symmetric spinor bilinears. For instance:
        \begin{align}
            \psi^A (\sigma_1)_{AB} \chi^B = \chi^A (\sigma_1)_{AB} \psi^B\,,
        \end{align}
        which is symmetric, violating the required anti-symmetry for fermionic contractions. Moreover, these matrices are not invariant under $\mathrm{SL}(2,\mathbb{C})$ transformations and hence do not define Lorentz-invariant inner products on spinor space.}
    \end{enumerate}}
    
    {\textbf{— Step II ($(\mathcal{E}_i)^A_B$ and $(\mathcal{E}_i)^{AB}$).} Recall that the Pauli matrices and $\sigma_0$ read as:
    \begin{align}\label{ramini}
        (\sigma_1)^A_B = 
        \begin{pmatrix}
        0 & 1 \\
        1 & 0
        \end{pmatrix}\,, &\quad 
        (\sigma_2)^A_B = 
        \begin{pmatrix}
        0 & -\mathrm{i} \\
        \mathrm{i} & 0
        \end{pmatrix}\,,  \nonumber\\[0.2cm]
        (\sigma_3)^A_B = 
        \begin{pmatrix}
        1 & 0 \\
        0 & -1
        \end{pmatrix}\,, &\quad 
        (\sigma_0)^A_B = 
        \begin{pmatrix}
        1 & 0 \\
        0 & 1
        \end{pmatrix}\,.
    \end{align}
    Then, the associated mixed-index tensors (``conjugate'' quaternions) $(\mathcal{E}_i)^A_B$ are:
    \begin{align}\label{ramin}
        (\mathcal{E}_i)^A_B := 
        \begin{cases}
        \mathrm{i} (\sigma_i)^A_B\,, &\quad\mbox{for}\quad i = 1,2,3\,, \\[2ex]
        (\sigma_0)^A_B\,, &\quad\mbox{for}\quad i = 4\,.
        \end{cases}
    \end{align}
    Given the above, the bispinor tensors $(\mathcal{E}_i)^{AB}$ are defined by:
    \begin{align}
        (\mathcal{E}_i)^{AB} = \epsilon^{AC} (\mathcal{E}_i)^B_C\,.
    \end{align}}
    
    {\textbf{— Step III (four essential expansions).} As a by-product of the proof of Proposition \ref{TheormCasimirs}, we put forward the following essential identity:
    \begin{align}\label{moosay1}
        ({\mathcal{E}}_i)^A_B \, ({\mathcal{E}}^{i\ast})^C_D = {\mathcal{A}}\, \delta^A_B\, \delta^C_D + {\mathcal{B}}\, \delta^A_D\, \delta^C_B\,,
    \end{align}
    where, again, ${\mathcal{A}}$ and ${\mathcal{B}}$ are constants to be determined. We take two traces of the above expansion:
    \begin{enumerate}[leftmargin=*]
        \item{Contract $A = D$, $B = C$:
            \begin{align}
                8 = 2{\mathcal{A}} + 4{\mathcal{B}} \,.
            \end{align}}
        \item{Contract $A = B$, $C = D$:
            \begin{align}
                4 = 4{\mathcal{A}} + 2{\mathcal{B}}\,.
            \end{align}}
    \end{enumerate}
    We obtain ${\mathcal{A}} = 0$ and ${\mathcal{B}}=2$, and therefore the expansion \eqref{moosay1} takes the form:
    \begin{align}
        ({\mathcal{E}}_i)^A_B \, ({\mathcal{E}}^{i\ast})^C_D = 2\, \delta^A_D\, \delta^C_B\,.
    \end{align}
    Using the spinor metric $\epsilon$, we immediately deduce the second essential expansion:
    \begin{align}
        ({\mathcal{E}}_i)^{AB} \, ({\mathcal{E}}^{i\ast})_{CD} &= \left(\epsilon^{A\dot{B}}\,({\mathcal{E}}_i)^B_{\dot{B}} \right) \left( \epsilon_{C\dot{C}}\,({\mathcal{E}}^{i\ast})^{\dot{C}}_D \right) \nonumber\\[0.2cm]
        &= \left(\epsilon^{A\dot{B}}\, \epsilon_{C\dot{C}} \right) \left( 2\, \delta^B_D\, \delta^{\dot{C}}_{\dot{B}} \right) \nonumber\\[0.2cm]
        &= \left(\epsilon^{A\dot{B}}\, \epsilon_{C\dot{B}} \right) \left( 2\, \delta^B_D \right) \nonumber\\[0.2cm]
        &= \left( \delta^A_C \right) \left( 2\, \delta^B_D \right) = 2\, \delta^A_C\, \delta^B_D\,.
    \end{align}

    The third expansion can be viewed as:
    \begin{align}
        ({\mathcal{E}}_i)^A_B \, ({\mathcal{E}}^{i})^C_D = {\mathcal{A}}^\prime\, \delta^A_B\, \delta^C_D + {\mathcal{B}}^\prime\, \delta^A_D\, \delta^C_B\,,
    \end{align}
    where, again, ${\mathcal{A}}^\prime$ and ${\mathcal{B}}^\prime$ are constants to be determined. We consider two traces of the above expansion:
    \begin{enumerate}[leftmargin=*]
        \item{Contract $A = D$, $B = C$:
            \begin{align}
                -4 = 2{\mathcal{A}}^\prime + 4{\mathcal{B}}^\prime \,.
            \end{align}}
        \item{Contract $A = B$, $C = D$:
            \begin{align}
                4 = 4{\mathcal{A}}^\prime + 2{\mathcal{B}}^\prime \,.
            \end{align}}
    \end{enumerate}    
    We find ${\mathcal{A}}^\prime = -{\mathcal{B}}^\prime = 2$, and hence we have:
    \begin{align}
        ({\mathcal{E}}_i)^A_B \, ({\mathcal{E}}^{i})^C_D = 2\, \delta^A_B\, \delta^C_D - 2\, \delta^A_D\, \delta^C_B\,.
    \end{align} 
    Again employing the spinor metric $\epsilon$, we find the fourth expansion:
    \begin{align}\label{balabala}
        ({\mathcal{E}}_i)^{AB} \, ({\mathcal{E}}^{i})^{CD} &= \left(\epsilon^{A\dot{B}}\,({\mathcal{E}}_i)^B_{\dot{B}} \right) \left(\epsilon^{C\dot{C}}\, ({\mathcal{E}}^{i})^{D}_{\dot{C}} \right) \nonumber\\[0.2cm]
        &= \left(\epsilon^{A\dot{B}}\, \epsilon^{C\dot{C}} \right) \left( 2\, \delta^B_{\dot{B}}\, \delta^D_{\dot{C}} - 2\, \delta^B_{\dot{C}}\, \delta^D_{\dot{B}} \right) \nonumber\\[0.2cm]
        &= 2\, \epsilon^{AB}\, \epsilon^{CD} - 2\, \epsilon^{AD}\, \epsilon^{CB} \nonumber\\[0.2cm]
        &= 2\, \epsilon^{AC}\, \epsilon^{BD}\,.
    \end{align}
    The latter equality, namely $\epsilon^{AB}\, \epsilon^{CD} - \epsilon^{AD}\, \epsilon^{CB} = \epsilon^{AC}\, \epsilon^{BD}$, may require some clarification. Let us compute both sides for specific (non-trivial) values:
    \begin{enumerate}[leftmargin=*]
        \item{$A = 2,\; B = 1,\; C = 1,\; D = 2$: 
        \begin{align}
            \epsilon^{21}\, \epsilon^{12} - \underbrace{\epsilon^{22}\, \epsilon^{11}}_{=\; 0} &= \epsilon^{21}\, \epsilon^{12} \,, \quad {\mbox{checked!}}
        \end{align}}
        \item{$A = 2,\; B = 2,\; C = 1,\; D = 1$: 
        \begin{align}
            \underbrace{\epsilon^{22}\, \epsilon^{11}}_{=\; 0} - \;\epsilon^{21}\, \epsilon^{12} &= \epsilon^{21}\, \epsilon^{21} \,, \quad {\mbox{checked!}}
        \end{align}}
    \end{enumerate}}

    {\textbf{— Step IV (proof of Eq. \eqref{T2=0=C2}).} Taking into account all preceding results — particularly Eq. \eqref{balabala} — we conclude that:
    \begin{align}
        T^i T_i &= \left( \mathfrak{a}^\ast\, {\mathcal{E}}^i\, \mathfrak{b}^\ast \right) \left( \mathfrak{a}^\ast\, {\mathcal{E}}_i\, \mathfrak{b}^\ast \right) \nonumber\\[0.2cm]
        &= \left( \mathfrak{a}^\ast_A\, ({\mathcal{E}}^i)^{AB}\, {\mathfrak{b}^\ast_B} \right) \left( \mathfrak{a}^\ast_C\, ({\mathcal{E}}_{i})^{CD}\, {\mathfrak{b}^\ast_D} \right) \nonumber\\[0.2cm]
        &= \left( \mathfrak{a}^\ast_A\, {\mathfrak{b}^\ast_B} \right) \left( \mathfrak{a}^\ast_C\, {\mathfrak{b}^\ast_D} \right) \left( 2\, \epsilon^{AC}\, \epsilon^{BD} \right) \nonumber\\[0.2cm]
        &= 2 \, \left( \mathfrak{a}^\ast_A\, \epsilon^{AC}\, \mathfrak{a}^\ast_C \right) \left( {\mathfrak{b}^\ast_B} \, \epsilon^{BD}\, {\mathfrak{b}^\ast_D} \right) \nonumber\\[0.2cm]
        &= 2 \, \underbrace{\left( \mathfrak{a}^\ast_1\, \epsilon^{11}\, \mathfrak{a}^\ast_1 + \mathfrak{a}^\ast_1\, \epsilon^{12}\, \mathfrak{a}^\ast_2 + \mathfrak{a}^\ast_2\, \epsilon^{21}\, \mathfrak{a}^\ast_1 + \mathfrak{a}^\ast_2\, \epsilon^{22}\, \mathfrak{a}^\ast_2 \right)}_{=\, 0} \underbrace{\left( \dots \right)}_{=\, 0} = 0 \,.
    \end{align}}
}\end{proof}

\begin{proposition}\label{Proposition 3.3}
    \textbf{(realization of $\mathfrak{su} (2,2)$ in the conformal compact basis and its ladder structure).} Let $J_{\mu\nu} = \frac{1}{2} (\widetilde{\varphi}\, \gamma^{}_{\mu\nu}\, \varphi)$ denote the ``second-quantized'' generators of the Lie algebra $\mathfrak{su}(2,2)\cong\mathfrak{so}(4,2)$, as defined in Eqs. \eqref{123456}-\eqref{654321}. By taking suitable linear combinations of these generators, we construct an alternative realization of the Lie algebra that is naturally adapted to the so-called conformal compact basis:\footnote{It is straightforward to verify the relation:
    \begin{align*}
        C_i \; \big(= - (J_{i5} - \mathrm{i} J_{0i}) \big) = T_i^\ast \; \big(= (J_{i5} + \mathrm{i} J_{0i})^\ast \big)\,,
    \end{align*}
    which follows directly from the anti-Hermitian property \eqref{anti-Hermitian J} of the generators $J^{}_{\mu\nu}$. This identity reflects the underlying symmetry \eqref{C=TAST} between the operators $C_i$ and $T_i$, and will play a key role in the structure of the corresponding Lie algebra representations.}
    \begin{align} 
        \mbox{complex translations}\;;&\quad T_i := + \left(J_{i5} + \mathrm{i} J_{0i}\right) = \mathfrak{a}^\ast\, \mathcal{E}^{}_i \, \mathfrak{b}^\ast \,, \label{Taaa}\\[0.2cm]
        \mbox{special conformal transformations}\;;&\quad C_i := - \left(J_{i5} - \mathrm{i} J_{0i}\right) = \mathfrak{b}\, \mathcal{E}_i^\ast\, \mathfrak{a} \,, \label{Caaa}\\[0.2cm]
        \mbox{conformal Hamiltonian}\;;&\quad H := \mathrm{i} J_{05} = \frac{1}{2} \left( \mathfrak{a}^\ast \mathfrak{a} + \mathfrak{b}\, \mathfrak{b}^\ast \right) \,, \label{Haaa}\\[0.2cm]
        \mbox{angular momentum-like operators}\;;&\quad J_{ij} \,, \quad i,j = 1,2,3,4\,. \label{Sigmaab}
    \end{align}
    These operators generate a closed realization of the Lie algebra $\mathfrak{su}(2,2)\cong\mathfrak{so}(4,2)$, satisfying the canonical commutation relations:
    \begin{align}
        \big[H, J_{ij}\big] &=0 \,, \label{lole}\\[0.1cm]
        \big[C_i, T_j\big] &= 2 H\, \delta_{ij} - 2 J_{ij} \,, \qquad \left( \Longrightarrow \quad H = \frac{1}{8} \sum_{i=1}^4 \; \big[C_i, T_i\big] \right)\,, \label{CaTb} \\[0.1cm]
        \big[H, T_i\big] &= T_i \,, \qquad \Big(\Longrightarrow \quad H \big(T_i |\text{W}\rangle\big) = T_i \big(H |\text{W}\rangle\big) + T_i |\text{W}\rangle\Big)\,, \label{lole11}\\[0.2cm]
        \big[H, C_i\big] &= -\,C_i \,, \qquad \Big(\Longrightarrow \quad H \big(C_i |\text{W}\rangle\big) = C_i \big(H |\text{W}\rangle\big) - C_i |\text{W}\rangle\Big)\,.\label{lole22}
    \end{align}
    
    In this representation of $\mathfrak{su}(2,2)\cong\mathfrak{so}(4,2)$, the generators $T_i$ and $C_i$ act by construction as ladder operators, raising and lowering the eigenvalues of the conformal Hamiltonian $H$, respectively. In particular, from the latter canonical commutation relation, it follows that the lowering operators $C_i$ annihilate the lowest weight state $|\mathrm{LW}_\lambda\rangle$, which is an eigenstate of $H$ with eigenvalue $\mathscr{E}_\circ:= 1+|\lambda|$ (see Eq. \eqref{2.166}):
    \begin{align}\label{boz}
        C_i|\mathrm{LW}_\lambda\rangle = 0 \,.
    \end{align}

    The commutation relation \eqref{lole} implies that the operators $J_{ij}$, which generate the compact subalgebra $\mathfrak{so}(4) \cong \mathfrak{su}(2)_L \oplus \mathfrak{su}(2)_R$, leave invariant the eigenspaces $\mathcal{H}^{(1)}_\mathscr{E}$ of the conformal Hamiltonian $H$, as noted before. In a particularly significant case, the operators $J_{ij}$ act on the lowest weight state in the zero-helicity ($\lambda=0$) case as follows:
    \begin{align}\label{whyFvacumm}
        J_{ij} |\mathrm{LW}_{\lambda=0}\rangle =: J_{ij}|0^{\scriptscriptstyle{\mathrm{SO}(4)}}_{}\rangle = 0\,,
    \end{align}
    for all $i,j=1,2,3,4$. This condition explicitly demonstrates that the zero-helicity lowest weight state $|\mathrm{LW}_{\lambda=0}\rangle =:|0^{\scriptscriptstyle{\mathrm{SO}(4)}}_{}\rangle$ transforms as a singlet under the compact subgroup $\mathrm{SO}(4)$, justifying its designation as a ``Fock vacuum'' with respect to this subgroup (see Eq. \eqref{2.16} and the accompanying discussion). Of course, it should not be confused with the unique conformally invariant zero-energy vacuum state $|0^{\scriptscriptstyle{\mathrm{U}(2,2)}}\rangle$, which is invariant under the full conformal group $\mathrm{U}(2,2)$ and will be introduced later in the context of the complete conformal field theory (CFT) construction.
\end{proposition}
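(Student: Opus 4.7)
The plan is to establish the four commutation relations \eqref{lole}--\eqref{lole22} and then read off the two corollaries \eqref{boz} and \eqref{whyFvacumm}; closure of the resulting bracket is automatic, since $T_i$, $C_i$, $H$, and the $J_{ij}$ are finite linear combinations of the $J_{\mu\nu}$, which already satisfy \eqref{Ghalee}. Two complementary techniques will be used: the abstract bracket \eqref{Ghalee} for the purely Lie-algebraic identities, and the oscillator realization \eqref{Taaa}--\eqref{Haaa} together with \eqref{CRoperators} and \eqref{LubiA} whenever $H$ has to be read as a number operator.

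The relation $[H,J_{ij}]=0$ follows at once from \eqref{Ghalee} with $(\mu,\nu)=(0,5)$ and $(\lambda,\sigma)=(i,j)$: every metric component $\eta_{0i},\,\eta_{5j},\,\eta_{0j},\,\eta_{5i}$ vanishes for $i,j\in\{1,2,3,4\}$, killing all four terms on the right — which is just the restatement that $H$ generates the $\mathfrak{u}(1)$ center of $\mathfrak{u}(1)\oplus\mathfrak{so}(4)$ (compare \eqref{Ghibli}). For the shift relations $[H,T_i]=T_i$ and $[H,C_i]=-C_i$ I would use the oscillator picture: from \eqref{CRoperators} and \eqref{LubiA} one obtains $[\mathfrak{a}^\ast\mathfrak{a},\mathfrak{a}^\ast_A]=\mathfrak{a}^\ast_A$ and $[\mathfrak{b}\mathfrak{b}^\ast,\mathfrak{b}^\ast_B]=\mathfrak{b}^\ast_B$, hence $[H,\mathfrak{a}^\ast_A]=[H,\mathfrak{b}^\ast_B]=\tfrac12$. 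Since $T_i=\mathfrak{a}^\ast\mathcal{E}_i\mathfrak{b}^\ast$ contains exactly one $\mathfrak{a}^\ast$ and one $\mathfrak{b}^\ast$ (the $\mathcal{E}_i$ being $c$-number matrices), the Leibniz rule gives $[H,T_i]=T_i$; Hermitian conjugation then yields $[H,C_i]=-C_i$.

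The main computational step is the commutator $[C_i,T_j]=2H\delta_{ij}-2J_{ij}$. Writing $T_j=J_{j5}+\mathrm{i}J_{0j}$ and $C_i=-J_{i5}+\mathrm{i}J_{0i}$ and applying \eqref{Ghalee} to the four pieces, the only surviving metric contractions are $\eta_{00}=\eta_{55}=-1$ and $\eta_{ij}=\delta_{ij}$, giving $[J_{i5},J_{j5}]=[J_{0i},J_{0j}]=J_{ij}$ and $[J_{0i},J_{j5}]=-[J_{i5},J_{0j}]=\delta_{ij}J_{05}$. Combined with the $\pm\mathrm{i}$ factors from the definitions, this produces $-2J_{ij}+2\mathrm{i}J_{05}\,\delta_{ij}=2H\delta_{ij}-2J_{ij}$, and summing $i=j$ over $i=1,\dots,4$ with $J_{ii}=0$ then yields $H=\tfrac18\sum_i[C_i,T_i]$. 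As an independent sanity check I would also recompute $[C_1,T_1]$ directly from \eqref{Taaa}, \eqref{Caaa}, and \eqref{CRoperators}, in which the anticommutator identities $\mathcal{E}_i^\ast\mathcal{E}_j+\mathcal{E}_j^\ast\mathcal{E}_i=\mathcal{E}_i\mathcal{E}_j^\ast+\mathcal{E}_j\mathcal{E}_i^\ast=2\delta_{ij}\mathbbm{1}_2$, both direct consequences of \eqref{mahdavi}, are the essential structural input. This is the main obstacle I anticipate: the coefficient $2$ in front of $H$ emerges only if the mostly-positive signs $\eta_{00}=\eta_{55}=-1$ and the opposite signs attached to $\mathrm{i}J_{0\mu}$ in $T_j$ versus $C_i$ are tracked without error, so a single sign slip would collapse the identification $H=\tfrac18\sum_i[C_i,T_i]$.

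Finally, \eqref{boz} is forced by spectral positivity: $[H,C_i]=-C_i$ implies that if $C_i|\mathrm{LW}_\lambda\rangle$ were nonzero, it would be an $H$-eigenstate of eigenvalue $\mathscr{E}_\circ-1$, contradicting the minimality \eqref{E100} of $\mathscr{E}_\circ$ on $\mathscr{H}^{(1)}_\lambda$. For \eqref{whyFvacumm} I substitute the explicit expressions \eqref{123456}--\eqref{654321} into $J_{ij}|0^{\scriptscriptstyle{\mathrm{SO}(4)}}\rangle$: each $J_{ij}$ splits into an $\mathfrak{a}$-bilinear that is killed immediately by \eqref{Vacdef}, and a $\mathfrak{b}$-bilinear of the form $\mathfrak{b}^A\mathfrak{b}^\ast_B$ whose normal reordering against the vacuum leaves only the $c$-number $\delta^A_B$; the antisymmetric structure of $J_{ij}$ — for instance the cancellation of $\mathfrak{b}^1\mathfrak{b}^\ast_1$ against $\mathfrak{b}^2\mathfrak{b}^\ast_2$ in $J_{12}$ and $J_{34}$, or of $\mathfrak{b}^A\mathfrak{b}^\ast_B$ against $\mathfrak{b}^B\mathfrak{b}^\ast_A$ in the off-diagonal cases — then forces this residue to vanish, confirming the $\mathrm{SO}(4)$-singlet nature of $|0^{\scriptscriptstyle{\mathrm{SO}(4)}}\rangle$.
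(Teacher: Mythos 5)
Your proof is correct, and it reaches the same conclusions as the paper while making the opposite methodological choices on the two substantive items. The paper's own proof computes $\big[C_i,T_j\big]$ inside the oscillator realization, expanding $C_i=\mathfrak{b}\,\mathcal{E}^\ast_i\,\mathfrak{a}$ against $T_j=\mathfrak{a}^\ast\mathcal{E}^{}_j\mathfrak{b}^\ast$ with \eqref{LubiA} and then using $\mathcal{E}^\ast_i\mathcal{E}^{}_j+\mathcal{E}^\ast_j\mathcal{E}^{}_i=2\delta_{ij}\mathbbm{1}_2$ to recognize $2H\delta_{ij}-2J_{ij}$; you instead derive the same identity abstractly from \eqref{Ghalee} applied to $T_j=J_{j5}+\mathrm{i}J_{0j}$ and $C_i=-J_{i5}+\mathrm{i}J_{0i}$ (your sub-commutators $[J_{i5},J_{j5}]=[J_{0i},J_{0j}]=J_{ij}$ and $[J_{0i},J_{j5}]=-[J_{i5},J_{0j}]=\delta_{ij}J_{05}$ all check out in the mostly-positive signature), relegating the oscillator computation to a consistency check. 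Conversely, for \eqref{whyFvacumm} the paper deliberately avoids the explicit realization and argues from $[H,J_{ij}]=0$ that $J_{ij}$ preserves the one-dimensional lowest eigenspace $\mathcal{H}^{(1)}_{\mathscr{E}_\circ}$ and must therefore act by a scalar that tracelessness forces to vanish, whereas you verify the vanishing directly by normal-ordering the bilinears \eqref{123456}--\eqref{654321} against $|0^{\scriptscriptstyle{\mathrm{SO}(4)}}\rangle$ and observing the cancellation of the $\delta^A_B$ residues; both routes are explicitly endorsed by the paper, and your direct check has the merit of not presupposing the dimension formula \eqref{d(Delta)}, while the paper's argument generalizes to any one-dimensional $H$-eigenspace without inspecting fifteen generators. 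Your treatments of $[H,J_{ij}]=0$, of the shift relations via the Leibniz rule, and of \eqref{boz} via spectral minimality coincide with the paper's. One notational slip worth fixing: you write $[H,\mathfrak{a}^\ast_A]=[H,\mathfrak{b}^\ast_B]=\tfrac12$ where you clearly mean $[H,\mathfrak{a}^\ast_A]=\tfrac12\,\mathfrak{a}^\ast_A$ and $[H,\mathfrak{b}^\ast_B]=\tfrac12\,\mathfrak{b}^\ast_B$; the subsequent Leibniz step shows the intended meaning, but as written the equation equates an operator with a number.
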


Note that the identities stated in the above proposition can be proven either within the oscillator (or twistor) realization \eqref{123456}-\eqref{654321} of the generators or independently, based on the intrinsic algebraic relations \eqref{Ghalee} among the $J_{\mu\nu}$ entities involved. For pedagogical clarity, we present below a proof that seeks to strike a balance between these two approaches, highlighting in each case the method that offers the more instructive perspective.

\begin{proof}{
    {\textbf{— Step I (proof of the commutation relations \eqref{CaTb}).} The commutation relations presented above can be readily derived from the known algebraic properties \eqref{Ghalee} of the generators $J_{\mu\nu}$. Nevertheless, it is instructive to verify them explicitly within the oscillator (or twistor) realization \eqref{123456}-\eqref{654321} of the generators. As an illustration, let us consider \eqref{CaTb}.
    
    By employing Eq. \eqref{LubiA} in combination with the previously established decompositions, we obtain the following commutation relations between the operators $C_i$ and $T_j$:
    \begin{align}\label{tttt}
        &\big[ C_i , T_j\big]  \nonumber\\[0.2cm]
        &= \big[ \mathfrak{b}^A \,({\mathcal{E}}^\ast_i)_{AB} \, \mathfrak{a}^B , \mathfrak{a}^\ast_E\, ({\mathcal{E}}^{}_j)^{ED}\, \mathfrak{b}^\ast_D \big] \nonumber\\[0.2cm]
        &= \mathfrak{b}^A \underbrace{\big[ ({\mathcal{E}}^\ast_i)_{AB} \, \mathfrak{a}^B , \mathfrak{a}^\ast_E\, ({\mathcal{E}}^{}_j)^{ED} \big]}_{=\, ({\mathcal{E}}^\ast_i)_{AB} \, \delta^B_E \, ({\mathcal{E}}^{}_j)^{ED}} \mathfrak{b}^\ast_D + \underbrace{\big[ \mathfrak{b}^A , \mathfrak{a}^\ast_E\, ({\mathcal{E}}^{}_j)^{ED} \big]}_{=\, 0} ({\mathcal{E}}^\ast_i)_{AB} \, \mathfrak{a}^B\, \mathfrak{b}^\ast_D \nonumber\\[0.2cm]
        &\, + \mathfrak{a}^\ast_E\, ({\mathcal{E}}^{}_j)^{ED}\, \mathfrak{b}^A \underbrace{\big[ ({\mathcal{E}}^\ast_i)_{AB} \, \mathfrak{a}^B , \mathfrak{b}^\ast_D \big]}_{=\, 0} + \; \mathfrak{a}^\ast_E\, ({\mathcal{E}}^{}_j)^{ED} \underbrace{\big[ \mathfrak{b}^A , \mathfrak{b}^\ast_D \big]}_{=\, \delta^A_D} ({\mathcal{E}}^\ast_i)_{AB} \, \mathfrak{a}^B  \nonumber\\
        &= \mathfrak{b} \,{\mathcal{E}}^\ast_i  {\mathcal{E}}^{}_j \mathfrak{b}^\ast + \mathfrak{a}^\ast \,{\mathcal{E}}^{}_j  {\mathcal{E}}^\ast_i\, \mathfrak{a} \nonumber\\[0.2cm]
        &= 2 H \, \delta_{ij} - 2 J_{ij}\,.
    \end{align}
    Regarding the derivation of the last line, note that ${\mathcal{E}}^{}_j {\mathcal{E}}^\ast_i = {\mathcal{E}}^\ast_i  {\mathcal{E}}^{}_j = 1$, when $i=j$, and ${\mathcal{E}}^{}_j {\mathcal{E}}^\ast_i = -{\mathcal{E}}^{}_i {\mathcal{E}}^\ast_j$, when $i \neq j$ (recall that $i,j=1,2,3,4$).    
    
    Proceeding analogously to the steps outlined above, with Eq. \eqref{LubiA} in mind, the remaining commutation relations \eqref{lole}, \eqref{lole11}, and \eqref{lole22} follow directly within the oscillator (or twistor) realization \eqref{123456}-\eqref{654321}.}

    {\textbf{— Step II (proof of \eqref{boz}).} Using a procedure analogous to that described in Remark \ref{Remark Raising and Lowering}, the proof of Eq. \eqref{boz} follows directly from the commutation relation \eqref{lole22}. On the other hand, employing the oscillator (or twistor) realization \eqref{123456}-\eqref{654321} of the generators, together with the definition \eqref{Vacdef}, immediately leads to Eq. \eqref{boz}.}

    {\textbf{— Step III (proof of \eqref{whyFvacumm}).} Eq. \eqref{whyFvacumm} can be straightforwardly verified through the (relevant) matrix realization \eqref{123456}-\eqref{654321}, in conjunction with \eqref{Vacdef}. We, however, choose to present an alternative — and arguably more illustrative — proof that relies purely on the algebraic relations intrinsic to the generators. 
    
    We begin with the conformal Hamiltonian $H$ and the angular momentum-like generators $J_{ij}$, which satisfy the commutation relation $[H, J_{ij}] = 0$ \eqref{lole}. This relation implies that each $J_{ij}$ preserves the eigenspaces $\mathcal{H}^{(1)}_\mathscr{E} \subset \mathscr{H}^{(1)}_\lambda$ of $H$ \eqref{EDelta}; that is:
    \begin{align}\label{SigmaOnE}
        J_{ij} \;;\quad \mathcal{H}^{(1)}_\mathscr{E} \;\longmapsto\; \mathcal{H}^{(1)}_\mathscr{E} \,.
    \end{align}
    In particular, this invariance extends to the zero-helicity, lowest-energy eigenspace $\mathcal{H}^{(1)}_{\mathscr{E}_\circ}$, which contains solely (up to an overall scalar factor) the lowest weight state $|\mathrm{LW}_{\lambda=0}\rangle =: |0^{\scriptscriptstyle{\mathrm{SO}(4)}}_{}\rangle =: |1,0,0\rangle$. As a matter of fact, as can be deduced from Eq. \eqref{d(Delta)} for $\texttt{j}_L=0=\texttt{j}_R$, this eigenspace is $1$-dimensional. It then naturally follows that the action of $J_{ij}$ on this state must take the form:
    \begin{align}
        J_{ij} |\mathrm{LW}_{\lambda=0}\rangle = \alpha_{ij} |\mathrm{LW}_{\lambda=0}\rangle \,,
    \end{align}
    for some scalar coefficients $\alpha_{ij} \in \mathbb{C}$.

    The generators $J_{ij}$ span the compact subalgebra $\mathfrak{so}(4) \cong \mathfrak{su}(2)_L \oplus \mathfrak{su}(2)_R$, and therefore act as angular momentum operators. In any unitary representation, they are represented by traceless anti-Hermitian matrices. On the $1$-dimensional space $\mathcal{H}^{(1)}_{\mathscr{E}_\circ}$, the only possible linear operators are scalar multiples of the identity. Imposing tracelessness forces the scalar to vanish. Hence, the only consistent possibility is:
    \begin{align}
        \alpha_{ij} = 0 \quad\Longrightarrow\quad J_{ij} |\mathrm{LW}_{\lambda=0}\rangle =: J_{ij}|0^{\scriptscriptstyle{\mathrm{SO}(4)}}_{}\rangle = 0 \,,
    \end{align}
    which shows that $|\mathrm{LW}_{\lambda=0}\rangle=: J_{ij}|0^{\scriptscriptstyle{\mathrm{SO}(4)}}_{}\rangle$ transforms trivially — i.e., as a singlet — under the compact subgroup $\mathrm{SO}(4)$ generated by $J_{ij}$.}
}\end{proof}

\section{Complex Analytic Picture: Globally Conformal Invariant Vertex Algebra}

In terms of the homogeneous coordinates $\mathbb{R}^{4,2} \ni \zeta = (\zeta_\mu \;;\; \mu=0,5,1,2,3,4)$ over the Dirac quadric $Q$ \cite{D36}:
\begin{align}\label{2.36}
    Q \;;\quad &\zeta^i \zeta_i \left(:= \sum_{i=1}^4 (\zeta_i)^2 \right) = (\zeta_0)^2 + (\zeta_5)^2 \;>\; 0\,, \nonumber\\[0.2cm]
    &\zeta \sim \varrho\zeta\,, \quad\mbox{with}\quad \varrho\in\mathbb{R} \setminus \{0\}\,,
\end{align}
the complex coordinates $z_i$, on the compactified Minkowski space\footnote{The identification of the conformal compactification of Minkowski space with the Shilov boundary of the associated Hermitian symmetric domain is recalled in Appendix \ref{Appendix Shilov boundary}.}:
\begin{align}
    M^{\texttt{\tiny{compact}}}_{} \cong \frac{\mathbb{S}^3 \times \mathbb{S}^1}{\mathbb{Z}_2} \cong \mathrm{U}(2) \,,
\end{align} 
are defined by \cite{T86}:
\begin{align}\label{2.38-1}
    z_i = \frac{\zeta_i}{\zeta^5 - \mathrm{i}\zeta^0}\,, \quad i=1,\dots,4\,. 
\end{align}
One notes that:
\begin{align}
    \zeta_i = \zeta^i \quad \big(\Longrightarrow\; z_i = z^i\big)\,, \quad\mbox{for}\quad i=1,\dots,4\,,
\end{align}
\begin{align}
    \zeta_k = - \zeta^k\,, \quad\mbox{for}\quad k=0,5\,,
\end{align}
\begin{align}
    \frac{z_i}{z_j} = \frac{\zeta_i}{\zeta_j} \in \mathbb{R}\,,  
\end{align}
\begin{align}\label{zoverlinez=1}
    z \cdot\overline{z} \;\left( := \sum_{i=1}^4 z^i \overline{z}_i \right) = 1\,,
\end{align}
\begin{align}\label{2.37}
    (z)^2 = z\cdot z \;\left( := \sum_{i=1}^4 z^i {z}_i \right) &= \frac{\zeta^i\zeta_i}{(\zeta^5 - \mathrm{i}\zeta^0)(\zeta^5 - \mathrm{i}\zeta^0)}  \nonumber\\[0.2cm]
    &= \frac{(\zeta^5 + \mathrm{i}\zeta^0)(\zeta^5 - \mathrm{i}\zeta^0)}{(\zeta^5 - \mathrm{i}\zeta^0)(\zeta^5 - \mathrm{i}\zeta^0)} \nonumber\\[0.1cm]
    &= \frac{(\zeta^5 + \mathrm{i}\zeta^0)}{(\zeta^5 - \mathrm{i}\zeta^0)} =: e^{2\mathrm{i}\tau}\,,
\end{align}
where $\tau \in \mathbb{R}$ is called ``conformal time''. 

The condition $z \cdot \overline{z} = 1$ constrains $z$ to lie on the $3$-sphere $\mathbb{S}^3 \subset \mathbb{C}^4$, while the complex phase of $(z)^2 = z \cdot z = e^{2\mathrm{i}\tau}$ naturally parametrizes a circle $\mathbb{S}^1$. Together, these properties show that, locally, the complex coordinates $z$ describe the product space $\mathbb{S}^3 \times \mathbb{S}^1$. Globally, the projective identification $\zeta \sim -\zeta$ induces the equivalence $z \sim -z$, pairing antipodal points on this product. As a result, the compactified Minkowski space inherits the topology $\mathbb{S}^3 \times \mathbb{S}^1 / \mathbb{Z}_2 \cong \mathrm{U}(2)$, capturing both the spherical and circular structure together with the discrete $\mathbb{Z}_2$ identification arising from projectivity.

The complex variables $z_i$ are related to the real Minkowski spacetime $M\cong\mathbb{R}^{3,1}$ coordinates $x^{\dot{\mu}}$ ($\dot{\mu} = 0,1,2,3$) by a complex conformal transformation:
\begin{align}
    g_c \;;\quad M\cong\mathbb{R}^{3,1} \quad\longmapsto\quad M^{\texttt{\tiny{compact}}}_{} \cong \frac{\mathbb{S}^3 \times \mathbb{S}^1}{\mathbb{Z}_2} \cong \mathrm{U}(2)\,,
\end{align}
which explicitly reads as:
\begin{align}\label{2.38}
    g_c (x\in M) = z = \left( \vec{z} = \frac{\vec{x}}{\Omega(x)}\,, \; z_4 = \frac{1-(x)^2}{2\Omega(x)} \right) \in M^{\texttt{\tiny{compact}}}_{}\,,
\end{align}
where $\vec{z}=(z^1, z^2, z^3)$, $\vec{x}=(x^1,x^2,x^3)$, $(x)^2 = x^{\dot{\mu}}x_{\dot{\mu}} = (\vec{x})^2 - (x_0)^2$, and:
\begin{align}
    \Omega(x) = \frac{1+(x)^2}{2} - \mathrm{i}x^0 \,.
\end{align}
With the above definition, one checks that $z\cdot\overline{z}=1$ and $(z)^2=e^{2\mathrm{i}\tau}$. Moreover, we observe that:
\begin{align}\label{2.38'}
    (\mathrm{d}z)^2 = \left( := \sum_{i=1}^4 \mathrm{d}z^i\, \mathrm{d}z_i \right) = \frac{(\mathrm{d}x)^2}{\Omega^2(x)} \,,
\end{align}
\begin{align}\label{2.38''}
    (z-u)^2 = \big(\underbrace{g_c(x)}_{=:\,z} - \underbrace{g_c(y)}_{=:\,u}\big)^2 = \frac{(x-y)^2}{\Omega(x)\,\Omega(y)}\,.
\end{align}
Equation \eqref{2.38'} implies that the pullback of the flat metric in $z$-space is conformally equivalent to the Minkowski metric in $x$-space, thereby establishing conformality in the infinitesimal (local) sense. Hence, the map $g_c$ is locally conformal. On the other hand, Eq. \eqref{2.38''} demonstrates that causal structure and angular relationships are preserved up to a position-dependent scale factor, confirming that $g_c$ is globally conformal as well.

To give meaning to quantum fields depending on complex variables, we recall some properties of the vector-valued distribution $\hat{\phi}(x)|0^{\scriptscriptstyle{\mathrm{U}(2,2)}}\rangle$ which follow from the basic assumptions (Wightman assumptions) of relativistic QFT \cite{BLOT, SW}.

\begin{proposition} 
    \textbf{(analytic continuation to the forward tube and its conformal compactification).} We recall:
    \begin{enumerate}[leftmargin=*]
        \item{Wightman axioms imply that the (ket-)vector-valued distribution $\hat{\phi}(x)|0^{\scriptscriptstyle{\mathrm{U}(2,2)}}\rangle$, where $\hat{\phi}$ is a relativistic quantum field and $|0^{\scriptscriptstyle{\mathrm{U}(2,2)}}\rangle$ is the \textit{Poincar\'{e}-invariant}\footnote{It is important to clarify that throughout this work, we denote by $|0^{\scriptscriptstyle{\mathrm{U}(2,2)}}\rangle$ the unique vacuum state of the theory. In the setting of CFT, this vacuum is invariant under the full conformal group $\mathrm{U}(2,2)$, and in particular under its Poincar\'{e} subgroup. In the broader context of QFT, $|0^{\scriptscriptstyle{\mathrm{U}(2,2)}}\rangle$ refers to the Poincar\'{e}-invariant vacuum defined within the Wightman framework, characterized by translation invariance and the spectral condition. In theories with exact conformal symmetry, these two notions are compatible and in fact coincide. Accordingly, it is natural and consistent to use the same notation for both.\label{footnoteConfVac}} vacuum vector, arises as the boundary value ($y \to 0$) — in the distributional sense — of an analytic (ket-)vector-valued function defined in the forward tube $\mathfrak{T}_+$:
        \begin{align}\label{2.39}
            &\hat{\phi}(x+\mathrm{i}y) |0^{\scriptscriptstyle{\mathrm{U}(2,2)}}\rangle\,,  \nonumber\\[0.2cm]
            &\mbox{for}\quad (x+\mathrm{i}y) \in \mathfrak{T}_+ = \Big\{ x+\mathrm{i}y \;;\; x,y \in \mathbb{R}^{3,1}\,,\; y^0>|\vec{y}| \Big\} \,,
        \end{align}
        where $\vec{y} = (y^1, y^2, y^3)$ and $|\vec{y}| = \sqrt{(y^1)^2 + (y^2)^2 + (y^3)^2}$.}
        
        \item{The rational complex coordinate transformation $g_c$, mapping the complexified Minkowski space $M_{\mathbb{C}}\cong\mathbb{C}^{3,1}$ onto the complexified Euclidean space $E_{\mathbb{C}} \cong \mathbb{C}^4$: 
        \begin{align}
            g_c \;;\quad M_{\mathbb{C}}\cong\mathbb{C}^{3,1} \quad\longmapsto\quad E_{\mathbb{C}}\cong\mathbb{C}^4\,,
        \end{align}
        is regular (holomorphic) on the forward tube $\mathfrak{T}_+$ and maps it into:
        \begin{align}
            g_c \;;\quad \mathfrak{T}_+\subset M_{\mathbb{C}} \quad\longmapsto\quad \mathfrak{T}^{\texttt{\tiny{precompact}}}_+\subset E_{\mathbb{C}}\,,
        \end{align}
        which explicitly reads:
        \begin{align}\label{2.40}
            g_c(\mathfrak{T}_+) = \mathfrak{T}^{\texttt{\tiny{precompact}}}_+ = \left\{ z \in \mathbb{C}^4 \;;\; \left|(z)^2\right| < 1,\; z \cdot \overline{z} < \frac{1}{2} \left(1 + \left|(z)^2\right|^2 \right) \right\} \,.
        \end{align}
        Note that $\mathfrak{T}^{\texttt{\tiny{precompact}}}_+$ is a precompact domain\footnote{The domain $\mathfrak{T}^{\texttt{\tiny{precompact}}}_+$ is, by construction, an open subset of $\mathbb{C}^4$. Since it is open, it is not closed, and therefore not compact (in finite-dimensional spaces, compactness is equivalent to closedness and boundedness). Nevertheless, $\mathfrak{T}^{\texttt{\tiny{precompact}}}_+$ is bounded, being entirely contained within the open unit ball. Consequently, its closure lies inside the closed unit ball, which is compact. Thus, while the set itself is not compact, its closure is compact, and the set is therefore referred to as precompact (or relatively compact).} in $E_{\mathbb{C}}$. Furthermore, this transformation preserves the form of Eqs. \eqref{2.38'} and \eqref{2.38''} under analytic continuation.}
        
        \item{The tube domain $\mathfrak{T}_+$ and its image $\mathfrak{T}^{\texttt{\tiny{precompact}}}_+$ are each invariant under real conformal transformations\footnote{Without assuming conformal invariance of underlying QFT (a fact noted earlier by V. Glaser, unpublished), $\mathfrak{T}^{\texttt{\tiny{precompact}}}_+$ is a homogeneous space of the conformal group; the stabilizer subgroup of $z=0\in\mathfrak{T}^{\texttt{\tiny{precompact}}}_+$ is the maximal compact subgroup $\mathrm{S}\big(\mathrm{U}(2)\times\mathrm{U}(2)\big) \subset \mathrm{SU}(2,2)$.} \cite{U}. The closure $M^{\texttt{\tiny{compact}}}$ of the precompact image of $M$ in $E_{\mathbb{C}}$ has the form:
        \begin{align}\label{r,e,tau}
            M^{\texttt{\tiny{compact}}} = \Bigg\{ z = r e^{\mathrm{i}\tau}\;;\;& r=(r_i\,, \; i=1,\dots,4)\in\mathbb{R}^4 \,, \; \tau\in\mathbb{R}\,,\;  \nonumber\\[0.2cm]
            & (r)^2 \left(= \sum_{i=1}^4 (r_i)^2\right)=1 \Bigg\} \;\cong\; \frac{\mathbb{S}^3\times\mathbb{S}^1}{\mathbb{Z}_2}\,.
        \end{align}
        Note that $z$ remains invariant under the change of coordinates $(r,\tau) \longmapsto (-r, \tau+\pi)$.}
    \end{enumerate}
\end{proposition}

\begin{proof}
    {\textbf{— Step I (the proof of $1$).} The proof of (1) relies on the spectrum condition (energy positivity and Poincar\'{e} invariance) and standard properties of the Laplace transform of tempered distributions \cite{BLOT, SW}.

    By the Wightman axioms, the state $\hat{\phi}(x) |0^{\scriptscriptstyle{\mathrm{U}(2,2)}}\rangle$ defines a vector-valued tempered distribution on $\mathbb{R}^{3,1}$. In particular, its Fourier transform exists in the sense of tempered distributions and has support contained in the closed forward light-cone:
    \begin{align}
        V_+ := \Big\{ p \in \mathbb{R}^{3,1} \;;\; p^0 \geq 0\,,\; (p^{0})^2 - (\vec{p})^2\geq 0\Big\}\,.
    \end{align}
    This is the content of the spectrum condition, which follows from energy positivity and Poincar\'{e} invariance.

    We now apply the Laplace transform to this distribution. For any $y \in \mathbb{R}^{3,1}$, with $y^0 > |\vec{y}|$ (where $\vec{y} = (y^1, y^2, y^3)$ and $|\vec{y}| = \sqrt{(y^1)^2 + (y^2)^2 + (y^3)^2}$), we define the complex spacetime point as $z = x + \mathrm{i} y$, and consider the (ket-)vector-valued function as:
    \begin{align}
        \hat{\phi}(z) |0^{\scriptscriptstyle{\mathrm{U}(2,2)}}\rangle := \int_{V_+} e^{ -\mathrm{i} p_{\dot{\mu}} z^{\dot{\mu}} } \, \hat{\phi}(p) |0^{\scriptscriptstyle{\mathrm{U}(2,2)}}\rangle \, \mathrm{d}p \,,
    \end{align}
    where $\hat{\phi}(p) |0^{\scriptscriptstyle{\mathrm{U}(2,2)}}\rangle$ denotes the Fourier transform of $\hat{\phi}(x) |0^{\scriptscriptstyle{\mathrm{U}(2,2)}}\rangle$ and $\dot{\mu}=0,1,2,3$. Since $p^0 \geq |\vec{p}| \geq 0$ and $y^0 > |\vec{y}|$, and by Cauchy-Schwarz inequality:
    \begin{align}\label{Cauchy-Schwarz}
        -|\vec{p}|\, |\vec{y}| \;\leq\; \vec{p}\cdot\vec{y} = p^1y^1 +  p^2y^2 +  p^3y^3 \;\leq\; |\vec{p}|\, |\vec{y}| \,,
    \end{align}
    we have:
    \begin{align}
        \Im\left(p_{\dot{\mu}} z^{\dot{\mu}}\right) \;\leq\; - p^0 y^0 + |\vec{p}|\, |\vec{y}| \;\leq\; -\underbrace{p^0}_{\geq\; 0} \underbrace{\big( y^0 - |\vec{y}| \big)}_{>\; 0} \;\leq\; 0\,,
    \end{align}
    which ensures that the exponential factor $e^{-\mathrm{i} p_{\dot{\mu}} z^{\dot{\mu}}}=e^{-\mathrm{i} p_{\dot{\mu}} x^{\dot{\mu}}} e^{p_{\dot{\mu}} y^{\dot{\mu}}}$ decays exponentially in $p$. As a result, the integral converges for all $z \in \mathfrak{T}_+$, where $\mathfrak{T}_+ := \left\{ z \in \mathbb{C}^4 \;;\; \Im z = y,\, y^0 > |\vec{y}| \right\}$. Thus, $\hat{\phi}(z) |0^{\scriptscriptstyle{\mathrm{U}(2,2)}}\rangle$ defines a well-defined, analytic vector-valued function in the forward tube $\mathfrak{T}_+$.

    On the other hand, by standard results on the Laplace transform of tempered distributions with support in the forward light-cone (see, e.g., \cite{BLOT, SW}), the analytic function $\hat{\phi}(z) |0^{\scriptscriptstyle{\mathrm{U}(2,2)}}\rangle$ admits the original distribution $\hat{\phi}(x) |0^{\scriptscriptstyle{\mathrm{U}(2,2)}}\rangle$ as its boundary value in the distributional sense as $y \to 0$ within $\mathfrak{T}_+$. That is, for any smooth test function $f(x)$ with compact support in $M\cong\mathbb{R}^{3,1}$, one has:
    \begin{align}
        \lim_{y \to 0,\, y^0 > |\vec{y}|} \int_{\mathbb{R}^{3,1}}& f(x)\, \hat{\phi}(x + \mathrm{i} y) |0^{\scriptscriptstyle{\mathrm{U}(2,2)}}\rangle\, \mathrm{d}x  \nonumber\\[0.2cm]
        &\qquad= \int_{\mathbb{R}^{3,1}} f(x)\, \hat{\phi}(x) |0^{\scriptscriptstyle{\mathrm{U}(2,2)}}\rangle\, \mathrm{d}x \,.
    \end{align}
    This establishes that $\hat{\phi}(x) |0^{\scriptscriptstyle{\mathrm{U}(2,2)}}\rangle$ arises as the boundary value of the analytic function $\hat{\phi}(z) |0^{\scriptscriptstyle{\mathrm{U}(2,2)}}\rangle$ from the forward tube $\mathfrak{T}_+$. This completes the proof.}

    {\textbf{— Step II (the proof of $2$).} The proof of the second item follows from a direct calculation \cite{T86, N}. As an illustration, let us focus on the first part of the statement — namely, the regularity of the conformal map:
    \begin{align}
        g_c\;;\quad w \quad\longmapsto\quad z =&\, \left(\vec z=\frac{\vec w}{\Omega(w)}\,,\; z_4 = \frac{1-(w)^2}{2\Omega(w)}\right), \nonumber\\[0.2cm]
        \Omega(w) =&\, \frac{1+(w)^2}{2}-\mathrm{i}w^0\,,
    \end{align}
    on the forward tube $\mathfrak{T}_+= \big\{ w := x+\mathrm{i}y \;;\; x,y \in\mathbb{R}^{3,1}\,,\; y^0 > |\vec{y}| \big\}$. The map $g_c$ is regular (holomorphic) on $\mathfrak{T}_+$ provided that its denominator never vanishes there; in other words, regularity of $g_c$ on the forward tube requires that $\Omega(w) \neq 0$ for all $w \in\mathfrak{T}_+$.
    
    Suppose, for contradiction, that $\Omega(w)=0$ at some $w\in\mathfrak T_+$:
    \begin{align}
        \Omega(w) &= \frac{1+(w)^2}{2} - \mathrm{i}w^0 \nonumber\\[0.2cm]
        &= \frac{1+(x+\mathrm{i} y)^2}{2}-\mathrm{i} (x^0 + \mathrm{i}y^0) \nonumber\\[0.1cm]
        &= \frac{1+(x)^2-(y)^2}{2} + y^0 - \mathrm{i} (x^0 - x\cdot y) = 0\,.
    \end{align}
    This implies that:
    \begin{align}\label{595159}
        \Re\Omega(w) = \frac{1+(x)^2-(y)^2}{2} + y^0 = 0 \quad\Longrightarrow\quad (x)^2 &= -\big(1+y^0\big)^2 + |\vec{y}|^2 \,, \nonumber\\[0.2cm] 
        \big(\text{since}\; y^0 > |\vec{y}| \geq 0\big) \quad & \Downarrow \nonumber\\[0.2cm]
        (x^0)^2 &> |\vec{x}|^2 \geq 0 \,,
    \end{align}
    and:
    \begin{align}\label{4362}
        \Im\Omega(w) = x^0 - x\cdot y = 0 \quad\Longrightarrow\quad x^0 = x\cdot y =&\, -x^0 y^0 + \vec{x}\cdot\vec{y} \nonumber\\[0.2cm]
        & \Downarrow \nonumber\\[0.2cm]
        x^0 \big(1+ y^0\big) =&\, \vec{x}\cdot\vec{y} \nonumber\\[0.2cm]
        \big(\text{by Cauchy-Schwarz inequality \eqref{Cauchy-Schwarz}}\big) \quad & \Downarrow \nonumber\\[0.2cm]
        -|\vec{x}|\, |\vec{y}| \leq x^0 \big(& 1+ y^0\big) \leq |\vec{x}|\, |\vec{y}| \,.
    \end{align}
    The latter inequality evidently contradicts the conditions $y^0 > |\vec{y}| \geq 0$ and $(x^0)^2 > |\vec{x}|^2 \geq 0$. Hence, the assumption $\Omega(w) = 0$ is invalid, and the conformal transformation $g_c$ is therefore regular (holomorphic) over the entire forward tube $\mathfrak{T}_+$.}

    {\textbf{— Step III (the proof of $3$).} The map $g_c\,;\ x \;\longmapsto\; z = r e^{\mathrm{i}\tau}$, where $(r)^2=1$, was originally proposed \cite{T86, U} as a (generalized) Cayley map from the Lie algebra $\mathfrak{u}(2)$ to the group $\mathrm{U}(2)$:
    \begin{align}\label{2.42}
        g_c \;;\;\;\;& \mathfrak{r} = x^j{\mathcal{E}}^\ast_j + \mathrm{i}x^0 \in \mathfrak{su}(2)\oplus\mathfrak{u}(1) \cong \mathfrak{u}(2) \nonumber\\[0.2cm]
        &\qquad\qquad\longmapsto\quad z := z^j{\mathcal{E}}^\ast_j = \frac{1+\mathfrak{r}}{1-\mathfrak{r}} =: r e^{\mathrm{i}\tau} \in \mathrm{U}(2)\,,
    \end{align}
    where, in accord with the notational conventions established in Proposition \ref{proposition maghz}, the symbols ${\mathcal{E}}^\ast_j$ ($j=1,2,3$) denote the standard imaginary quaternionic units, which satisfy $({\mathcal{E}}^\ast_j)^2 = {\mathcal{E}}^\ast_1 {\mathcal{E}}^\ast_2 {\mathcal{E}}^\ast_3 = - {\mathcal{E}}^\ast_4 = -\mathbbm{1}$. The scalar $\mathrm{i}\in\mathbb{C}$, representing the complex imaginary unit with $\mathrm{i}^2=-1$, is assumed to commute with all ${\mathcal{E}}^\ast_j$ ($j=1,2,3,4$), i.e., $\big[\mathrm{i}, {\mathcal{E}}^\ast_j\big] = 0$. On this basis, by multiplying both the numerator and the denominator of $(1+\mathfrak{r})/(1-\mathfrak{r})$ by $1+\overline{\mathfrak{r}} = 1 + x^j{\mathcal{E}}^{\ast}_j - \mathrm{i}x^0$,\footnote{Note that $\mathfrak{r}\overline{\mathfrak{r}} = -(x)^2$.} one verifies that $z:= z^j{\mathcal{E}}^\ast_j = \frac{1+\mathfrak{r}}{1-\mathfrak{r}}$ is indeed equivalent to \eqref{2.38}. Keeping this equivalence in mind, one readily finds that $z \cdot \overline{z} = 1$ and $(z)^2 = e^{2\mathrm{i}\tau}$, which together justify $z:=\frac{1+\mathfrak{r}}{1-\mathfrak{r}} = r e^{\mathrm{i}\tau} \in \mathrm{U}(2)$.}
\end{proof}

\begin{Remark}
    {\textbf{(backward-tube analyticity of $\langle0^{\scriptscriptstyle{\mathrm{U}(2,2)}}| \hat{\phi}(x)$).} Importantly, the corresponding bra-vector-valued function $\langle0^{\scriptscriptstyle{\mathrm{U}(2,2)}}| \hat{\phi}(x)$ is analytic in the backward tube $\mathfrak{T}_-$ — complex conjugate to $\mathfrak{T}_+$:
    \begin{align}\label{2.43}
        &\langle0^{\scriptscriptstyle{\mathrm{U}(2,2)}}| \hat{\phi}(x+\mathrm{i}y) \nonumber\\[0.2cm]
        &\mbox{analytic in}\quad \mathfrak{T}_- \; \left( = \overline{\mathfrak{T}_+} \right)= \Big\{ x+\mathrm{i}y \;;\; x, y\in\mathbb{R}^4,\, y^0<-|\vec{y}| \Big\}\,.
    \end{align}
    The backward tube $\mathfrak{T}^{\texttt{\tiny{precompact}}}_-$ can also be defined in the $z$-picture as the image of $\mathfrak{T}_-$ under the map $g_c$ \eqref{2.38}: 
    \begin{align}
        g_c \;;\quad \mathfrak{T}_-\subset M_{\mathbb{C}}\cong\mathbb{C}^{3,1} \quad\longmapsto\quad \mathfrak{T}^{\texttt{\tiny{precompact}}}_-\subset E_{\mathbb{C}}\cong\mathbb{C}^4 \,.
    \end{align}}
\end{Remark}

\begin{Remark}
    {\textbf{(distributional nature of the field and analyticity of the two-point function).} The bra and ket vectors $\langle 0^{\scriptscriptstyle{\mathrm{U}(2,2)}} | \hat{\phi}(x)$ and $\hat{\phi}(x) | 0^{\scriptscriptstyle{\mathrm{U}(2,2)}} \rangle$, respectively, are ill-defined for real spacetime points $x$. They are only meaningful as vector-valued distributions — that is, they become well-defined only after smearing $\hat{\phi}(x)$ with a suitable test function. In particular, this distributional nature is reflected in vacuum expectation values of field products, such as the Wightman two-point function:
    \begin{align}
        W(z_1, z_2) = \langle 0^{\scriptscriptstyle{\mathrm{U}(2,2)}} | \hat{\phi}(z_1) \hat{\phi}(z_2) | 0^{\scriptscriptstyle{\mathrm{U}(2,2)}} \rangle \,,
    \end{align}
    where $z_j = x_j + \mathrm{i}y_j$, with $j=1,2$. The two-point function $W(z_1, z_2)$ is naturally defined on complexified spacetime and possesses different (complex conjugate) analyticity domains in its first ($z_1$) and second ($z_2$) arguments.}
\end{Remark}

\begin{proposition}
    \textbf{(conformal conjugation exchanging the forward and backward tubes).} A conformal conjugation in $E_{\mathbb{C}}\cong\mathbb{C}^4$ is defined by:
    \begin{align}\label{conjugate}
        z \quad\longmapsto\quad z^\ast := \frac{\overline{z}}{(\overline{z})^2}\,.
    \end{align}
    This conformal conjugation interchanges the precompact forward and backward tubes:
    \begin{align}\label{2.44}
        z \in \mathfrak{T}^{\texttt{\tiny{precompact}}}_\pm \quad\Longleftrightarrow\quad z^\ast \in \mathfrak{T}^{\texttt{\tiny{precompact}}}_\mp\,,
    \end{align}
    and leaves the compactified Minkowski space $M^{\texttt{\tiny{compact}}}$ pointwise invariant:
    \begin{align}
        z^\ast = z\,, \quad\mbox{for any}\quad z \in M^{\texttt{\tiny{compact}}}\,.
    \end{align}
    In fact, the conformal conjugation acts as the identity on $M^{\texttt{\tiny{compact}}}$.
\end{proposition}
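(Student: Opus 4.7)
The plan is to establish all three assertions by direct computation in the Euclidean complex coordinates of $E_{\mathbb{C}}\cong\mathbb{C}^4$, exploiting the involutivity of conformal conjugation and the Shilov-boundary characterization of $M^{\texttt{\tiny{compact}}}$ as the locus where the defining inequalities \eqref{2.40} of $\mathfrak{T}^{\texttt{\tiny{precompact}}}_+$ are saturated.

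First I would record the elementary transformation identities
\begin{align*}
    \overline{z^\ast} = \frac{z}{(z)^2}\,,\qquad (z^\ast)^2 = \frac{1}{\overline{(z)^2}}\,,\qquad z^\ast\cdot\overline{z^\ast} = \frac{z\cdot\overline{z}}{\big|(z)^2\big|^2}\,,
\end{align*}
from which the involutivity $(z^\ast)^\ast = z$ follows immediately wherever $(z)^2\neq 0$. The pointwise invariance of $M^{\texttt{\tiny{compact}}}$ then reduces to a one-line parametric calculation: substituting the parametrization \eqref{r,e,tau}, i.e., $z=re^{\mathrm{i}\tau}$ with $r\in\mathbb{R}^4$ and $(r)^2=1$, one gets $\overline{z}=re^{-\mathrm{i}\tau}$, $(\overline{z})^2=e^{-2\mathrm{i}\tau}$, and therefore $z^\ast=re^{\mathrm{i}\tau}=z$, which simultaneously establishes that the map acts as the identity operator on $M^{\texttt{\tiny{compact}}}$.

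For the interchange property \eqref{2.44}, I would then use the identities above to translate the two defining inequalities of $\mathfrak{T}^{\texttt{\tiny{precompact}}}_+$ into $|(z^\ast)^2|>1$ together with the same Shilov-type quadratic bound (the latter obtained by dividing the second inequality of \eqref{2.40} through by $|(z)^2|^2$). To identify the resulting region with $\mathfrak{T}^{\texttt{\tiny{precompact}}}_- = g_c(\mathfrak{T}_-)$, I would establish the functional identity
\begin{align*}
    z^\ast = g_c\!\left(\overline{g_c^{-1}(z)}\right)\,,
\end{align*}
expressing Euclidean conformal conjugation on $E_{\mathbb{C}}$ as the $g_c$-pushforward of ordinary complex conjugation $w\mapsto\overline{w}$ on $M_{\mathbb{C}}$. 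The interchange $z\in\mathfrak{T}^{\texttt{\tiny{precompact}}}_\pm\Leftrightarrow z^\ast\in\mathfrak{T}^{\texttt{\tiny{precompact}}}_\mp$ then reduces to the tautology $\overline{\mathfrak{T}_+}=\mathfrak{T}_-$ combined with the involutivity of $z\mapsto z^\ast$.

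The principal technical obstacle is the verification of the functional identity $z^\ast = g_c\big(\overline{g_c^{-1}(z)}\big)$, because the Cayley conformal factor $\Omega(w)=\tfrac{1+(w)^2}{2}-\mathrm{i}w^0$ does not intertwine naively with complex conjugation — one has $\overline{\Omega(w)}\neq\Omega(\overline{w})$ owing to the term $-\mathrm{i}w^0$. The check requires carefully combining the conjugation of both numerator and denominator of \eqref{2.38} with the Euclidean inversion implicit in the denominator $(\overline{z})^2$ of $z^\ast$; spot calculations on simple families (e.g., $w=\mathrm{i}y$ with $y$ timelike) already match both sides, strongly suggesting that the general case follows by analytic continuation of a clean algebraic identity, which is the step I expect to absorb most of the technical effort.
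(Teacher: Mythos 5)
Your proposal is correct and follows essentially the same route as the paper: the paper's Step I is precisely your functional identity $z^\ast = g_c\big(\overline{g_c^{-1}(z)}\big)$ (it shows by direct computation that $z = g_c(x+\mathrm{i}y)$ implies $z^\ast = g_c(x-\mathrm{i}y)$, reducing the tube interchange to $\overline{\mathfrak{T}_+}=\mathfrak{T}_-$), and its Step II is exactly your $z=re^{\mathrm{i}\tau}$ parametrization for pointwise invariance. The technical obstacle you flag in fact closes cleanly: writing $\tilde\Omega(w):=\tfrac{1+(w)^2}{2}+\mathrm{i}w^0$ one finds $\big(g_c(w)\big)^2=\tilde\Omega(w)/\Omega(w)$ and $\overline{\Omega(w)}=\tilde\Omega(\overline{w})$, so the factor $(\overline{z})^2$ in the denominator of $z^\ast$ converts each component $\overline{w_i}/\tilde\Omega(\overline{w})$ of $\overline{z}$ into $\overline{w_i}/\Omega(\overline{w})$, which is exactly $g_c(\overline{w})$ — no analytic continuation argument is needed.
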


\begin{proof}{
    {\textbf{— Step I (interchanging the tubes).} Let $z \in \mathfrak{T}^{\texttt{\tiny{precompact}}}_+$, i.e., there exists $x + \mathrm{i} y \in \mathfrak{T}_+$ (with $x, y \in {M\cong\mathbb{R}^{3,1}}$ and $y^0 > |\vec{y}|$), such that:
    \begin{align}
        z = g_c(x + \mathrm{i} y)\,.
    \end{align}
    We need to show:
    \begin{align}\label{eq:goal}
        z^\ast = \frac{\overline{g_c(x + \mathrm{i} y)}}{\left(\overline{g_c(x + \mathrm{i} y)}\right)^2} = g_c(x - \mathrm{i} y)\,,
    \end{align}
    where, by construction, $(x - \mathrm{i} y) \in \overline{{\mathfrak{T}}_+} = {\mathfrak{T}}_-$, and hence $g_c(x - \mathrm{i} y) \in \mathfrak{T}^{\texttt{\tiny{precompact}}}_-$. This follows from a direct computation using the explicit form:
    \begin{align}
        g_c(x \pm \mathrm{i} y) = \left( \frac{\vec{x} \pm \mathrm{i} \vec{y}}{\Omega_\pm},\; \frac{1 - (x \pm \mathrm{i} y)^2}{2 \Omega_\pm} \right)\,,
    \end{align}
    where $\Omega_\pm := \frac{1 + (x \pm \mathrm{i} y)^2}{2} - \mathrm{i}(x^0 \pm \mathrm{i} y^0)$.
    The converse direction follows similarly by reversing the sign of $\mathrm{i} y$.}
 
    {\textbf{— Step II ($z^\ast = z$ iff $z \in M^{\texttt{\tiny{compact}}}$).} This follows immediately from the realization \eqref{r,e,tau}, valid for all $z \in M^{\texttt{\tiny{compact}}}$.}
}\end{proof}

\begin{Remark}\label{Remark z to x}
    {\textbf{(Euclidean interpretation of the conformal map $g_c$ and its inverse).} If we define $x_4 := -\mathrm{i} x^0$ and reinterpret $g_c$ as a map between two Euclidean spaces, i.e., $(\vec{x}, x_4) \longmapsto (\vec{z}, z_4)$, then the inverse map $z \longmapsto x$ retains the same functional form as \eqref{2.38}:
    \begin{align}\label{2.38888888}
        g^{-1}_c(z) = x = \left( \vec{x} = \frac{\vec{z}}{\Omega(z)} \,, \; x_4 = \frac{1-(z)^2}{2\Omega(z)} \right) \,,
    \end{align}
    where $\Omega(z) = \frac{1+(z)^2}{2} + z_4$. We actually have:
    \begin{align}
        g^{-1}_c \big(g^{}_c (x) \big) = x \,.
    \end{align}}
\end{Remark}

\subsection{Free Massless Scalar Field}

We now reformulate the theory of the free massless scalar field $\hat{\phi}(x)$, whose two-point function in Minkowski spacetime $M\cong\mathbb{R}^{3,1}$ is given by:
\begin{align}\label{2.46}
    \langle \hat{\phi}(x_1)\, \hat{\phi}(x_2) \rangle_0 &= \frac{1}{(2\pi)^2 \left((\vec{x}^{}_{12})^2 - (x_{12}^0 - \mathrm{i}0)^2\right)} \nonumber\\[0.2cm]
    &= \int \frac{e^{-{\mathrm{i}}\big(\vec{p}\cdot \vec{x}^{}_{12} - |\vec{p}|x^0_{12}\big)}}{(2\pi)^3 \;2 |\vec{p}|} \, \mathrm{d}^3\vec{p}\,,
\end{align}
where $x^{}_{12} = x^{}_1 - x^{}_2$ (that is, $x^0_{12} = x_1^0 - x_2^0$ and $\vec{x}^{}_{12} = \vec{x}^{}_1 - \vec{x}^{}_2$) and $p_0 = |\vec{p}| = \sqrt{p_1^2 + p_2^2 + p_3^2}$ is the energy of a free massless scalar field. We reinterpret this theory in terms of a conformal vertex operator:
\begin{align}\label{ConformalVertexOperator}
    \hat{\phi}(z) := \frac{2\pi}{\Omega(z)} \, \hat{\phi}\big(x=g_c(z)\big)\,,
\end{align}
which defines a scalar field of \textit{conformal dimension}\footnote{The ``conformal dimension'' (also called ``scaling dimension'') $\mathscr{E}$ of a field $\phi(x)$ is defined by its transformation under dilations (scale transformations) of the spacetime coordinates:
\begin{align*}
    &x^{\dot{\mu}} \quad\longmapsto\quad x^{\prime\dot{\mu}} = \varrho x^{\dot{\mu}}\,, \quad \varrho>0 \,,\nonumber\\[0.2cm]
    &\qquad\qquad\Longrightarrow\qquad \phi(x) \quad\longmapsto\quad \phi^\prime(x^\prime) \sim \phi(x^\prime) = \varrho^{-\mathscr{E}} \phi(x) \,.
\end{align*}
Note that, in the last expression, the ``prime'' on the field is omitted by convention ($\phi^\prime(x^\prime) \sim \phi(x^\prime)$), and the field is understood to be the transformed one, identified by its transformed coordinate. This transformation rule ensures that correlation functions and the action remain consistent with scale invariance. For a free, massless scalar field in $4$-dimensional spacetime, conformal invariance of the kinetic term $\int d^4x\, (\partial_{\dot{\mu}} \hat{\phi})^2$ fixes $\mathscr{E} = 1$; see also footnote \ref{footnoteCdimension}. In our case, the field $\hat{\phi}(z) := \frac{2\pi}{\Omega(z)}\, \hat{\phi}\big(g_c(z)\big)$ is defined as a conformally rescaled version of the free scalar field $\hat{\phi}(x)$, where $x = g_c(z)$ is a conformal map from the compactified coordinates $z$ to Minkowski coordinates $x$. The factor $\frac{1}{\Omega(z)}$ ensures that $\hat{\phi}(z)$ transforms as a primary scalar field of conformal dimension $\mathscr{E} = 1$ under conformal transformations in the compactified coordinates. \label{footnote cTransform}} $\mathscr{E}=1$ on the conformally compactified Minkowski space $M^{\texttt{\tiny{compact}}}$. 

Remarkably, the two-point function of the transformed field $\hat{\phi}(z)$ has essentially the same form as \eqref{2.46}, up to a conformally invariant factor. Specifically, by starting from the original two-point function \eqref{2.46} and considering the conformal transformation \eqref{2.38888888} together with the corresponding conformal distance relation (analogous to \eqref{2.38''}), we obtain:
\begin{align}
    \big\langle \hat{\phi}\big(x_1=g_c(z)\big)\, \hat{\phi}\big(x_2=g_c(u)\big) \big\rangle_0 &= \frac{1}{(2\pi)^2} \frac{1}{\big( g_c(z) - g_c(u) \big)^2} \nonumber\\[0.2cm]
    &= \frac{1}{(2\pi)^2} \frac{\Omega(z)\, \Omega(u)}{(z - u)^2}\,.
\end{align}
Then, to obtain the two-point function of the rescaled field $\hat{\phi}(z) = \frac{2\pi}{\Omega(z)}\, \hat{\phi}\big(g_c(z)\big)$, we multiply the above expression by the corresponding factors:
\begin{align}\label{2.47}
    \boldsymbol{W}(z,u) :=&\, \langle 0^{\scriptscriptstyle{\mathrm{U}(2,2)}} | \hat{\phi}(z)\, \hat{\phi}(u) | 0^{\scriptscriptstyle{\mathrm{U}(2,2)}} \rangle \nonumber\\[0.2cm]
    =&\, \left( \frac{2\pi}{\Omega(z)} \right) \left( \frac{2\pi}{\Omega(u)} \right) \big\langle \hat{\phi}\big(g_c(z)\big)\, \hat{\phi}\big(g_c(u)\big) \big\rangle_0 \nonumber\\[0.1cm]
    =&\, \frac{1}{(z - u)^2}\,.
\end{align}
One notes that $\boldsymbol{W}(z,u)$ incorporates the conformal weight, thereby ensuring the correct transformation behavior under conformal mappings.

It is important to emphasize that the asymmetry between the two factors $\hat{\phi}(x_1)$ and $\hat{\phi}(x_2)$, manifested through the presence of $(\dots-\mathrm{i}0)^2$ (or, equivalently, through the support of the Fourier transform) in \eqref{2.46}, persists in the analytic ($z$-)picture — though it is no longer apparent in the notation used in \eqref{2.47}. Strictly speaking, the function $\boldsymbol{W}(z,u)$ is analytic for $u \in \mathfrak{T}^{\texttt{\tiny{precompact}}}_+$ and $z \in \mathfrak{T}^{\texttt{\tiny{precompact}}}_-$, where $\mathfrak{T}^{\texttt{\tiny{precompact}}}_+$ is the bounded domain defined in \eqref{2.40} that contains the origin ($u = 0$), and $\mathfrak{T}^{\texttt{\tiny{precompact}}}_-$ — implicitly defined by Eqs. \eqref{conjugate} and \eqref{2.44} — is an unbounded domain surrounding the point at infinity.

This reformulation sets the stage for introducing and illustrating the fundamental structures of a higher-dimensional conformal vertex algebra. Through explicit computations, it reveals several unexpected features of the familiar free scalar field theory.

\begin{proposition}
    \textbf{(extension of conformal conjugation to field operators and one-particle states).} The conformal conjugation map \eqref{conjugate} extends naturally to field operators and to vectors in the Hilbert space ${\mathscr{H}}^{(1)}_{\lambda=0}$ of zero-helicity, massless one-particle states, by defining:
    \begin{align}\label{2.48}
        \big( \hat{\phi}(z) | 0^{\scriptscriptstyle{\mathrm{U}(2,2)}} \rangle \big)^\ast = \langle 0^{\scriptscriptstyle{\mathrm{U}(2,2)}} | \big(\hat{\phi}(z)\big)^\ast \,, \quad\mbox{where}\quad \big(\hat{\phi}(z)\big)^\ast = \frac{1}{(\overline{z})^2}\, \hat{\phi}(z^\ast) \,.
    \end{align}
\end{proposition}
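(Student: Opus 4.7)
The first equality $\big(\hat\phi(z)|0^{\scriptscriptstyle{\mathrm{U}(2,2)}}\rangle\big)^\ast = \langle 0^{\scriptscriptstyle{\mathrm{U}(2,2)}}|\big(\hat\phi(z)\big)^\ast$ is simply the defining relation of the Hermitian adjoint of an operator evaluated on the vacuum ket, and carries no independent content. The substance of the proposition therefore lies in the explicit formula $\big(\hat\phi(z)\big)^\ast = (\overline{z})^{-2}\,\hat\phi(z^\ast)$. My plan is to derive this identity by transporting the Hermiticity of the underlying real Minkowski field $\hat\phi(x)$ through the conformal map $g_c$ and its analytic continuation, and then reducing what remains to an elementary algebraic identity for the rescaling factor $\Omega$ under conformal conjugation.

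First, I would use that $\hat\phi(x)$ is Hermitian, $(\hat\phi(x))^\ast=\hat\phi(x)$, and that, by the tube-analyticity theorem recalled earlier in the text, this Hermiticity propagates to the analytic setting as $(\hat\phi(w))^\ast = \hat\phi(\overline{w})$ for $w\in\mathfrak{T}_+$ and $\overline{w}\in\mathfrak{T}_-$. Setting $w = g_c^{-1}(z)\in\mathfrak{T}_+$ and inserting the vertex-operator definition $\hat\phi(z) = (2\pi/\Omega(z))\,\hat\phi(w)$, Hermitian conjugation acts only as complex conjugation on the scalar prefactor, yielding
\begin{align*}
    \big(\hat\phi(z)\big)^\ast \;=\; \frac{2\pi}{\overline{\Omega(z)}}\, \hat\phi(\overline{w})\,.
\end{align*}

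Next, I would invoke the preceding proposition, which shows that $g_c$ intertwines complex conjugation in $M_{\mathbb{C}}$ with conformal conjugation on $E_{\mathbb{C}}$, so that $g_c(\overline{w}) = z^\ast$, equivalently $\overline{w} = g_c^{-1}(z^\ast)$. This allows $\hat\phi(\overline{w})$ to be rewritten in the $z$-picture as $\hat\phi(\overline{w}) = (\Omega(z^\ast)/2\pi)\,\hat\phi(z^\ast)$, and substitution gives
\begin{align*}
    \big(\hat\phi(z)\big)^\ast \;=\; \frac{\Omega(z^\ast)}{\overline{\Omega(z)}}\,\hat\phi(z^\ast)\,.
\end{align*}
The proof is thereby reduced to the algebraic identity $\Omega(z^\ast) = \overline{\Omega(z)}/(\overline{z})^2$. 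Using the Euclidean parametrization $\Omega(z) = \tfrac{1}{2}\big(1+(z)^2\big) + z_4$ of Remark \ref{Remark z to x}, together with $(z^\ast)^2 = 1/(\overline{z})^2$ and $z^\ast_4 = \overline{z}_4/(\overline{z})^2$, a direct computation gives $(\overline{z})^2\,\Omega(z^\ast) = \tfrac{1}{2}\big(1+(\overline{z})^2\big) + \overline{z}_4 = \overline{\Omega(z)}$, which is exactly the required identity.

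The principal delicate point is the Hermiticity relation $(\hat\phi(w))^\ast = \hat\phi(\overline{w})$ in the analytic regime: since $\hat\phi$ is an operator-valued distribution, this must be understood as the compatibility between the ket-valued analytic function $\hat\phi(w)|0^{\scriptscriptstyle{\mathrm{U}(2,2)}}\rangle$ on $\mathfrak{T}_+$ and the bra-valued analytic function $\langle 0^{\scriptscriptstyle{\mathrm{U}(2,2)}}|\hat\phi(\overline{w})$ on $\mathfrak{T}_- = \overline{\mathfrak{T}_+}$, both of which arise as analytic continuations of the same Hermitian distribution on the real Minkowski boundary. Once this analytic-distributional framework is in place, the identification $g_c(\overline{w}) = z^\ast$ from the previous proposition and the algebraic verification $(\overline{z})^2\,\Omega(z^\ast) = \overline{\Omega(z)}$ are routine, and the claimed formula $\big(\hat\phi(z)\big)^\ast = (\overline{z})^{-2}\,\hat\phi(z^\ast)$ follows.
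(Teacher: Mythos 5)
Your proof is correct, but it takes a genuinely different route from the paper's. The paper treats \eqref{2.48} essentially as a \emph{definition} justified by conformal covariance: it decomposes the conjugation map as complex conjugation composed with the inversion $z \mapsto \breve{z} = z/(z)^2$, computes the Weyl factor $(\mathrm{d}\breve{z})^2 = (\mathrm{d}z)^2/\big((z)^2\big)^2$ of the inversion, and then applies the transformation law of a scalar primary of dimension $\mathscr{E}=1$ at the conjugated point $\overline{z}$, so that the prefactor $1/(\overline{z})^2$ is read off as the conformal weight under inversion. You instead \emph{derive} the formula from the Hermiticity of the underlying Minkowski field, transported through the tube analyticity of $\hat\phi(w)|0^{\scriptscriptstyle{\mathrm{U}(2,2)}}\rangle$, the intertwining relation $g_c(\overline{w}) = z^\ast$ established in the preceding proposition, and the algebraic identity $(\overline{z})^2\,\Omega(z^\ast) = \overline{\Omega(z)}$ (which I have checked against $\Omega(z) = \tfrac{1}{2}(1+(z)^2) + z_4$ from Remark \ref{Remark z to x}; it holds). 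What each approach buys: the paper's argument exhibits \eqref{2.48} as the unique conjugation compatible with conformal covariance of a dimension-one primary, and is shorter; yours grounds the same formula in the Wightman-theoretic reality of $\hat\phi(x)$ and the already-proved tube-interchange property of $g_c$, making the prefactor a consequence rather than a covariance postulate. You correctly flag the one delicate point — that $(\hat\phi(w))^\ast = \hat\phi(\overline{w})$ must be read at the level of the ket-valued function on $\mathfrak{T}_+$ paired against the bra-valued function on $\mathfrak{T}_- = \overline{\mathfrak{T}_+}$, both boundary-valued by the same Hermitian distribution — which is exactly the sense in which the first equality of \eqref{2.48} is to be understood. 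No gap.
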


\begin{proof}{
    To derive \eqref{2.48}, we interpret the conjugated operator $\big(\hat{\phi}(z)\big)^\ast$ as arising from $\hat{\phi}(\overline{z})$ via the conformal inversion.
    
    \textbf{— Step I (conformal inversion and Weyl scaling).} Conformal inversion reads as: 
    \begin{align}\label{inversion}
        z \quad\longmapsto\quad \breve{z} := \frac{z}{(z)^2} \,.
    \end{align}
    The conformal conjugation map \eqref{conjugate} can thus be understood as the composition of complex conjugation with conformal inversion:    
    \begin{align}
        z^\ast = \breve{\overline{z}} = \overline{\breve{z}} = \frac{\overline{z}}{(\overline{z})^2} \,.
    \end{align}
    
    The inversion map \eqref{inversion} is conformal in the sense that the squared line element $(\mathrm{d}z)^2 = \sum_{i=1}^4 \mathrm{d}z^i\, \mathrm{d}z_i$ is rescaled by a Weyl factor. This can be seen by explicitly computing the differential of $\breve{z}_i$:
    \begin{align}\label{diffbrevez}
        \breve{z}_i = \frac{z_i}{(z)^2} \quad \Longrightarrow \quad \mathrm{d} \breve{z}_i = \frac{\mathrm{d}z_i}{(z)^2} - \frac{z_i\, \mathrm{d}(z)^2}{\left((z)^2\right)^2}\,,
    \end{align}
    where $i=1,2,3,4$. Since $(z)^2 = \sum_{j=1}^4 z^j z_j$, it follows that $\mathrm{d}(z)^2 = 2 \sum_{j=1}^4 z^j\, \mathrm{d}z_j$. Substituting this into \eqref{diffbrevez} yields:
    \begin{align}
        \mathrm{d} \breve{z}_i &= \frac{\mathrm{d}z_i}{(z)^2} - \frac{2 z_i \,\sum_{j=1}^4 z^j\, \mathrm{d}z_j}{\left((z)^2\right)^2} \nonumber\\[0.2cm]
        &= \sum_{j=1}^4 \left( \frac{\delta_i^j}{(z)^2} - \frac{2 z_i z^j}{\left((z)^2\right)^2} \right) \mathrm{d}z_j =: \sum_{j=1}^4 \tau_i^j \;\mathrm{d}z_j\,,
    \end{align}
    where:
    \begin{align}
        \tau_i^j := \frac{\delta_i^j}{(z)^2} - \frac{2 z_i z^j}{\left((z)^2\right)^2}\,, \quad\mbox{satisfying}\quad \sum_{k=1}^4 \;\tau^j_k \tau^k_i = \frac{\delta^j_i}{\left((z)^2\right)^2}\,.
    \end{align}
    Using the above, we compute the transformed line element:
    \begin{align}\label{weylfactor}
        (\mathrm{d} \breve{z} )^2 &= \sum_{i=1}^4 \mathrm{d}\breve{z}^i\, \mathrm{d}\breve{z}_i \nonumber\\[0.2cm]
        &= \sum_{i,j,k=1}^4 
        \tau^i_j\, \tau_i^k\, \mathrm{d}z^j \,\mathrm{d}z_k \nonumber\\
        &= \frac{1}{\left((z)^2\right)^2}\,\sum_{j,k} \delta_j^k\, \mathrm{d}z^j\, \mathrm{d}z_k = \frac{(\mathrm{d}z)^2}{\left((z)^2\right)^2}\,.
    \end{align}
    Thus, inversion rescales the metric by the Weyl factor $1/\left((z)^2\right)^2$, confirming that $z \longmapsto \breve{z}$ is indeed a conformal transformation.
    
    \textbf{— Step II (application to the field operator).} Under the inversion map \eqref{inversion}, a scalar primary field of conformal dimension $\mathscr{E} = 1$ transforms as (see footnote \ref{footnote cTransform}):
    \begin{align}
        \hat{\phi}(z) \quad\longmapsto\quad \hat{\phi}(\breve{z}) = (z)^2 \hat{\phi}(z)\,.
    \end{align}
    Equivalently:
    \begin{align}
        \hat{\phi}(z) = \frac{1}{(z)^2}\, \hat{\phi}(\breve{z}) \,.
    \end{align}
    To define the conjugated field operator $\big(\hat{\phi}(z)\big)^\ast$, we apply the above rule to the complex conjugated coordinate $\overline{z} \,\longmapsto\, \breve{\overline{z}} = z^\ast = {\overline{z}}/{(\overline{z})^2}$, yielding \eqref{2.48}:
    \begin{align}\label{dimi}
        \frac{1}{(\overline{z})^2}\, \hat{\phi}(z^\ast) =: \big(\hat{\phi}(z)\big)^\ast\,.
    \end{align}
    This expression reflects both the geometric action of the conformal conjugation map and the conformal transformation law of the scalar field, with the prefactor accounting for the field's conformal weight $\mathscr{E} = 1$.
}\end{proof}

Using \eqref{2.47} and \eqref{2.48}, one readily verifies that the vector-valued function $\hat{\phi}(z) |0^{\scriptscriptstyle{\mathrm{U}(2,2)}}\rangle$ has a finite positive norm squared for $z\in{\mathfrak{T}}^{\texttt{\tiny{precompact}}}_+$:
\begin{align}\label{2.50}
    \big\| \hat{\phi}(z) |0^{\scriptscriptstyle{\mathrm{U}(2,2)}}\rangle \big\|^2 &= \frac{1}{(\overline{z})^2} \langle0^{\scriptscriptstyle{\mathrm{U}(2,2)}} | \hat{\phi}(z^\ast)\, \hat{\phi}(z) |0^{\scriptscriptstyle{\mathrm{U}(2,2)}}\rangle \nonumber\\[0.2cm]
    &= \frac{1}{1 - 2z\cdot \overline{z} + (z)^2 (\overline{z})^2} < \infty\,.
\end{align}
Note that, for all $z\in{\mathfrak{T}}^{\texttt{\tiny{precompact}}}_+$, one has $2z\cdot \overline{z} < 1 + (z)^2 (\overline{z})^2$ (see Eq. \eqref{2.40}). The quantity on the right-hand side of \eqref{2.50} coincides (up to normalization) with the $\mathrm{U}(2,2)$-invariant Bergman reproducing kernel on ${\mathfrak{T}}^{\texttt{\tiny{precompact}}}_+$.

\section{$4$-dimensional Conformal Vertex Algebra}

QFT is primarily concerned with the vacuum representation of products of quantum fields. The algebra of globally conformally invariant fields — such as free massless fields and their operator product expansions — fits naturally into the framework of a $4$-dimensional (or higher) vertex algebra, as developed in Ref. \cite{NT01} and further summarized in Section 4.3 of Ref. \cite{NT}. This formalism not only generalizes the structure of field products beyond perturbation theory but also offers a more natural and conceptually unified perspective than the explicit classification of individual UPEIRs. We now
proceed to outline its key features.

The states of the theory span a \textit{(pre-)Hilbert space}\footnote{A ``pre-Hilbert space'' is a complex vector space equipped with a positive-definite inner product, but not necessarily complete with respect to the induced norm. Its completion yields a Hilbert space. In many QFT settings — especially CFT — the natural space of states is first defined as a dense subspace of a Hilbert space, typically consisting of finite-energy or polynomially bounded states. This subspace suffices for algebraic constructions, such as operator products and the action of symmetry generators, and is often referred to as a pre-Hilbert space. The completion is then required to rigorously establish limits and convergence. Thus, the term ``pre-Hilbert space'' reflects whether one is working directly with the dense, physically meaningful subspace or with its Hilbert space completion, depending on the level of mathematical rigor required.} $\mathscr{H}$ that carries a (reducible) unitary representation $U$ of the conformal group $\mathrm{U}(2,2)$, featuring a unique conformally invariant vacuum state $|0^{\scriptscriptstyle{\mathrm{U}(2,2)}}\rangle$. This conformally invariant vacuum is the unique normalized lowest weight state in $\mathscr{H}$ that is invariant under the full conformal group $\mathrm{U}(2,2)$. Explicitly, the vacuum satisfies:
\begin{align}\label{CIvac}
    U(g) |0^{\scriptscriptstyle{\mathrm{U}(2,2)}}\rangle = |0^{\scriptscriptstyle{\mathrm{U}(2,2)}}\rangle\,, \quad\mbox{for all}\quad g \in \mathrm{U}(2,2)\,.
\end{align}

\begin{proposition} \label{proposition moham}
    \textbf{(spectral decomposition of the conformal Hamiltonian).} Let $\mathscr{H}$ be a (pre-)Hilbert space carrying a (possibly reducible) unitary positive-energy representation of the conformal group $\mathrm{U}(2,2)$, and let $H := \mathrm{i} J_{05}$ \eqref{Haaa} (see also \eqref{CaTb}) denote the conformal Hamiltonian, which is essentially self-adjoint and bounded from below. Then:
    \begin{enumerate}[leftmargin=*]
        \item{The spectrum of $H$ is purely discrete and non-negative:
        \begin{align}
            \operatorname{Spec}(H) = \Big\{ \mathscr{E} \in \mathbb{R}_{\geq 0} \;;\; \mathscr{E} = 0, 1, \tfrac{3}{2}, 2, \dots \Big\}\,,
        \end{align}
        where each eigenvalue $\mathscr{E}$ has finite multiplicity.}
        \item{The space $\mathscr{H}$ decomposes into a direct sum of (finite-dimensional) eigenspaces $\mathcal{H}^{}_\mathscr{E}$ of the conformal Hamiltonian $H$:
        \begin{align}
            \mathscr{H} = \bigoplus_{\mathscr{E} \in \operatorname{Spec}(H)} \mathcal{H}^{}_\mathscr{E}\,, \quad\mbox{such that}\quad (H - \mathscr{E}) \mathcal{H}^{}_\mathscr{E} = 0 \,.
        \end{align}}
        
        \item{The conformally invariant vacuum $|0^{\scriptscriptstyle{\mathrm{U}(2,2)}}\rangle$ spans the lowest-energy ($1$-dimensional) eigenspace:
        \begin{align}
            \mathcal{H}^{}_{\mathscr{E}=0} = \mathbb{C} \, |0^{\scriptscriptstyle{\mathrm{U}(2,2)}}\rangle \,.
        \end{align}}
        
        \item{Each eigenspace $\mathcal{H}^{}_\mathscr{E}$ is generally not invariant under the full conformal group $\mathrm{U}(2,2)$; rather, it is preserved only by the compact subgroup $\mathrm{Spin}(4)$. This can be seen as a consequence of Proposition \ref{Proposition 3.3}, particularly as established in Step III of its proof. More precisely, each $\mathcal{H}^{}_\mathscr{E}$ carries a (generally reducible) finite-dimensional representation of $\mathrm{Spin}(4) \cong \mathrm{SO}(4)$. The central element $-\mathbbm{1}_4 \in \mathrm{Spin}(4)$ acts on states in $\mathcal{H}^{}_\mathscr{E}$ by multiplication with their parity, given by: 
        \begin{align}
            (-1)^{2(\texttt{j}_L + \texttt{j}_R)} = (-1)^{2\mathscr{E}} \;\in\; \big\{-1, +1\big\} \,.
        \end{align}
        This identity follows directly from Remark \ref{Remark E=JL+JR}.}
    \end{enumerate}
\end{proposition}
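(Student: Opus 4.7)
The plan is to reduce the global analysis of $H$ on $\mathscr{H}$ to the per-UPEIR spectral decomposition already established in this chapter, assembling the pieces via a direct-sum decomposition into irreducible components of $\mathrm{U}(2,2)$.

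First I would decompose $\mathscr{H}$ into its isotypic UPEIR pieces under $\mathrm{U}(2,2)$. Each irreducible component is either the trivial one-dimensional representation — whose unique normalized vector must coincide with the conformally invariant vacuum $|0^{\scriptscriptstyle{\mathrm{U}(2,2)}}\rangle$ by the uniqueness postulated in \eqref{CIvac}, giving claim (3) and the $\mathscr{E}=0$ level — or a ladder UPEIR $\mathscr{C}^>_{\mathscr{E}_\circ,\texttt{j}_L,\texttt{j}_R}$ with lowest conformal energy $\mathscr{E}_\circ = 1+|\lambda|$, $\lambda\in\tfrac12\mathbb{Z}$, as classified in Remark \ref{Remark parity start}. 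This exhausts the admissible UPEIRs by the lowest-weight condition associated with positivity of $H$.

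Within each ladder UPEIR, the spectrum of $H$ has already been shown in \eqref{Delta} to be discrete, equal to $\{\mathscr{E}_\circ, \mathscr{E}_\circ+1,\mathscr{E}_\circ+2,\ldots\}$, with each level a finite direct sum of $\mathrm{SO}(4)$ irreps constrained by $\texttt{j}_L,\texttt{j}_R \leq \mathscr{E}-1$ \eqref{-1'} and of dimension $(2\texttt{j}_L+1)(2\texttt{j}_R+1)$ \eqref{d(Delta)}. Invariance of each level under $\mathrm{Spin}(4)$ follows from $[H,J_{ij}]=0$ \eqref{lole}. Summing the per-UPEIR spectra over irreducible components yields the global spectrum $\{0,1,\tfrac32,2,\tfrac52,\ldots\}$ of claim (1), since $1+|\lambda|$ ranges over $\{1,\tfrac32,2,\ldots\}$ and descendant levels differ from $\mathscr{E}_\circ$ by non-negative integers; the direct-sum decomposition of claim (2) is then the combination of the $\mathrm{U}(2,2)$-isotypic decomposition with the per-UPEIR energy grading. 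Claim (4) follows from the standard fact that the central element $-\mathbbm{1}_4\in\mathrm{Spin}(4)$ acts on an irreducible $(\texttt{j}_L,\texttt{j}_R)$-module by the scalar $(-1)^{2(\texttt{j}_L+\texttt{j}_R)}$, combined with Remark \ref{Remark E=JL+JR}, which gives $\mathscr{E}-(\texttt{j}_L+\texttt{j}_R)\in\mathbb{Z}_{>0}$, so $2\mathscr{E}$ and $2(\texttt{j}_L+\texttt{j}_R)$ differ by an even integer and $(-1)^{2(\texttt{j}_L+\texttt{j}_R)}=(-1)^{2\mathscr{E}}$.

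The main obstacle is the finite-multiplicity assertion at the level of the full $\mathscr{H}$. Per-UPEIR, each eigenspace is finite-dimensional automatically, but the isotypic decomposition in the first step could \emph{a priori} involve infinitely many UPEIRs contributing to the same eigenvalue $\mathscr{E}$, which would yield infinite multiplicity. Ruling this out requires input beyond bare unitarity and positivity of $H$ — typically that $\mathscr{H}$ is the state space of a conformal vertex algebra of finite type, in the sense that the space of primary fields at each conformal weight is finite-dimensional, or equivalently that the two-point functions satisfy a growth bound ensuring $\dim\mathcal{H}_\mathscr{E}<\infty$. For the massless scalar theory constructed in \eqref{ConformalVertexOperator}--\eqref{2.47}, this finiteness is verifiable by explicit mode counting on the Fock space of the ladder operators, and it is precisely the vertex-algebra framework of Ref. \cite{NT01} and Section 4.3 of Ref. \cite{NT}, alluded to in the paragraph preceding the proposition, that supplies the required structural assumption in the general setting.
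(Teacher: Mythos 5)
The paper states this proposition without writing out a proof, but the ingredients it cites (Eqs.\ \eqref{Haaa}, \eqref{CaTb}, \eqref{Delta}, \eqref{d(Delta)}, Proposition \ref{Proposition 3.3} Step III, Remark \ref{Remark E=JL+JR}) are exactly the ones you assemble, so your route is the intended one: decompose $\mathscr{H}$ into UPEIRs, read off the per-representation spectrum $\{\mathscr{E}_\circ,\mathscr{E}_\circ+1,\dots\}$ with $\mathscr{E}_\circ=1+|\lambda|$, adjoin the trivial representation for the $\mathscr{E}=0$ level, and deduce the parity identity from $\mathscr{E}-(\texttt{j}_L+\texttt{j}_R)\in\mathbb{Z}_{>0}$. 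Your argument for claim (3) is sound provided one adds the one-line observation that any vector with $H$-eigenvalue $0$ is automatically annihilated by the $C_i$ (since $H\geq 0$ forbids an eigenvalue $-1$), hence is a lowest-weight vector with $\mathscr{E}_\circ=0$, which the classification only allows for the trivial representation; uniqueness of the vacuum \eqref{CIvac} then gives the one-dimensionality. Your final paragraph correctly identifies the one genuine gap in the proposition as stated: for an arbitrary ``(possibly reducible) unitary positive-energy representation'' nothing prevents infinitely many irreducible constituents from sharing a lowest energy, so the finite-multiplicity assertion does not follow from the stated hypotheses alone and requires the finiteness built into the vertex-algebra/Fock-space construction of the surrounding sections. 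Flagging that explicitly is an improvement on the paper rather than a defect of your proof.
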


In this construction, the fields $\hat{\phi}$ of the theory are represented by infinite power series in $(z)^2 \; \left(:=(z_1)^2 + (z_2)^2 + (z_3)^2 + (z_4)^2\right)$ with harmonic polynomial coefficients:
\begin{align}\label{fieldnm}
    \hat{\phi}(z) = \sum_{n\in\mathbb{Z}} \;\sum_{m\in\mathbb{Z}_{\geq 0}} \big((z)^2\big)^n \,\hat{\phi}_{n,m}(z) \,,
\end{align}
such that:
\begin{align}
    \label{fieldnm1} \hat{\phi}_{n,m}(\varrho z) =&\, \varrho^m \hat{\phi}_{n,m}(z) \,, \quad\mbox{for}\quad \varrho >0 \,,\\[0.2cm]
    \label{fieldnm2} \Delta \hat{\phi}_{n,m}(z) :=&\, \sum_{i=1}^4 \frac{\partial^2}{\partial z_i^2} \; \hat{\phi}_{n,m}(z) = 0 \,.
\end{align}

Restricting attention to quantum fields $\hat{\phi}$ and $\hat{\psi}$ of definite parity, we assign to each field a parity degree $\mathscr{P}_{\hat{\phi}}, \mathscr{P}_{\hat{\psi}} \in \{0,1\}$, defined by:
\begin{align}
    (-1)^\mathscr{P} \,\hat{\phi}(z) |0^{\scriptscriptstyle{\mathrm{U}(2,2)}}\rangle :=&\, (-1)^{2H} \,\hat{\phi}(z) |0^{\scriptscriptstyle{\mathrm{U}(2,2)}}\rangle \nonumber\\[0.2cm]
    =&\, (-1)^{2\mathscr{E}} \,\hat{\phi}(z) |0^{\scriptscriptstyle{\mathrm{U}(2,2)}}\rangle \nonumber\\[0.2cm]
    =&\, \pm\, \hat{\phi}(z) |0^{\scriptscriptstyle{\mathrm{U}(2,2)}}\rangle\,.
\end{align}
This reflects the field's transformation under the central element $-\mathbbm{1}_4 \in \mathrm{Spin}(4)$. The sign `$\pm$' determines the intrinsic parity of the field; even-parity (bosonic) fields correspond to $\mathscr{P} = 0$, and odd-parity (fermionic) fields to $\mathscr{P} = 1$. As discussed below, this assignment endows the fields with a $\mathbb{Z}_2$ grading that enters the strong (Huygens) locality condition, governing whether fields commute or anti-commute in the graded operator algebra.

The strong (Huygens) locality condition assumes:\footnote{In the strong (Huygens) locality condition, the parity degrees determine the exchange symmetry:
\begin{enumerate}
    \item{If both fields are bosonic ($\mathscr{P}_{\hat{\phi}} = \mathscr{P}_{\hat{\psi}} = 0$): $(-1)^{0 \times 0} = +1$, then the fields commute.}
    \item{If both fields are fermionic ($\mathscr{P}_{\hat{\phi}} = \mathscr{P}_{\hat{\psi}} = 1$): $(-1)^{1 \times 1} = -1$, then the fields anti-commute.}
    \item{If one field is bosonic and the other fermionic ($\mathscr{P}_{\hat{\phi}} = 0, \mathscr{P}_{\hat{\psi}} = 1$ or vice versa): $(-1)^{0 \times 1} = (-1)^{1 \times 0} = +1$, then the fields commute.}
\end{enumerate}}
\begin{align}\label{SLC}
    \left( (z - u)^2 \right)^N \left( \hat{\phi}(z) \,\hat{\psi}(u) \;-\; (-1)^{\mathscr{P}_{\hat{\phi}} \,\mathscr{P}_{\hat{\psi}}}\, \hat{\psi}(u)\,\hat{\phi}(z) \right) = 0 \,,
\end{align}
for some sufficiently large integer $N$. The prefactor $\left((z-u)^2\right)^N$ ensures that the expression vanishes in the coincident limit $ z\to u$ and encodes the allowed singularity structure in the operator product expansion. For a concrete example of this singular behavior, see Eq. \eqref{2.47}.

\begin{proposition}
    \textbf{(conformal interpretation of the strong Huygens locality).} The strong (or Huygens) locality condition \eqref{SLC}, valid for some sufficiently large integer $N$, reflects the fact that any pair of mutually non-isotropic (non-null) points $(z, u)$ — that is, with $(z - u)^2 \neq 0$ — belongs to a single orbit of the conformal group $\mathrm{SU}(2,2)$ acting on compactified Minkowski space. In this global setting, the invariant notion is not spacelike versus timelike separation, but rather the null/non-null distinction.
\end{proposition}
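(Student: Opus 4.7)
The plan is to establish the proposition in two stages. First, prove that the conformal group $\mathrm{SU}(2,2)$ acts transitively on ordered pairs $(z,u)$ of distinct points in the compactified Minkowski space $M^{\texttt{\tiny{compact}}}$ satisfying $(z-u)^2 \neq 0$. Second, combine this transitivity with the conformal covariance of the fields to extend standard Wightman locality from spacelike-separated Minkowski points to the uniform condition \eqref{SLC} on all mutually non-null pairs, and to justify why a finite prefactor $\left((z-u)^2\right)^N$ suffices to control the operator product singularities.

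For the first stage, I would work with the natural action of $\mathrm{SU}(2,2)$ on $M^{\texttt{\tiny{compact}}} \cong \mathrm{U}(2)$ induced by its action on the Dirac quadric $Q$ of \eqref{2.36}. Up to conjugation one may reduce to the case $u = 0$, and the remaining task is to show that the stabilizer of the origin acts transitively on $\{z \in M^{\texttt{\tiny{compact}}} \;;\; (z)^2 \neq 0\}$. This stabilizer contains dilations, Lorentz rotations, and (for the complex coordinates of \eqref{2.38-1}) the phase rotation in $\tau$, so any value of $(z)^2 = e^{2\mathrm{i}\tau}$ with $\tau$ real, as well as any real rescaling thereof, is attainable. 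Conformal inversion \eqref{inversion} then interchanges the Minkowski spacelike and timelike regions, collapsing both into a single $\mathrm{SU}(2,2)$-orbit, while the null locus $(z-u)^2 = 0$ is preserved and forms a distinguished orbit of its own. The upshot is that the only globally conformal-invariant algebraic distinction between two points on $M^{\texttt{\tiny{compact}}}$ is whether or not they are mutually null.

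For the second stage, I would invoke the standard Wightman formulation on Minkowski space $M \cong \mathbb{R}^{3,1}$: for mutually spacelike $x, y \in M$, the graded commutator $\hat{\phi}(x)\,\hat{\psi}(y) - (-1)^{\mathscr{P}_{\hat{\phi}}\mathscr{P}_{\hat{\psi}}}\,\hat{\psi}(y)\,\hat{\phi}(x)$ vanishes. The conformal vertex operator \eqref{ConformalVertexOperator} transforms covariantly under the unitary representation $U$ of $\mathrm{U}(2,2)$, so conjugating this Minkowski locality relation by $U(g)$ propagates it across the full $\mathrm{SU}(2,2)$-orbit, which by the first stage is precisely the non-null locus in $M^{\texttt{\tiny{compact}}}\times M^{\texttt{\tiny{compact}}}$. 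The prefactor $\left((z-u)^2\right)^N$ is then inserted to absorb the finite-order OPE singularities along the null cone $(z-u)^2 = 0$: since the fields carry fixed conformal dimensions, the scaling behavior of the two-point function (e.g., Eq. \eqref{2.47}) and of the associated Wightman functions bounds the order of such singularities in terms of these dimensions, so that a sufficiently large but finite $N$ renders the product a regular operator-valued distribution whose vanishing on one chart extends by analyticity to the whole orbit.

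The principal obstacle is the careful handling of the analytic continuation between the two tube regions $\mathfrak{T}^{\texttt{\tiny{precompact}}}_+$ and $\mathfrak{T}^{\texttt{\tiny{precompact}}}_-$ in which $\hat{\phi}(z)|0^{\scriptscriptstyle{\mathrm{U}(2,2)}}\rangle$ and $\langle 0^{\scriptscriptstyle{\mathrm{U}(2,2)}}|\hat{\phi}(z)$ are respectively regular. The two orderings $\hat{\phi}(z)\hat{\psi}(u)$ and $\hat{\psi}(u)\hat{\phi}(z)$ are boundary values of analytic functions from genuinely different analyticity domains, and showing that they coincide as distributions on the non-null locus (after multiplication by $\left((z-u)^2\right)^N$) requires an edge-of-the-wedge style argument adapted to the conformally compactified setting, together with a dimension-counting check that $N$ may be chosen to depend only on the conformal weights of $\hat{\phi}$ and $\hat{\psi}$ and not on any representation-theoretic details of the intermediate states appearing in the OPE.
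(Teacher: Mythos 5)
Your proposal is correct in substance and its core (Stage 1) follows essentially the same route as the paper: establish that $\mathrm{SU}(2,2)$ acts transitively on mutually non-null ordered pairs of $M^{\texttt{\tiny{compact}}}$, then read the prefactor $\left((z-u)^2\right)^N$ as enforcing both regularity at the null cone and uniformity of the graded commutator over the single conformal orbit. The differences are worth noting. For transitivity, the paper works in the Minkowski patch and explicitly composes $T_{x^\prime}^{-1}\circ\Lambda\circ D_\varrho\circ T_x$, which only moves pairs within a fixed causal type, and then appeals to the complex Weyl factors of the compactification map $g_c$ to argue that only the null/non-null distinction survives globally; you instead reduce to the stabilizer of a base point and invoke conformal inversion to merge the spacelike and timelike strata into one orbit. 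Your use of inversion is actually the cleaner way to exhibit a group element connecting the two causal types, something the paper's sketch gestures at but does not explicitly produce. One small slip: $u=0$ is the center of the tube $\mathfrak{T}^{\texttt{\tiny{precompact}}}_+$, not a point of $M^{\texttt{\tiny{compact}}}$ (points of $M^{\texttt{\tiny{compact}}}$ satisfy $z\cdot\overline{z}=1$), so the base point should be taken as $g_c(0)=(0,0,0,1)$ or the reduction carried out in the $x$-chart; this is cosmetic and easily repaired.

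Your Stage 2 is more ambitious than what the paper attempts. The paper explicitly defers the rigorous derivation of Huygens locality to Ref.~\cite{NT01} and offers only an interpretive Step III explaining the two roles of the prefactor; it does not derive \eqref{SLC} from Minkowski microcausality. You propose to actually perform that derivation by conjugating the spacelike commutativity relation with $U(g)$ and propagating it across the orbit, and you correctly identify the obstruction: the two operator orderings are boundary values of analytic functions from the conjugate tubes $\mathfrak{T}^{\texttt{\tiny{precompact}}}_\pm$, so their equality on the non-null locus requires an edge-of-the-wedge argument together with a bound on the order of the null-cone singularity fixing $N$. Since you flag rather than execute this step, your Stage 2 remains a plan; but for the proposition as actually stated — which asserts only the orbit structure and its reflection in \eqref{SLC} — your Stage 1 already supplies the required mathematical content, matching the paper's own level of rigor.
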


Before turning to the proof, let us clarify the conceptual role of this condition. In flat Minkowski spacetime, strong (Huygens) locality is nothing more than the operator-algebraic expression of microcausality; fields either commute or anti-commute when evaluated at spacelike separated arguments. This interpretation is unambiguous, since the local conformal group preserves the causal type of separation vectors, and hence the distinction between spacelike and timelike pairs is sharp (see the proof below).

After conformal compactification, however, the situation changes essentially. The global conformal group $\mathrm{SU}(2,2)$ preserves null separations exactly, but no longer preserves the spacelike/timelike distinction; under compactified conformal transformations, a spacelike pair may be mapped to a timelike one (see the proof below). What remains invariant is therefore only the null versus non-null classification. In this global setting, the strong locality condition must be reinterpreted: it is not ``locality'' in the conventional causal sense, but rather a conformally covariant regularity requirement on operator products. Concretely, it guarantees that graded commutators vanish (up to their prescribed short-distance singularities) throughout the full conformal orbit of non-null pairs under $\mathrm{SU}(2,2)$.

When restricted back to the Minkowski patch, this broader condition reduces precisely to the familiar microcausality principle, thereby reconciling the local and global viewpoints.

\begin{proof}{
    A rigorous proof of this proposition can be found in Ref. \cite{NT01}. Here, however, we outline an alternative approach that serves certain illustrative or pedagogical purposes.

    {\textbf{— Step I (transitive action of the conformal group on non-null point pairs in Minkowski spacetime).} 
    Let ${M\cong\mathbb{R}^{3,1}}$ denote flat Minkowski spacetime, whose conformal group is locally isomorphic to $\mathrm{SU}(2,2)$. This group acts transitively on the set of ordered point pairs $(x, y) \in \mathbb{R}^{3,1} \times \mathbb{R}^{3,1}$ with $(x-y)^2 \neq 0$, provided the causal type of the separation vector is fixed (spacelike or timelike). We elaborate on this point below.

    Let $(x, y)$ and $(x^\prime, y^\prime)$ be two such pairs with $(x-y)^2\neq 0 \neq (x^\prime-y^\prime)^2$, and with the same causal type. We construct a conformal transformation $\tilde{g}_c$ mapping $x \longmapsto x^\prime$ and $y \longmapsto y^\prime$. Since the conformal group includes translations, Lorentz transformations, dilations, and special conformal transformations, we can actually realize such a conformal transformation as a composition of elementary operations; specifically:
    \begin{enumerate}[leftmargin=*]
        \item{Translate $x$ to the origin by $T_x(A) = A - x$, which maps the original pair $(x, y)$ to $(0, r)$, where $r = y - x$. Note that, as assumed in the proposition, $r\neq 0$.}
        
        \item{Similarly, define $r^\prime = y^\prime - x^\prime \,\big(\neq 0\big)$, so that the target pair $(x^\prime, y^\prime)$ becomes $(0, r^\prime)$ after translation by $x^\prime$.}
        
        \item{Apply a dilation $D_\varrho(A) = \varrho A$ with scale factor $\varrho = \sqrt{\left| \frac{(r^\prime)^2}{(r)^2} \right|}\; >0$. Then, the rescaled vector $\varrho r$ has the same invariant length as $r^\prime$.}
        
        \item{Since $r$ and $r^\prime$ are both non-null and of the same causal type, there exists a Lorentz transformation $\Lambda \in \mathrm{SO}(3,1)$ such that $\Lambda(\varrho r) = r^\prime$; recall that the Lorentz group $\mathrm{SO}(3,1)$ preserves the Minkowski norm and causal character.}
        
        \item{Undo the translation of the target pair by applying $T_{x^\prime}^{-1}(A) = A + x^\prime$.}
    \end{enumerate}
    Then, the complete transformation, which maps spacelike (respectively timelike) pairs to spacelike (respectively timelike) pairs, is:
    \begin{align}
        \tilde{g}_c(A) = T_{x^\prime}^{-1} \circ \Lambda \circ D_\varrho \circ T_x(A) = \Lambda\big( \varrho (A - x) \big) + x^\prime \,,
    \end{align}
    satisfying:
    \begin{align}
        \tilde{g}_c(x) = x^\prime \,, \quad \tilde{g}_c(y) = y^\prime \,.
    \end{align}}
    
    {\textbf{— Step II (extension to compactified Minkowski space).} We now invoke the conformal compactification, $g_c\big(M\big) = M^{\texttt{\tiny{compact}}}_{}$, where the globally defined map $g_c$ is specified in Eq. \eqref{2.38}, with its inverse given in Eq. \eqref{2.38888888}. Let:
    \begin{align}
        z = g_c(x)\,, \quad u = g_c(y)\,, \quad z^\prime = g_c(x^\prime)\,, \quad u^\prime = g_c(y^\prime) \,.
    \end{align}
    Under the globally defined map $g_c$ \eqref{2.38} (or similarly its reverse \eqref{2.38888888}), the squared interval transforms according to \eqref{2.38''}:
    \begin{align}
        (z-u)^2 = \frac{(x - y)^2}{\Omega(x)\,\Omega(y)} \,,
    \end{align}
    where the Weyl factors $\Omega(x)$ and $\Omega(y)$ are generally complex (not strictly positive!). Consequently, unlike the local map $\tilde{g}_c$ introduced above, the global transformation $g_c$ does not preserve the causal type of a pair of points; it preserves only the non-null separation:
    \begin{align}
        (x-y)^2 \neq 0 \quad\iff\quad (z-u)^2 \neq 0 \,.
    \end{align}
    Nevertheless, even though the global map $g_c$ does not preserve the spacelike or timelike character of vectors, the conformal group still acts transitively on the set of mutually non-isotropic (non-null) point pairs in $M^{\texttt{\tiny{compact}}}$:    
    \begin{align}
        \underbrace{(z,u)}_{(z-u)^2\neq\, 0} \quad\overset{g^{-1}_c\; \text{\eqref{2.38888888}}}{\longmapsto}\quad (x,y) \quad\overset{\tilde{g}^{}_c}{\longmapsto}\quad (x^\prime,y^\prime) \quad\overset{g^{}_c\; \text{\eqref{2.38}}}{\longmapsto}\quad \underbrace{(z^\prime,u^\prime)}_{(z^\prime-u^\prime)^2 \neq\, 0} \,.
    \end{align}}
    
    {\textbf{— Step III (conclusion/implications for strong locality).} The strong locality condition \eqref{SLC} reflects the conformal orbit structure in the following sense. For fields $\hat{\phi}(z)$ and $\hat{\psi}(u)$, the graded (anti-)commutator:
    \begin{align}
        \left(\hat{\phi}(z)\hat{\psi}(u) \;-\; (-1)^{\mathscr{P}_{\hat{\phi}}\,\mathscr{P}_{\hat{\psi}}} \hat{\psi}(u)\hat{\phi}(z)\right)
    \end{align}
    is not identically zero for general configurations, but the condition:
    \begin{align}
        \left((z - u)^2\right)^N \Big( \dots \Big) = 0
    \end{align}
    is required to be smooth and covariant under $\mathrm{SU}(2,2)$ on the domain where $(z - u)^2 \neq 0$. The prefactor $\left((z - u)^2\right)^N$ therefore plays two complementary roles:
    \begin{enumerate}[leftmargin=*]
        \item{\textbf{\textit{Regularization of singularities}:} It removes the divergences that occur in the coincidence limit $z \to u$, ensuring a well-defined operator product expansion.}

        \item{\textbf{\textit{Uniformity across the conformal orbit}:} Because all non-null pairs $(z,u)$ belong to a single conformal orbit, the prefactor guarantees that the (anti-)commutator vanishes consistently on the entire orbit, enforcing compatibility with the global conformal symmetry.}
    \end{enumerate}

    Thus, the strong locality condition encodes a fundamental geometric and algebraic feature of CFT: it manifests conformal invariance at the level of the operator algebra, imposing global constraints on field (anti-)commutators in accordance with the orbit structure of compactified Minkowski space. When restricted to the Minkowski patch, this condition reduces to the familiar microcausality principle, ensuring that fields commute or anti-commute at spacelike separation.}
}\end{proof}

\begin{proposition}\label{proposition HVY}
    \textbf{(fundamental structural statements of the conformal vertex algebra).} The essence of the theory of conformal vertex algebra is contained in the following statements \cite{NT}:
    \begin{enumerate}[leftmargin=*]
        \item{The expansion $\hat{\phi}(z) |0^{\scriptscriptstyle{\mathrm{U}(2,2)}}\rangle$ contains no negative powers of $(z)^2$. That is, any term in the field $\hat{\phi}(z)$ (see Eq. \eqref{fieldnm}) involving negative powers of $(z)^2$ — and hence singular at the origin — must annihilate the conformally invariant vacuum $|0^{\scriptscriptstyle{\mathrm{U}(2,2)}}\rangle$.}
        
        \item{Every local component $\hat{\phi}_i(z)$ is uniquely determined by the vector $|v_i\rangle$, defined as:
        \begin{align}\label{e3e4e}
            |v_i\rangle := \hat{\phi}_i(0)\, |0^{\scriptscriptstyle{\mathrm{U}(2,2)}}\rangle \,, \quad\mbox{such that}\quad (-1)^{\mathscr{P}_{\hat{\phi}}} |v_i\rangle = (-1)^{2H} |v_i\rangle \,.
        \end{align}}

        \item{If $|v\rangle := \hat{\phi}(0)\, |0^{\scriptscriptstyle{\mathrm{U}(2,2)}}\rangle$ is an eigenstate of the conformal Hamiltonian $H$ with eigenvalue $\mathscr{E}$, i.e., $(H - \mathscr{E})\,|v\rangle = 0$, then $\mathscr{E}$ is identified with the scaling dimension of the field $\hat{\phi}$. In the case of free (massless) fields, $\mathscr{E}$ coincides with the canonical dimension of the field: $\mathscr{E} = 1$ for a massless scalar field (see footnote \ref{footnoteCdimension} and \ref{footnote cTransform}); $\mathscr{E} = \tfrac{3}{2}$ for a Weyl spinor field with helicity $\lambda = \pm \tfrac{1}{2}$; more generally, $\mathscr{E}$ increases with spin, following the free-field scaling dimension. Conversely, a field with $\mathscr{E} = \tfrac{1}{2}$ would represent a subcanonical scaling dimension, incompatible with unitarity and in violation of Wightman's Hilbert space positivity condition.}

        \item{If $|v\rangle := \hat{\phi}(0)\, |0^{\scriptscriptstyle{\mathrm{U}(2,2)}}\rangle$ is the lowest-energy state within an irreducible subrepresentation of the conformal group (i.e., a lowest weight state), then it is annihilated by all special conformal generators $C_i\, |v\rangle = 0$ (see Proposition \ref{Proposition 3.3}). Such states are referred to as quasiprimary in Ref. \cite{NT}, adopting terminology from $2$-dimensional CFT, where the notion originates.}
        
        \item{For every vector $|v\rangle := \hat{\phi}(0)\, |0^{\scriptscriptstyle{\mathrm{U}(2,2)}}\rangle$ (of definite parity $\mathscr{P}$), there exists a unique local field $\textbf{Y}\big(|v\rangle, z\big)$ (of parity $\mathscr{P}$), such that:
        \begin{align}\label{w3w4w}
            \textbf{Y}\big(|v\rangle, z\big) |0^{\scriptscriptstyle{\mathrm{U}(2,2)}}\rangle = e^{z\cdot T} |v\rangle \quad\Longrightarrow\quad \textbf{Y}\big(|v\rangle, z=0\big)\, |0^{\scriptscriptstyle{\mathrm{U}(2,2)}}\rangle = |v\rangle \,, 
        \end{align}
        where $z\cdot T := \sum_{i=1}^4 z^i T_i$, and $\textbf{Y}\big(|v\rangle, z\big)$ obeys the conformal covariance conditions:
        \begin{align}\label{TVertex}
            \big[T_i , \textbf{Y}\big(|v\rangle, z\big)\big] = T^{(z)}_i\, \textbf{Y}\big(|v\rangle, z\big) \,,
        \end{align}
        \begin{align}
            \big[C_i , \textbf{Y}\big(|v\rangle, z\big)\big] =&\, -C^{(z)}_i \textbf{Y}\big(|v\rangle, z\big) - 2z_i \textbf{Y}\big(H|v\rangle, z\big) \nonumber\\[0.2cm]
            &\, + 2z^j \textbf{Y}\big(J_{ij}|v\rangle, z\big) + \textbf{Y}\big(C_i|v\rangle, z\big)\,,
        \end{align}
        \begin{align}\label{HVertex}
            \big[H , \textbf{Y}\big(|v\rangle, z\big)\big] = -H^{(z)} \textbf{Y}\big(|v\rangle, z\big) + \textbf{Y}\big(H|v\rangle, z\big)\,,
        \end{align}
        \begin{align}\label{JVertex}
            \big[J_{ij} , \textbf{Y}\big(|v\rangle, z\big)\big] = J^{(z)}_{ij} \textbf{Y}\big(|v\rangle, z\big) + \textbf{Y}\big(J_{ij}|v\rangle, z\big)\,,
        \end{align}
        where $T^{(z)}_i = \partial_i := \tfrac{\partial}{\partial z^i}$, $C^{(z)}_i = -(z)^2\, \partial_i + 2z_i\, (z\cdot\partial)$, $H^{(z)} = -z\cdot\partial$, and $J^{(z)}_{ij} = z_i\partial_j - z_j\partial_i$ are the conformal Lie algebra generators realized as differential operators acting on compactified Minkowski space $M^{\texttt{\tiny{compact}}} \cong (\mathbb{S}^3 \times \mathbb{S}^1)/\mathbb{Z}_2 \cong \mathrm{U}(2)$ (see Appendix \ref{Appendix z-picture}).}
    \end{enumerate}    
\end{proposition}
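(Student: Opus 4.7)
The strategy is to address the five parts in the order stated, reserving most effort for part 5 (the state--field correspondence), which is the structural core of the proposition. Throughout, I would rely on three foundational facts: (i) the vacuum $|0^{\scriptscriptstyle{\mathrm{U}(2,2)}}\rangle$ is annihilated by every generator in $\mathfrak{u}(2,2)$, in particular by $H$, $T_i$, $C_i$, and $J_{ij}$, as a direct consequence of conformal invariance \eqref{CIvac}; (ii) $H$ has a discrete spectrum bounded from below (Proposition \ref{proposition moham}); and (iii) the conformal generators act on fields in the $z$-picture through the differential operators appearing in \eqref{TVertex}--\eqref{JVertex}.

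For part 1, I would argue that if $\hat{\phi}(z)|0^{\scriptscriptstyle{\mathrm{U}(2,2)}}\rangle$ contained a term with $n<0$ in the expansion \eqref{fieldnm}--\eqref{fieldnm2}, then applying $H$ via \eqref{HVertex} together with $H|0^{\scriptscriptstyle{\mathrm{U}(2,2)}}\rangle=0$ would produce an eigenstate of $H$ with eigenvalue below the minimum allowed by the lowest-weight module generated by $|v\rangle:=\hat{\phi}(0)|0^{\scriptscriptstyle{\mathrm{U}(2,2)}}\rangle$, contradicting Proposition \ref{proposition moham}. For part 2, integrating the translation covariance \eqref{TVertex} together with $T_i|0^{\scriptscriptstyle{\mathrm{U}(2,2)}}\rangle=0$ yields $\hat{\phi}(z)|0^{\scriptscriptstyle{\mathrm{U}(2,2)}}\rangle = e^{z\cdot T}|v\rangle$, so the translated vacuum orbit fixes the state uniquely from $|v\rangle$; uniqueness of the field itself then follows from a Reeh--Schlieder-type argument exploiting the cyclicity of the vacuum under the local algebra, the strong locality condition \eqref{SLC}, and analyticity in $\mathfrak{T}^{\texttt{\tiny{precompact}}}_+$. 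The parity identity $(-1)^{\mathscr{P}_{\hat{\phi}}}|v\rangle=(-1)^{2H}|v\rangle$ is then a direct consequence of statement 4 in Proposition \ref{proposition moham}.

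Parts 3 and 4 follow by specialization. Evaluating \eqref{HVertex} at $z=0$, where $H^{(z)}|_{z=0}=0$, and using $H|0^{\scriptscriptstyle{\mathrm{U}(2,2)}}\rangle=0$, gives $H|v\rangle=\mathscr{E}|v\rangle$ precisely when $\hat{\phi}$ has scaling dimension $\mathscr{E}$; the canonical free-field values are then fixed by requiring the two-point function \eqref{2.47} to exhibit the appropriate free-field short-distance structure, while a subcanonical $\mathscr{E}=\tfrac{1}{2}$ would produce indefinite-metric contributions incompatible with the positivity assumed on $\mathscr{H}$. For part 4, the relation $[H,C_i]=-C_i$ from \eqref{lole22} shows that $C_i|v\rangle$ has $H$-eigenvalue $\mathscr{E}-1$; if $|v\rangle$ is the lowest-energy state of its irreducible $\mathfrak{su}(2,2)$-subrepresentation, this forces $C_i|v\rangle=0$, recovering \eqref{boz}.

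Part 5 carries the bulk of the work. The plan is: first, define $\textbf{Y}(|v\rangle,z)|0^{\scriptscriptstyle{\mathrm{U}(2,2)}}\rangle := e^{z\cdot T}|v\rangle$, which converges on $\mathfrak{T}^{\texttt{\tiny{precompact}}}_+$ by the spectral condition; second, extend the action to descendant states $\hat{\psi}(u)|0^{\scriptscriptstyle{\mathrm{U}(2,2)}}\rangle$ by imposing strong locality \eqref{SLC}, which (as an analytic continuation from the non-null regime) determines $\textbf{Y}(|v\rangle,z)\hat{\psi}(u)|0^{\scriptscriptstyle{\mathrm{U}(2,2)}}\rangle$ uniquely via cyclicity of the vacuum together with part 2 applied to $\hat{\psi}$; and third, verify the covariance identities \eqref{TVertex}--\eqref{JVertex} by commuting each generator through $e^{z\cdot T}$. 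Here \eqref{TVertex} is immediate from the Jacobi identity and $T_i|0^{\scriptscriptstyle{\mathrm{U}(2,2)}}\rangle=0$; the relations \eqref{HVertex} and \eqref{JVertex} follow from \eqref{lole11} and the vacuum annihilation properties; the most delicate case is the $C_i$-covariance, which arises by an iterated application of $[C_i,T_j]=2H\delta_{ij}-2J_{ij}$ (Eq. \eqref{CaTb}) inside the expansion of $[C_i,e^{z\cdot T}]$, whose three distinct contributions reproduce precisely the three terms involving $H|v\rangle$, $J_{ij}|v\rangle$, and $C_i|v\rangle$. The main obstacle will be making this construction rigorous as a genuine operator on the full (pre-)Hilbert space $\mathscr{H}$ rather than on vacuum descendants alone, consistently with strong locality: this demands a careful Reeh--Schlieder argument tying analyticity in $\mathfrak{T}^{\texttt{\tiny{precompact}}}_\pm$ to the cyclicity/separating property of the vacuum, and a coherent handling of the $\mathbb{Z}_2$-graded sign $(-1)^{\mathscr{P}_{\hat{\phi}}\mathscr{P}_{\hat{\psi}}}$ under repeated field exchanges in the operator-product expansion.
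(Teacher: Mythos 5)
Your proposal is sound in outline, but it attacks a harder problem than the paper actually solves: the paper's own argument is explicitly labelled a \emph{consistency check, not a formal proof}, defers the genuine existence and uniqueness statements to Ref.~\cite{NT}, and only treats two things — (i) the absence of negative powers of $(z)^2$ in $\hat{\phi}(z)\,|0^{\scriptscriptstyle{\mathrm{U}(2,2)}}\rangle$, and (ii) the mutual compatibility of the covariance relations \eqref{TVertex}--\eqref{JVertex} with the $\mathfrak{su}(2,2)$ commutation relations. On point (i) your energy-positivity argument is essentially identical to the paper's Step I (both apply \eqref{HVertex}, read off the homogeneity degree $2n+m+\mathscr{E}$, and rule out $n<0$ by boundedness of the spectrum from below). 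Where you genuinely diverge is on the covariance conditions: the paper's Step II merely \emph{verifies} that the stated relations close under the Jacobi identity — e.g.\ that $\bigl[[H,T_i],\mathbf{Y}\bigr]=\bigl[T_i,\mathbf{Y}\bigr]$ and $\bigl[[C_i,T_j],\mathbf{Y}\bigr]=2\delta_{ij}[H,\mathbf{Y}]-2[J_{ij},\mathbf{Y}]$ follow from \eqref{TVertex}--\eqref{JVertex} — whereas you propose to \emph{derive} the relations by expanding $[X,e^{z\cdot T}]$ with the algebra \eqref{CaTb}, \eqref{lole11}, \eqref{lole22} and then lifting the resulting vacuum identity to an operator identity via a Reeh--Schlieder/separating-vacuum argument. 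Your route is the more informative one (it is essentially the Nikolov--Todorov construction) and, if completed, would actually establish parts 2 and 5 rather than just checking them; the paper's route buys only internal consistency but requires no analytic machinery. Be aware, though, that the two steps you yourself flag as obstacles — the rigorous extension of $\mathbf{Y}(|v\rangle,z)$ from vacuum descendants to all of $\mathscr{H}$ via strong locality, and the separating property of the vacuum needed for uniqueness in part 2 — are precisely the content that the paper outsources to \cite{NT}; as written, your proposal sketches them but does not close them, so it should be read as a proof \emph{plan} at the same level of completeness as the paper's, not a finished argument.
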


\begin{proof}{\textbf{(consistency check)}\\
    {\textbf{— Step I (the absence of negative powers in $\hat{\phi}(z)\, |0^{\scriptscriptstyle{\mathrm{U}(2,2)}}\rangle$).} The absence of negative powers in $\hat{\phi}(z)\, |0^{\scriptscriptstyle{\mathrm{U}(2,2)}}\rangle = \textbf{Y}\big(|v\rangle, z\big)\, |0^{\scriptscriptstyle{\mathrm{U}(2,2)}}\rangle$ can be directly understood from its behavior under the action \eqref{HVertex} of the conformal Hamiltonian $H$: 
    \begin{align}
        \big[H , \textbf{Y}\big(|v\rangle, z\big)\big] |0^{\scriptscriptstyle{\mathrm{U}(2,2)}}\rangle &= \left(z\cdot\partial \textbf{Y}\big(|v\rangle, z\big) \right) |0^{\scriptscriptstyle{\mathrm{U}(2,2)}}\rangle + \textbf{Y}\big(H|v\rangle, z\big) |0^{\scriptscriptstyle{\mathrm{U}(2,2)}}\rangle \nonumber\\[0.2cm]
        &= \underbrace{(2n+m+\mathscr{E})}_{=:\, \mathscr{E}_{n,m}}\,\textbf{Y}\big(|v\rangle, z\big) |0^{\scriptscriptstyle{\mathrm{U}(2,2)}}\rangle\,,
    \end{align}
    where $(2n+m)$ denotes the total homogeneity degree of $\textbf{Y}\big(|v\rangle, z\big)=\hat{\phi}(z)$ (see the expansion \eqref{fieldnm}) and $\mathscr{E}$ is the eigenvalue of $H$ on $|v\rangle$ (i.e., the conformal energy of the state).

    If we now assume that $|v\rangle$ is a lowest weight state, i.e., it is annihilated by all special conformal generators $C_i$, then $\mathscr{E}=\mathscr{E}_\circ$ is the minimal eigenvalue of $H$. For such states, and for sufficiently small $m$ (in particular for $m = 0$), terms with $n<0$ would imply the existence of states with energy below $\mathscr{E}=\mathscr{E}_\circ$, contradicting the unitarity requirement of energy bounded below. Thus, to avoid producing lower-energy states, any term with $n < 0$ must annihilate the vacuum; that is:
    \begin{align}
        \big((z)^2\big)^n \hat{\phi}_{n,m}(z)\, |0^{\scriptscriptstyle{\mathrm{U}(2,2)}}\rangle = 0\,, \quad \text{for all}\quad n < 0 \,.
    \end{align}}

    {\textbf{— Step II (verification of conformal covariance).} We now examine whether the vertex operators introduced in Eqs. \eqref{TVertex}-\eqref{JVertex} are consistent with the canonical commutation relation of the conformal Lie algebra $\mathfrak{su}(2,2)\cong\mathfrak{so}(4,2)$ (see Eqs. \eqref{lole}-\eqref{lole22}):
    \begin{align}
        &\big[ \big[H , T_i \big] , \textbf{Y}\big(|v\rangle, z\big) \big] \nonumber\\[0.2cm]
        &= \big[ H , \big[T_i , \textbf{Y}\big(|v\rangle, z\big)\big] \big] - \big[ T_i , \big[H , \textbf{Y}\big(|v\rangle, z\big)\big] \big] \nonumber\\[0.2cm]
        &= \big[ H , T^{(z)}_i \textbf{Y}\big(|v\rangle, z\big) \big] - \big[ T_i , -H^{(z)} \textbf{Y}\big(|v\rangle, z\big) + \textbf{Y}\big(H|v\rangle, z\big) \big] \nonumber\\[0.2cm]
        &= T^{(z)}_i \big[ H \,,\, \textbf{Y}\big(|v\rangle, z\big) \big] + H^{(z)} \big[ T_i , \textbf{Y}\big(|v\rangle, z\big) \big] - \big[ T_i , \textbf{Y}\big(H|v\rangle, z\big) \big] \nonumber\\[0.2cm]
        &= T^{(z)}_i \Big( -H^{(z)} \textbf{Y}\big(|v\rangle, z\big) + \textbf{Y}\big(H|v\rangle, z\big) \Big) \nonumber\\[0.2cm]
        &\quad + H^{(z)} T^{(z)}_i \textbf{Y}\big(|v\rangle, z\big) - T^{(z)}_i \textbf{Y}\big(H|v\rangle, z\big) \nonumber\\[0.2cm]
        &= \big[H^{(z)} , T^{(z)}_i \big] \textbf{Y}\big(|v\rangle, z\big) \nonumber\\[0.2cm]
        &= T^{(z)}_i \textbf{Y}\big(|v\rangle, z\big) \nonumber\\[0.2cm]
        &= \big[ T_i, \textbf{Y}\big(|v\rangle, z\big) \big] \,.
    \end{align}
    To derive this relation, we have used the Jacobi identity:
    \begin{align}
        \big[[A,B],C\big] = \big[A,[B,C]\big] - \big[B,[A,C]\big]\,.
    \end{align}
    Proceeding along the same path, we obtain:
    \begin{align}
        &\big[ \big[H , C_i \big] , \textbf{Y}\big(|v\rangle, z\big) \big] \nonumber\\[0.2cm]
        &= \big[ H , \big[C_i , \textbf{Y}\big(|v\rangle, z\big)\big] \big] - \big[ C_i , \big[H , \textbf{Y}\big(|v\rangle, z\big)\big] \big] \nonumber\\[0.2cm]
        &= \big[ H , -\,C^{(z)}_i \textbf{Y}\big(|v\rangle, z\big) - 2z_i \textbf{Y}\big(H|v\rangle, z\big)  \nonumber\\[0.2cm]
        &\hspace{2cm} + 2z^j \textbf{Y}\big(J_{ij}|v\rangle, z\big) + \textbf{Y}\big(C_i|v\rangle, z\big) \big] \nonumber\\[0.2cm]
        &\;\;\; - \big[ C_i , -H^{(z)} \textbf{Y}\big(|v\rangle, z\big) + \textbf{Y}\big(H|v\rangle, z\big)\big] \nonumber\\[0.2cm]
        &= -\,C^{(z)}_i  \big[ H , \textbf{Y}\big(|v\rangle, z\big) \big] - 2z_i \big[ H , \textbf{Y}\big(H|v\rangle, z\big) \big] \nonumber\\[0.2cm]
        & \;\;\; + 2z^j \big[ H , \textbf{Y}\big(J_{ij}|v\rangle, z\big)\big] + \big[ H , \textbf{Y}\big(C_i|v\rangle, z\big)\big]\nonumber\\[0.2cm]
        &\;\;\; + H^{(z)} \big[ C_i , \textbf{Y}\big(|v\rangle, z\big)\big] - \big[ C_i , \textbf{Y}\big(H|v\rangle, z\big)\big]  \nonumber\\[0.2cm]
        &= -\,C^{(z)}_i \Big( -H^{(z)} \textbf{Y}\big(|v\rangle, z\big) + \textbf{Y}\big(H|v\rangle, z\big) \Big) \nonumber\\[0.2cm]
        &\;\;\; - 2z_i \Big( -H^{(z)} \textbf{Y}\big(H|v\rangle, z\big) + \textbf{Y}\big(H^2|v\rangle, z\big) \Big) \nonumber\\[0.2cm]
        & \;\;\; + 2z^j \Big( -H^{(z)} \textbf{Y}\big(J_{ij}|v\rangle, z\big) + \textbf{Y}\big(HJ_{ij}|v\rangle, z\big) \Big) \nonumber\\[0.2cm]
        &\;\;\; + \Big( -H^{(z)} \textbf{Y}\big(C_i|v\rangle, z\big) + \textbf{Y}\big(HC_i|v\rangle, z\big) \Big)\nonumber\\[0.2cm]
        &\;\;\; + H^{(z)} \Big( -C^{(z)}_i \textbf{Y}\big(|v\rangle, z\big) - 2z_i \textbf{Y}\big(H|v\rangle, z\big) \nonumber\\[0.2cm]
        &\hspace{2cm} + 2z^j \textbf{Y}\big(J_{ij}|v\rangle, z\big) + \textbf{Y}\big(C_i|v\rangle, z\big) \Big) \nonumber\\[0.2cm]
        &\;\;\; - \Big( -C^{(z)}_i \textbf{Y}\big(H|v\rangle, z\big) - 2z_i \textbf{Y}\big(H^2|v\rangle, z\big) \nonumber\\[0.2cm]
        &\hspace{2cm} + 2z^j \textbf{Y}\big(J_{ij}H|v\rangle, z\big) + \textbf{Y}\big(C_i H|v\rangle, z\big) \Big)  \nonumber\\[0.2cm]
        &= \big[ C^{(z)}_i , H^{(z)} \big] \textbf{Y}\big(|v\rangle, z\big) + \big[ 2z_i , H^{(z)} \big] \textbf{Y}\big(H|v\rangle, z\big) \nonumber\\[0.2cm]
        &\hspace{2cm} - \big[ 2z^j , H^{(z)} \big] \textbf{Y}\big(J_{ij}|v\rangle, z\big) + \textbf{Y}\big([ H , C_i ]|v\rangle, z\big) \nonumber\\[0.2cm]
        &= -\big[ C_i, \textbf{Y}\big(|v\rangle, z\big) \big] \,,
    \end{align}
    and similarly, one also verifies that:
    \begin{align}\label{cxz}
        \big[\big[C_i, T_j\big], \textbf{Y}\big(|v\rangle, z\big)\big] &= 2\delta_{ij} \big[H, \textbf{Y}\big(|v\rangle, z\big) \big] - 2 \big[J_{ij} , \textbf{Y}\big(|v\rangle, z\big) \big] \,.
    \end{align}}
}\end{proof}

\section{Appendix: Upper Bound on $\mathfrak{su}(2)$ Spins from the $\mathfrak{so}(4)$ Casimir} \label{Appen L,R E}

\begin{proposition}
    \textbf{($\mathfrak{so}(4)$ spin bound in terms of the conformal energy level).} Let $\mathcal{C}_2^{\mathfrak{so}(4)}$ be the quadratic Casimir of $\mathfrak{so}(4) \cong \mathfrak{su}(2)_L \oplus \mathfrak{su}(2)_R$:
    \begin{align}
        \mathcal{C}_2^{\mathfrak{so}(4)} = \texttt{j}_L (\texttt{j}_L + 1) + \texttt{j}_R (\texttt{j}_R + 1)\,, \quad \texttt{j}_L, \texttt{j}_R \in \frac{1}{2} \mathbb{Z}_{\geq 0} \,.
    \end{align}
    Assume:
    \begin{align}
        &\mathcal{C}_2^{\mathfrak{so}(4)} = \frac{1}{2} \big( \mathscr{E}^2 + \lambda^2 - 1 \big) \,, \nonumber\\[0.2cm] 
        &\mathscr{E} = |\lambda| + n \; \big(\geq 1\big)\,, \nonumber\\[0.2cm] 
        &\lambda \in \frac{1}{2}\mathbb{Z}\,, \quad\mbox{and}\quad n \in \mathbb{Z}_{>0} \,.
    \end{align}
    Then, we have:
    \begin{align}\label{-11111}
        \texttt{j}_L, \texttt{j}_R \leq \mathscr{E}-1\,.
    \end{align}
\end{proposition}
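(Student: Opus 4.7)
The plan is to reduce the claim to an elementary monotonicity argument based on the positivity of each $\mathfrak{su}(2)$ Casimir. First I would observe that both summands $\texttt{j}_L(\texttt{j}_L+1)$ and $\texttt{j}_R(\texttt{j}_R+1)$ in the $\mathfrak{so}(4)$ Casimir are non-negative, so each is bounded above by the total:
\begin{align*}
    \texttt{j}_L(\texttt{j}_L+1) \;\leq\; \mathcal{C}_2^{\mathfrak{so}(4)} \;=\; \tfrac{1}{2}\big( \mathscr{E}^2 + \lambda^2 - 1 \big)\,,
\end{align*}
and symmetrically for $\texttt{j}_R(\texttt{j}_R+1)$. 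Since the map $x \longmapsto x(x+1)$ is strictly increasing on $\mathbb{R}_{\geq 0}$, it suffices to establish the inequality
\begin{align*}
    \tfrac{1}{2}\big( \mathscr{E}^2 + \lambda^2 - 1 \big) \;\leq\; (\mathscr{E}-1)\mathscr{E}\,,
\end{align*}
which upon multiplying by $2$ and rearranging is equivalent to $(\mathscr{E}-1)^2 \geq \lambda^2$.

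Next I would invoke the hypothesis $\mathscr{E} = |\lambda| + n$ with $n\in\mathbb{Z}_{>0}$, so that $\mathscr{E} - 1 = |\lambda| + (n-1) \geq |\lambda|$, and in particular $\mathscr{E} - 1 \geq 0$. Squaring this manifestly non-negative inequality yields $(\mathscr{E}-1)^2 \geq \lambda^2$ directly. Combining with the previous step, monotonicity of $x \longmapsto x(x+1)$ on $\mathbb{R}_{\geq 0}$ then gives $\texttt{j}_L \leq \mathscr{E}-1$, and the same argument applied with the roles of $\texttt{j}_L$ and $\texttt{j}_R$ exchanged delivers $\texttt{j}_R \leq \mathscr{E}-1$.

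As a cross-check, I would note that the conclusion is also consistent with — and can alternatively be deduced from — Remark \ref{Remark E=JL+JR}, which asserts $\mathscr{E} - (\texttt{j}_L + \texttt{j}_R) \in \mathbb{Z}_{>0}$; together with $\texttt{j}_L, \texttt{j}_R \in \tfrac{1}{2}\mathbb{Z}_{\geq 0}$, this immediately forces each spin to lie in $[0, \mathscr{E}-1]$. This alternative route is slightly stronger (it bounds the \emph{sum}), but the Casimir-based argument above is preferable here because it uses only the data explicitly appearing in the proposition's hypotheses and avoids invoking the descendant structure of the representation.

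There is no genuine obstacle in the argument; the only mildly delicate point is confirming that $\mathscr{E} - 1 \geq 0$ before squaring the inequality $\mathscr{E} - 1 \geq |\lambda|$, which is guaranteed by $\mathscr{E} \geq 1 + |\lambda| \geq 1$. All subsequent manipulations are elementary real-variable inequalities, and no representation-theoretic machinery beyond the non-negativity of the two $\mathfrak{su}(2)$ Casimirs is required.
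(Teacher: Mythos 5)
Your proof is correct and follows essentially the same route as the paper's: both arguments discard the non-negative partner Casimir $\texttt{j}_R(\texttt{j}_R+1)\geq 0$ and reduce the claim to the key inequality $(\mathscr{E}-1)^2 \geq \lambda^2$, which follows from $\mathscr{E}-1 = |\lambda| + (n-1) \geq |\lambda| \geq 0$. The only cosmetic difference is that you exploit the monotonicity of $x \longmapsto x(x+1)$ on $\mathbb{R}_{\geq 0}$ to compare $\tfrac{1}{2}(\mathscr{E}^2+\lambda^2-1)$ with $(\mathscr{E}-1)\mathscr{E}$ directly, whereas the paper solves the quadratic for $\texttt{j}_L$ explicitly and then verifies that the resulting square-root bound does not exceed $\mathscr{E}-1$.
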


\begin{figure}[H]
\centering
\begin{tikzpicture}[
    scale=1.08,
    >=latex,
    every node/.style={align=center},
    dot/.style={circle, fill=black, inner sep=2.5pt}
]

\def\E{6}

\fill[gray!16] (0,0) -- (0,\E) -- (\E,0) -- cycle;

\draw[very thick,->] (0,0) -- (\E+1.5,0) node[below right=2pt] {$\texttt{j}_L$};
\draw[very thick,->] (0,0) -- (0,\E+1.05) node[above left=2pt] {$\texttt{j}_R$};

\draw[very thick] (0,\E) -- (\E,0);
\draw[thick] (0,0) -- (0,\E);
\draw[thick] (0,0) -- (\E,0);

\foreach \x in {0.5,1,...,6} {
    \foreach \y in {0.5,1,...,6} {
        \pgfmathparse{\x+\y<=\E+0.001 ? 1 : 0}
        \ifnum\pgfmathresult=1
            \fill (\x,\y) circle (2.55pt);
        \fi
    }
}

\node at (4.55,4.2) {$\texttt{j}_L,\texttt{j}_R \leq \mathscr{E}-1$};

\end{tikzpicture}
\caption{Lattice region in the $(\texttt{j}_L,\texttt{j}_R)$-plane.}
\end{figure}

\begin{proof}{
    Using $\texttt{j}_R (\texttt{j}_R + 1) \geq 0$, we have:
    \begin{align}
        &\texttt{j}_L (\texttt{j}_L + 1) \leq \frac{1}{2} (\mathscr{E}^2 + \lambda^2 - 1) \nonumber\\[0.2cm] 
        &\qquad \Longrightarrow \quad \texttt{j}_L \leq \frac{-1 + \sqrt{2\mathscr{E}^2 + 2\lambda^2 - 1}}{2} \,.
    \end{align}
    Since $\mathscr{E} \geq 1$, one checks:
    \begin{align}
        \frac{-1 + \sqrt{2\mathscr{E}^2 + 2\lambda^2 - 1}}{2} \leq \mathscr{E}-1 \,,
    \end{align}
    because squaring gives the identity $0 \leq 2\big((\mathscr{E}-1)^2 - \lambda^2\big)$. The same argument holds for $\texttt{j}_R$.    
}\end{proof} 

\section{Appendix: de Sitter (dS) Massless Unitary Irreducible Representations (UIRs)} \label{appendix: dS UIRs}

We begin again with the conformal Lie algebra $\mathfrak{su}(2,2)\cong\mathfrak{so}(4,2)$ and its ``second-quantized'' ladder generators $J_{\mu\nu} = \tfrac{1}{2} (\widetilde{\varphi}\, \gamma^{}_{\mu\nu}\, \varphi)$, with $\mu,\nu=0,1,2,3,4,5$ (see Eq. \eqref{fJf6}). In this context, one may obtain the dS algebra by restricting attention to the generators $J_{\alpha\beta}$, with $\alpha,\beta=0,1,2,3,4$, and discarding the generators $J_{\alpha5}$; this yields a $10$-dimensional subalgebra isomorphic to $\mathfrak{sp}(2,2)\cong\mathfrak{so}(4,1)$. In particular, the conformal Hamiltonian $H = \mathrm{i} J_{05}$ does not belong to the dS subalgebra and therefore cannot serve as a global energy operator or as an energy label $\mathscr{E}$ for dS UIRs; dS representations are not classified by a global energy eigenvalue in the same way as anti-dS representations (see Ref. \cite{Gazeau2022}).

The maximal compact subalgebra of the dS algebra is $\mathfrak{so}(4) \cong \mathfrak{su}(2)_L \oplus \mathfrak{su}(2)_R$, generated by $J_{ij}$ with $i,j=1,2,3,4$. Every state in a dS UIR can be labeled by the compact quantum numbers $(\texttt{j}_L, \texttt{j}_R)$, where again $\texttt{j}_L$ and $\texttt{j}_R$ are the familiar $\mathrm{S\mathrm{U}(2)}$ spins associated with the left and right $\mathrm{SU}(2)$ factors \cite{Dixmier}. We now proceed to elaborate on the structure and interpretation of these labels.

A multiplet will denote the collection of states $|\texttt{j}_L, \texttt{j}_R;\, m_L, m_R\rangle$ transforming irreducibly under $S\mathrm{U}(2)_L \times S\mathrm{U}(2)_R$ with fixed $(\texttt{j}_L,\texttt{j}_R)$, where $m_L=-\texttt{j}_L,\dots,\texttt{j}_L$ and $m_R=-\texttt{j}_R,\dots,\texttt{j}_R$ label the magnetic quantum numbers. The dimension $\texttt{d}$ of such a multiplet is $(2\texttt{j}_L+1)(2\texttt{j}_R+1)$ (see Eq. \eqref{d(Delta)}). The compact generators $J_{ij}$ act within each multiplet, mixing the magnetic quantum numbers while preserving the total spins $(\texttt{j}_L,\texttt{j}_R)$. 

The remaining generators $J_{0i}$ (with $i=1,\dots,4$), which span the non-compact directions of $\mathfrak{so}(4,1)$, act between multiplets. Their commutation relations with the compact generators are:
\begin{align}
    \big[J_{ij}, J_{0k}\big] = \big(\eta_{jk} J_{0i} - \eta_{ik} J_{0j}\big)\,,
\end{align}
which shows that the operators $J_{0i}$ transform in the vector ($4$-dimensional) representation of the compact group $\mathrm{SO}(4)$. Under the homomorphism $\mathrm{SO}(4) \cong \mathrm{SU}(2)_L \times \mathrm{SU}(2)_R$, this $4$-vector corresponds to the bi-spinor representation $\big(\tfrac{1}{2}, \tfrac{1}{2}\big)$, with dimension $\texttt{d}\big(\tfrac{1}{2},\tfrac{1}{2}\big) = (1+1)(1+1) = 4$.

When $J_{0i}$ acts on a compact multiplet $(\texttt{j}_L,\texttt{j}_R)$ it therefore produces states in the tensor product:
\begin{align}\label{begaee}
    (\texttt{j}_L,\texttt{j}_R)\otimes\big(\tfrac{1}{2},\tfrac{1}{2}\big) \;\;=\;\;  \big(\texttt{j}_L+\tfrac{1}{2},\texttt{j}_R+\tfrac{1}{2}\big) &\oplus \big(\texttt{j}_L+\tfrac{1}{2},\texttt{j}_R-\tfrac{1}{2}\big) \nonumber\\[0.2cm]
    \oplus \big(\texttt{j}_L-\tfrac{1}{2},\texttt{j}_R+\tfrac{1}{2}\big) &\oplus \big(\texttt{j}_L-\tfrac{1}{2},\texttt{j}_R-\tfrac{1}{2}\big)\,,
\end{align}
with the obvious convention that any term with a negative spin is omitted. Thus, the non-compact generators $J_{0i}$ (with $i=1,\dots,4$) connect up to four neighboring $\mathrm{SO}(4)$ multiplets; repeated application of the non-compact generators builds a ``tower of multiplets'' which, together with the action of the compact algebra, spans the full UIR Hilbert space.

In summary, a dS UIR decomposes into a (generally countable) collection of $\mathrm{SO}(4)$ multiplets $(\texttt{j}_L,\texttt{j}_R)$. The compact generators $J_{ij}$ act within each multiplet, whereas the non-compact generators $J_{0i}$ link multiplets according to the Clebsch-Gordan decomposition with $\big(\tfrac{1}{2}, \tfrac{1}{2}\big)$; there is no analogue of a globally conserved energy label as in anti-dS space.

\begin{Remark}
    {\textbf{(discrete series multiplet tower).} The foregoing ``tower of $\mathrm{SO}(4)$ multiplets'' picture is the natural description for the dS discrete series type representations (those admitting lowest or highest weight with respect to the compact subalgebra). For the dS principal-series and complementary-series representations, the structure is different (continuous labels arise and the decomposition into compact multiplets must be handled with the appropriate spectral-theoretic tools), so the seed-multiplet construction should be understood in the appropriate representational context.}
\end{Remark}

For the dS massless UIRs, the structure simplifies. As a direct consequence of the preceding discussion, the full Hilbert space is generated from a chiral seed multiplet:
\begin{align}
    (\texttt{j}_L,\texttt{j}_R) = (s, 0) \quad\mbox{or}\quad (0, s)\,, \quad\mbox{with}\quad s := |\lambda| = 0, \tfrac{1}{2}, 1, \dots\,,
\end{align}
where the helicity is encoded in the chirality of the seed multiplet (see Remark \ref{Remark Helicit+-}):
\begin{align}
    (s, 0) \quad&\longrightarrow\quad (\mbox{helicity}\; +s)\,, \nonumber\\[0.2cm]
    (0, s) \quad&\longrightarrow\quad (\mbox{helicity}\; -s) \,.
\end{align}

In Dixmier’s notation \cite{Dixmier}, the corresponding massless dS representations are written as:
\begin{align}
    \Pi^+_{s,s} \quad\mbox{and}\quad \Pi^-_{s,s}\,,
\end{align}
respectively. Here, the general Dixmier labels $\Pi^\pm_{p,q}$ may be read off from the $\mathrm{SO}(4)$-content of the representation; one convenient characterization is:
\begin{align}
    p :=&\, \inf_{(\texttt{j}_L,\texttt{j}_R)\in\mathrm{spec}} \big|\texttt{j}_L-\texttt{j}_R\big|\,, \nonumber\\[0.2cm] 
    q :=&\,  \sup_{(\texttt{j}_L,\texttt{j}_R)\in\mathrm{spec}}\big|\texttt{j}_L+\texttt{j}_R\big|\,,
\end{align}
where the infimum and supremum are taken over the set of compact multiplets $(\texttt{j}_L,\texttt{j}_R)$ occurring in the UIR. [Equivalently, the parameters $p$ and $q$ can be extracted from the eigenvalues of the dS quadratic and quartic Casimir operators; see the discussions in Sect. \ref{Sect. Brief on dS}.] For the massless case generated by a chiral seed $(s, 0)$ or $(0, s)$, one indeed has $p=q=s$, whence the notation $\Pi^\pm_{s,s}$; the superscript `$\pm$' records the helicity sign (chirality) of the seed multiplet.

\section{Appendix: Compactified Minkowski Space as a Shilov Boundary}\label{Appendix Shilov boundary}

The conformal compactification of $4$-dimensional Minkowski space:
\begin{align}
    M^{\texttt{\tiny{compact}}}_{} \cong \frac{\mathbb{S}^{3}\times \mathbb{S}^{1}}{\mathbb{Z}_{2}}
\end{align}
(with the $3$‑sphere factor being the familiar compact space sometimes noted for its exceptional rigidity in higher‑dimensional calibrated geometry — see the classical real $\mathbb{S}^3\subset \mathbb{S}^7$ problem about minimal associative $3$‑spheres in the $7$‑sphere \cite{Gazeau-Language}) admits a natural interpretation as the Shilov boundary of a classical bounded symmetric domain associated with the conformal group \cite{Helgason00, Penrose00, Faraut00}.

Indeed, the conformal group $\mathrm{SO}(4,2)$ is locally isomorphic to $\mathrm{SU}(2,2)$, and there are two equivalent realizations of the corresponding Hermitian symmetric space. On the one hand, $\mathrm{SU}(2,2)$ acts transitively on the Cartan domain of type I:
\begin{align}
    D_{2,2}^{\mathrm{I}} = \frac{\mathrm{SU}(2,2)}{\mathrm{S}\big(\mathrm{U}(2)\times \mathrm{U}(2)\big)} \,,
\end{align}
whose Shilov boundary is canonically identified with $\mathrm{U}(2)$. Using the standard topological identification:
\begin{align}
    \mathrm{U}(2) \cong\frac{\mathbb{S}^{3}\times \mathbb{S}^{1}}{\mathbb{Z}_{2}}\,,
\end{align}
one recovers $M^{\texttt{\tiny{compact}}}_{}$ as this Shilov boundary.

On the other hand, via the isomorphism $\mathrm{SU}(2,2)\simeq \mathrm{SO}(4,2)$, the same symmetric space can be realized as the Cartan domain of type IV (the Lie ball):
\begin{align}
    D^{\mathrm{IV}}_{4} = \frac{\mathrm{SO}(4,2)}{\big(\mathrm{SO}(2)\times\mathrm{SO}(4)\big)} \,,
\end{align}
whose Shilov boundary is the projectivized null cone in $\mathbb{R}^{4,2}$. This boundary is precisely the conformal compactification of Minkowski space.

Thus, depending on the chosen realization, $M^{\texttt{\tiny{compact}}}_{}$ appears either as:
\begin{enumerate}
    \item{The Shilov boundary of the type-I domain $D^{\mathrm{I}}_{2,2}$, or equivalently}

    \item{The Shilov boundary of the type-IV domain $D^{\mathrm{IV}}_{4}$.}
\end{enumerate}
Both descriptions are canonically equivalent and reflect the dual realizations of the conformal symmetric space associated with $\mathrm{SO}(4,2)$.

\section{Appendix: The Lie Algebra $\mathfrak{so}(4,2)$ and its Euclidean $\mathfrak{so}(5,1)$ Counterpart; the $z$-Picture of the $\mathfrak{so}(4,2)$ Generators}\label{Appendix z-picture}

In the context of the present study, it is natural to introduce the $z$-picture realization of the conformal generators acting on compactified Minkowski space $M^{\texttt{\tiny{compact}}} \cong (\mathbb{S}^3 \times \mathbb{S}^1)/\mathbb{Z}_2 \cong \mathrm{U}(2)$. Specifically, we consider the infinitesimal translation generators $T_i$ \eqref{Taaa}, the special conformal transformations $C_i$ \eqref{Caaa}, and the conformal Hamiltonian $H$ \eqref{Haaa} (see also \eqref{CaTb}), as realized via differential operators in the complex coordinate variables $z^i = z_i \in M^{\texttt{\tiny{compact}}}$, with $i=1,2,3,4$, introduced in Eqs. \eqref{2.38-1}-\eqref{2.38}.

To achieve this goal, we consider functions defined on Dirac's projective quadric $\mathbbm{P}Q$, which arises by projectivizing $\mathbbm{P}$ the null cone $Q$ \eqref{2.36} in $\mathbb{R}^{4,2}$:
\begin{align}
    \mathbbm{P}Q := Q / \mathbb{R}_{>0} \,,
\end{align}
with:
\begin{align}
    Q := \Big\{ \zeta \in \mathbb{R}^{4,2}\setminus{0} \;\;;\;\; (\zeta)^2 := \eta^{}_{\mu\nu} \zeta^\mu \zeta^\nu = 0 \Big\} = \mbox{Eq. \eqref{2.36}} \,,
\end{align}
where $\mu, \nu = 0,5,1,2,3,4$ and the ambient metric $\eta^{}_{\mu\nu}$ on $\mathbb{R}^{4,2}$ has signature $(-,-,+,+,+,+)$.

Some fundamental properties of Dirac's projective quadric $\mathbbm{P}Q$ and related structures are as follows:
\begin{enumerate}
    \item{\textbf{\textit{Null directions in $\mathbb{R}^{4,2}$}:} $\mathbbm{P}Q$ consists of all null directions (rays) through the origin in $\mathbb{R}^{4,2}$. Concretely, two non-zero null vectors $\zeta$ and $\zeta^\prime$ are considered equivalent if they differ by a positive scalar:
    \begin{align}
        \zeta \sim \varrho \zeta\,, \quad\mbox{where}\quad \varrho \in \mathbb{R}_{>0} \,.
    \end{align}
    Each equivalence class corresponds to a single point on the projectivized light-cone, capturing the direction of the vector while ignoring its length.}  
    
    \item{\textbf{\textit{Topology and conformal compactification}:} The resulting space $\mathbbm{P}Q$ is a compact, $4$-dimensional manifold. Topologically, it is homeomorphic to $(\mathbb{S}^3 \times \mathbb{S}^1)/\mathbb{Z}_2$. This identification arises from the null condition $\zeta^2 = 0$, which imposes that the spatial components $(\zeta_1, \dots, \zeta_4)$ and the temporal components $(\zeta_0, \zeta_5)$ lie on spheres of equal radius — that is, $\zeta_1^2 + \dots + \zeta_4^2 = \zeta_0^2 + \zeta_5^2$ — thus defining $\mathbb{S}^3$ and $\mathbb{S}^1$, respectively. The $\mathbb{Z}_2$ quotient reflects the projectivization of null directions, identifying antipodal points along each ray via $\zeta \sim -\zeta$. Consequently, $\mathbbm{P}Q$ serves as the conformal compactification of Minkowski spacetime:
    \begin{align}
        \mathbbm{P}Q \quad\cong\quad \underbrace{(\mathbb{S}^3 \times \mathbb{S}^1)/\mathbb{Z}_2}_{\cong\, \mathrm{U}(2)} \quad\cong\quad M^{\texttt{\tiny{compact}}} \,.
    \end{align}}

    \item{\textbf{\textit{Embedding of Minkowski space}:} To complete the foregoing, Minkowski spacetime $M\cong\mathbb{R}^{3,1}$ can be embedded in $\mathbbm{P}Q$ in terms of the (dimensionless, homogeneous) coordinates as follows:
    \begin{align}\label{B.4}
        &x^{\dot{\mu}} = \frac{\zeta^{\dot{\mu}}}{\zeta^5 + \zeta^4}\,, \quad \dot{\mu} = 0,1,2,3 \nonumber\\
        &\Longrightarrow\quad (x)^2 := x^{\dot{\mu}} x_{\dot{\mu}} = \frac{\zeta^{\dot{\mu}}\zeta_{\dot{\mu}}}{\left(\zeta^5 + \zeta^4\right)^2} = \frac{(\zeta^5)^2 - (\zeta^4)^2}{\left(\zeta^5 + \zeta^4\right)^2} = \frac{\zeta^5 - \zeta^4}{\zeta^5 + \zeta^4} \,. 
    \end{align}
    The consistency of the above relations is verified by substituting $x^{\dot{\mu}}$ and $(x)^2$, as given above, into Eq. \eqref{2.38}, which yields Eq. \eqref{2.38-1}.}

    \item{\textbf{\textit{Action of the conformal group $\mathrm{SO}(4,2)$}:} The conformal group acts linearly on the ambient space $\mathbb{R}^{4,2}$, preserving the null cone $Q$. This action descends naturally to the projectivized cone $\mathbbm{P}Q$, inducing a well-defined projective action on compactified Minkowski space. Concretely, the Lie algebra $\mathfrak{so}(4,2)$ is realized on Dirac's projective quadric by first-order differential operators of the form:
    \begin{align}\label{B.2}
        X_{\mu\nu} \big( = - X_{\nu\mu} \big) = \zeta_{\mu}\partial^{}_\nu - \zeta_{\nu}\partial^{}_\mu \,, \quad \partial^{}_\mu:=\frac{\partial}{\partial\zeta^\mu} \,,
    \end{align}
    which satisfy the commutation relations \eqref{Ghalee}. In this way, Dirac's projective quadric furnishes a natural geometric stage for the global action of the conformal group, making manifest the realization of conformal symmetry on compactified Minkowski space.}

    \item{\textbf{\textit{Generators of translations and special conformal transformations}:} The Lie algebra $\mathfrak{so}(4,2)$, defined via differential operators acting on smooth functions over Minkowski spacetime, is generated by the infinitesimal translations and special conformal transformations. Specifically, the infinitesimal translations are given by:
    \begin{align}
        \mathrm{i}p_{\dot{\mu}} = \partial_{\dot{\mu}} := \frac{\partial}{\partial x^{\dot{\mu}}}\,,
    \end{align}
    where $p_{\dot{\mu}}$ denotes the $4$-momentum. The special conformal generators are defined as:
    \begin{align}
        \mathrm{i}k_{\dot{\mu}} = (x)^2\, \partial_{\dot{\mu}} - 2x_{\dot{\mu}}\, (x \cdot \partial) \,.
    \end{align}
    The Lie brackets of these generators generate the full conformal Lie algebra, including the Lorentz transformations $x_{\dot{\mu}}\partial_{\dot{\nu}} - x_{\dot{\nu}}\partial_{\dot{\mu}}$ and the dilatation operator $x \cdot \partial$. In particular, the commutator between translation and special conformal generators reads:
    \begin{align}\label{B.5}
        \big[\mathrm{i}k_{\dot{\nu}} , \mathrm{i}p_{\dot{\mu}}\big] = 2\,\eta_{\dot{\mu}\dot{\nu}} (x \cdot \partial) - 2\left(x_{\dot{\mu}}\partial_{\dot{\nu}} - x_{\dot{\nu}}\partial_{\dot{\mu}}\right) \,.
    \end{align}
    We recall our mostly `$+$' convention for the metric signature in $\mathbb{R}^{4,2}$; $\eta^{}_{11} = \dots = \eta^{}_{44} = +1$, while $\eta^{}_{00} = \eta^{}_{55} =-1$. The vector fields $\mathrm{i}p_{\dot{\mu}}$ and $\mathrm{i}k_{\dot{\mu}}$ are related to the generators $X_{\mu\nu}$ \eqref{B.2} by:
    \begin{align}
        \mathrm{i}p_{\dot{\mu}} = X_{\dot{\mu}5} - X_{\dot{\mu}4}\,, \quad \mathrm{i}k_{\dot{\mu}} = X_{\dot{\mu}5} + X_{\dot{\mu}4} \,.
    \end{align}
    This follows from the relations (for $x^{\dot{\nu}}$ given by \eqref{B.4}):
    \begin{align}
        X_{\dot{\mu}5}\, x^{\dot{\nu}} = \frac{\zeta^5 \,\delta^{\dot{\nu}}_{\dot{\mu}}}{\zeta^4+\zeta^5} - \frac{\zeta_{\dot{\mu}}\zeta^{\dot{\nu}}}{\left(\zeta^4+\zeta^5\right)^2} = \frac{1+(x)^2}{2} \, \delta^{\dot{\nu}}_{\dot{\mu}} - x_{\dot{\mu}} \,x^{\dot{\nu}} \,,
    \end{align}
    \begin{align}
        X_{\dot{\mu}4}\, x^{\dot{\nu}} = -\frac{\zeta^4 \,\delta^{\dot{\nu}}_{\dot{\mu}}}{\zeta^4+\zeta^5} - \frac{\zeta_{\dot{\mu}}\zeta^{\dot{\nu}}}{\left(\zeta^4+\zeta^5\right)^2} = \frac{(x)^2-1}{2} \, \delta^{\dot{\nu}}_{\dot{\mu}} - x_{\dot{\mu}} \,x^{\dot{\nu}} \,.
    \end{align}
    \textbf{\textit{Note}:} Regarding the above calculations, it is worthwhile to highlight a subtle point here:
    \begin{align}
        \partial_{\dot{\mu}} x^{\dot{\nu}} = \frac{\partial}{\partial x^{\dot{\mu}}} x^{\dot{\nu}} = \delta^{\dot{\nu}}_{\dot{\mu}}\,,
    \end{align}
    \begin{align}
        \partial_{\dot{\mu}} x_{\dot{\nu}} = \frac{\partial}{\partial x^{\dot{\mu}}} x_{\dot{\nu}} = \frac{\partial}{\partial x^{\dot{\mu}}} \big( \eta_{\dot{\nu}\dot{\rho}} x^{\dot{\rho}} \big) = \eta_{\dot{\nu}\dot{\rho}} \delta^{\dot{\rho}}_{\dot{\mu}} = \eta_{\dot{\nu}\dot{\mu}} = \eta_{\dot{\mu}\dot{\nu}}\,.
    \end{align}}

    \item{\textbf{\textit{Huygens principle and propagation along null directions}:} In Minkowski spacetime, the Huygens principle for massless fields asserts that wave propagation is strictly supported on null directions. If a scalar field $\phi(x)$ satisfies the massless wave equation:
    \begin{align}
        \Box \phi(x) = \eta^{\dot{\mu}\dot{\nu}} \, \partial_{\dot{\mu}}\partial_{\dot{\nu}} \phi(x) = 0 \,,
    \end{align}
    then a compactly supported disturbance at a spacetime point $x_\circ$ propagates exclusively along the light-cone $(x-x_\circ)^2 = 0$. In particular, the solution exhibits no tail inside the cone: the field vanishes identically both inside and outside the null hypersurface determined by $x_\circ$. Hence, the causal propagation of massless signals is completely governed by null directions.
    
    Within the projective framework, this principle admits a natural geometric formulation. Each point $\zeta\in\mathbbm{P}Q$ encodes a null direction in spacetime and thus determines a lightlike ray along which massless disturbances propagate. Via the homogeneous embedding \eqref{B.4}, the ambient null condition $\zeta^2=0$ projects precisely onto the spacetime light-cone condition $(x-x_\circ)^2=0$. Families of such rays generate wavefronts, while caustics arise at points where the projection from $\mathbbm{P}Q$ to spacetime fails to be regular, namely:
    \begin{align}
        \det\frac{\partial x^{\dot{\mu}}}{\partial\theta^i}=0 \,,
    \end{align}
    with $\theta^i$ parametrizing the ray family. Since $\mathbbm{P}Q$ is compact and carries a natural action of the conformal group, this description is global and conformally invariant, providing a robust geometric framework for analyzing the propagation and focusing of massless perturbations, notably in cosmological and inflationary applications.}
\end{enumerate}

With the above material in mind, we now return to our main objective: introducing the $z$-picture of the $\mathfrak{so}(4,2)$ generators. The variables $z_i\; \big(=z^i\big)$ \eqref{2.38-1} are covariantly transformed by the generators $\mathrm{i}X_{0\rho}$ and $X_{\rho\rho^\prime}$, with $\rho,\rho^\prime = 1,2,3,4,5$, which span a real form $\mathrm{SO}(5,1)$ of the complexification of $\mathrm{SO}(4,2)$. Explicitly, we have:
\begin{align}
    T^{(z)}_i &= + \left(X_{i5} + \mathrm{i}X_{0i}\right) = \partial_i \,, \quad \partial_i := \frac{\partial}{\partial z^i} \,, \label{TAAAbb} \\[0.2cm]
    C^{(z)}_i &= -\left(X_{i5} - \mathrm{i}X_{0i}\right) = -(z)^2\, \partial_i + 2z_i\, (z\cdot\partial) \,, \label{cAAAbb}
\end{align}
since:
\begin{align}
    X_{i5}\, z^j = \frac{\zeta^5 \,\delta^j_i}{\zeta^5-\mathrm{i}\zeta^0} - \frac{\zeta_i \zeta^j}{\left(\zeta^5-\mathrm{i}\zeta^0\right)^2} = \frac{1+(z)^2}{2} \, \delta^j_i - z_i \,z^j \,,
\end{align}
\begin{align}
    \mathrm{i}X_{0i}\, z^j = -\frac{\mathrm{i}\zeta^0 \,\delta^j_i}{\zeta^5-\mathrm{i}\zeta^0} + \frac{\zeta_i \zeta^j}{\left(\zeta^5-\mathrm{i}\zeta^0\right)^2} = \frac{1-(z)^2}{2} \, \delta^j_i + z_i \,z^j \,,
\end{align}
with $i,j=1,2,3,4$. Moreover, we have:
\begin{align}\label{H z-picture}
    H^{(z)} = \mathrm{i} X_{05} = -z\cdot \partial\,, \quad\mbox{since}\quad \mathrm{i} X_{05}\, z^j = \frac{-\zeta^j}{\zeta^5 - \mathrm{i}\zeta^0} = -z^j\,,
\end{align}
\begin{align}
    J^{(z)}_{ij} = X_{ij} = z_i\partial_j - z_j\partial_i \,.
\end{align}
In direct analogy with the commutation relations given in Eqs. \eqref{lole}, \eqref{CaTb}, \eqref{lole11}, and \eqref{lole22}, it is observed that:
\begin{align}
    \big[H^{(z)} , J^{(z)}_{ij}\big] &= 0 \,,\\[0.2cm]
    \big[C^{(z)}_i, T^{(z)}_j\big] &= 2H^{(z)}\, \delta_{ij} - 2J^{(z)}_{ij} \,, \\[0.2cm]
    \big[H^{(z)} , T^{(z)}_i\big] &= T^{(z)}_i \,, \\[0.2cm]
    \big[H^{(z)} , C^{(z)}_i\big] &= -\,C^{(z)}_i \,.
\end{align}
	\renewcommand*\vec{\mathaccent"017E\relax}

\setcounter{equation}{0} \chapter{Conformal Massless low-helicity Fields and their de Sitter (dS) Spacetime Restriction}\label{Chapter 4}

\begin{abstract}
    {This chapter presents a systematic construction of low-helicity conformal massless fields, with particular emphasis on the zero-helicity case in de Sitter (dS) spacetime. It begins by formulating the massless scalar field in terms of ladder-type vertex operators built from Bose creation and annihilation operators, supplemented by zero-mode operators that ensure the correct lowest-energy states and reproduce the canonical two-point functions. The analytic structure of these operators is established within a precompact tube domain, ensuring convergence of the corresponding harmonic polynomial expansions.}

    {Subsequently, it describes the restriction of conformal massless fields from the $6$-dimensional conformal space to $4$-dimensional dS spacetime, emphasizing the role of homogeneous functions on the light-cone and the associated gauge freedom. The construction elucidates the relation between Euclidean coordinates in the analytically continued half-space and Minkowski-type coordinates, highlighting the Weyl rescaling that connects the conformally flat metric to the standard dS metric.}

    {Finally, the chapter demonstrates how the $6$-dimensional conformally invariant two-point functions project naturally onto dS spacetime, yielding the standard dS two-point functions for massless scalar fields. These results provide a rigorous framework for understanding the interplay between conformal symmetry, low-helicity fields, and their geometric realization in curved spacetimes.}
\end{abstract}

\section{Ladder Vertex Operator Construction of the Scalar Field}

In order to write down the free massless scalar field $\hat{\phi}(z)$ in terms of Bose creation and annihilation operators \eqref{Vacdef}, we need yet another pair of zero-mode operators $\chi$ and $\chi^\ast$ intertwining $|0^{\scriptscriptstyle{\mathrm{U}(2,2)}}_{}\rangle$ and $|\mathrm{LW}_{\lambda=0}\rangle =: |0^{\scriptscriptstyle{\mathrm{SO}(4)}}_{}\rangle = |1,0,0\rangle$, such that: 
\begin{align}
    \chi |0^{\scriptscriptstyle{\mathrm{U}(2,2)}}_{}\rangle = 0 = \langle 0^{\scriptscriptstyle{\mathrm{U}(2,2)}}_{} | \chi^\ast \,,
\end{align}
\begin{align}
    \chi^\ast |0^{\scriptscriptstyle{\mathrm{U}(2,2)}}_{}\rangle = |1,0,0\rangle \,, \quad\mbox{and}\quad \langle 1,0,0| = \langle 0^{\scriptscriptstyle{\mathrm{U}(2,2)}}_{} | \chi\,, 
\end{align}
\begin{align}
    \big[\chi , \chi^\ast\big] = 1 \,, \quad\mbox{and}\quad \big[\chi^{(\ast)} , \varphi^a\big] = 0 = \big[\chi^{(\ast)} , \widetilde{\varphi}_b\big] \,, 
\end{align}
for $\chi^{(\ast)} = \chi$ or $\chi^\ast$.

\begin{proposition}\label{proposition 3.1}
    \textbf{(vertex-operator realization of the free massless scalar field).} The vertex operator:
    \begin{align} \label{3.2}
        \hat{\phi}(z) = \boldsymbol{Y}\big(|1,0,0\rangle , z \big) = e^{z\cdot T} \left( \chi^\ast + \frac{\chi}{(z)^2} \right) \, e^{\breve{z}\cdot C}\,,
    \end{align}
    where $\breve{z}_i := {z_i}/{(z)^2}$, $T_i = \mathfrak{a}^\ast \mathcal{E}^{}_i \mathfrak{b}^\ast$, and $C_i = (T_i)^\ast = \mathfrak{b}\, \mathcal{E}^{\ast}_i \mathfrak{a}$, with $i=1,\dots, 4$, satisfies all properties of a free massless scalar field and reproduces the two-point function \eqref{2.47}. In practice, the absence of negative powers of $(z)^2$ in the power series expansion of $\hat{\phi}(z) |0^{\scriptscriptstyle{\mathrm{U}(2,2)}}_{}\rangle$ (asserted by the first statement of Proposition \ref{proposition HVY}) is guaranteed by:
    \begin{enumerate}[leftmargin=*]
        \item{The presence of the zero-modes $\chi$ and $\chi^\ast$ in \eqref{3.2}.}
        \item{The fact that $|1,0,0\rangle$ is a minimal-energy state in the conformally invariant zero-helicity subspace $\mathscr{H}^{(1)}_{\lambda=0} \subset \mathscr{H}$; in particular, $\breve{z}\cdot C|1,0,0\rangle = 0$, or, more precisely, $C_i|1,0,0\rangle = 0$.}
    \end{enumerate}
\end{proposition}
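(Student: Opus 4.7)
The plan is to verify in sequence that the vertex operator \eqref{3.2} fulfills all criteria of Proposition \ref{proposition HVY} and then to compute the two-point function. The whole strategy is driven by three observations: $|0^{\scriptscriptstyle{\mathrm{U}(2,2)}}\rangle$ is conformally invariant and hence annihilated by every $\mathfrak{u}(2,2)$ generator, the lowest weight vector $|1,0,0\rangle$ satisfies $C_i|1,0,0\rangle = 0 = J_{ij}|1,0,0\rangle$ and $(H-1)|1,0,0\rangle = 0$ by Proposition \ref{Proposition 3.3}, and the zero modes $\chi,\chi^\ast$ intertwine these two vectors via $\chi|0^{\scriptscriptstyle{\mathrm{U}(2,2)}}\rangle = 0$ and $\chi^\ast|0^{\scriptscriptstyle{\mathrm{U}(2,2)}}\rangle = |1,0,0\rangle$.

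The first step is to act on the right vacuum. Since $C_i|0^{\scriptscriptstyle{\mathrm{U}(2,2)}}\rangle = 0$ the rightmost exponential collapses; since $\chi|0^{\scriptscriptstyle{\mathrm{U}(2,2)}}\rangle = 0$ the potentially singular term $\chi/(z)^2$ is eliminated; and $\chi^\ast|0^{\scriptscriptstyle{\mathrm{U}(2,2)}}\rangle = |1,0,0\rangle$. This yields
\begin{align*}
    \hat\phi(z)|0^{\scriptscriptstyle{\mathrm{U}(2,2)}}\rangle = e^{z\cdot T}|1,0,0\rangle ,
\end{align*}
simultaneously establishing three things at once: the vertex-operator defining relation \eqref{w3w4w}, the state-operator correspondence $\hat\phi(0)|0^{\scriptscriptstyle{\mathrm{U}(2,2)}}\rangle = |1,0,0\rangle$, and the absence of negative powers of $(z)^2$ in the expansion, as demanded by the first assertion of Proposition \ref{proposition HVY}. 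Taking the Hermitian conjugate and using the dual relations $\langle 0^{\scriptscriptstyle{\mathrm{U}(2,2)}}|T_i = 0$, $\langle 0^{\scriptscriptstyle{\mathrm{U}(2,2)}}|\chi^\ast = 0$, and $\langle 0^{\scriptscriptstyle{\mathrm{U}(2,2)}}|\chi = \langle 1,0,0|$, together with $\langle 1,0,0|T_i = 0$ (a consequence of $C_i|1,0,0\rangle = 0$), delivers the companion identity $\langle 0^{\scriptscriptstyle{\mathrm{U}(2,2)}}|\hat\phi(z) = (z)^{-2}\langle 1,0,0|e^{\breve z\cdot C}$, in accord with the conformal conjugation \eqref{2.48} applied to an $\mathscr{E}=1$ primary.

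Next, the conformal covariance conditions \eqref{TVertex}-\eqref{JVertex} are verified by direct commutator computation, exploiting the commutativity $[\chi^{(\ast)}, T_i] = [\chi^{(\ast)}, C_i] = 0$ and the lowest weight data of $|1,0,0\rangle$. Translation covariance is immediate from $[T_i, e^{z\cdot T}] = 0$. The $J_{ij}$-covariance reduces to $J_{ij}|1,0,0\rangle = 0$. The $H$-covariance follows from $H|1,0,0\rangle = |1,0,0\rangle$, $[H,T_i] = T_i$, and $[H,C_i] = -C_i$. The $C_i$-covariance is the most delicate: one invokes $[C_i, T_j] = 2H\delta_{ij} - 2J_{ij}$ together with the inversion rule $\breve z_i = z_i/(z)^2$, and it is precisely the inversion-covariant design of the ansatz \eqref{3.2} that guarantees closure of the conformal algebra on $\hat\phi(z)$.

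Finally, combining the two identities established in the first step yields
\begin{align*}
    \boldsymbol{W}(z,u) = \frac{1}{(z)^2}\,\langle 1,0,0|\,e^{\breve z\cdot C}\,e^{u\cdot T}\,|1,0,0\rangle .
\end{align*}
The task reduces to evaluating this overlap and showing that it equals $(z)^2/(z-u)^2$, equivalently $[\,1 - 2\breve z\cdot u + (\breve z)^2(u)^2\,]^{-1}$, which is the standard four-dimensional generating kernel. Two complementary routes are available. The direct one uses the oscillator realization \eqref{CCR1}: expanding the exponentials and applying Wick-type contractions reduces the computation to a combinatorial sum over pairings of $\mathfrak{a}$- and $\mathfrak{b}$-oscillators, with the lowest weight annihilation $C_i|1,0,0\rangle = 0$ terminating half of the series. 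The algebraic route seeks a closed-form BCH identity for $e^{\breve z\cdot C}e^{u\cdot T}$ inside $\mathrm{U}(2,2)$ and then evaluates the resulting group element on $|1,0,0\rangle$. The main obstacle lies precisely in this resummation: because $[C_i, T_j]$ produces $H$ and $J_{ij}$, whose commutators with $T$ and $C$ are themselves nonzero, the BCH series is infinite and must be summed either by explicit oscillator bookkeeping or by recognizing it as a coherent-state overlap on the Hermitian symmetric domain $\mathfrak{T}^{\texttt{\tiny{precompact}}}_+$. A cleaner bypass is an invariance argument: once conformal covariance has been established, $\boldsymbol{W}(z,u)$ is forced to be the $\mathrm{U}(2,2)$-invariant two-point function of a primary of dimension $\mathscr{E}=1$, a function fixed up to a multiplicative constant; evaluation of the leading term at small $u$ then fixes the normalization and delivers $\boldsymbol{W}(z,u) = 1/(z-u)^2$.
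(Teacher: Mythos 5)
Your opening step coincides exactly with the paper's: using $C_i|0^{\scriptscriptstyle{\mathrm{U}(2,2)}}\rangle=0$, $\chi|0^{\scriptscriptstyle{\mathrm{U}(2,2)}}\rangle=0$, and $\chi^\ast|0^{\scriptscriptstyle{\mathrm{U}(2,2)}}\rangle=|1,0,0\rangle$ to collapse \eqref{3.2} to $e^{z\cdot T}|1,0,0\rangle$ (the paper's Eq. \eqref{woody}), and likewise the reduction of the two-point function to $\tfrac{1}{(z)^2}\langle 1,0,0|e^{\breve z\cdot C}e^{u\cdot T}|1,0,0\rangle$. Where you diverge is the endgame. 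The paper evaluates the overlap explicitly: the $H$-grading forces only the diagonal terms $n=m$ to survive, each diagonal matrix element is identified with the Gegenbauer-type biharmonic polynomial via $\langle 1,0,0|\tfrac{(\breve z\cdot C)^n}{n!}\tfrac{(u\cdot T)^n}{n!}|1,0,0\rangle=\texttt{H}_n(\breve z,u)$ (Eq. \eqref{qweewq}, itself checked only for $n=0,1$), and the series is resummed by the generating function \eqref{4.7}. You instead propose a conformal-invariance uniqueness argument: a covariant primary of dimension $\mathscr{E}=1$ has its two-point function fixed up to normalization. That route is legitimate and arguably more conceptual — it explains \emph{why} the answer must be $1/(z-u)^2$ rather than merely verifying it — and it sidesteps the BCH resummation entirely. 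What the paper's route buys is an explicit, self-contained computation that does not presuppose the covariance relations \eqref{TVertex}--\eqref{JVertex} for the specific ansatz \eqref{3.2}.

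The soft spot in your version is that the decisive step is sketched on both of your alternative routes. The invariance shortcut is only as strong as the covariance verification it rests on, and you yourself flag the $C_i$-covariance as ``the most delicate'' without carrying it out — yet that is precisely the nontrivial input (it is where the zero modes, the factor $1/(z)^2$, and the inversion $\breve z$ must conspire). Similarly, your ``direct route'' names the Wick combinatorics but omits the two facts that actually make it close: the diagonal selection rule $n=m$ from the $H$-grading, and the identity \eqref{qweewq} that organizes the pairings into $\texttt{H}_n(\breve z,u)$. Finally, ``satisfies all properties of a free massless scalar field'' includes harmonicity of $\hat\phi(z)|0^{\scriptscriptstyle{\mathrm{U}(2,2)}}\rangle$, which the paper obtains from $(T)^2=0$ (Proposition \ref{proposition T2=0}) applied to the expansion of $e^{z\cdot T}|1,0,0\rangle$; this is absent from your argument and should be added, since it is a one-line consequence of the same nilpotency you already use elsewhere.
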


\begin{Remark}
    {\textbf{(conformal invariance of the vacuum state).} In this context, before proceeding with the proof, it is important to emphasize that the vacuum state $|0^{\scriptscriptstyle{\mathrm{U}(2,2)}}_{}\rangle$ is conformally invariant, meaning that it is annihilated by all infinitesimal generators of the conformal Lie algebra $\mathfrak{u}(2,2)$; the explicit form of these generators can be found in Chap. \ref{Chapter 3}. Explicitly:
    \begin{align}
        H|0^{\scriptscriptstyle{\mathrm{U}(2,2)}}_{}\rangle = T_i |0^{\scriptscriptstyle{\mathrm{U}(2,2)}}_{}\rangle = C_i |0^{\scriptscriptstyle{\mathrm{U}(2,2)}}_{}\rangle = J_{ij}|0^{\scriptscriptstyle{\mathrm{U}(2,2)}}_{}\rangle = {\mathcal{C}}_1 |0^{\scriptscriptstyle{\mathrm{U}(2,2)}}_{}\rangle = 0\,,
    \end{align}
    where $i,j=1,\dots, 4$.}
\end{Remark}

\begin{proof}{ 
    We begin with some basic steps, illustrating along the way the advantages of the analytic $z$-picture:
    {\begin{enumerate}[leftmargin=*]
        \item{While the basic properties of the $x$-space two-point function and the vector-valued distribution $\hat{\phi}(x) |0^{\scriptscriptstyle{\mathrm{U}(2,2)}}_{} \rangle$ require using a singular Fourier transformation — involving $\theta(p_0) \delta(p^2)$ (see Eq. \eqref{2.46}), $\hat{\phi}(z) |0^{\scriptscriptstyle{\mathrm{U}(2,2)}}_{} \rangle$ is an analytic vector-valued function in $\mathfrak{T}^{\texttt{\tiny{precompact}}}_+$, whose (discrete!) energy spectrum is displayed by its (convergent) expression in homogeneous harmonic polynomials:
        \begin{align}\label{woody}
            \hat{\phi}(z) |0^{\scriptscriptstyle{\mathrm{U}(2,2)}}_{} \rangle &= \textbf{Y}\big(|1,0,0\rangle , z \big) |0^{\scriptscriptstyle{\mathrm{U}(2,2)}}_{} \rangle \nonumber\\[0.2cm]
            &= e^{z\cdot T} \left( \chi^\ast + \frac{\chi}{(z)^2} \right) \, e^{\breve{z}\cdot C} |0^{\scriptscriptstyle{\mathrm{U}(2,2)}}_{} \rangle \nonumber\\[0.2cm]
            &= e^{z\cdot T} |1,0,0\rangle = \sum_{n=0}^\infty \frac{(z\cdot T)^n}{n!} \;|1,0,0\rangle \,,
        \end{align}
        where:
        \begin{align}
            \Delta\, (z\cdot T)^n &= \sum_{i=1}^4 \frac{\partial^2}{\partial z^2_i} (z\cdot T)^n \nonumber\\[0.2cm] 
            &= n (T\cdot\partial) (T\cdot z)^{n-1} \nonumber\\[0.2cm] 
            &= n(n-1) \,(T)^2\, (T\cdot  z)^{n-2} = 0\,,
        \end{align}
        since $(T)^2 = \sum_{i=1}^4 T^i T_i = 0$ (see Proposition \ref{proposition T2=0}).}
        \item{The expected two-point function \eqref{2.47} is harmonic in both $z$ and $u$ and can be expressed in terms of the generating function of the Gegenbauer polynomials $\texttt{C}^1_n$:
        \begin{align}\label{4.7}
            \frac{1}{(z-u)^2} = \frac{1}{(z)^2} \frac{1}{1 + (\breve{z})^2 (u)^2 - 2\breve{z}\cdot u} = \frac{1}{(z)^2} \sum_{n=0}^{\infty} \texttt{H}_n (\breve{z}, u) \,,
        \end{align}
        where:
        \begin{align}
            \texttt{H}_n (\breve{z}, u) &= \sum_{k=0}^{[\frac{n}{2}]} \begin{pmatrix}
            n-k\\ k \end{pmatrix} (2\breve{z}\cdot u)^{n-2k} \left( -(\breve{z})^2 (u)^2 \right)^k \nonumber\\[0.2cm] 
            &= \left( (\breve{z})^2 (u)^2 \right)^\frac{n}{2} \texttt{C}^1_n \left( \frac{\breve{z}\cdot u}{\sqrt{(\breve{z})^2 (u)^2}} \right) \,.
        \end{align}}
        \item{Applying the commutation relation \eqref{cxz}, we find:
        \begin{align}
            \big[C_i, T_j\big] \hat{\phi}(u) \; \Big( = \big[C_i, T_j\big] \textbf{Y}\big(|1,0,0\rangle , u \big) \Big) = 2\big( H\,\delta_{ij} - J_{ij} \big) \hat{\phi}(u) \,.
        \end{align}
        Moreover, from Proposition \ref{Proposition 3.3}, we have: 
            \begin{align}
                H|1,0,0\rangle = |1,0,0\rangle\,, \quad J_{ij}|1,0,0\rangle = 0 \,.
            \end{align}
        Accordingly, we obtain:
        \begin{align}\label{qweewq}
            \langle1,0,0|\; \frac{(\breve{z}\cdot C)^n}{n!} \frac{(u\cdot T)^n}{n!} \; |1,0,0\rangle = \texttt{H}_n (\breve{z}, u) \,.
        \end{align}
        Let us examine the above identity, for some small values of $n$:
        \begin{enumerate}
            \item{If $n=0$, then:
            \begin{align}
                \langle1,0,0|1,0,0\rangle = 1 = \texttt{H}_0 (\breve{z}, u) \,.
            \end{align}}
            \item{If $n=1$, then:
            \begin{align}
                \breve{z}^i u^j\langle1,0,0| \, C_i\, T_j \, |1,0,0\rangle &= \breve{z}^i u^j \underbrace{\langle1,0,0| \, [C_i , T_j] \, |1,0,0\rangle}_{=\, 2\delta_{ij}} \nonumber\\[0.1cm]
                &\qquad + \breve{z}^i u^j\langle1,0,0| \, T_j \underbrace{C_i \, |1,0,0\rangle}_{=\, 0} \nonumber\\
                &= 2 (\breve{z}\cdot u) = \texttt{H}_1 (\breve{z}, u) \,.
            \end{align}}
        \end{enumerate}}
    \end{enumerate}}
    
    {Now, equipped with the vertex operator expression \eqref{3.2}, we proceed to verify that it reproduces the correct two-point function explicitly:
    \begin{align}
        \langle 0^{\scriptscriptstyle{\mathrm{U}(2,2)}}_{} | \hat{\phi}(z)\, \hat{\phi}(u)\, |0^{\scriptscriptstyle{\mathrm{U}(2,2)}}_{} \rangle = \frac{1}{(z-u)^2} \,.
    \end{align}
    First, recalling Eq. \eqref{woody}, we have:
    \begin{align}
        \langle 0^{\scriptscriptstyle{\mathrm{U}(2,2)}}_{} | \hat{\phi}(z)\, \hat{\phi}(u) | 0^{\scriptscriptstyle{\mathrm{U}(2,2)}}_{} \rangle = \langle 0^{\scriptscriptstyle{\mathrm{U}(2,2)}}_{} | \hat{\phi}(z)\, e^{u \cdot T} |1,0,0\rangle \,.
    \end{align}
    Next, we plug in the vertex operator $\hat{\phi}(z)$ and obtain:
    \begin{align}
        &\langle 0^{\scriptscriptstyle{\mathrm{U}(2,2)}}_{}  | \hat{\phi}(z)\, \hat{\phi}(u) |0^{\scriptscriptstyle{\mathrm{U}(2,2)}}_{} \rangle \nonumber\\[0.2cm] 
        &= \langle 0^{\scriptscriptstyle{\mathrm{U}(2,2)}}_{}  | e^{z \cdot T} \left( \chi^\ast + \frac{\chi}{(z)^2} \right) e^{\breve{z} \cdot C} e^{u \cdot T} |1,0,0\rangle \nonumber\\[0.1cm]
        &= \frac{1}{(z)^2} \langle 1,0,0 |\, e^{\breve{z} \cdot C} e^{u \cdot T} |1,0,0\rangle \nonumber\\[0.1cm]
        &= \frac{1}{(z)^2} \sum_{n,m=0}^{\infty} \frac{1}{n! m!} \langle 1,0,0 | (\breve{z} \cdot C)^n (u \cdot T)^m |1,0,0\rangle \,.
    \end{align}
    The above expression is non-vanishing only when $n=m$. In that case, using the basic materials derived previously, we immediately find:
    \begin{align}
        &\langle 0^{\scriptscriptstyle{\mathrm{U}(2,2)}}_{}  | \hat{\phi}(z)\, \hat{\phi}(u) |0^{\scriptscriptstyle{\mathrm{U}(2,2)}}_{} \rangle \nonumber\\[0.2cm] 
        &= \frac{1}{(z)^2} \sum_{n=0}^\infty \frac{1}{(n!)^2} \langle 1,0,0 | (\breve{z} \cdot C)^n (u \cdot T)^n |1,0,0\rangle \nonumber\\[0.1cm]
        &= \frac{1}{(z)^2} \sum_{n=0}^{\infty} \texttt{H}_n (\breve{z}, u) = \frac{1}{(z - u)^2} \,.
    \end{align}}
}\end{proof}

\begin{Remark}
    {\textbf{(positivity of $\texttt{H}_n$).} It follows from the Hilbert space (or Wightman) positivity condition \eqref{2.50} (see also Eq. \eqref{4.7}) that $\texttt{H}_n(\overline{z}, z) >0$, for $(z)^2 \neq 0$ and for all $n=0,1,2, \dots\;$. Moreover, one can write:
    \begin{align}
        \texttt{H}_n(\overline{z} , z) = \big|(z)^2\big|^n \; \texttt{C}^1_n\left( \frac{z\cdot\overline{z}}{\big|(z)^2\big|} \right) \geq \big|(z)^2\big|^n \; \texttt{C}^1_n\left( 1 \right) = \big|(z)^2\big|^n \,(n+1) >0 \,,
    \end{align}
    since $\texttt{C}^1_n(x) \geq \texttt{C}^1_n (1)$, for $x>1$.}
\end{Remark}

\begin{Remark}
    {\textbf{(unique harmonic continuation).} The biharmonic polynomial $\texttt{H}_n(z,u)$ (that is, $\Delta_z\, \texttt{H}_n(z,u)=0=\Delta_u\, \texttt{H}_n(z,u)$) is the unique harmonic in $z$ extension of the monomial $(2z\cdot u)^n$ originally defined (and harmonic in $u$) for $(z)^2=0$ (see Example 2.6 in Ref. \cite{BT}).}
\end{Remark}

\begin{Remark}
    {\textbf{(boundary extraction of the lowest one-particle state).} It follows from \eqref{2.48} and \eqref{2.50} that the bra lowest energy one-particle (zero helicity) state $\langle 1,0,0 |$ is:
    \begin{align}       
        &\lim_{\overline{z}\to 0} \; \frac{1}{(\overline{z})^2} \left(\langle 0^{\scriptscriptstyle{\mathrm{U}(2,2)}}_{} | \, \hat{\phi}(z^\ast) \right) \nonumber\\[0.2cm] 
        &= \lim_{\overline{z}\to 0} \; \frac{1}{(\overline{z})^2} \left(\langle 0^{\scriptscriptstyle{\mathrm{U}(2,2)}}_{} |\, e^{z^\ast\cdot T} \left( \chi^\ast + \frac{\chi}{(z^\ast)^2} \right) \, e^{\breve{z}^\ast\cdot C} \right) \nonumber\\[0.2cm]
        &= \lim_{\overline{z}\to 0} \; \frac{1}{(\overline{z})^2} \left(\langle 1,0,0 |\, \frac{1}{(z^\ast)^2} \, e^{\overline{z}\cdot C} \right) \nonumber\\[0.2cm]
        &= \lim_{\overline{z}\to 0} \; \frac{1}{(\overline{z})^2} \sum_{n=0}^\infty \left(\langle 1,0,0 |\, \frac{\big((\overline{z})^2\big)^2}{(\overline{z})^2} \, \frac{(\overline{z}\cdot C)^n}{n!} \right) \nonumber\\[0.2cm]
        &= \langle 1,0,0 | \,,
    \end{align}
    or, alternatively, upon replacing $z$ by $z^\ast = \breve{\overline{z}} = \frac{\overline{z}}{(\overline{z})^2}$:
    \begin{align}
        &\lim_{z\to\infty} \; (z)^2 \left(\langle 0^{\scriptscriptstyle{\mathrm{U}(2,2)}}_{} |\, \hat{\phi}(z) \right) \nonumber\\[0.2cm] 
        &= \lim_{z\to\infty} \; (z)^2 \left(\langle 0^{\scriptscriptstyle{\mathrm{U}(2,2)}}_{} |\, e^{z\cdot T} \left( \chi^\ast + \frac{\chi}{(z)^2} \right) \, e^{\breve{z}\cdot C} \right) \nonumber\\[0.2cm]
        &= \lim_{z\to\infty} \; (z)^2 \left(\langle 1,0,0 |\, \frac{1}{(z)^2}\, e^{\breve{z}\cdot C} \right) \nonumber\\[0.2cm]
        &= \lim_{z\to\infty} \; (z)^2 \sum_{n=0}^\infty \left(\langle 1,0,0 |\, \frac{1}{(z)^2}\, \frac{1}{n!} \Big(\frac{z}{(z)^2}\cdot C\Big)^n \right) \nonumber\\[0.2cm]
        &= \langle 1,0,0 | \,.
    \end{align}}
\end{Remark}

\section{Conformal Massless Scalar Field in de Sitter (dS) Space}

The $4$-dimensional dS spacetime \eqref{dS-M_R} can be embedded in the Dirac quadric $Q$ \eqref{2.36} as its intersection with the hyperplane $\zeta^5 = R$:
\begin{align}\label{3.17}
    \text{dS} = Q \;\bigcap\; \Big\{ \zeta^5=R>0 \Big\} = \Bigg\{ \zeta^0 = - \zeta_0\,, \; \zeta^i=\zeta_i\,,\; i=1,2,3,4 \;;\;&\nonumber\\[0.2cm] 
    (\zeta)^2 = \sum_{i=1}^4 (\zeta_i)^2 - (\zeta_0)^2 = R^2&\Bigg\} \,.
\end{align}
Accordingly, the Euclidean de Sitter (EdS) half-space is defined as follows:
\begin{align}\label{EdS}
    \text{EdS} &= \left\{ \zeta^0 = \mathrm{i}\zeta_\beta\,,\; \zeta_\beta>0\,, \;\sum_{i=1}^4 (\zeta_i)^2 + (\zeta_\beta)^2 = R^2 \right\}\nonumber\\
    &= \Big\{ \zeta=\big(\zeta_\beta >0 ,\, \,\zeta_1,\,\dots,\, \zeta_4\big) \in \mathbb{S}_R^4 \Big\}\,.
\end{align}

\begin{Remark}{
\textbf{(EdS-dS coordinate correspondence and conformal factors).} In this context, a few points merit further clarification:
\begin{enumerate}[leftmargin=*]
    \item{The real (dimensionless) Cartesian coordinates of the EdS half-space, defined in Eq. \eqref{EdS}, are connected to the analytically continued versions of the coordinates $z_i$ introduced in Eq. \eqref{2.38-1}. Specifically, setting:
    \begin{align}\label{convention50}
        \zeta^5 = R \; >0 \,, \quad\mbox{and}\quad \zeta_\beta = R \tanh{\beta} \; >0 \quad \big(\Longrightarrow \;\; \beta>0 \big)\,,
    \end{align}
    we have:
    \begin{align}\label{3.20}
        z_i = \frac{\zeta_i}{R + \zeta_\beta} = \frac{\kappa \zeta_i}{1 + \kappa \zeta_\beta}\,, 
    \end{align}
    where $\kappa=\frac{1}{R}$ denotes the scalar curvature of $\text{EdS} \cong \mathbb{S}^4_R$. It then follows that $(z)^2$ is obtained as an analytic continuation of \eqref{2.37}:
    \begin{align}\label{2.37cont}
        (z)^2 \; \left( := \sum_{i=1}^4 z^i z_i \right) = \frac{R-\zeta_\beta}{R+\zeta_\beta} = \frac{1-\kappa\zeta_\beta}{1+\kappa\zeta_\beta} = e^{-2\beta}\,. 
    \end{align}
    Note that $\kappa\zeta_\beta = \tanh{\beta}$, with $\beta>0$. From this realization, one can observe that:
    \begin{enumerate}
        \item{Given the equation above and the strict positivity of $\beta$, it follows immediately that:
        \begin{align}
            \left|(z)^2\right| \; \Big(= z \cdot \overline{z} \Big) = e^{-2\beta} < 1\,,
        \end{align}
        and also one can readily check that:
        \begin{align}
            z \cdot \overline{z} < \frac{1}{2} \left(1 + \left|(z)^2\right|^2 \right)\,,
        \end{align}
        This demonstrates that EdS half-space \eqref{EdS} lies within the intersection of the hyperplane $\zeta^5=R$ with the tube domain $\mathfrak{T}^{\texttt{\tiny{precompact}}}_+$ \eqref{2.40}, where the vector-valued function $\hat{\phi}\big(z(\zeta)\big)|0^{\scriptscriptstyle{\mathrm{U}(2,2)}}_{}\rangle$ is analytic.}
        
        \item{The label $\beta$ in the imaginary time coordinate $\mathrm{i}\zeta_\beta$ is chosen to evoke the Boltzmann factor $\beta = \frac{1}{KT}$, where $T$ denotes the absolute temperature and $K$ is the Boltzmann constant. This choice reflects the fact that the analytically continued two-point function of the field $\hat{\phi}(\zeta)|0\rangle$ admits an interpretation as a thermal correlation function (see \cite{NT}, particularly Sect. 7). In this context, $\beta$ can be understood as a real continuation of the imaginary proper time $\mathrm{i}\tau$, where $\tau$ is the physical proper time (compare Eqs. \eqref{2.37} and \eqref{2.37cont}).} 
	\end{enumerate}}
    
    \item{The metric form $(\mathrm{d} z)^2$ is related by a (Weyl) conformal factor to the manifestly $\mathrm{O}(5)$-invariant quadratic $(\mathrm{d} \zeta)^2 = (\mathrm{d}\zeta_i)^2 + (\mathrm{d}\zeta_\beta)^2$:
    \begin{align}\label{3.22}
        &(\mathrm{d} z)^2 \nonumber\\[0.2cm] 
        &= \mathrm{d} \left( \frac{\zeta^i}{R + \zeta_\beta} \right) \mathrm{d} \left( \frac{\zeta_i}{R + \zeta_\beta} \right) \nonumber\\[0.2cm]
        &= \left( \frac{\mathrm{d}\zeta^i}{R + \zeta_\beta} - \frac{(\mathrm{d}\zeta_\beta)\,\zeta^i}{(R + \zeta_\beta)^2}\right) \left( \frac{\mathrm{d}\zeta_i}{R + \zeta_\beta} - \frac{(\mathrm{d}\zeta_\beta)\,\zeta_i}{(R + \zeta_\beta)^2}\right) \nonumber\\[0.2cm]
        &= \frac{1}{\left(R+\zeta_\beta\right)^2} \left( (\mathrm{d}\zeta_i)^2 + \frac{\zeta^i\zeta_i \; \big(=R^2-(\zeta_\beta)^2\big)}{(R+\zeta_\beta)^2}\, (\mathrm{d}\zeta_\beta)^2 - \frac{2\zeta^i\mathrm{d}\zeta_i}{R+\zeta_\beta} \, \mathrm{d} \zeta_\beta \right) \nonumber\\[0.2cm]
        &= \frac{1}{\left(R+\zeta_\beta\right)^2} \left( (\mathrm{d}\zeta_i)^2 + \frac{R-\zeta_\beta}{R+\zeta_\beta}\, (\mathrm{d}\zeta_\beta)^2 - \frac{\mathrm{d}(\zeta_i)^2\; \big( = \mathrm{d}\big(R^2-(\zeta_\beta)^2\big) \big)}{R+\zeta_\beta} \, \mathrm{d} \zeta_\beta \right) \nonumber\\[0.2cm]
        &= \frac{1}{\left( R+\zeta_\beta \right)^2} \left( (\mathrm{d}\zeta_i)^2 + (\mathrm{d}\zeta_\beta)^2 \right) = \frac{1}{\left( R+\zeta_\beta \right)^2} \; (\mathrm{d} \zeta)^2\,,
    \end{align}        
    and hence:
    \begin{align}\label{vvvvvob}
        &(\mathrm{d}\zeta)^2 \; \Big(= (\mathrm{d}\zeta_i)^2 + (\mathrm{d}\zeta_\beta)^2\Big) = \Omega^2(z)\, (\mathrm{d}z)^2\,, \nonumber\\[0.2cm]
        &\quad\mbox{with}\quad\Omega(z) := R + \zeta_\beta = \frac{2R}{1+(z)^2}\,.
    \end{align}
    Note that summation over the repeated index `$i$' from $1$ to $4$ is assumed throughout.}

    \item{Coming back to the real dS spacetime with Lorentzian signature, we can express the Minkowski coordinate $x^{\dot{\mu}}$ (of dimension of length) in terms of the $\mathrm{O}(4,1)$ covariant vector $\zeta$ \eqref{3.17} by the following relations:
    \begin{align}\label{bbbbob}
        &\kappa x^{\dot{\mu}} = \frac{\zeta^{\dot{\mu}}}{R+\zeta_4} = \frac{\kappa\zeta^{\dot{\mu}}}{1+\kappa\zeta_4} \nonumber\\[0.2cm]
        &\quad\Longrightarrow\quad \kappa^2 (x)^2 \;\Big(:= \kappa^2 x^{\dot{\mu}}x_{\dot{\mu}}\Big) = \frac{1-\kappa\zeta_4}{1+\kappa\zeta_4}\,,
    \end{align}
    where, again, $\dot{\mu}=0,1,2,3$. Then, in complex analogy with \eqref{3.22}, we have:\footnote{The Weyl factor $\Omega_{\text{FHH}}(x) = \left( 1+\frac{\kappa^2}{4}(x)^2 \right)^{-1}$ of Ref. \cite{FHH} differs from ours by the factor $\frac{1}{4}$ in front of $\kappa^2 (x)^2$; it can be reproduced by substituting \eqref{bbbbob} by $x^{\dot{\mu}} = \frac{2\zeta^{\dot{\mu}}}{1+\kappa\zeta_4}$.}
    \begin{align}\label{3.25}
        &(\mathrm{d}\zeta)^2 \; \Big( = (\mathrm{d} \zeta_{\dot{\mu}})^2 + (\mathrm{d} \zeta_4)^2 \Big) = \Omega^2(x) \,(\mathrm{d}x_{\dot{\mu}})^2 \,, \nonumber\\[0.2cm]
        &\quad\mbox{with}\quad \Omega(x) \; := 1+\kappa\zeta_4 = \frac{2}{1+\kappa^2(x)^2}\,.
    \end{align}}
    \item{The expression \eqref{3.22} and \eqref{3.25} involve different small curvature ($R\to\infty$ or $\kappa\to 0$) limits:
    \begin{enumerate}
        \item{\textbf{\textit{EdS case}:} For $\zeta_\beta \approx R$ (or $\kappa\zeta_\beta \approx 1$) and hence $|\zeta_i|\ll R$, we have:
        \begin{enumerate}
            \item{Eqs. \eqref{3.20} and \eqref{2.37cont}, respectively, imply that $|z_i|$ and $(z)^2\ll 1$.}
            \item{From Eqs. \eqref{bbbbob}, \eqref{3.20}, and \eqref{2.37cont}, we obtain:
            \begin{align}
                \kappa^2(x)^2 = \frac{1-\kappa\zeta_4}{1+\kappa\zeta_4} = \frac{1+(z)^2-2z_4}{1+(z)^2+2z_4} \approx 1-4z_4 \approx 1 \,.
            \end{align}}
            \item{From Eqs. \eqref{vvvvvob}, \eqref{3.20}, and \eqref{2.37cont}, the Euclidean $\mathrm{O}(5)$-invariant metric $(\mathrm{d}\zeta)^2$ approaches:
            \begin{align}
                (\mathrm{d}\zeta)^2 \; \Big(:= (\mathrm{d}\zeta_i)^2 + (\mathrm{d}\zeta_\beta)^2\Big) \quad\longrightarrow\quad 4R^2 (\mathrm{d}z_i)^2 \gg (\mathrm{d}z_i)^2\,.
            \end{align}}
        \end{enumerate}}

        \item{\textbf{\textit{dS case}:} For $\zeta_4\approx R$ (or $\kappa\zeta_4 \approx 1$) and hence $|\zeta^{\dot{\mu}}|\ll R$, we have:
        \begin{enumerate}
            \item{Eq. \eqref{bbbbob} implies that $|x^{\dot{\mu}}|$ and $(x)^2\ll 1$.}
            \item{From Eqs. \eqref{2.37cont} and \eqref{bbbbob}, we obtain:
            \begin{align}
                (z)^2 = \frac{1-\kappa\zeta_\beta}{1+\kappa\zeta_\beta} = \frac{1+\kappa^2(x)^2-2\kappa x_\beta}{1+\kappa^2(x)^2+2\kappa x_\beta} \approx 1-4\kappa x_\beta \approx 1 \,,
            \end{align}
            where $x^0 := \mathrm{i} x_\beta$.}
            \item{From Eqs. \eqref{3.25} and \eqref{bbbbob}, the Euclidean $\mathrm{O}(5)$-invariant metric $(\mathrm{d}\zeta)^2$ approaches:
            \begin{align}
                (\mathrm{d}\zeta)^2 \; \Big( := (\mathrm{d} \zeta_{\dot{\mu}})^2 + (\mathrm{d} \zeta_4)^2 \Big) \quad\longrightarrow\quad 4 (\mathrm{d}x_{\dot{\mu}})^2\,.
            \end{align}}
        \end{enumerate}}
    \end{enumerate}}
\end{enumerate}
}\end{Remark}



We now proceed to write down the field (wave) equation and the two-point function for a conformal (massless) scalar field in (E)dS space. 

Working in conformal space — without imposing the dS constraint $\zeta^5 = R$ — Dirac \cite{D36} demonstrated that solutions $\phi(\zeta)$ to the manifestly invariant $6$-dimensional d'Alembert equation:
\begin{align}\label{dAlembert equation}
    \square_6 \phi(\zeta) = 0 \,, \quad\text{with}\quad \square_6 = \sum_{i=1}^{4} \frac{\partial^2}{\partial \zeta_i^2} - \frac{\partial^2}{\partial \zeta_0^2} - \frac{\partial^2}{\partial \zeta_5^2} \,,
\end{align}
are only well-defined on the light-cone $(\zeta)^2 = 0$, if $\phi$ is homogeneous of degree $-1$. Specifically, if $\phi(\zeta)$ satisfies the homogeneity condition:
\begin{align}
    \phi(\varrho \zeta) = \varrho^d \phi(\zeta)\,, \quad\text{for all}\quad \varrho > 0 \,,
\end{align}
then the equation $\square_6 \phi = 0$ defines a conformally invariant field only on the quadric $(\zeta)^2 = 0$ provided that $d = -1$.

This requirement arises from a subtle invariance: 
\begin{enumerate}
    \item{Since the physical field $\phi(\zeta)$ is defined only on the cone $(\zeta)^2 = 0$, it can be freely modified by adding terms proportional to $(\zeta)^2$, without affecting its restriction to the cone; that is:
    \begin{align}
        \phi(\zeta) \quad\longmapsto\quad \phi(\zeta) + (\zeta)^2 \boldsymbol{\phi}(\zeta) \,,
    \end{align}
    where the new field $\boldsymbol{\phi}(\zeta)$ is assumed to be homogeneous of degree $d - 2$:
    \begin{align}
        \boldsymbol{\phi}(\varrho \zeta) = \varrho^{d-2} \boldsymbol{\phi}(\zeta) \,, \quad\text{for all}\quad \varrho > 0 \,.
    \end{align}}

    \item{To ensure that the gauge-transformed function $\phi(\zeta) + (\zeta)^2 \boldsymbol{\phi}(\zeta)$ still satisfies d'Alembert equation \eqref{dAlembert equation} on the cone, we require that:
    \begin{align}
        \square_6 \left( (\zeta)^2 \boldsymbol{\phi}(\zeta) \right)\Big|_{(\zeta)^2 = 0} = 0 \,.
    \end{align}
    This condition holds only if $d = -1$. The reason follows from the commutator (see Appendix \ref{Appen commutator}):
    \begin{align}\label{commutator}
        \big[ \square_n, (\zeta)^2 \big] = 2\left( n + 2 \zeta^\mu \frac{\partial}{\partial \zeta^\mu}\right) =: 2\left(n + 2  \zeta\cdot\partial\right) \,,
    \end{align}
    which yields (see derivation in footnote\footnote{For any sufficiently smooth function $f(\zeta)$, from the commutator \eqref{commutator}, we have: 
    \begin{align*}
        \square_n \left( (\zeta)^2 f \right) = 2n f + 4 \zeta^\mu \frac{\partial f}{\partial \zeta^\mu} + (\zeta)^2 \square_n f\,,
    \end{align*}
    so that, on the cone $(\zeta)^2 = 0$, it simplifies to:
    \begin{align*}
        \square_n \left( (\zeta)^2 f \right)\Big|_{(\zeta)^2 = 0} = 2n f + 4 \zeta\cdot\partial f\,.
    \end{align*}
    Now, if $f = \boldsymbol{\phi}$ is homogeneous of degree $d - 2$, then $\zeta\cdot\partial\boldsymbol{\phi} = (d - 2) \boldsymbol{\phi}$, and thus:
    \begin{align*}
        \square_n \left( (\zeta)^2 \boldsymbol{\phi} \right)\Big|_{(\zeta)^2 = 0} = \big(2n + 4(d - 2)\big) \boldsymbol{\phi}\,.
    \end{align*}
    In $n = 6$ dimensions, this vanishes only when $d = -1$.}):
    \begin{align}
        \square_{n=6} \left( (\zeta)^2 \boldsymbol{\phi} \right)\Big|_{(\zeta)^2 = 0} = 2\big(6 + 2(d - 2)\big) \boldsymbol{\phi} = 0 \quad\Longleftrightarrow\quad d = -1 \,.
    \end{align}}
\end{enumerate}

For $\zeta^{}_{\text{D}} = (\zeta^{}_{\text{dS}}, \zeta^5_{\text{D}} = R)$ and $\zeta_{\text{D}}^\prime = (\zeta_{\text{dS}}^\prime, \zeta^{\prime 5}_{\text{D}} = R)$, where $(\zeta^{}_{\text{dS}})^2 = R^2 = (\zeta_{\text{dS}}^\prime)^2$ and $(\zeta^{}_{\text{D}})^2 = 0 = (\zeta_{\text{D}}^\prime)^2$, it is natural to expect that a conformally invariant two-point function on dS spacetime (see, for instance, Ref. \cite{Gazeaus1}\footnote{In Ref. \cite{Gazeaus1}, the metric signature is chosen to be mostly negative, in contrast to the convention adopted in this manuscript. As a result, certain terms in the corresponding two-point function differ from ours by a sign.}) can be recovered from the manifestly invariant one defined on the $6$-dimensional light-cone. In this setting, the corresponding two-point function takes the form:
\begin{align}
    W(\zeta^{}_{\text{dS}}, \zeta_{\text{dS}}^\prime) := W(\zeta^{}_{\text{D}}, \zeta_{\text{D}}^\prime) \Big|_{\zeta^5_{\text{D}} = R = \zeta^{\prime 5}_{\text{D}}} &= \frac{1}{(2\pi)^2 \left(\zeta^{}_{\text{D}} - \zeta_{\text{D}}^\prime\right)^2}\Bigg|_{\zeta^5_{\text{D}} = R = \zeta^{\prime 5}_{\text{D}}} \nonumber\\
    &= \frac{1}{8\pi^2 \left(R^2 - \zeta^{}_{\text{dS}} \cdot \zeta_{\text{dS}}^\prime\right)}\,.
\end{align}
This is simply the pullback of the conformally invariant $6$-dimensional two-point function onto the $5$-dimensional dS hyperboloid $(\zeta^{}_{\text{dS}})^2 = R^2$. Note that, the two-point function $W(\zeta^{}_{\text{D}}, \zeta_{\text{D}}^\prime)$ is manifestly homogeneous of degree $-1$ in $\zeta^{\prime}_{\text{D}}$; that is:
\begin{align}
    W(\zeta^{}_{\text{D}}, \varrho \zeta_{\text{D}}^\prime) &= \frac{1}{(2\pi)^2 \left(\zeta^{}_{\text{D}} - \varrho \zeta_{\text{D}}^\prime\right)^2} \nonumber\\[0.2cm]
    &= \frac{1}{(2\pi)^2 \left(-2\varrho\, \zeta^{}_{\text{D}}\cdot\zeta_{\text{D}}^\prime\right)} = \varrho^{-1} W(\zeta^{}_{\text{D}}, \zeta_{\text{D}}^\prime)\,,
\end{align}
and, by symmetry, it is also homogeneous of degree $-1$ in $\zeta^{}_{\text{D}}$. Moreover, it satisfies the $6$-dimensional wave equation:
\begin{align}
    \square_6\, W(\zeta^{}_{\text{D}}, \zeta_{\text{D}}^\prime) = 0\,.
\end{align}

\section{Appendix: On the Commutator \eqref{commutator}}\label{Appen commutator}

We compute the action of the commutator \eqref{commutator} on a test function $f(\zeta)$:
    \begin{align}
        \big[ \square_n, (\zeta)^2 \big] f &= \square_n \big( (\zeta)^2 f \big) - (\zeta)^2 \square_n f \nonumber\\[0.2cm]
        &= \eta^{\mu\nu} \partial_\mu \partial_\nu \big( (\zeta)^2 f \big) - (\zeta)^2 \square_n f\,,
    \end{align}
    where $\mu, \nu = {0,5,1,2,3,4}$ and the metric $\eta_{\mu\nu}$ has the signature $(-,-,+,+,+,+)$. First, we employ the product rule:
    \begin{align}
        \partial_\nu \big( (\zeta)^2 f \big) =&\, \big(\partial_\nu (\zeta)^2\big) f + (\zeta)^2 \partial_\nu f \,, \\[0.2cm]
        \partial_\mu \partial_\nu \big( (\zeta)^2 f \big) =&\, \partial_\mu \Big( \big(\partial_\nu (\zeta)^2\big) f + (\zeta)^2 \partial_\nu f \Big) \nonumber\\[0.2cm]
        =&\, \big(\partial_\mu \partial_\nu (\zeta)^2\big) f + \big(\partial_\nu (\zeta)^2\big) \partial_\mu f \nonumber\\[0.2cm] 
        &\qquad +\big(\partial_\mu (\zeta)^2\big) \partial_\nu f + (\zeta)^2 \partial_\mu \partial_\nu f \,,
    \end{align}
    where $\partial_\mu := \tfrac{\partial}{\partial\zeta^\mu}$, and then, we use $\partial_\mu (\zeta)^2 = 2 \eta_{\mu\varsigma} \zeta^\varsigma$ and $\partial_\mu \partial_\nu (\zeta)^2 = 2 \eta_{\mu\nu}$. Substituting into the full expression yields:
    \begin{align}
        &\eta^{\mu\nu} \partial_\mu \partial_\nu \big( (\zeta)^2 f \big) \nonumber\\[0.2cm] 
        &= \eta^{\mu\nu} \Big( 2 \eta_{\mu\nu} f + 2 \eta_{\nu\varsigma} \zeta^\varsigma \partial_\mu f + 2 \eta_{\mu\varsigma} \zeta^\varsigma \partial_\nu f + (\zeta)^2 \partial_\mu \partial_\nu f \Big) \nonumber\\[0.2cm]
        &= 2n f + 4 \zeta^\mu \partial_\mu f + (\zeta)^2 \square_n f \,.
    \end{align}
    Thus, we obtain the commutator:
    \begin{align}
        \left[ \square_n, (\zeta)^2 \right] f = 2n f + 4 \zeta \cdot \partial f \,.
    \end{align}

	\Extrachap{Glossary}

Glossary of Clifford and Lie algebras used in the monograph, with precise nesting and isomorphisms:
\begin{enumerate}
    \item{$\mathfrak{cl}(4,2)$ is the $64$-dimensional real associative Clifford algebra generated by $m^{}_\mu$, $\mu=0,5,1,2,3,4$, as:
    \begin{align*}
        \mathfrak{cl}(4,2) = \mathrm{Span}\Big\{& \mathbbm{1},\, m^{}_\mu,\, m^{}_{\mu\nu} := m^{}_\mu \,m^{}_\nu \;(\mu<\nu),\, \nonumber\\[0.2cm] 
        &\; m^{}_\mu \,m^{}_\nu \,m^{}_\rho \;(\mu<\nu<\rho),\, \dots,\,
        E := m^{}_0 \,m^{}_5 \,m^{}_1 \,\dots \,m_4 \Big\}\,,
    \end{align*}
    subject to the Clifford relations:
    \begin{align*}
        &\big\{ m^{}_\mu, m^{}_\nu \big\} = m^{}_\mu m^{}_\nu + m^{}_\nu m^{}_\mu = 2 \eta_{\mu\nu} \mathbbm{1}\,, \nonumber\\[0.2cm] 
        &\eta_{\mu\nu}=\mathrm{diag}(-1,-1,+1,+1,+1,+1)\,.
    \end{align*}}

    \item{The even Clifford algebra $\mathfrak{cl}^{\text{even}}(4,2) \; \big( \cong \mathfrak{cl}(4,1) \big)$ is:
    \begin{align*}
        \mathfrak{cl}^{\text{even}}(4,2) = \mathrm{Span}\Big\{& \mathbbm{1},\, m^{}_{\mu\nu} := m^{}_\mu \,m^{}_\nu \;(\mu<\nu),\,\nonumber\\[0.2cm] 
        &\; m^{}_\mu \,m^{}_\nu \,m^{}_\rho \,m^{}_\sigma \;(\mu<\nu<\rho<\sigma), \,
        E := m^{}_0 \,m^{}_5 \,m^{}_1 \,\dots \,m_4 \Big\}\,,
    \end{align*}
    i.e., the $32$-dimensional real subalgebra spanned by all products of an even number of the generators $m^{}_\mu$.}

    \item{The bivectors $m^{}_{\mu\nu}$ are closed under commutator and provide a spinorial realization of the conformal Lie algebra:
    \begin{align*}
        \mathfrak{su}(2,2) \; \big(\cong\mathfrak{so}(4,2)\big) = \mathrm{Span} \Big\{ m^{}_{\mu\nu} := m^{}_\mu \,m^{}_\nu \Big\} \;\subset\; \mathfrak{cl}^{\text{even}}(4,2)\,,
    \end{align*} 
    with the Lie bracket given by $\big[m^{}_{\mu\nu}, m^{}_{\rho\sigma}\big]$ computed inside $\mathfrak{cl}^{\text{even}}(4,2)$.}

    \item{The pseudoscalar $E := m^{}_0 \,m^{}_5 \,m^{}_1 \,\dots \,m_4$ commutes with all bivectors $m^{}_{\mu\nu}$ generating $\mathfrak{su}(2,2) \; \big(\cong\mathfrak{so}(4,2)\big)$. Hence, $E$ spans the central $\mathfrak{u}(1)$ in the extended conformal Lie algebra:
    \begin{align*}
        \mathfrak{u}(2,2) \cong \mathfrak{su}(2,2) \; \big(\cong\mathfrak{so}(4,2)\big)\oplus \mathfrak{u}(1) \; \subset \; \mathfrak{cl}^{\text{even}}(4,2) \,.
    \end{align*}}

    \item{The de Sitter (dS) algebra is realized as the subalgebra of $\mathfrak{su}(2,2) \; \big(\cong \mathfrak{so}(4,2)\big)$ that preserves a fixed ``$5$-direction'' in the $6$-dimensional space. Specifically:
    \begin{align*}
        \mathfrak{so}(4,1) \; \big(\cong\mathfrak{sp}(2,2)\big) &= \mathrm{Span}\Big\{ m^{}_{\alpha\beta} := m^{}_\alpha m^{}_\beta \;;\; \alpha, \beta = 0,1,2,3,4 \Big\} \nonumber\\[0.2cm]
        &\qquad\qquad \subset\; \mathfrak{su}(2,2) \;\subset\; \mathfrak{cl}^{\mathrm{even}}(4,2)\,,
    \end{align*}
    with commutator taken inside $\mathfrak{cl}^{\mathrm{even}}(4,2)$. These bivectors $m^{}_{\alpha\beta}$ satisfy the $\mathfrak{so}(4,1)$ Lie algebra relations and provide a spinorial realization of the corresponding dS group $\mathrm{Spin}(4,1) \cong \mathrm{Sp}(2,2)$.}
\end{enumerate}
	
	\backmatter
\begin{theindex}

\item{\textbf{A}}

\item{Algebra}
    \subitem{Alternative}
    \subitem{Clifford}
    \subitem{Composition}
    \subitem{Vertex}

\vspace{0.5cm}

\item{\textbf{B}}

\item{Basis}
    \subitem{Cartan}
    \subitem{Chiral}
    \subitem{Majorana}
    \subitem{Real}

\vspace{0.5cm}

\item{\textbf{C}}

\item{Conjugate}
    \subitem{Cayley-Dickson}
    \subitem{Charge}
    \subitem{Complex}
    \subitem{Hermitian}
    \subitem{Octonionic}

\item{Cosmological Constant}

\vspace{0.5cm}
\item{\textbf{E}}

\item{Elementary System}
    \subitem{Massless}

\item{Energy}
    \subitem{Conformal}

\vspace{0.5cm}
\item{\textbf{F}}

\item{Field}
    \subitem{Massless}
    \subitem{Massless Scalar}

\item{Form}
    \subitem{Bilinear}
    \subitem{Volume}

\vspace{0.5cm}
\item{\textbf{G}}

\item{Group}
    \subitem{Poincar\'{e}}
    \subitem{Conformal}
    \subitem{de Sitter}

\vspace{0.5cm}
\item{\textbf{H}}

\item{Hamiltonian}
    \subitem{Conformal}

\item{Helicity}

\vspace{0.5cm}
\item{\textbf{O}}  

\item{Octonion}
    \subitem{Split}

\item{Operator}
    \subitem{Annihilation}
    \subitem{Casimir}
    \subitem{Creation}
    \subitem{Ladder}
    \subitem{Lowering}
    \subitem{Raising}
    \subitem{Self-adjoint}
    \subitem{Vertex}

\vspace{0.5cm}
\item{\textbf{P}}

\item{Parity}

\vspace{0.5cm}
\item{\textbf{Q}}

\item{Quaternion}

\vspace{0.5cm}
\item{\textbf{R}}

\item{Representation}
    \subitem{Ladder}
    \subitem{Massless}

\vspace{0.5cm}
\item{\textbf{S}}

\item{Space}
    \subitem{Conformal}
    \subitem{de Sitter}
    \subitem{Euclidean}
    \subitem{Hilbert}
    \subitem{Minkowski}
    \subitem{Compactified Minkowski}

\item{Spin}

\item{Spinor}
    \subitem{Chiral}
    \subitem{Majorana}

\item{Subalgebra}
    \subitem{Maximal Compact}

\vspace{0.5cm}
\item{\textbf{T}}

\item{Two-point function}

\item{Transformation}
    \subitem{Similarity}
    \subitem{Unitary}

\vspace{0.5cm}
\item{\textbf{V}}

\item{Vacuum}
    \subitem{Fock}

\end{theindex}

\end{document}